
\documentclass{scrartcl}

\usepackage[utf8]{inputenc}
\usepackage[T1]{fontenc}

\usepackage{hyperref}
\usepackage{amsmath}
\usepackage{amssymb}
\usepackage{amsthm}
\usepackage{csquotes}
\usepackage{stmaryrd}
\usepackage{tikz}
\usepackage{hhline}
\usepackage{enumitem}
\usepackage{algpseudocode}
\usepackage{varwidth}
\usepackage{thm-restate}
\usepackage{multicol}
\usepackage{wrapfig}

\usepackage{todonotes}

\usetikzlibrary{shapes, positioning, decorations.pathmorphing, decorations.pathreplacing,
  matrix, calc, automata, fit}

\declaretheorem[style=plain]{theorem}
\declaretheorem[style=plain]{lemma}
\declaretheorem[style=plain]{corollary}
\declaretheorem[style=plain]{fact}

\declaretheorem[style=definition]{definition}

\declaretheorem[style=remark]{example}


\newenvironment{case}[1]{\medskip\par{\normalfont\normalsize\itshape{Case #1.}}\;}{}

\newcommand*{\FO}{\mathsf{FO}}
\newcommand*{\NL}{\ensuremath{\textnormal{NL}}}

\newcommand*{\V}[1][V]{\boldsymbol{\mathrm{#1}}}
\newcommand*{\DA}{\V[DA]}
\newcommand*{\Rm}[1][m]{\V[R_{#1}]}
\newcommand*{\Lm}[1][m]{\V[L_{#1}]}
\newcommand*{\WI}{\textnormal{WI}}
\newcommand*{\R}{\mathrel{\mathcal{R}}}
\renewcommand*{\L}{\mathrel{\mathcal{L}}}
\newcommand*{\J}{\mathrel{\mathcal{J}}}

\newcommand*{\Nat}{\mathbb{N}}

\newcommand*{\dom}[1]{\operatorname{dom}(#1)}
\newcommand*{\alphabet}{\operatorname{alph}}
\newcommand*{\landO}{\mathcal{O}}
\newcommand{\subs}[2]{\ensuremath{\llbracket #1 \rrbracket_{#2}}}

\newcommand*{\var}[1]{\texttt{#1}}

\newcommand{\smallset}[1]{\left\{#1\right\}}
\newcommand{\set}[2]{\left\{#1\mathrel{\left|\vphantom{#1}\vphantom{#2}\right.}#2\right\}}
\newcommand*{\onto}{\twoheadrightarrow}

\newsavebox{\malcevbox}
\savebox{\malcevbox}{\tikz[baseline=-.75ex]{%
    \node [shape=circle,draw,inner sep=0.4pt,scale=0.85] (char) {\ensuremath{\hspace*{0.1mm}m}};}}

\newcommand{\malcev}{\makeatletter%
  \def\c@rcled{\usebox{\malcevbox}}%
  \mathbin{\smash{\mathchoice{\text{\small\ensuremath{\c@rcled}}}%
    {\text{\small\ensuremath{\c@rcled}}}%
    {\text{\scriptsize\ensuremath{\c@rcled}}}%
    {\text{\tiny\ensuremath{\c@rcled}}}}}%
  \makeatother}


\title{The Word Problem for Omega-Terms \\ over the Trotter-Weil Hierarchy\thanks{The final publication is available at Springer via \href{http://dx.doi.org/10.1007/s00224-017-9763-z}{http://dx.doi.org/10.1007/s00224-017-9763-z}.}}

\usepackage{authblk}
\author{Manfred Kuf\-leitner\thanks{The first author was supported by the German Research Foundation (DFG) under grants \mbox{DI 435/5-2} and \mbox{KU 2716/1-1}.}~}
\author{Jan Philipp Wächter}
\affil{\small%
  Institut für Formale Methoden der Informatik\\%
  University of Stuttgart, Germany\\%
  \texttt{$\{$kufleitner,jan-philipp.waechter$\}$@fmi.uni-stuttgart.de}%
}

\begin{document}

\maketitle

\begin{abstract}
  \noindent\textbf{Abstract.}
For two given $\omega$-terms $\alpha$ and $\beta$, the word problem for $\omega$-terms over a variety $\V$ asks whether $\alpha=\beta$ in all monoids in $\V$.
We show that the word problem for $\omega$-terms over each level of the
Trotter-Weil Hierarchy is decidable. More precisely, for every fixed variety  in the Trotter-Weil
Hierarchy, our approach yields an algorithm in nondeterministic logarithmic space ($\NL$). In
addition, we provide deterministic polynomial time algorithms which are more efficient than
straightforward translations of the NL-algorithms. As an application of our results, we show that
separability by the so-called corners of the Trotter-Weil Hierarchy is witnessed by $\omega$-terms
(this property is also known as $\omega$-reducibility). In particular, the separation problem for
the corners of the Trotter-Weil Hierarchy is decidable.
\end{abstract}

\section{Introduction}

Algebraic characterizations of classes of regular languages are interesting as they
often allow to decide the class's membership problem. For example, by Schützenberger's famous
theorem~\cite{schutzenberge1965finite}, one can decide whether a given regular language is star-free
by computing its syntactic monoid $M$ and checking its aperiodicity. The latter can be achieved
by verifying $x^{|M|!} = x^{|M|!}x$ for all $x \in M$. This equation is also
stated as $x^\omega = x^\omega x$ since this notation is independent of the monoid's size. More
formally, we can see the equation as a pair of $\omega$-terms: these are finite words built using
letters, which are interpreted as variables, concatenation and an additional formal $\omega$-power.
Checking an equation $\alpha = \beta$ in a finite monoid is easy: one can simply substitute each
variable by all elements of the monoid. For each substitution, this yields a monoid element on the
left hand side and one on the right hand side. The equation holds if and only if they are always
equal.

Often, the question whether an equation holds is not only interesting for a single finite monoid but
for a (possibly infinite) class of such monoids. For example, one may ask whether all monoids in a
certain class are aperiodic. This is trivially decidable if the class is finite. But what if it is
infinite? If the class forms a variety (of finite monoids, sometimes also referred to as a
pseudo-variety), i.\,e.\ a class of finite monoids closed under (possibly empty) direct products,
submonoids and homomorphic images, then this problem is called the variety's \emph{word problem for
$\omega$-terms}. Usually, the study of a variety's word problem for $\omega$-terms also gives more
insight into the variety's structure, which is interesting in its own right.
McCammond showed that the word problem for $\omega$-terms of the variety $\V[A]$ of
aperiodic finite monoids is decidable \cite{mccammond2001normal}. The problem was shown to be
decidable in linear time for $\V[J]$, the class of $\mathcal{J}$-trivial finite monoids, by Almeida
\cite{almeida2002finite}  and for $\V[R]$, the class of $\mathcal{R}$-trivial monoids, by Almeida
and Zeitoun \cite{almeida2007automata}. For the variety $\DA$, Moura adapted and expanded those ideas to show decidability in time $\mathcal{O}((nk)^{5})$ where $n$ is the length of the input
$\omega$-terms and $k$ is the maximal nesting
depth of the $\omega$-power (which can be linear in $n$) \cite{moura2011word}. Remember that $\DA$ is the class of finite monoids whose regular $\mathcal{D}$-classes form aperiodic semigroups. This variety received a lot of attention due to its many different characterizations; see e.g.~\cite{DiekertGastinKufleitner08:short,tesson2002diamonds}. Most notably is its connection to two-variable first-order logic~\cite{therien1998over}. This logic is a natural restriction of
first-order logic over finite words, which in turn is the logic characterization of
$\V[A]$.

In this paper, we consider the word problem for $\omega$-terms over the varieties in the
\emph{Trotter-Weil Hierarchy}. It was introduced by Trotter and Weil~\cite{trotter1997lattice} with the idea of
using the good understanding of the band varieties (cf.~\cite{gerhard1989varieties}) for studying the
lattice of sub-varieties of $\DA$; bands are semigroups satisfying $x^2 = x$. The levels of this hierarchy exhaust $\DA$.
As it turns out, the Trotter-Weil Hierarchy has tight connections to the quantifier alternation hierarchy inside two-variable first-order logic~\cite{kufleitner2012alternation}. In addition, many
characterizations of $\DA$ admit natural restrictions which allow climbing up this hiearchy (see~\cite{kufleitner2012join}).

Please note that, in spite of this paper's title, we will refer to $\omega$-terms as $\pi$-terms for
most parts of the paper, the only exception being this introduction. We follow this notation introduced
by Perrin and Pin~\cite{perrin2004infinite} to avoid notational conflicts. Accordingly, we use $\pi$
for the formal power in $\pi$-terms and speak of the word problem for $\pi$-terms.

\paragraph*{Results.}
In this paper, we present the following results.\vspace{-0.5\baselineskip}
\begin{itemize}

  \item Our main tool for studying a variety $\V[V]$ of the Trotter-Weil Hierarchy
  is a family of finite index congruences $\equiv_{\V[V],n}$ for $n \in \Nat$. These congruences have the
  property that a monoid $M$ is in $\V$ if and only if there exists $n$ for which $M$ divides a quotient
  by $\equiv_{\V[V],n}$. The congruences are not new but they differ in some details from the ones usually found in the literature, where they are introduced in terms of rankers~\cite{kufleitner2012join,kufleitner2012alternation,kufleitner2012logical}. Unfortunately, these differences necessitate new proofs.

  \enlargethispage{\baselineskip}
  \item We lift the combinatorics from finite words to $\omega$-terms using the \enquote{linear
  order approach} introduced by Huschenbett and the first author~\cite{huschenbett2013ehrenfeucht}.
  They showed that, over varieties of aperiodic monoids, one can use the order $\mathbb{N} +
  \mathbb{Z} \cdot \mathbb{Q} + (-\mathbb{N})$ for the formal $\omega$-power. In this paper, we
  use the simpler order $\mathbb{N} + (-\mathbb{N})$. We show that two $\omega$-terms $\alpha$
  and $\beta$ are equal in some variety $\V[V]$ of the Trotter-Weil Hierarchy if and only if
  $\subs{\alpha}{\mathbb{N} + (-\mathbb{N})} \equiv_{\V[V],n} \subs{\beta}{\mathbb{N} +
  (-\mathbb{N})}$ for all $n \in \mathbb{N}$. Here, $\subs{\alpha}{\mathbb{N} + (-\mathbb{N})}$
  denotes the generalized word (i.\,e.\ the labeled linear order) obtained from replacing every occurrence of the formal $\omega$-power by the linear
  order $\mathbb{N} + (-\mathbb{N})$. Note that this order is tailor-made for the Trotter-Weil
  Hierarchy and does not result from simple arguments which work in any variety.

  \item We show that one can effectively check whether $\subs{\alpha}{\mathbb{N} + (-\mathbb{N})}
  \equiv_{\V,n} \subs{\beta}{\mathbb{N} + (-\mathbb{N})}$ for all $n \in \mathbb{N}$.

  \item We further improve the algorithms and show that, for every variety $\V$ of the Trotter-Weil
  Hierarchy, the word problem for $\omega$-terms over $\V$ is decidable in nondeterministic
  logarithmic space. The main difficulty is to avoid some blow-up which (naively) is caused by the
  nesting depth of the $\omega$-power. For $\V[R]$,
  which appears in the hierarchy, this result is incomparable to Almeida and Zeitoun's linear time
  algorithm~\cite{almeida2007automata}.

  \item We also introduce polynomial time algorithms, which are more efficient than the direct
  translation of these $\NL$ algorithms.

  \item As an application, we prove that the separation problems for the so-called \emph{corners}
  of the Trotter-Weil Hierarchy are decidable by showing $\omega$-reducibi\-li\-ty. For $\V[J]$, we adapt the proof of van Rooijen and Zeitoun
  \cite{roojien2013separation}.

  \item With little additional effort, we also obtain all of the above results for $\DA$, the limit
  of the Trotter-Weil Hierarchy. The decidability of the separation problem re-proves a result of
  Place, van Rooijen and Zeitoun \cite{place2013separating}. The algorithms for the word problem
  for $\omega$-terms are more efficient than those of Moura \cite{moura2011word}.

\end{itemize}

Separability of the join levels and the intersection levels is still open. We conjecture that these
problems can be solved with similar but more technical reductions.

\section{Preliminaries}

\paragraph*{Natural Numbers and Finite Words.}
Let $\Nat = \smallset{ 1, 2, \dots }$, $\Nat_0 = \smallset{ 0, 1, \dots }$ and $-\Nat =
\smallset{ -1, -2, \dots }$.
For the rest of this paper, we fix a finite alphabet $\Sigma$. By $\Sigma^*$, we denote the set of all finite
words over the alphabet $\Sigma$, including the empty word $\varepsilon$; $\Sigma^+$ denotes $\Sigma^*
\setminus \{ \varepsilon \}$.

\paragraph*{Order Types.}
A linearly ordered set $(P, \leq_P)$ consists of a (possibly infinite)
set $P$ and a linear ordering relation $\leq_P$ of $P$, i.\,e.\ a reflexive, anti-symmetric,
transitive and total binary relation ${\leq_P} \subseteq P \times P$. To simplify notation we
define two special objects $-\infty$ and $+\infty$. The former is always smaller with regard to
$\leq_P$ than any element in $P$ while the latter is always larger. We call two linearly ordered
sets $(P, \leq_P)$ and $(Q, \leq_Q)$ \emph{isomorphic} if there is an order-preserving bijection
$\varphi: P \to Q$. Isomorphism between linearly order sets is an equivalence relation; its classes
are called (linear) \emph{order types}.

The \emph{sum} of two linearly ordered sets $(P, \leq_P)$ and $(Q, \leq_Q)$ is $(P \uplus Q,
\leq_{P + Q})$ where $P \uplus Q$ is the disjoint union of $P$ and $Q$ and $\leq_{P + Q}$ orders
all elements of $P$ to be smaller than those of $Q$ while it behaves as $\leq_P$ and $\leq_Q$ on
elements from their respective sets. Similarly, the \emph{product} of $(P, \leq_P)$ and $(Q,
\leq_Q)$ is $(P \times Q, \leq_{P * Q})$ where $(p, q) \leq_{P * Q} (\tilde{p}, \tilde{q})$ holds
if and only if either $q \leq_Q \tilde{q}$ and $q \neq \tilde{q}$ or $q = \tilde{q}$ and $p \leq_P
\tilde{p}$ holds. Sum and product of linearly ordered sets are compatible with taking the order type.
This allows for writing $\mu + \nu$ and $\mu * \nu$ for order types $\mu$ and $\nu$.

We re-use $n \in \Nat_0$ to denote the order type of $(\{ 1, 2, \dots, n \}, \leq)$. One should
note that this use of natural numbers to denote order types does not result in contradictions
with sums and products: the usual calculation rules apply. Besides finite linear order types, we
need $\omega$, the order type of $(\Nat, \leq)$, and its dual $\omega^*$ the order type of
$(-\Nat, \leq)$. Another important order type in the scope of this paper is $\omega + \omega^*$,
whose underlying set is $\Nat \uplus (-\Nat)$. Note that, here, natural numbers and the (strictly)
negative
numbers are ordered as $1, 2, 3, \ldots, \enskip \ldots, -3, -2, -1$; therefore, in this order type,
we have for example $-1 \geq_{\omega + \omega^*} 1$.

\paragraph*{Generalized Words.}
Any finite word $w = a_1 a_2 \dots a_n$ of
length $n \in \Nat_0$ with $a_i \in \Sigma$ can be seen as a function which maps a \emph{position}
$i \in \{ 1, 2, \dots, n \}$ to the corresponding letter $a_i$ (or, possibly, the empty map). Therefore, it is natural to denote the positions in a word $w$ by $\dom{w}$. By relaxing the
requirement of $\dom{w}$ to be finite, one obtains the notion of \emph{generalized words}: a
(generalized) word $w$ over the alphabet $\Sigma$ of order type $\mu$ is a function $w: \dom{w}
\to \Sigma$, where $\dom{w}$ is a linearly ordered set in $\mu$. For $\dom{w}$, we usually choose
$(\mathbb{N}, \leq)$, $(-\mathbb{N}, \leq)$ and $(\Nat \uplus (-\Nat), \leq_{\omega + \omega^*})$ as
representative of $\omega$, $\omega^*$ and $\omega + \omega^*$, respectively. The order type of a
finite word of length $n$ is $n$.

Like finite words, generalized words can be concatenated, i.\,e.\ we write $u$ to the left of $v$
and obtain $uv$. In that case, the order type of $uv$ is the sum of the order types of $u$ and
$v$. Beside concatenation, we can also take powers of generalized words. Let $w$ be a generalized
word of order type $\mu$ which belongs to $(P_\mu, \leq_{\mu})$ and let $\nu$ be an arbitrary
order type belonging to $(P_\nu, \leq_{\nu})$. Then, $w^\nu$ is a generalized word of order type
$\mu * \nu$ which determines the ordering of its letters; $w$ maps $(p_1, p_2) \in P_{\mu}
\times P_{\nu}$ to $w(p_1)$. If $\nu = n$ for some $n \in \Nat$, then $w^\nu = w^n$ is equal
to the $n$-fold concatenation of $w$.

In this paper, the term \emph{word} refers to a generalized word. If it is important
for a word to be finite, it is referred to explicitly as a \emph{finite word}. As a counterpart to
the positions $\dom{w}$ in $w$, we define the set of \emph{letters appearing in a word}
$\alphabet(w)$ as the image of $w$ seen as a function. For example, for a finite word $w = a_1 a_2
\dots a_n$ of length $n \in \Nat_0$ (with $a_1, a_2, \dots, a_n \in \Sigma$), we have $\alphabet(w) =
\{ a_1, a_2, \dots, a_n \}$.

We also introduce notation for factors of words. For a pair $(l, r) \in ( \smallset{ -\infty }
\uplus \dom{w} ) \times ( \dom{w} \uplus \smallset{ +\infty } )$, define $w_{(l,
r)}$ as the restriction of the word $w$ (seen as a mapping) to the set of positions (strictly) larger
than $l$ and (strictly) smaller than $r$. Note that $w = w_{(-\infty, +\infty)}$ and $w_{(l, r)} =
\varepsilon$ for any pair $(l, r)$ with no position between $l$ and $r$.

\paragraph*{Monoids, Divisors, Congruences and Recognition.}
In this paper, the term \emph{mon\-oid} refers to a finite monoid (except when stated otherwise).
It is well known that, for any
monoid $M$, there is a smallest number $n \in \Nat$ such that $m^n$ is idempotent (i.\,e.\ $m^{2n} = m^n$) for every element
$m \in M$; this number is called the \emph{exponent} of $M$ and shall be denoted by $M! =
n$.\footnote{Note that all statements remain valid if one assumes that $M!$ is used to denote
$|M|!$.} A monoid $N$ is a \emph{divisor} of (another) monoid $M$, written as $N \prec M$, if $N$
is a homomorphic image of a submonoid of $M$.

A \emph{congruence} (relation) in a (not necessarily finite) monoid $M$ is an
equivalence relation $\mathcal{C} \subseteq M \times M$ such that $x_1 \mathrel{\mathcal{C}} x_2$
and $y_1 \mathrel{\mathcal{C}} y_2$ implies $x_1 y_1 \mathrel{\mathcal{C}} x_2 y_2$ for all $x_1,
x_2, y_1, y_2 \in M$. If $M$ is a (possibly infinite) monoid and $\mathcal{C} \subseteq M \times M$
is a congruence, then the set of equivalence classes of $\mathcal{C}$, denoted by $M/\mathcal{C}$,
is a well-defined monoid (which might still be infinite), whose size is called the \emph{index} of
$\mathcal{C}$. For any two congruences $\mathcal{C}_1$ and $\mathcal{C}_2$, one
can define their \emph{join} $\mathcal{C}_1 \vee \mathcal{C}_2$ as the smallest congruence which
includes $\mathcal{C}_1$ and $\mathcal{C}_2$; its index is at most as large as the index of
$\mathcal{C}_1$ and the index of $\mathcal{C}_2$.

A (possibly infinite) monoid $M$ \emph{recognizes} a language (of finite words) $L \subseteq \Sigma^*$
if there is a homomorphism $\varphi: \Sigma^* \to M$ with $L = \varphi^{-1} \left( \varphi(L)
\right)$. A language is \emph{regular} if and only if it is recognized by a finite monoid. It is
well known that there is a unique smallest monoid which recognizes a given regular language: the
\emph{syntactic monoid}.

\paragraph*{Green's Relations.}
Among the most important tools for studying monoids are Green's Relations. Let $x$ and $y$ be
elements of a monoid $M$. Define
\begin{align*}
  x \R y &\iff xM = yM,\\
  x \L y &\iff Mx = My \qquad \text{and}\\
  x \J y &\iff MxM = MyM
\end{align*}
where $xM = \set{ x m }{m \in M}$ is the \emph{right-ideal} of $x$, $Mx = \set{ m x }{m \in M}$
its \emph{left-ideal} and $MxM = \set{ m_1 x m_2 }{m_1, m_2 \in M}$ its (two-sided) \emph{ideal}.

By simple calculation, one can see that $x \R y$ holds if and only if there are $z, z' \in M$
such that $xz = y$ and $yz' = x$ and, symmetrically, that $x \L y$ holds if and only if there
are $z, z' \in M$ such that $zx = y$ and $z'y = x$.

\paragraph*{Varieties, $\pi$-Terms, Equations and Word Problem for $\pi$-terms.}
A \emph{variety} (of finite monoids) -- sometimes
also referred to as a \emph{pseudo-variety} -- is a class of monoids which is closed under
submonoids, homomorphic images and -- possibly empty -- finite direct products. For example, the class
$\V[R]$ of $\mathcal{R}$-trivial
monoids and the class $\V[L]$ of $\mathcal{L}$-trivial monoids both
form a variety, see e.g.~\cite{pin1986varieties}. Clearly, if $\V$ and $\V[W]$ are varieties, then so is $\V\cap \V[W]$. For example,
the class $\V[J]= \V[R] \cap \V[L]$ is a variety; in fact, it is the variety of all
$\mathcal{J}$-trivial monoids. For two varieties $\V$ and $\V[W]$, the smallest variety containing $\V \cup \V[W]$, the so called \emph{join}, is denoted by $\V \vee \V[W]$.

Many varieties can be defined in terms of \emph{equations} (or \emph{identities}). Because it will be useful later,
we take a more formal approach towards equations by using \emph{$\pi$-terms}\footnote{As mentioned in the introduction, $\pi$-terms are usually referred to as $\omega$-terms. In this paper, however, we use $\omega$ to denote the order type of the natural numbers. Therefore, we follow the approach of Perrin and Pin~\cite{perrin2004infinite} and use $\pi$ instead of $\omega$.}. A $\pi$-term is a finite word, built using letters, concatenation and an additional formal $\pi$-power
(and appropriate parentheses), whose $\pi$-exponents act as a placeholder for a substitution value.
Formally, every letter $a \in \Sigma$ is a $\pi$-term (over $\Sigma$). As a special case, also
$\varepsilon$ is a $\pi$-term. For two $\pi$-terms $\alpha$ and $\beta$, their concatenation $\alpha
\beta$ is a $\pi$-term as well and, if $\gamma$ is a $\pi$-term, then so is $(\gamma)^\pi$, where
$\pi$ is a formal exponent.

To state equations using $\pi$-terms, one needs to substitute the formal $\pi$-exponents by
actual values resulting in a word. We define $\subs{\gamma}{\mu}$ as the result of
substituting the $\pi$-exponents in $\gamma$ by an order type $\mu$, i.\,e.\ we have
$\subs{\varepsilon}{\mu} = \varepsilon$, $\subs{a}{\mu} = a$ for all $a \in \Sigma$,
$\subs{\alpha\beta}{\mu} = \subs{\alpha}{\mu} \subs{\beta}{\mu}$ and $\subs{(\gamma)^\pi}{\mu} =
\left( \subs{\gamma}{\mu} \right)^\mu$. For example\footnote{For a more elaborate example (involving
$\omega + \omega^*$) see \autoref{exmpl:positionInASubstPiTerm} on
\autopageref{exmpl:positionInASubstPiTerm}.}, we have
\[
  \subs{a\left( (b)^\pi c \right)^\pi}{3} = a \left( (bbb) c \right) \left( (bbb) c \right) \left( (bbb) c \right) \text{.}
\]

An equation $\alpha = \beta$
consists of two $\pi$-terms $\alpha$ and $\beta$ over the same alphabet $\Sigma$, which, here, can
be seen as a set of \emph{variables}. A homomorphism $\sigma: \Sigma^* \to M$ is called an
\emph{assignment of variables} in this context. An equation $\alpha = \beta$ \emph{holds} in a
monoid $M$ if for every assignment of variables $\sigma \left( \subs{\alpha}{M!} \right) =
\sigma \left( \subs{\beta}{M!} \right)$ is satisfied. It holds in a variety $\V$ if it holds
in all monoids in $\V$. The \emph{word problem for $\pi$-terms} over a variety $\V$ is the problem to decide
whether $\alpha = \beta$ holds in $\V$ for the input $\pi$-terms $\alpha$ and $\beta$.

\paragraph*{Mal'cev Products.}
Besides intersection and join, we need one more
constructions for varieties: the \emph{Mal'cev product}, which is often
defined using \emph{relational morphisms}. In this paper, we use a different, yet equivalent,
approach based on the congruences $\sim_K$ and $\sim_D$, see \cite{krohn1968homomorphisms} or
\cite[Corollary~4.3]{hall1999radical}. For their definition, let $x$ and $y$ be elements of a monoid
$M$ and define
\begin{align*}
  x \sim_K y &\iff \forall e \in E(S): ex \R e \text{ or } ey \R e \implies ex = ey\\
    \text{and }
  x \sim_D y &\iff \forall e \in E(S): xe \L e \text{ or } ye \L e \implies xe = ye\text{,}
\end{align*}
where $E(S)$ denotes the set of idempotents in $S$.

Obviously, $\sim_K$ and $\sim_D$ are of finite index in any (finite) monoid $M$. Thus, we have that
$M/{\sim_K}$ and $M/{\sim_D}$ are (finite) monoids and can define Mal'cev products of varieties.
Let $\V$ be a variety. The varieties $\V[K] \malcev \V$ and $\V[D] \malcev \V$ are defined by
\begin{align*}
  M \in \V[K] \malcev \V &\iff M/{\sim_K} \in \V \text{ and} \\
  M \in \V[D] \malcev \V &\iff M/{\sim_D} \in \V \text{,}
\end{align*}
where $M$ is a monoid. Note that, indeed, $\V[K] {\kern-0.1em} \malcev \V$ and $\V[D] \malcev \V$
are varieties for any variety $\V$ and that, furthermore, we have $\V \subseteq \V[K] \malcev \V$
and $\V \subseteq \V[D] \malcev \V$.

\section{The Trotter-Weil Hierarchy}

The main object of study in this paper, is the so-called \emph{Trotter-Weil Hierarchy}. We will
approach it primarily using certain combinatorial congruence. We will show that this approach is
equivalent to the more common algebraic approach, which we will use as the definition of the
hierarchy.

\paragraph*{The Trotter-Weil Hierarchy.}
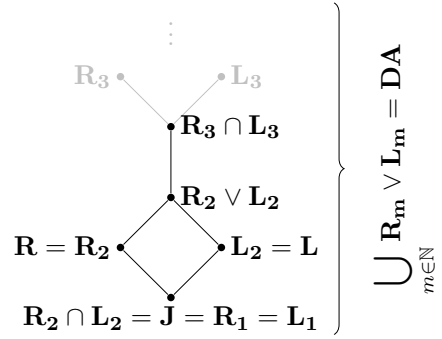
\begin{wrapfigure}{R}{0.4\textwidth}
  \vspace*{-2\baselineskip}
  \centering\resizebox{!}{0.3\textwidth}{%
    \begin{tikzpicture}[remember picture]
      \tikzstyle{every node}=[circle, fill, minimum size=3pt, inner sep=0pt];
      \tikzstyle{every label}=[shape=rectangle, draw=none, fill=none, outer sep=2pt];

      \node (J) [label={[name=Jlabel]below:{$\Rm[2] \cap \Lm[2] = \mathbf{J} = \Rm[1] = \Lm[1]$}}] {};
      \node (R1) [above left of=J, label=left:{$\mathbf{R} = \Rm[2]$}] {};
      \node (L1) [above right of=J, label={[name=L1label]right:{$\Lm[2] = \mathbf{L}$}}] {};
      \node (RvL1) [above right of=R1, label=right:{$\Rm[2] \vee \Lm[2]$}] {};
      \node (RcL2) [above of=RvL1, label=right:{$\Rm[3] \cap \Lm[3]$}] {};
      \node (R2) [opacity=0.25, above left of=RcL2, label={[opacity=0.25]left:{$\Rm[3]$}}] {};
      \node (L2) [opacity=0.25, above right of=RcL2, label={[opacity=0.25]right:{$\Lm[3]$}}] {};
      \node (dots) [opacity=0.25, above right of=R2, fill=none, rectangle] {$\vdots$};
      \draw (J) -- (R1)
            (J) -- (L1)
            (R1) -- (RvL1)
            (L1) -- (RvL1)
            (RvL1) -- (RcL2);
      \draw[draw opacity=0.25] (RcL2) -- (R2)
                               (RcL2) -- (L2);

      \draw [decorate, decoration={brace, amplitude=4pt}, xshift=10pt]
        ($(current bounding box.north east) + (6pt, 2pt)$) --
        ($(current bounding box.south east) + (0, -2pt)$)
        node [midway, anchor=mid, right, outer sep=6pt, fill=none, rectangle]
          {\rotatebox{90}{$\displaystyle\bigcup_{m \in \mathbb{N}}^{\phantom{m \in \mathbb{N}}} \Rm \vee \Lm = {\DA}$}};
    \end{tikzpicture}}\vspace*{-.5\baselineskip}
  \caption{\label{fig:trotterWeil}Trotter-Weil Hierarchy}\vspace*{-1.75\baselineskip}
\end{wrapfigure}
As the name implies, this hierarchy was first studied by Trotter and Weil \cite{trotter1997lattice}.
We will define it using Mal'cev products. Though this approach is different to the original one
used by Trotter and Weil, both are equivalent \cite{kufleitner2012logical}. We define:
\begin{align*}
  \Rm[1] &= \Lm[1] = \V[J],\\
  \Rm[m + 1] &= \V[K] \malcev \Lm \text{ and }\\
  \Lm[m + 1] &= \V[D] \malcev \Rm \text{.}
\end{align*}
These varieties form the so-called \emph{corners} of the hierarchy. Additionally, it contains the
\emph{join levels} $\Rm \vee \Lm$ and the \emph{intersection levels} $\Rm \cap \Lm$.
The term \enquote{hierarchy} is justified by the following inclusions: we have $\Rm \cap \Lm
\subseteq \Rm, \Lm \subseteq \Rm \vee \Lm$ and $\Rm \vee \Lm \subseteq \Rm[m + 1] \cap \Lm[m + 1]$;
the latter can be seen by induction.

Among the corners of the Trotter-Weil Hierarchy are some well known varieties: we have $\Rm[1] =
\Lm[1] = \V[J]$, $\Rm[2] = \V[R]$ and $\Lm[2] = \V[L]$ (for the last two, see
\cite{pin1986varieties}; the others are straightforward).

By taking the union of all varieties in the hierarchy, one gets the variety $\DA$
\cite{kufleitner2010lattice}, which is usually defined as the class of monoids whose regular
$\mathcal{D}$-classes form aperiodic semigroups\footnote{%
In finite monoids, $\mathcal{D}$-classes coincide with $\mathcal{J}$-classes; a $\mathcal{D}$-class
is called \emph{regular} if it contains an idempotent. A semigroup is called \emph{aperiodic} (or
\emph{group-free}) if it has no divisor which is a nontrivial group.}:
\begin{fact}\label{fct:trotterWeilIsDA}
  \hfill $\displaystyle \DA = \bigcup_{m \in \Nat} \Rm \vee \Lm = \bigcup_{m \in \Nat} \Rm =
    \bigcup_{m \in \Nat} \Lm$ \hfill\vspace*{0pt}
\end{fact}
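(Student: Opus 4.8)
The plan is to split the four-way identity into two independent pieces: first showing that the three \enquote{hierarchy unions} $\bigcup_{m} \Rm$, $\bigcup_{m} \Lm$ and $\bigcup_{m} (\Rm \vee \Lm)$ all coincide, and only then identifying this common union with $\DA$. Only the second piece carries genuine content; the first is a purely order-theoretic collapse that can be read off directly from the inclusions already recorded above.

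For the collapse I would exploit that the join levels are sandwiched between consecutive corners. On the one hand $\Rm \subseteq \Rm \vee \Lm$ for every $m$, so $\bigcup_{m} \Rm \subseteq \bigcup_{m} (\Rm \vee \Lm)$. On the other hand the stated inclusion $\Rm \vee \Lm \subseteq \Rm[m+1] \cap \Lm[m+1]$ gives in particular $\Rm \vee \Lm \subseteq \Rm[m+1]$, whence $\bigcup_{m} (\Rm \vee \Lm) \subseteq \bigcup_{m} \Rm[m+1] = \bigcup_{m} \Rm$ (the last equality holding because the corners form an increasing chain, so shifting the index by one leaves the union unchanged). Combining the two inclusions yields $\bigcup_{m} \Rm = \bigcup_{m} (\Rm \vee \Lm)$, and the symmetric sandwich $\Lm \subseteq \Rm \vee \Lm \subseteq \Lm[m+1]$ yields $\bigcup_{m} \Lm = \bigcup_{m} (\Rm \vee \Lm)$. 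Hence all three unions agree, and it remains to establish $\DA = \bigcup_{m} (\Rm \vee \Lm)$.

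For this last identity the two inclusions differ sharply in difficulty. The inclusion $\bigcup_{m} (\Rm \vee \Lm) \subseteq \DA$ is the easy one: each corner lies in $\DA$ by an induction on $m$ (the base case $\Rm[1] = \Lm[1] = \V[J] \subseteq \DA$ is classical, and the step uses that $\DA$ is stable under the Mal'cev products $\V[K] \malcev (-)$ and $\V[D] \malcev (-)$), so, $\DA$ being a variety containing both $\Rm$ and $\Lm$, it contains their join $\Rm \vee \Lm$ by definition of the join, and therefore the whole union. The converse $\DA \subseteq \bigcup_{m} (\Rm \vee \Lm)$ — that every monoid of $\DA$ is already captured at some finite level of the hierarchy — is the substantive statement and the main obstacle: it does not follow from the formal inclusions but rests on the structure theory of $\DA$ (its regular $\mathcal{D}$-classes being aperiodic). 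Here I would appeal to the known exhaustion result of \cite{kufleitner2010lattice}, exactly as the surrounding text does, rather than reproving it.
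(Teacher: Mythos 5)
The paper gives no proof of this fact beyond the citation to \cite{kufleitner2010lattice}, and your proposal is consistent with that treatment: the collapse of the three unions is a correct, purely formal consequence of the inclusions $\Rm \subseteq \Rm \vee \Lm \subseteq \Rm[m+1] \cap \Lm[m+1] \subseteq \Rm[m+1], \Lm[m+1]$ recorded in the text, and you defer the substantive direction $\DA \subseteq \bigcup_{m} (\Rm \vee \Lm)$ to the same reference. The one auxiliary assertion you introduce --- that $\V \subseteq \DA$ implies $\V[K] \malcev \V \subseteq \DA$ and $\V[D] \malcev \V \subseteq \DA$, which yields $\bigcup_{m} \Rm \subseteq \DA$ by induction --- is true but is itself a nontrivial ingredient of the cited result rather than something following from the definitions, so it sits on the same footing as the material you are already outsourcing to \cite{kufleitner2010lattice}.
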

These considerations yield the graphic representation given in \autoref{fig:trotterWeil}.

\paragraph*{Connections to Two-Variable Logic.}
The variety $\DA$ is closely connected to two-variable first-order logic. By $\FO^2[<]$, denote the
set of all first-order sentences over finite words which may only use the $<$ predicate (and
equality) and no more than two variables. A language $L \subseteq \Sigma^*$ of finite words is
definable by a sentence $\varphi \in \FO^2[<]$ if and only if its syntactic monoid is in $\DA$
\cite{therien1998over}, which it is if and only if it is in one of the Trotter-Weil Hierarchy's
varieties.

The intersection levels corresponds to the quantifier alternation hierarchy within $\FO^2[<]$
\cite{kufleitner2012alternation}: a first-order sentence using at most two variables belongs to
$\FO_m^2[<]$ if, on any path in its syntax tree, there is no quantifier after the first negation and
there are at most $m$ blocks of quantifiers. A language is definable by a sentence in $\FO_m^2[<]$
if and only if its syntactic monoid is in $\Rm[m + 1] \cap \Lm[m + 1]$.

\paragraph*{Equational Characterization.}
Besides its definition using Mal'cev products and its connections to logic, the Trotter-Weil
Hierarchy can also be characterized in terms of equations. For our proofs, we only need the
direction of this characterization stated in the lemma below. The other direction does hold as
well; we will see later on in \autoref{lem:TrotterWeilEquationsConverse} that it is an easy
consequence of the hierarchy's combinatorial characterization which we state below.

\begin{lemma}\label{lem:TrotterWeilEquations}
  Define the $\pi$-terms
  \[
    U_1 = (s x_1)^\pi s (y_1 t)^\pi \quad \text{and} \quad V_1 = (s x_1)^\pi t (y_1 t)^\pi
  \]
  over the alphabet $\Sigma_1 = \{ s, t, x_1 \}$. For $m \in \Nat$, let $x_{m + 1}$ and $y_{m + 1}$
  be new characters not in the alphabet $\Sigma_m$ and define the $\pi$-terms
  \[
    U_{m + 1} = (U_m x_{m + 1})^\pi U_m (y_{m + 1} U_m)^\pi \quad \text{and} \quad
    V_{m + 1} = (U_m x_{m + 1})^\pi V_m (y_{m + 1} U_m)^\pi
  \]
  over the alphabet $\Sigma_{m + 1} = \Sigma_m \uplus \{ x_{m + 1}, y_{m + 1} \}$.

  Then we have
  \begin{align*}
    M \in \Rm[1] = \Lm[1] = \V[J] &\impliedby U_1 = V_1 \text{ holds in } M \text{,}\\
    M \in \Rm[m + 1] &\impliedby (U_m x_{m + 1})^\pi U_m = (U_m x_{m + 1})^\pi V_m \text{ holds in } M
      \text{,}\\
    M \in \Lm[m + 1] &\impliedby U_m (y_{m + 1} U_m)^\pi = V_m (y_{m + 1} U_m)^\pi \text{ holds in } M
      \text{ and}\\
    M \in \Rm[m + 1] \cap \Lm[m + 1] &\impliedby U_m = V_m \text{ holds in } M
  \end{align*}
  for all $m \in \Nat$.
\end{lemma}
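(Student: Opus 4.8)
The plan is to induct on $m$, establishing the $\Rm[m+1]$- and $\Lm[m+1]$-implications together and reading off the intersection case as an immediate corollary. For the base case $U_1 = V_1 \implies M \in \V[J]$ I would use the standard fact that $\V[R]$ and $\V[L]$ are defined by the $\pi$-term identities $(xy)^\pi x = (xy)^\pi$ and $y(xy)^\pi = (xy)^\pi$ (see~\cite{pin1986varieties}), so that $\V[J] = \V[R] \cap \V[L]$ is defined by both. For arbitrary $a, b \in M$, applying the assignment $s \mapsto a$, $x_1 \mapsto b$, $t, y_1 \mapsto 1$ to $U_1 = V_1$ collapses the right $\pi$-factor and yields $(ab)^{M!} a = (ab)^{M!}$, while the assignment $y_1 \mapsto a$, $t \mapsto b$, $s, x_1 \mapsto 1$ collapses the left $\pi$-factor and yields $(ab)^{M!} = b\,(ab)^{M!}$. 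As $a, b$ range over $M$, these are exactly the defining identities of $\V[R]$ and $\V[L]$, whence $M \in \V[J]$.

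For the inductive step I would exploit the Mal'cev definitions $\Rm[m+1] = \V[K] \malcev \Lm$ and $\Lm[m+1] = \V[D] \malcev \Rm$, that is, $M \in \Rm[m+1] \iff M/{\sim_K} \in \Lm$ and $M \in \Lm[m+1] \iff M/{\sim_D} \in \Rm$. Thus, assuming $(U_m x_{m+1})^\pi U_m = (U_m x_{m+1})^\pi V_m$ holds in $M$, it suffices to show $M/{\sim_K} \in \Lm$; by the induction hypothesis (the $\Lm$-implication one level down, or the base case when $m = 1$) it suffices in turn to verify the defining $\Lm$-identity $U_{m-1}(y_m U_{m-1})^\pi = V_{m-1}(y_m U_{m-1})^\pi$ in $M/{\sim_K}$. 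The point is that $U_m$ and $V_m$ factor as $U_m = (U_{m-1}x_m)^\pi W$ and $V_m = (U_{m-1}x_m)^\pi W'$, where $W = U_{m-1}(y_m U_{m-1})^\pi$ and $W' = V_{m-1}(y_m U_{m-1})^\pi$ are precisely the two sides of the desired lower identity; so everything reduces to showing that the substituted values of $W$ and $W'$ are $\sim_K$-congruent.

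Making this precise is the heart of the matter, and the main obstacle. Fix a lift $\sigma$ of an arbitrary assignment into $M/{\sim_K}$ and write $g = \sigma(\subs{(U_{m-1}x_m)^\pi}{M!})$, $P = \sigma(\subs{W}{M!})$, $Q = \sigma(\subs{W'}{M!})$, so that $\sigma(\subs{U_m}{M!}) = gP$ and $\sigma(\subs{V_m}{M!}) = gQ$; the hypothesis, evaluated with $x_{m+1} \mapsto c$, reads $(gPc)^{M!} gP = (gPc)^{M!} gQ$ for every $c \in M$. To obtain $P \sim_K Q$ I must take an idempotent $e \in E(M)$ with, say, $eP \R e$ and deduce $eP = eQ$. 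Each absorbing prefix $E_c = (gPc)^{M!}$ is itself an idempotent with $E_c\, gP \R E_c$, and the hypothesis already gives $E_c\, gP = E_c\, gQ$; so the required equality holds for every idempotent of this special shape. The difficulty is to bridge from an arbitrary $\R$-stabilizing idempotent $e$ to these structured prefixes: one has to use the freedom in the fresh letter $x_{m+1}$ (hence in $c$) together with the idempotent power $\pi \mapsto M!$ to realize an idempotent of the form $E_c g$ that acts on $P$ and $Q$ exactly as $e$ does, and simultaneously to peel off the spurious left factor $g$. For $m = 1$ the factor $g$ is absent — the lower identity is $U_1 = V_1$ itself — so only the idempotent-matching step remains.

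The $\Lm[m+1]$-implication is entirely symmetric, replacing $\sim_K$, $\R$, and the stripped prefix by $\sim_D$, $\L$, and the stripped suffix, and invoking the $\Rm$-implication one level down. Finally, the intersection case is immediate: if $U_m = V_m$ holds in $M$, then both $(U_m x_{m+1})^\pi U_m = (U_m x_{m+1})^\pi V_m$ and $U_m (y_{m+1} U_m)^\pi = V_m (y_{m+1} U_m)^\pi$ hold trivially, so $M \in \Rm[m+1]$ and $M \in \Lm[m+1]$ by the two one-sided implications, giving $M \in \Rm[m+1] \cap \Lm[m+1]$.
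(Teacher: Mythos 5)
Your skeleton coincides with the paper's: induct on $m$, use $M \in \Rm[m+1] \iff M/{\sim_K} \in \Lm$, and reduce to showing that the substituted values of $W = U_{m-1}(y_m U_{m-1})^\pi$ and $W' = V_{m-1}(y_m U_{m-1})^\pi$ are $\sim_K$-equivalent. (Your base case takes a mildly different and perfectly valid route: the paper simply cites the equational characterization of $\V[J]$, whereas you recover the $\V[R]$- and $\V[L]$-identities by specializing $U_1 = V_1$.) The problem is that the step you yourself call ``the heart of the matter'' \emph{is} the proof, and you leave it open. Worse, your setup makes it unclosable as stated: you fix the lift $\sigma$ --- and hence $g = \sigma\left(\subs{(U_{m-1}x_m)^\pi}{M!}\right)$ --- \emph{before} the idempotent $e$ appears, and you record only the freedom in the letter $x_{m+1}$. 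But $x_m$ is likewise absent from the alphabet of the target identity $W = W'$, so $\sigma(x_m)$ is a second free parameter, and it is exactly this freedom that peels off $g$. For an arbitrary fixed $\sigma(x_m)$ the hypothesis can be vacuous: if $\sigma(x_m)$ is a zero of $M$, then $g$ absorbs everything and $(gPc)^{M!}gP = (gPc)^{M!}gQ$ carries no information about $P$ versus $Q$.

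The missing bridge is two applications of one elementary trick: $ew \R e$ yields $z$ with $ewz = e$, hence $e(wz)^{M!} = e$, and the fresh letter is \emph{assigned} this witness $z$. Concretely, given an idempotent $e$ with $eP \R e$, where $P = u_{m-1}(y_m u_{m-1})^{M!}$, first observe $e u_{m-1} \R e$ and set $\sigma(x_m)$ to a witness, so that $eg = e$; then $e\,gP = eP \R e$, so set $\sigma(x_{m+1}) = c$ to a witness for that, giving $e(gPc)^{M!} = e$. Now sandwich:
\[
  eP = e\,gP = e\,(gPc)^{M!} gP = e\,(gPc)^{M!} gQ = e\,gQ = eQ \text{,}
\]
the middle equality being the hypothesis. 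This is precisely what the paper's proof does (it picks both $x_m$ and $x_{m+1}$ as $\mathcal{R}$-witnesses for $e$); without spelling out that \emph{both} fresh letters are chosen depending on $e$, the argument does not go through. Your remark that for $m = 1$ only the $x_{m+1}$-matching step is needed is correct, but even there the one-line realization ($e u_1 c = e$, hence $e u_1 = e(u_1c)^{M!}u_1 = e(u_1c)^{M!}v_1 = ev_1$) is asserted rather than given.
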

\begin{proof}
  The first implication is a well-known characterization of $\V[J]$ \cite{pin1986varieties}. We show
  the next two implications by induction over $m$ (see also \cite{kufleitner2012join}).

  First, assume $m = 1$ and consider a monoid $M$ in which $(U_1 x_2)^\pi U_1 = (U_1 x_2)^\pi V_1$
  holds. We need to show $M \in \Rm[2]$, which is equivalent to showing $M/{\sim_K} \in \Lm[1] =
  \V[J]$. For this, we show $u_1 = \sigma \left( \subs{U_1}{M!} \right) \sim_K
  \sigma \left( \subs{V_1}{M!} \right) = v_1$ for an arbitrary assignment of variables
  $\sigma: \Sigma_1^* \to M$. Afterwards, we are done by the first implication since the exponent of
  $M/{\sim_K}$ is a divisor of $M!$. Let $e$ be an arbitrary idempotent of $M$ and assume
  $e u_1 \R e$ (the other case from the definition of $\sim_K$ is symmetrical). Thus, there is an
  element $x_2 \in M$ such that $e u_1 x_2 = e$ holds. We can extend $\sigma$ by mapping the letter
  $x_2 \in \Sigma_2$ to the just defined monoid element $x_2$. We then have
  \begin{align*}
    e u_1 = e (u_1 x_2) u_1 = \dots &= e (u_1 x_2)^{M!} u_1\\
    &= e (u_1 x_2)^{M!} v_1 = \dots = e (u_1 x_2) v_1 = e v_1 \text{.}
  \end{align*}
  The equality in the middle holds because $(U_1 x_2)^\pi U_1 = (U_1 x_2)^\pi V_1$ holds in $M$ by
  assumption. This concludes the $m = 1$ case because the third implication is symmetrical.

  Now, assume $m > 1$ and consider a monoid $M$ in which $(U_m x_{m + 1})^\pi U_m = (U_m x_{m + 1})^\pi V_m$
  holds. We need to show $M \in \Rm[m + 1]$ and we do this by showing $M/{\sim_K} \in \Lm$. By
  induction, we only have to show that $U_{m - 1} (y_{m} U_{m - 1})^\pi = V_{m - 1} (y_{m} U_{m - 1})^\pi$
  holds in $M/{\sim_K}$. As before, let $\sigma: \left( \Sigma_{m - 1} \uplus \{ y_m \} \right)^* \to M$
  be an arbitrary assignment of variables. For convenience, let $u_{m - 1} =
  \sigma \left( \subs{U_{m - 1}}{M!} \right)$ and $v_{m - 1} =
  \sigma \left( \subs{V_{m - 1}}{M!} \right)$ and identify $y_m$ with $\sigma(y_m)$. Let $e$ be an
  arbitrary idempotent in $M$ such that $e u_{m - 1} (y_m u_{m - 1})^{M!} \R e$ holds. Clearly, this
  implies $e u_{m - 1} \R e$ and there are elements $x_{m}, x_{m + 1} \in M$ such that $e =
  e u_{m - 1} x_m$ and $e = e u_{m - 1} (y_m u_{m - 1})^{M!} x_{m + 1}$ holds. Extend $\sigma$ to
  map the letters $x_m$ and $x_{m + 1}$ to the respective monoid elements. Then, we have
  \begin{align*}
    e u_{m - 1} (y_m u_{m - 1})^{M!} &= e (u_{m - 1} x_m) u_{m - 1} (y_m u_{m - 1})^{M!}\\
    &= \dots = e (u_{m - 1} x_m)^{M!} u_{m - 1} (y_m u_{m - 1})^{M!} = e u_m
  \end{align*}
  with $u_m = \sigma \left( \subs{U_m}{M!} \right)$. This yields
  \[
    e = \underbrace{e u_{m - 1} (y_m u_{m - 1})^{M!}}_{e u_m} x_{m + 1} =
      e \left( u_m x_{m + 1} \right) = \dots = e \left( u_m x_{m + 1} \right)^{M!} \text{.}
  \]
  Using the fact that $(U_m x_{m + 1})^\pi U_m = (U_m x_{m + 1})^\pi V_m$ holds in $M$ (by assumption), we get
  \begin{align*}
    e u_m = e \left( u_m x_{m + 1} \right)^{M!} u_m = e \left( u_m x_{m + 1} \right)^{M!} v_m = e v_m
  \end{align*}
  where $v_m = \sigma \left( \subs{V_m}{M!} \right)$. In combination, we have
  \begin{align*}
    e u_{m - 1} (y_m u_{m - 1})^{M!} &= e u_m = e v_m = e (u_{m - 1} x_m)^{M!} v_{m - 1} (y_m u_{m - 1})^{M!}
  \end{align*}
  where the last equality holds due to the definition of $V_m$. Finally, we get
  \begin{align*}
    e u_{m - 1} (y_m u_{m - 1})^{M!} &= e (u_{m - 1} x_m)^{M!} v_{m - 1} (y_m u_{m - 1})^{M!}\\
    &= \dots = e (u_{m - 1} x_m) v_{m - 1} (y_m u_{m - 1})^{M!}\\
    &= e v_{m - 1} (y_m u_{m - 1})^{M!}
  \end{align*}
  Thus, we have shown $u_{m - 1} (y_m u_{m - 1})^{M!} \sim_K v_{m - 1} (y_m u_{m - 1})^{M!}$
  (the other case from the definition of $\sim_K$ is symmetrical) and are done. The implication for
  $\Lm[m + 1]$ in the case $m > 1$ follows by symmetry again.

  Finally, for the intersection levels, suppose that $U_m = V_m$
  holds in a monoid $M$. By the identities for the corners, we directly have $M \in \Rm[m + 1] \cap
  \Lm[m + 1]$.
\end{proof}

Besides the equational characterization of the individual varieties in the hierarchy, one can also
characterize their union $\DA$ in terms of an equation:
\begin{fact}\label{fct:DAByEquation}
  Let $M$ be a monoid. Then, we have
  \[
  M \in \DA \iff (xyz)^\pi y (xyz)^\pi = (xyz)^\pi \text{ holds in $M$.}
  \]
\end{fact}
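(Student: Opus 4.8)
The plan is to read the identity pointwise: it holds in $M$ if and only if, for all $x,y,z\in M$, writing $e=(xyz)^{M!}$ (so that the right-hand side $(xyz)^{M!}$ is exactly $e$), one has $e\,y\,e=e$. Before splitting into the two directions I would record one fact used on both sides: since every regular $\mathcal{D}$-class of a monoid in $\DA$ is aperiodic, $\DA\subseteq\V[A]$, so for $n=M!$ we have $e^2=e$ and $(xyz)^{n+1}=(xyz)^n=e$, and hence $e\cdot xyz=xyz\cdot e=e$.

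For the implication $M\in\DA\Rightarrow$ identity, I would use the structure of the regular $\mathcal{D}$-class $D=D_e$: in $\DA$ it is an aperiodic simple semigroup, hence a rectangular band, so that $a\,e=e$ whenever $a\in D$ with $a\R e$, and $e\,b=e$ whenever $b\in D$ with $b\L e$. The key is to bring the single letter $y$ into this picture. First I would check by a direct computation that $yze\L e$: the inequality $yze\leq_{\mathcal{L}}e$ is clear, and $(xyz)^{n-1}x\cdot yze=(xyz)^{2n}=e$ gives $e\leq_{\mathcal{L}}yze$. Thus $yze\in D$, and the rectangular-band property yields $e\cdot yze=e$. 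Reading $e\cdot yze$ as $(ey)\cdot(ze)$ shows $e\in ey\,M$, so $e\leq_{\mathcal{R}}ey$; together with $ey\leq_{\mathcal{R}}e$ this gives the (a priori surprising) relation $ey\R e$. Applying the rectangular-band property once more, now on the right, gives $eye=(ey)\,e=e$, as desired.

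For the converse I would first extract aperiodicity by assigning $x\mapsto 1$ and $z\mapsto 1$: the identity becomes $y^{2n+1}=y^n$ for all $y\in M$, i.e.\ $y^{n+1}=y^n$, so $M\in\V[A]$. It then remains to show that every regular $\mathcal{D}$-class $D$ is a subsemigroup, which I would do by proving $efe=e$ for any two idempotents $e\J f$ in $D$. By Green's theory there is an element $c$ with $e\R c\L f$, together with an inverse $\bar c$ satisfying $c\bar c=e$ and $\bar c c=f$; since $c\L f$ we have $cf=c$, whence $cf\bar c=c\bar c=e$. Substituting $x\mapsto c$, $y\mapsto f$, $z\mapsto\bar c$ makes $(xyz)^{M!}=e$, so the identity specializes exactly to $efe=e$ (and symmetrically to $fef=f$). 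From $efe=e$ and $fef=f$ one checks that $ef$ is idempotent with $ef\R e$ and $ef\L f$, so $ef\in D$; thus products of idempotents of $D$ stay in $D$, and by the classical criterion $D$ is a subsemigroup. Being a finite, simple, aperiodic semigroup, $D$ is a rectangular band, whence $M\in\DA$.

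The main obstacle is the forward direction, and specifically the relation $ey\R e$: multiplying the idempotent $e=(xyz)^{M!}$ on the right by the bare letter $y$ a priori leaves its $\mathcal{R}$-class, and in a general aperiodic monoid it does. The entire content of $\DA$ is packed into making this step go through, which is why it must be routed through $yze\L e$ and the rectangular-band structure rather than proved by a direct word manipulation. On the converse side the analogous delicate point is the choice of the substitution $(c,f,\bar c)$, engineered so that $xyz$ collapses to the idempotent $e$ while the displayed middle factor is exactly $f$.
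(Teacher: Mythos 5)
Your argument is correct, and it is worth noting that the paper itself gives no proof of this fact --- it only defers to the reference of Tesson and Th\'erien --- so there is no internal proof to compare against and your self-contained derivation is a genuine addition rather than a variant. Both directions check out. Writing $n=M!$ and $e=(xyz)^n$, the forward chain $yze \L e$ (from $(xyz)^{n-1}x\cdot yze=e$), hence $e\cdot yze=e$, hence $e\in eyM$, hence $ey\R e$, hence $(ey)e=eye=e$ is exactly the right way to force the bare letter $y$ into the regular $\mathcal{D}$-class of $e$; the two rectangular-band identities you invoke ($ae=e$ for $a\R e$ in $D_e$, and $eb=e$ for $b\L e$ in $D_e$) amount to stability plus closure of the $\mathcal{D}$-class plus triviality of $H_e$, so they are available precisely under the $\DA$ hypothesis. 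In the converse, the substitution $x=z=1$ correctly yields aperiodicity ($y^{2n+1}=y^n$ together with $y^{2n}=y^n$ gives $y^{n+1}=y^n$), and the substitution $(c,f,\bar c)$ with $c\bar c=e$, $\bar cc=f$, $cf=c$ does collapse $(xyz)^{M!}$ to $e$ and specializes the identity to $efe=e$; with the symmetric $fef=f$ this places the idempotent $ef$ in $R_e\cap L_f\subseteq D$. You do lean on three unproved but genuinely standard facts: that $\DA\subseteq\V[A]$ (maximal subgroups sit inside regular $\mathcal{D}$-classes, which are aperiodic by hypothesis), that a regular $\mathcal{D}$-class which is a subsemigroup of a finite monoid is completely simple (hence, if aperiodic, a rectangular band), and the Clifford--Miller location-theorem criterion that a regular $\mathcal{D}$-class is closed under multiplication as soon as it is closed under products of its idempotents. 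Citing or sketching these would make the write-up airtight, but none of them hides a gap.
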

\noindent A proof of this fact can be found in \cite{tesson2002diamonds}.

\section{Relations for the Trotter-Weil Hierarchy}

Although we define the Trotter-Weil Hierarchy algebraically using Mal'cev products, we will primarily
use a different characterization which is based on certain combinatorial congruences. Before we can
finally introduce these congruences, however, we need to give some definitions for factorizations of
words at the first or last $a$-position (i.\,e.\ an $a$-labeled position).

\paragraph*{Factorizations and accessible words.}
For a word $w$, a position
$p \in \dom{w} \uplus \{ -\infty \}$ and a letter $a \in \alphabet(w)$, let $X_a(w; p)$ denote
the first $a$-position (strictly) larger than $p$ (or the first $a$-position in $w$ if $p = -\infty$).
It is undefined if there is no such position (i.\,e.\ there is no $a$ to the right of $p$ in $w$) or if
the position is not well-defined. Define $Y_a(w; p)$ symmetrically as the first $a$-position from the
right which is (strictly) smaller than $p$. Notice that, with generalized words, the first $a$-position
to the right of a position $p$ is not necessarily well-defined even if there is an $a$-position larger
than $p$. For example, $a^{\omega^*} = \dots aa$ does not have a first $a$-position. We call words for
which this situation does not occur \emph{accessible}; i.\,e.\ a word $w$ is accessible if, for every
position $p$ in $w$ (including the special cases $p \in \{ \pm \infty \}$), $X_a(w; p)$ is defined if
and only if there is an $a$-position in $w_{(p, +\infty)}$ and $Y_a(w; p)$ is defined if and only if
there is an $a$-position in $w_{(-\infty, p)}$. Note that all finite words are accessible and that so
are all words of the form $\subs{\gamma}{\omega + \omega^*}$ for a $\pi$-term $\gamma$.

Let $w$ be an accessible word, define
\begin{align*}
  w \cdot X_a^L &= w_{(-\infty, X_a(w; -\infty))} \text{,} &
  w \cdot X_a^R &= w_{(X_a(w; -\infty), +\infty)} \text{,} \\
  w \cdot Y_a^L &= w_{(-\infty, Y_a(w; +\infty))} \text{ and } &
  w \cdot Y_a^R &= w_{(Y_a(w; +\infty), +\infty)}
\end{align*}
for all $a \in \alphabet(w)$. Additionally, define $C_{a, b}$ as a special form
of applying $X_a^L$ first and then $Y_b^R$ which is only defined if $X_a(w; -\infty)$ is strictly
larger than $Y_b(w; +\infty)$. For an example of $X_a^L$ and $X_a^R$ acting on a
word see \autoref{fig:XaExample}. Note that we have $w = (w \cdot X_a^L) a (w \cdot X_a^R) =
(w \cdot Y_a^L) a (w \cdot Y_a^R) = (w \cdot Y_b^L) b (w \cdot C_{a, b}) a (w \cdot X_a^R)$
(whenever these factors are defined). For example, we have $cbbcdcaca \cdot C_{a, b} = cdc$; see the
upper part of \autoref{fig:Cab} for a graphical representation of this example. If we apply a sequence
of factorizations, we omit the $\cdot$ between them, e.\,g.\ we write $w \cdot X_a^L Y_b^R = w \cdot
X_a^L \cdot Y_b^R = w \cdot C_{a, b}$.

\begin{figure}[h]
  \centering\resizebox{6cm}{!}{%
    \begin{tikzpicture}[text height=.8em, text depth=.2em]
      \node (w) {$w = {}$};
      \matrix [right=0cm of w, rectangle, draw, matrix of math nodes, ampersand replacement=\&, inner sep=2pt] (word) {
        a \& b \& b \& a \& b \& b \& b \& b \& b \& a \& b \& b \& b \& a \& b \& a \& b \& b \& a \& b \\
      };
      \node[above=0.5cm of word-1-4] (l) {$l$};
      \draw (l) edge[-latex, shorten >= 2pt] (word-1-4);
      \node[above=0.5cm of word-1-10] (Xa) {$X_a(w; l)$};
      \draw (Xa) edge[-latex, shorten >= 2pt] (word-1-10);
      \node[above=0.5cm of word-1-18] (r) {$r$};
      \draw (r) edge[-latex, shorten >= 2pt] (word-1-18);
      \path[decorate, decoration=brace, draw]
        ([yshift=-4pt]word-1-9.south east) -- node[below] {$w_{(l, r)} \cdot X_a^L$}
        ([yshift=-4pt]word-1-5.south west);
      \path[decorate, decoration=brace, draw]
        ([yshift=-4pt]word-1-17.south east) -- node[below] (label) {$w_{(l, r)} \cdot X_a^R$}
        ([yshift=-4pt]word-1-11.south west);
    \end{tikzpicture}}\vspace*{-.5\baselineskip}
    \caption{\label{fig:XaExample}Application of $X_a^L$ and $X_a^R$ to an example word.}\vspace*{-1.5\baselineskip}
\end{figure}

\paragraph*{Relations for the Trotter-Weil Hierarchy.}
With these definitions in place, we define for $m, n \in \Nat$ the relations $\equiv_{m, n}^X$, $\equiv_{m, n}^Y$ and
$\equiv_{m, n}^\WI$ of accessible words.\footnote{The presented relations could also be defined by (condensed)
\emph{rankers} (as it is done in \cite{kufleitner2012alternation} and \cite{kufleitner2012logical}).
Rankers were introduced by Weis and Immerman \cite{weis2009structure} (thus, the $\WI$ exponent in
$\equiv_{m, n}^\WI$) who reused the
\emph{turtle programs} by Schwentick, Thérien and Vollmer \cite{schwentick2002partially}. Another
concept related to condensed rankers is the \emph{unambiguous interval temporal logic} by Lodaya,
Pandya and Shah \cite{lodaya2008marking}.} The idea is that these relations hold on two words
$u$ and $v$ if both words allow for the same sequence of factorizations at the first or last
occurrence of a letter. For this, we use the following recursive definition.

\begin{definition}\label{def:relations}
  Let $m, n \in \Nat$ and let $u$ and $v$ be accessible words. Define recursively:
  \begin{enumerate}[leftmargin=*]
    \item $u \equiv_{0, 0}^Z v$, $u \equiv_{m, 0}^Z v$ and $u \equiv_{0, n}^Z v$ for
          $Z \in \smallset{X, Y, \WI}$ always hold.
    \item $\begin{aligned}[t]
            u \equiv_{m, n}^X v &\iff \alphabet(u) = \alphabet(v) \text{, }
            u \equiv_{m - 1, n - 1}^Y v \text{ and }\\
            &\phantom{{}\iff{}} \forall a \in \alphabet{(u)}:
              \begin{aligned}[t]
                &u \cdot X_a^L \equiv_{m - 1, n - 1}^Y v \cdot X_a^L \text{ and }\\
                &u \cdot X_a^R \equiv_{m, n - 1}^X v \cdot X_a^R
              \end{aligned}\\
            u \equiv_{m, n}^Y v &\iff \alphabet(u) = \alphabet(v) \text{, }
            u \equiv_{m - 1, n - 1}^X v \text{ and }\\
            &\phantom{{}\iff{}} \forall a \in \alphabet{(u)}:
              \begin{aligned}[t]
                &u \cdot Y_a^L \equiv_{m, n - 1}^Y v \cdot Y_a^L \text{ and }\\
                &u \cdot Y_a^R \equiv_{m - 1, n - 1}^X v \cdot Y_a^R
              \end{aligned}\\
            u \equiv_{m, n}^\WI v &\iff \alphabet(u) = \alphabet(v) \text{, }\\
            &\phantom{{}\iff{}} \forall a \in \alphabet{(u)}:
              \begin{aligned}[t]
                &u \cdot X_a^L \equiv_{m - 1, n - 1}^\WI v \cdot X_a^L \text{ and }\\
                &u \cdot X_a^R \equiv_{m, n - 1}^\WI v \cdot X_a^R \text{,}
              \end{aligned}\\
            &\phantom{{}\iff{}} \forall a \in \alphabet{(u)}:
              \begin{aligned}[t]
                &u \cdot Y_a^L \equiv_{m, n - 1}^\WI v \cdot Y_a^L \text{ and }\\
                &u \cdot Y_a^R \equiv_{m - 1, n - 1}^\WI v \cdot Y_a^R \text{ and}
              \end{aligned}\\
            &\phantom{{}\iff{}} \forall a, b \in \alphabet{(u)}:
              \parbox[t]{7cm}{%
                $u \cdot C_{a, b}$ and $v \cdot C_{a, b}$ are either both undefined or
                both defined and $u \cdot C_{a, b} \equiv_{m - 1, n - 1}^\WI
                  v \cdot C_{a, b}$ holds.}
          \end{aligned}$
  \end{enumerate}
  Additionally, define $u \equiv_{m, n}^{XY} v \iff u \equiv_{m, n}^X v \text{ and } u
  \equiv_{m, n}^Y v$ for all $m, n \in \Nat_0$.
\end{definition}

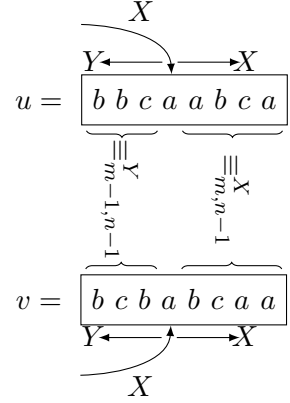
\begin{wrapfigure}{R}{0.3\textwidth}
  \vspace*{-\baselineskip}%
  \centering%
  \begin{tikzpicture}[text height=.8em, text depth=.2em]
    \node (u) {$u = {}$};
    \matrix [right=0cm of u, rectangle, draw, matrix of math nodes, ampersand replacement=\&, inner sep=2pt] (uword) {
      b \& b \& c \& a \& a \& b \& c \& a \\
    };
    \node[below=2cm of u] (v) {$v = {}$};
    \matrix [right=0cm of v, rectangle, draw, matrix of math nodes, ampersand replacement=\&, inner sep=2pt] (vword) {
      b \& c \& b \& a \& b \& c \& a \& a \\
    };
    \draw (uword.west)+(0, 1cm) edge[-latex, shorten >= 2pt, bend left, in=125] node[above] {$X$} (uword-1-4);
    \draw (vword.west)+(0, -1cm) edge[-latex, shorten >= 2pt, bend right, in=-125] node[below] {$X$} (vword-1-4);
    \draw ($(uword-1-4)+(-0.1cm, 0.5cm)$) edge[-latex, shorten >= 2pt] node[pos=1.0] {$Y$} ($(uword-1-4) + (-1cm, 0.5cm)$);
    \draw ($(uword-1-4)+(+0.1cm, 0.5cm)$) edge[-latex, shorten >= 2pt] node[pos=1.0] {$X$} ($(uword-1-4) + (+1cm, 0.5cm)$);
    \draw ($(vword-1-4)+(-0.1cm, -0.5cm)$) edge[-latex, shorten >= 2pt] node[pos=1.0] {$Y$} ($(vword-1-4) + (-1cm, -0.5cm)$);
    \draw ($(vword-1-4)+(+0.1cm, -0.5cm)$) edge[-latex, shorten >= 2pt] node[pos=1.0] {$X$} ($(vword-1-4) + (+1cm, -0.5cm)$);
    \path[decorate, decoration=brace, draw]
      ([yshift=-4pt]uword-1-3.south east) -- node[shape=coordinate, below, name=lefttop] {}
      ([yshift=-4pt]uword-1-1.south west);
    \path[decorate, decoration=brace, draw]
      ([yshift=-4pt]uword-1-8.south east) -- node[shape=coordinate, below, name=righttop] {}
      ([yshift=-4pt]uword-1-5.south west);
    \path[decorate, decoration=brace, draw]
      ([yshift=+4pt]vword-1-1.north west) -- node[shape=coordinate, above, name=leftbottom] {}
      ([yshift=+4pt]vword-1-3.north east);
    \path[decorate, decoration=brace, draw]
      ([yshift=+4pt]vword-1-5.north west) -- node[shape=coordinate, above, name=rightbottom] {}
      ([yshift=+4pt]vword-1-8.north east);
    \path (lefttop) -- node[pos=0.5, anchor=center, rotate=-90] {$\equiv_{m - 1, n - 1}^Y$} (leftbottom);
    \path (righttop) -- node[pos=0.5, anchor=center, rotate=-90] {$\equiv_{m, n - 1}^X$} (rightbottom);
  \end{tikzpicture}\vspace*{-\baselineskip}%
  \caption{\label{fig:relations}$\equiv_{m, n}^X$ illustrated.}\vspace*{-1.5\baselineskip}
\end{wrapfigure}

As this definition is vital for the understanding of the rest of this paper, we try to give an
intuitive understanding of how the relations work. In the parameter $m$, we remember the remaining
number of direction changes (which are
caused by an $X_a^L$ or $Y_a^R$ factorizations) in a factorization sequence and the parameter $n$
is the number of remaining factorization moves (independent of their direction). Thus, if $m$ or
$n$ is zero, then the relations shall be satisfied on all accessible word pairs. For $m$ and $n$
larger than zero, our first assertion is that both words have the same alphabet; otherwise, one of
them would admit a factorization at a letter while the other would not, as the letter is not in its
alphabet. For the letters of the common alphabet, we want to be able to perform further factorization
(until we reach $n = 0$). The $X$ or $Y$ exponent of the relation indicates whether we start
at the beginnings or at the ends of the two words; the $\WI$ exponent is a special case,
which we will discuss below. So, if we want $u \equiv_{m, n}^X v$ to hold and the alphabets of $u$ and
$v$ coincide, then we can continue factorizing at the first $a$ in $u$ and in $v$. In the
next step, we continue either in the two left parts or in the two right parts. If we continue in the
right parts (i.\,e.\ we do an $X_a^R$ factorization), then we require $u \cdot X_a^R \equiv_{m, n -
1}^X v \cdot X_a^R$. The $X$ in the exponent indicates that the last factorization position (the first
$a$ in $u$ and $v$, respectively) was at the beginning of the words. We have made one additional
factorization -- thus, we only have $n - 1$ remaining factorizations in this part -- but we did not
change the direction because we were at the beginnings of the words before and still are for the new
word pair -- thus, we still have $m$ such changes of direction. If, instead of taking the right parts
after the first $a$, we had taken the left parts, then we would still have made a single
factorization; thus we have $n - 1$ remaining factorizations. However, we would also have a direction
change: the factorization position (i.\,e.\ the first $a$) is to the right of the new words $u \cdot
X_a^L$ and $v \cdot X_a^L$. Therefore, we have $m - 1$ remaining direction changes and we change the
$X$ exponent into a $Y$ exponent. This is summed up by requiring
$u \cdot X_a^L \equiv_{m - 1, n - 1}^Y v \cdot X_a^L$. Both situations are illustrated in
\autoref{fig:relations}. Additionally, we also have the choice to switch from the beginning to the end
at the cost of one direction change and one factorization. This is reflected in the fact that we require
$u \equiv_{m - 1, n - 1}^Y v$ for $u \equiv_{m, n}^X v$.

\begin{wrapfigure}{l}{0.325\textwidth}
  \vspace{-\baselineskip}%
  \centering%
  \begin{tikzpicture}[text height=.8em, text depth=.2em]
    \node (u) {$u = {}$};
    \matrix [right=0cm of u, rectangle, draw, matrix of math nodes, ampersand replacement=\&, inner sep=2pt] (uword) {
      c \& b \& b \& c \& d \& c \& a \& c \& a \\
    };
    \node[below=2cm of u] (v) {$v = {}$};
    \matrix [right=0cm of v, rectangle, draw, matrix of math nodes, ampersand replacement=\&, inner sep=2pt] (vword) {
      b \& c \& b \& d \& c \& c \& a \& a \& c \\
    };
    \draw (uword.west)+(0, 1cm) edge[-latex, shorten >= 2pt, bend left, in=125] node[below, pos=0.1] {$X$} (uword-1-7);
    \draw (vword.west)+(0, -1cm) edge[-latex, shorten >= 2pt, bend right, in=-125] node[above, pos=0.1] {$X$} (vword-1-7);
    \draw (uword.east)+(0, 1cm) edge[-latex, shorten >= 2pt, bend right, in=-125] node[below, pos=0.1] {$Y$} (uword-1-3);
    \draw (vword.east)+(0, -1cm) edge[-latex, shorten >= 2pt, bend left, in=125] node[above, pos=0.1] {$Y$} (vword-1-3);
    \path[decorate, decoration=brace, draw]
      ([yshift=-4pt]uword-1-6.south east) -- node[shape=coordinate, below, name=lefttop] {}
      ([yshift=-4pt]uword-1-4.south west);
    \path[decorate, decoration=brace, draw]
      ([yshift=+4pt]vword-1-4.north west) -- node[shape=coordinate, above, name=leftbottom] {}
      ([yshift=+4pt]vword-1-6.north east);
    \path (lefttop) -- node[pos=0.5, anchor=center, rotate=-90] {$\equiv_{m - 1, n - 1}^\WI$} (leftbottom);
  \end{tikzpicture}\vspace*{-.5\baselineskip} %
  \caption{\label{fig:Cab}$\equiv_{m, n}^\WI$ illustrated.}\vspace*{-\baselineskip}
\end{wrapfigure}
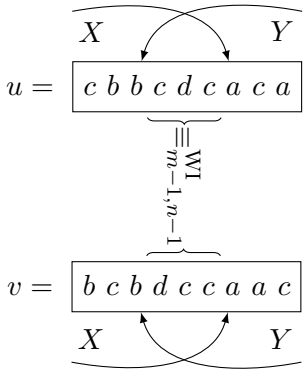
The relation $\equiv_{m, n}^Y$ works symmetrically to $\equiv_{m, n}^X$. However, $\equiv_{m, n}^\WI$
is a bit different. Here, we still lose one direction change after an $X_a^L$ or $Y_a^R$ factorization
while the number keeps the same for $X_a^R$ and $Y_a^L$ factorizations but we also have the special
$C_{a, b}$ factorizations, which start simultaneously at the beginning and the end. Consider the situation
given in \autoref{fig:Cab}. We go to the first $a$ and the last $b$; the former is to the right of the
latter in both words (or the $\equiv_{m, n}^\WI$ relation does not hold). For the pair of parts in the
middle, we require $u \cdot C_{a, b} \equiv_{m - 1, n - 1}^\WI v \cdot C_{a, b}$, i.\,e.\ we count this
as a single factorization move which loses a single direction change.
Finally, there is $\equiv_{m, n}^{XY}$ where we can start factorizing at the beginning or at the end.

By simple inductions, one can see that the relations are congruences of finite index over $\Sigma^*$.
Also note that $u \equiv_{m, n}^Z v$ implies $u \equiv_{m, k}^Z v$ and, if $m > 0$, also
$u \equiv_{m - 1, k}^Z v$ for all $k \leq n$ and $Z \in \smallset{X, Y, XY, \WI}$.

The importance of the just defined relations for this paper yields from their connection to the
Trotter-Weil Hierarchy (and, thus, to $\DA$), which we state in the next theorem.
\begin{theorem}\label{thm:TWcharacterization}
  Let $M$ be a monoid and $m \in \Nat$. Then\
  \begin{itemize}[itemsep=4.5pt plus 2.0pt minus 2.5pt]\enlargethispage*{1\baselineskip}
    \item $M \prec \Sigma^* / {\equiv_{m, n}^X} \text{ for some } n \in \Nat_0 \iff M \in \Rm$,
    \item $M \prec \Sigma^* / {\equiv_{m, n}^Y} \text{ for some } n \in \Nat_0 \iff M \in \Lm$,
    \item $M \prec \Sigma^* / {\equiv_{m, n}^{XY}} \text{ for some } n \in \Nat_0 \iff M \in \Rm \vee \Lm$ and
    \item $M \prec \Sigma^* / {\equiv_{m, n}^\WI} \text{ for some } n \in \Nat_0 \iff M \in \Rm \cap \Lm$ hold.
  \end{itemize}
\end{theorem}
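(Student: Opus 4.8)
The plan is to treat each of the four equivalences as two inclusions. For the \enquote{$\Rightarrow$} direction it suffices, by closure of varieties under division, to show that each finite quotient $\Sigma^*/{\equiv_{m,n}^Z}$ is itself a member of the associated variety; for the \enquote{$\Leftarrow$} direction I must exhibit every member of the variety as a divisor of some $\Sigma^*/{\equiv_{m,n}^Z}$. Both directions go by induction on the level $m$, and since $\equiv_{m,n}^X$ is defined through $\equiv_{m-1,n-1}^Y$ (and dually), the four relations $X,Y,XY,\WI$ have to be carried along in a single simultaneous induction that runs parallel to the recursions $\Rm[m+1]=\V[K]\malcev\Lm$ and $\Lm[m+1]=\V[D]\malcev\Rm$. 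For the \enquote{$\Leftarrow$} direction I take $\Sigma$ large enough to generate the monoid under consideration, so that the relatively free monoid of the variety over $\Sigma$ is the inverse limit $\varprojlim_n\Sigma^*/{\equiv_{m,n}^Z}$ and every finite member occurs as a quotient at some finite stage $n$.

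For \enquote{$\Rightarrow$} I would rely on the equational characterization of \autoref{lem:TrotterWeilEquations}. To obtain $\Sigma^*/{\equiv_{m,n}^X}\in\Rm$ it is enough to verify the defining identity of $\Rm$ in this finite quotient (namely $U_1=V_1$ for $m=1$, and $(U_{m-1}x_m)^\pi U_{m-1}=(U_{m-1}x_m)^\pi V_{m-1}$ for $m\ge 2$), i.e.\ that for every assignment of the variables to words over $\Sigma$ and of $\pi$ to the exponent $K$ of the quotient the two resulting words $(ux)^{K}u$ and $(ux)^{K}v$ are $\equiv_{m,n}^X$-related, where $u=\sigma(\subs{U_{m-1}}{K})$, $v=\sigma(\subs{V_{m-1}}{K})$ and $x$ is the image of the fresh variable. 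The cases $\Lm$, $\Rm\cap\Lm$ and $\Rm\vee\Lm$ are analogous, using $\equiv^Y$, using $\equiv^\WI$ together with $U_{m-1}=V_{m-1}$, and using the embedding $\Sigma^*/{\equiv_{m,n}^{XY}}\hookrightarrow\Sigma^*/{\equiv_{m,n}^X}\times\Sigma^*/{\equiv_{m,n}^Y}$. The combinatorial core is to establish the required relatedness by tracking how the factorizations $X_a^L,X_a^R$ (and $Y_a^L,Y_a^R,C_{a,b}$) act on the highly structured words $(ux)^{K}u$ and $(ux)^{K}v$: such a factorization either lands inside one of the identical repeated $ux$-blocks, or lands in the trailing $u$-block versus $v$-block, where a strictly lower instance of the induction already supplies the needed relation between the two blocks.

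For \enquote{$\Leftarrow$} I would again induct on $m$. The base case $m=1$ is $\Rm[1]=\V[J]$: here the $X_a^R$-branching of $\equiv_{1,n}^X$ detects precisely which sequences of length at most (roughly) $n$ embed as scattered subwords, so the quotients realize exactly the piecewise testable behaviour and hence, by the classical subword characterization of $\V[J]$, every $\mathcal{J}$-trivial monoid generated by $\Sigma$. For the step I would use the Mal'cev definition $\Rm=\V[K]\malcev\Lm[m-1]$: from $M\in\Rm$ we get $M/{\sim_K}\in\Lm[m-1]$, so by induction $M/{\sim_K}$ is a quotient of some $\Sigma^*/{\equiv_{m-1,n'}^Y}$, and the task is to promote this to $M$ itself. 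The decisive sub-lemma is that the direction-lowering moves of $\equiv_{m,n}^X$ — the factorizations $X_a^L$ together with the switch to $\equiv_{m-1,n-1}^Y$ — correspond exactly to passing to an idempotent-stabilized prefix $e$ with $ex\R e$ as in the definition of $\sim_K$; reading a prefix from the left until it stabilizes recovers the $\sim_K$-class, and the residual difference is then governed by the $\Lm[m-1]$-information already captured by $\equiv^Y$. The $Y$/$\Lm$ case is symmetric, the $XY$/$\Rm\vee\Lm$ case follows by writing a join member as a divisor of a product $R\times L$ and passing through the diagonal map into $\Sigma^*/{\equiv_{m,n}^{XY}}$, and the $\WI$/$\Rm\cap\Lm$ case additionally uses the two-sided $C_{a,b}$ factorizations to encode the simultaneous left/right stabilization.

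The step I expect to be the main obstacle is precisely this completeness sub-lemma. Qualitatively one must match a purely combinatorial device (a bounded number of direction-changing factorizations) with an algebraic one (the Mal'cev congruence $\sim_K$); quantitatively one must show that bounding the number of moves by a finite $n$ still separates all $\sim_K$-inequivalent pairs, which forces a careful choice of $n$ in terms of $n'$ and the exponent of $M$ and is exactly the source of the nesting-depth blow-up that the later algorithmic sections are designed to avoid. The $\WI$/$\Rm\cap\Lm$ case compounds this, since the $C_{a,b}$ factorizations interleave the left and right constructions and $\Rm\cap\Lm$ is strictly smaller than the naive intersection of the $\equiv^X$- and $\equiv^Y$-constructions, so the two-sided moves must be shown to contribute the intersection level exactly.
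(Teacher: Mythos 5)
Your proposal follows the paper's own two-part proof essentially step for step: one direction by verifying the identities of \autoref{lem:TrotterWeilEquations} in the quotients $\Sigma^*/{\equiv_{m,n}^Z}$ via a combinatorial analysis of how factorizations act on $(ux)^K u$ versus $(ux)^K v$ (this is \autoref{lem:UmVmAreInRelation} and \autoref{thm:relationsInTrotterWeil}), and the converse by induction on $m$ with Simon's subword argument at the base and the $\sim_K$-plus-$\mathcal{R}$-factorization stabilization argument at the step (\autoref{thm:relationsImplyEqualityUnderHom}), including the interleaved two-sided treatment with $C_{a,b}$ for the intersection levels. The only cosmetic differences are your inverse-limit framing of the completeness direction and that the paper's budget for $n$ is $n'$ plus the number of $\mathcal{R}$-classes (resp.\ $\mathcal{R}$- and $\mathcal{L}$-classes) rather than the exponent of $M$.
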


The combinatorial nature of the relations will turn out to be useful in the remainder of this
paper since it allow us, for example, to obtain efficient algorithms for the problems we consider.
Unfortunately, on the other hand, it makes proving \autoref{thm:TWcharacterization} quite technical.
In fact, we will dedicate the next section to this proof.

\section{A Proof for \autoref{thm:TWcharacterization}}

We will prove both directions of \autoref{thm:TWcharacterization} individually. Before we can do this, however, we need to introduce some more concepts.

\paragraph{$\mathcal{R}$- and $\mathcal{L}$-factorizations.}
Let $\varphi: \Sigma^* \to M$ be a (monoid) homomorphism into a monoid $M$. The
\emph{$\mathcal{R}$-factorization} of a finite word $w$ is the (unique) factorization $w = w_0 a_1 w_1 a_2
w_2 \allowbreak \dots a_k w_k$ with $w_0, w_1 \dots, w_k \in \Sigma^*$ and $a_1, a_2, \dots, a_k \in \Sigma$
such that, on the one hand,
\begin{align*}
  \varphi(\varepsilon) &\R \varphi(w_0) \text{ and}\\
  \varphi(w_0 a_1 w_1 a_2 w_2 \dots a_i) &\R \varphi(w_0 a_1 w_1 a_2 w_2 \dots a_i w_i)
\end{align*}
hold for $i = 1, 2, \dots, k$ and, on the other hand,
\begin{gather*}
  \varphi(w_0 a_1 w_1 a_2 w_2 \dots a_i w_i) \not\R \varphi(w_0 a_1 w_1 a_2 w_2 \dots a_i w_i a_{i + 1})
\end{gather*}
holds for $i = 0, 1, \dots, k - 1$. Symmetrically, the \emph{$\mathcal{L}$-factorization} of $w$ is
the factorization $w = w_0 a_1 w_1 a_2 w_2 \dots a_k w_k$ with $w_0, w_1 \dots, w_k \in \Sigma^*$
and $a_1, a_2, \allowbreak \dots, \allowbreak a_k \in \Sigma$ such that, on the one hand,
\begin{align*}
  \varphi(w_k) &\L \varphi(\varepsilon) \text{ and}\\
  \varphi(w_{i - 1} a_i w_i a_{i + 1} w_{i + 1} \dots a_k w_k) &\L \varphi(a_i w_i a_{i + 1} w_{i + 1} \dots a_k w_k)
\end{align*}
hold for $i = 1, 2, \dots, k$ and, on the other hand,
\begin{gather*}
  \varphi(a_i w_i a_{i + 1} w_{i + 1} a_{i + 2} w_{i + 2} \dots a_k w_k) \not\L
    \varphi(w_i a_{i + 1} w_{i + 1} a_{i + 2} w_{i + 2} \dots a_k w_k)
\end{gather*}
holds for $i = 1, 2, \dots, k$.

\paragraph*{$\DA$ and $\mathcal{R}$-classes.}
In $\DA$, getting into a new $\mathcal{R}$-class is strictly coupled to an element's alphabet, as
the following lemma shows\footnote{The curious reader might be interested in the fact that the
lemma's assertion also holds for monoids in $\V[DS]$, the variety of monoids whose regular
$\mathcal{D}$-classes form (arbitrary, but finite) semigroups. More on $\V[DS]$ can, for example, be
found in \cite{alm94:short}.}, where $a$ can be seen as one of the
monoids generators (i.\,e.\ a letter in its alphabet).
\begin{lemma}\label{lem:RDescendInDA}
  Let $M \in \DA$ be a monoid and let $s, t \in M$ such that $s \R t$. Then
  \[
    s \R sa \implies t \R ta
  \]
  holds for all $a \in M$.
\end{lemma}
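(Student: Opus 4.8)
The plan is to reduce the whole statement to a single application of the defining equation of $\DA$ from \autoref{fct:DAByEquation}. First I would unfold the two $\mathcal{R}$-relations into explicit witnesses: since $s \R t$, the characterization of $\mathcal{R}$ from the preliminaries gives $p, q \in M$ with $sp = t$ and $tq = s$; and since $s \R sa$, the fact that $s \in saM$ gives some $c \in M$ with $sac = s$ (the other inclusion $sa \in sM$ is automatic and is not needed). The goal $t \R ta$ is equivalent to $t \in taM$, because $ta \in tM$ always yields $ta \leq_{\mathcal{R}} t$; so it suffices to produce a single element $d$ with $t = ta\,d$.

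The key observation is that these witnesses close up into a loop at $t$:
\[
  t\cdot qacp = (tq)\,acp = s\,acp = (sac)\,p = sp = t,
\]
so right multiplication by $r := qacp$ fixes $t$. By induction $t r^k = t$ for all $k\geq 1$, hence $t r^{M!} = t$. Writing $f := r^{M!} = (qacp)^{M!}$, which is idempotent by the definition of the exponent $M!$, we obtain $tf = t$.

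Now I apply \autoref{fct:DAByEquation} with the substitution $(x,y,z) = (q, a, cp)$, which gives $(qacp)^{M!}\,a\,(qacp)^{M!} = (qacp)^{M!}$, that is, $faf = f$. Combining the two identities yields $t = tf = t(faf) = tfaf = taf$, where the third equality substitutes $f = faf$ and the last uses $tf = t$ on the leading factor. Thus $t = ta\,f \in taM$, so $t \leq_{\mathcal{R}} ta$, and together with $ta \leq_{\mathcal{R}} t$ we conclude $t \R ta$, as required.

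The main difficulty, and the only place where membership in $\DA$ is genuinely used, is that $s$ and $t$ differ by a right multiplier $q$ that necessarily sits between $t$ and the letter $a$: every expression we can form has the shape $t\,q\,a\cdots$, so no manipulation valid in an arbitrary finite monoid brings $a$ adjacent to $t$. The rectangular-band-type collapse $faf = f$ supplied by $\DA$ is exactly what absorbs this intervening factor. I expect the verification that $r$ fixes $t$ and the bookkeeping around the idempotent power $M!$ to be routine; the one creative step is choosing the factorization $(x,y,z) = (q, a, cp)$ of the loop word $r = qacp$ so that the distinguished middle letter $y$ of the $\DA$-equation is precisely the letter $a$ we wish to reintroduce after $t$.
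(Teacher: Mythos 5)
Your proof is correct and follows essentially the same route as the paper's: both construct a word $w$ containing $a$ with $tw = t$, raise it to the idempotent power $M!$, and apply the equation of \autoref{fct:DAByEquation} to insert $a$ directly after $t$. The only difference is cosmetic — the paper takes $w = xay$ directly from witnesses $s = tx$ and $t = say$ (the latter via $t \R sa$), whereas you assemble $w = qacp$ from witnesses for $s \R t$ in both directions together with $sac = s$.
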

\begin{proof}
  Since we have $t \R s \R sa$, there are $x, y \in M$ with $s = tx$ and $t = say$. We then have
  \[
    t = t x a y = t (x a y)^2 = \dots = t (x a y)^{M!} \text{,}
  \]
  which yields
  \[
    ta (xay)^{M!} = t (xay)^{M!} a (xay)^{M!} = t (xay)^{M!} = t \text{.}
  \]
  using the equation from \autoref{fct:DAByEquation}. Thus, we have $ta \R t$.
\end{proof}
One of the main applications of the previous lemma is the following. If we have a monoid
$M \in \DA$, a homomorphism $\varphi: \Sigma^* \to M$ and the $\mathcal{R}$-factorization $w = w_0
a_1 w_1 a_2 w_2 \dots a_k w_k$ of a finite word $w \in \Sigma^*$, then we know that $a_i \not\in
\alphabet(w_{i - 1})$ for $i = 1, 2, \dots, k$. If we had $a_i \in \alphabet(w_{i - 1})$, we could
factorize $w_{i - 1} = u a_i v$ and would have
\[
  \varphi(w_0 a_1 w_1 a_2 w_2 \dots a_{i - 1} u) \R
    \varphi(w_0 a_1 w_1 a_2 w_2 \dots a_{i - 1} u a_i)
\]
and, by the previous lemma, also
\[
  \varphi(w_0 a_1 w_1 a_2 w_2 \allowbreak \dots \allowbreak a_{i - 1} u a_i v) \R
    \varphi(w_0 a_1 w_1 a_2 w_2 \dots a_{i - 1} u a_i v\allowbreak a_i) \text{,}
\]
which results in a contradiction to the
definition of $\mathcal{R}$-fact\-or\-iza\-tions. Of course, we can apply a left-right dual of the lemma
to get an analogue statement for $\mathcal{L}$-factorizations.

Another application of \autoref{lem:RDescendInDA} is stated in the following lemma.
\begin{lemma}\label{lem:RDescendInDAWords}
  Let $M \in \DA$ be a monoid, $s \in M$ and let $\varphi: \Sigma^* \to M$ be a homomorphism. Then,
  for all finite words $u, v \in \Sigma^*$ with $\alphabet(u) = \alphabet(v)$, we have
  \[
    s \R s \varphi(u) \implies s \R s \varphi(v) \text{.}
  \]
\end{lemma}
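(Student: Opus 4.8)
The plan is to reduce the statement to the case where $v$ is a single letter and then append the letters of $v$ to $s$ one at a time. Concretely, I would first prove the single-letter version: if $s \R s\varphi(u)$ and $a \in \alphabet(u)$, then $s \R s\varphi(a)$. Granting this, the lemma follows by induction on the length of $v$: writing $v = a_1 a_2 \cdots a_\ell$, every $a_i$ lies in $\alphabet(v) = \alphabet(u)$, so the single-letter version gives $s \R s\varphi(a_i)$; then \autoref{lem:RDescendInDA}, applied with $t = s\varphi(a_1 \cdots a_{i-1})$ (which is $\mathcal{R}$-equivalent to $s$ by the induction hypothesis) and the element $\varphi(a_i)$, upgrades $s \R s\varphi(a_i)$ to $s\varphi(a_1\cdots a_{i-1}) \R s\varphi(a_1 \cdots a_i)$. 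Transitivity then yields $s \R s\varphi(a_1 \cdots a_i)$, and the case $i = \ell$ is the claim.

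For the single-letter version, the idea is to sandwich right ideals. Factor $u = u_1 a u_2$ according to some occurrence of $a$. Then the inclusions $sM \supseteq s\varphi(u_1)M \supseteq s\varphi(u_1 a)M \supseteq s\varphi(u)M$ hold trivially, while the hypothesis $s \R s\varphi(u)$ forces the two ends to coincide. Hence the whole chain collapses and, in particular, $s \R s\varphi(u_1)$ as well as $s\varphi(u_1) \R s\varphi(u_1)\varphi(a)$. Now I would feed this into \autoref{lem:RDescendInDA}: taking the two $\mathcal{R}$-equivalent elements to be $s\varphi(u_1)$ and $s$, and the appended element to be $\varphi(a)$, the hypothesis $s\varphi(u_1) \R s\varphi(u_1)\varphi(a)$ transports to $s \R s\varphi(a)$, as desired. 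When $a$ is the first letter of $u$, i.e.\ $u_1 = \varepsilon$, the conclusion $s \R s\varphi(a)$ already drops out of the sandwich, and \autoref{lem:RDescendInDA} is not even needed.

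The main obstacle is that one cannot simply \enquote{cancel} a prefix. The most naive attempt --- appending $\varphi(v)$ to $s\varphi(u)$ to obtain $s \R s\varphi(uv)$ and then trying to pass to $s\varphi(v)$ --- fails, because $\mathcal{R}$ is only a left congruence and not a right congruence, so knowing $s\varphi(u) \R s$ does not let us strip $\varphi(u)$ off the left of $s\varphi(u)\varphi(v)$. This is exactly why the argument must rebuild $\varphi(v)$ letter by letter directly to the right of $s$, and why the genuine work sits in the single-letter base case, where membership in $\DA$ enters through \autoref{lem:RDescendInDA}. The alphabet hypothesis $\alphabet(u) = \alphabet(v)$ (in fact only $\alphabet(v) \subseteq \alphabet(u)$ is used) is precisely what guarantees that each letter $a_i$ of $v$ is available inside $u$, so that the single-letter version applies at every step.
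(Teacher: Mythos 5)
Your proof is correct and follows essentially the same route as the paper's: both locate an occurrence of the relevant letter $a$ inside $u$, use the chain of right ideals $sM \supseteq s\varphi(u_1)M \supseteq s\varphi(u_1a)M \supseteq s\varphi(u)M = sM$ to see that appending $\varphi(a)$ at that point preserves the $\mathcal{R}$-class, and then transfer this fact with \autoref{lem:RDescendInDA}. The paper organises the letter-by-letter progression along $v$ as a shortest-bad-prefix contradiction and applies the transfer lemma once per letter (from $s\varphi(u')$ directly to $s\varphi(v')$) where you apply it twice (passing through $s \mathrel{\mathcal{R}} s\varphi(a)$), but these are only cosmetic differences.
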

\begin{proof}
  The case $u = v = \varepsilon$ is trivial. Therefore, assume $s \R s \varphi(u)$ but $s \not\R
  s \varphi(v' a)$ for some prefix $v' a$ of $v$ with $a \in \Sigma$. Without loss of generality,
  let $v' a$ be the shortest such prefix; thus, we have $s \R s \varphi(v')$. Since we have
  $\alphabet(u) = \alphabet(v)$, the is a prefix $u' a$ of $u$. Because of $s \R s \varphi(u)$, we
  have $s \R s \varphi(u') \R s \varphi(u' a)$. Thus, we have $s \varphi(v') \R s \R s \varphi(u')
  \R s \varphi(u') \varphi(a)$. Now, \autoref{lem:RDescendInDA} yields $s \varphi(v') \R
  s \varphi(v') \varphi(a)$, which is a contradiction to our assumption.
\end{proof}

Now, we are prepared to prove the characterization of the Trotter-Weil Hierarchy stated in
\autoref{thm:TWcharacterization}. This is done in the following two theorems (see also
\cite{kufleitner2012logical} for the corners and \cite{kufleitner2012alternation} for the
intersection levels). We use the notations $X_\Sigma^D = \{ X_a^L, X_a^R \mid a \in \Sigma \}$,
$Y_\Sigma^D = \{ Y_a^L, Y_a^R \mid a \in \Sigma \}$ and some natural variations of it.

\begin{restatable}{theorem}{relationsImplyEqualityUnderHom}\label{thm:relationsImplyEqualityUnderHom}
  Let $M$ be a finite monoid, $\varphi: \Sigma^* \to M$ a homomorphism and $m \in \Nat$. Then:
  \begin{itemize}[itemsep=4.5pt plus 2.0pt minus 2.5pt]\enlargethispage{1.5\baselineskip}
    \item $M \in \Rm \implies \left( \exists n \in \Nat \, \forall u, v \in \Sigma^*:
      u \equiv_{m, n}^X v \implies \varphi(u) = \varphi(v) \right)$
    \item $M \in \Lm \implies \left( \exists n \in \Nat \, \forall u, v \in \Sigma^*:
      u \equiv_{m, n}^Y v \implies \varphi(u) = \varphi(v) \right)$
    \item $M \in \Rm \vee \Lm \implies \left( \exists n \in \Nat \, \forall u, v \in \Sigma^*:
      u \equiv_{m, n}^{XY} v \implies \varphi(u) = \varphi(v) \right)$
    \item $M \in \Rm[m + 1] \cap \Lm[m + 1] \implies \left( \exists n \in \Nat \, \forall u, v \in
      \Sigma^*:
      u \equiv_{m, n}^\WI v \implies \varphi(u) = \varphi(v) \right)$
  \end{itemize}
\end{restatable}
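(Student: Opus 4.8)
The plan is to prove all four implications by a single simultaneous induction on $m$, with the two corner statements feeding into one another: the $X$-statement at level $m+1$ will be derived from the $Y$-statement at level $m$ (and dually), mirroring the Mal'cev definitions $\Rm[m+1] = \V[K]\malcev\Lm$ and $\Lm[m+1] = \V[D]\malcev\Rm$. Since every variety occurring here lies in $\DA$ by \autoref{fct:trotterWeilIsDA}, I may use \autoref{lem:RDescendInDA} and \autoref{lem:RDescendInDAWords} throughout; the latter is what ties the combinatorial factorizations of \autoref{def:relations} to the algebra, because it says that the $\mathcal{R}$-class reached while reading a word from the left is governed solely by the cumulative alphabet. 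The base case $m = 1$ treats $\Rm[1] = \Lm[1] = \V[J]$ directly: in a $\mathcal{J}$-trivial monoid, $\mathcal{R}$-triviality identifies each prefix with the same prefix extended by the next plateau factor, so $\varphi(w)$ depends only on the milestone letters of the $\mathcal{R}$-factorization; for $n$ exceeding the length of any descending $\mathcal{R}$-chain (hence bounded in terms of $|M|$), this milestone sequence together with the intermediate alphabets is exactly what the $X_a^R$-branch of $\equiv_{1,n}^X$ records, giving $\varphi(u) = \varphi(v)$. The $Y$-case is the left--right dual, and the $XY$- and $\WI$-statements at level $1$ (which concern $\V[J]$ as well) reduce to the same analysis.

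For the inductive step of the $X$-implication, let $M \in \Rm[m+1]$, so that $M/{\sim_K} \in \Lm$. Composing $\varphi$ with the projection $M \onto M/{\sim_K}$ and invoking the induction hypothesis for $\Lm$ yields an $n'$ with $u \equiv_{m,n'}^Y v \implies \varphi(u) \sim_K \varphi(v)$. I would then choose $n$ large relative to $n'$ and $|M|$ and read both words along their $\mathcal{R}$-factorizations $u = u_0 a_1 u_1 \dots a_k u_k$ and $v = v_0 b_1 v_1 \dots$. The repeated $X_a^R$-branch of $\equiv_{m+1,n}^X$ (which preserves the first index $m+1$) forces the milestone letters to coincide and the successive right-remainders to stay $\equiv_{m+1,\cdot}^X$-related, while the $X_a^L$-branch (which drops the first index to $m$ and switches to $Y$) delivers, at each milestone, a $\equiv_{m,n'}^Y$-comparison of the matching left factors; \autoref{lem:RDescendInDAWords} is what guarantees that the two factorizations line up, since the milestones are pinned down by the already-compared alphabets. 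Passing the left-factor comparisons through the induction hypothesis yields $\sim_K$-equality of the corresponding pieces, which I finally glue: at each milestone the running prefix sits in an $\mathcal{R}$-class stabilized by an idempotent, so the defining implication of $\sim_K$, namely $e x \R e \implies e x = e y$, permits replacing one piece at a time without altering the product, and hence $\varphi(u) = \varphi(v)$. The $Y$-implication is the exact dual via $\Lm[m+1] = \V[D]\malcev\Rm$ and $\sim_D$.

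The join implication is the easiest: by definition of $\vee$, $M$ divides a finite direct product of monoids each in $\Rm$ or in $\Lm$, and projecting $\varphi$ to each factor and applying the $X$- or $Y$-implication just established gives, for a common $n$, that $u \equiv_{m,n}^{XY} v$ forces equality in every factor and hence in $M$. The intersection implication for $\equiv_{m,n}^{\WI}$ is the genuinely two-sided one. Here $M \in \Rm[m+1]\cap\Lm[m+1]$ gives both $M/{\sim_K}\in\Lm$ and $M/{\sim_D}\in\Rm$, so the induction hypothesis supplies $\sim_K$- and $\sim_D$-comparisons obtained from the left and from the right; the $C_{a,b}$-clause of \autoref{def:relations}, which compares the factors lying strictly between the first $a$ and the last $b$, is precisely the two-sided datum that lets the left gluing along the $\mathcal{R}$-factorization and the right gluing along the $\mathcal{L}$-factorization be run consistently on one and the same word, closing the gap that $\sim_K$ and $\sim_D$ would leave open separately. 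This is the step where the $C_{a,b}$-factorizations are indispensable.

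The main obstacle I foresee is the gluing in the corner step: one must exhibit, at each milestone of the $\mathcal{R}$-factorization, an idempotent $e$ for which the relevant running product $x$ satisfies $e x \R e$, so that the $\sim_K$-equal pieces supplied by the induction may be substituted successively without perturbing the product---this is exactly where \autoref{lem:RDescendInDA} and the $\DA$-structure do the real work, and where the index $n$ must be chosen to absorb both $n'$ and the number of milestones (of order $|M|$). The $\WI$-case inherits this difficulty twice over, since the same idempotent-based substitution has to be synchronized from both ends; keeping the recursion of \autoref{def:relations} aligned with the two factorizations is the delicate bookkeeping that the formal proof will have to manage.
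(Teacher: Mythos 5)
Your plan coincides with the paper's proof in all essentials: the base case via the $\mathcal{R}$-factorization in a $\mathcal{J}$-trivial monoid, the corner step via $M/{\sim_K}\in\Lm$ with $n$ of the form $n'$ plus the number of $\mathcal{R}$-classes and a substitution of $\sim_K$-equivalent middle factors behind idempotents of the form $\left(\varphi(u_i)s_i\right)^{M!}$, the join case via division of a direct product, and, for the intersection levels, the interleaving of the $\mathcal{R}$-factorization of $u$ with the $\mathcal{L}$-factorization of $v$ controlled by the $C_{a,b}$-clause. Your diagnosis of where the work lies (aligning the milestones, which rests on \autoref{lem:RDescendInDA}, and choosing $n$ to absorb both $n'$ and the number of milestones) is also accurate.

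One step in the $\WI$-case would fail as literally written. You propose to obtain ``$\sim_K$- and $\sim_D$-comparisons'' of the middle factors from the induction hypotheses for $\Lm$ and $\Rm$ applied to $M/{\sim_K}$ and $M/{\sim_D}$. Those hypotheses consume $\equiv^Y$- and $\equiv^X$-comparisons, but $u\equiv_{m,n}^\WI v$ yields neither: the recursion in \autoref{def:relations} for $\WI$ only ever calls $\WI$ again, so the only datum available for a middle factor is a $\equiv_{m-1,n'}^\WI$-comparison; moreover $M/{\sim_K}$ need not lie in $\Rm\cap\Lm$, so the $\WI$-hypothesis at level $m-1$ cannot be applied to it either. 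The paper's repair is to pass to the join congruence $\approx\,=\,{\sim_K}\vee{\sim_D}$: the quotient $M/{\approx}$ is a homomorphic image of both $M/{\sim_K}\in\Lm$ and $M/{\sim_D}\in\Rm$, hence lies in $\Rm\cap\Lm$, so the $\WI$-statement at level $m-1$ applies and yields $\varphi(u_s)\approx\varphi(v_s)$, i.e.\ a finite chain of words whose images alternate $\sim_K$- and $\sim_D$-steps. One then substitutes along this chain one link at a time, sandwiching each link between a left idempotent $\left(\varphi(u_s)x_s\right)^{M!}$ and a right idempotent $\left(y_s\varphi(v_s)\right)^{M!}$; this is where \autoref{lem:RDescendInDAWords} and the auxiliary extension of $M$ by $2^\Sigma$ (guaranteeing that $\sim_K$- or $\sim_D$-related images have equal alphabets) are genuinely needed. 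With that device inserted, your outline matches the paper's argument.
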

\begin{proof}
  We fix a homomorphism $\varphi: \Sigma^* \to M$ and proceed by induction over $m$. For $m = 1$, we
  have $\Rm[1] = \Lm[1] = \Rm[1] \vee \Lm[1] = \Rm[2] \cap \Lm[2] = \V[J]$. Thus, the assertion
  follows from a result of Simon \cite{sim75:short}. However, we also include a full proof for
  completeness. Let $M \in \V[J]$ and
  $n = |M|$, which is the number of $\mathcal{J}$-classes in $M$ (and equal to the number of
  $\mathcal{R}$-classes and the number of $\mathcal{L}$-classes). Assume that $u \equiv_{1, n}^X v$
  for two finite words $u, v \in \Sigma^*$ and let $u = u_0 a_1 u_1 a_2 u_2 \dots a_k u_k$ be the
  $\mathcal{R}$-factorization of $u$. We have $k + 1 \leq n$ and, because $M$ is
  $\mathcal{R}$-trivial, $\varphi(u_0) = \varphi(u_1) = \dots = \varphi(u_k) = 1$ or, more precisely,
  that no letter in $u_0 u_1 \dots u_k$ can be mapped by $\varphi$ to a value different from $1$.
  Notice that this implies $\varphi(u) = \varphi(a_1 a_2 \dots a_k)$. Furthermore, we have
  $\varphi(a_1), \varphi(a_2), \dots, \varphi(a_k) \neq 1$ due to the definition
  of an $\mathcal{R}$-factorization.

  By definition of $\equiv_{m, n}^X$, we have $a_1 \in \alphabet(v)$ and $u \cdot X_{a_1}^R =
  a_2 u_2 a_3 u_3 \dots a_k u_k \allowbreak\equiv_{1, n - 1}^X v \cdot X_{a_1}^R$. Therefore, we can find $a_2$ in
  $v \cdot X_{a_1}^R$ and have $u \cdot X_{a_1}^R \cdot X_{a_2}^R = a_3 u_3 a_4 u_4 \dots a_k u_k
  \equiv_{m, n - 2}^X v  \cdot X_{a_1}^R \cdot X_{a_2}^R$. Iterating this approach yields that $a_1
  a_2 \dots a_k$ is a subword\footnote{A finite word $c_1 c_2 \dots c_s$ with
  $c_i \in \Sigma$ is a subword of a (not necessarily finite) word $w$ if we can write
  $w = w_0 c_1 w_1 c_2 w_2 \dots c_s w_s$ for some words $w_0, w_1, \dots, w_s$.} of $v$. Now, let
  $v = v_0 b_1 v_1 b_2 v_2 \dots b_l v_l$ be the $\mathcal{R}$-factorization of $v$. By symmetry, we
  get $\varphi(b_1), \varphi(b_2), \dots, \varphi(b_l) \neq 1$ and that $b_1 b_2 \dots b_l$ is a
  subword of $u$. However, for no $j \in \{ 1, 2, \dots, l \}$, the letter $b_j$ can occur in $u_0
  u_1 \dots u_k$ since all letters in that word must be mapped to $1$. Thus, $b_1 b_2 \dots b_l$
  must in fact be a subword of $a_1 a_2 \dots a_k$. Again by symmetry, $a_1 a_2 \dots a_k$ must be
  a subword of $b_1 b_2 \dots b_l$ and, thus, the two words must be equal. This implies $\varphi(u)
  = \varphi(a_1 a_2 \dots a_k) = \varphi(b_1 b_2 \dots b_l) = \varphi(v)$ where the last equality
  follows from $\varphi(v_0) = \varphi(v_1) = \dots = \varphi(v_l) = 1$, which holds due to the
  $\mathcal{R}$-triviality of $M$.

  The argumentation for $u \equiv_{1, n}^Y v$ is symmetric
  using the $\mathcal{L}$-factorization, the case for $u \equiv_{1, n}^{XY} v$ follows trivially and
  the case for $u \equiv_{1, n}^\WI v$ uses the same argumentation.

  Now, let $M \in \Rm$ for an $m > 1$. This implies $M / {\sim_K} \in \Lm[m - 1]$ and there is an
  $n' \in \Nat$ such that $u' \equiv_{m - 1, n'}^Y v' \implies \varphi(u') \sim_K \varphi(v')$ holds
  for all $u', v' \in \Sigma^*$. Let $r$ be the number of $\mathcal{R}$-classes in $M$ and let
  $n = n' + r$. Consider the $\mathcal{R}$-factorization $u = u_0 a_1 u_1 a_2 u_2 \dots a_k u_k$ of
  a finite word $u \in \Sigma^*$; note that $k + 1 \leq r$ must hold. We have
  \begin{align*}
    u_i &= u \cdot X_{a_1}^R X_{a_2}^R \dots X_{a_{i - 1}}^R X_{a_{i}}^L \text{ for } i = 0, 1, \dots, k - 1 \text{ and}\\
    u_k &= u \cdot X_{a_1}^R X_{a_2}^R \dots X_{a_k}^R \text{.}
  \end{align*}
  For a second finite word $v \in \Sigma^*$ with $u \equiv_{m, n}^X v$, we know that
  $\alphabet(u) = \alphabet(v)$. Thus, we can apply $X_{a_1}^L$ and $X_{a_1}^R$ to $v$ and obtain
  \[
    v_0 = v \cdot X_{a_1}^L \quad \text{ and } \quad v' = v \cdot X_{a_1}^R \text{.}
  \]
  By definition of $\equiv_{m, n}^X$, we have $v_0 \equiv_{m - 1, n - 1}^Y u_0$ and
  $v' \equiv_{m, n - 1}^X u_1 a_2 u_2 a_3 u_3 \dots \allowbreak a_k \allowbreak u_k$. Because of $k \leq r < n$, we can
  apply the same argument on $v'$ and, by iteration, get
  \begin{align*}
    v_i &= v \cdot X_{a_1}^R X_{a_2}^R \dots X_{a_{i - 1}}^R X_{a_{i}}^L \text{ for } i = 0, 1, \dots, k - 1 \text{ and}\\
    v_k &= v \cdot X_{a_1}^R X_{a_2}^R \dots X_{a_k}^R
  \end{align*}
  with $u_i \equiv_{m - 1, n - i - 1}^Y v_i$ for $i = 0, 1, \dots, k - 1$ and
  $u_k \equiv_{m, n - k}^X v_k$. Because of $i \leq k \leq r - 1$, we have $n - i - 1 =
  n' + r - i - 1 \geq n' + r - (r - 1) - 1 = n'$ and $u_i \equiv_{m - 1, n'}^Y v_i$ for
  $i = 0, 1, \dots, k - 1$. For $u_k$ and $v_k$, we
  have $u_k \equiv_{m - 1, n - k -1}^Y v_k$ by the definition of the congruences and, therefore,
  $u_k \equiv_{m - 1, n'}^Y v_k$ because of $n - k - 1 \geq n - i - 1 \geq n'$.
  Summing this up, we have $u_i \equiv_{m - 1, n'}^Y v_i$ and, thus,
  $\varphi(u_i) \sim_K \varphi(v_i)$ for all $i = 0, 1, \dots, k$.

  Since we have defined $u_i$ by the $\mathcal{R}$-factorization of $u$, there is an $s_i \in M$ for
  any $i \in \{ 0, 1, \dots, k \}$ such that $\varphi(u_0 a_1 u_1 a_2 u_2 \dots a_i u_i) s_i =
  \varphi(u_0 a_1 u_1 a_2 u_2 \dots a_i)$ holds. For these, we have
  \[
    \left(\varphi(u_i) s_i \right)^{M!} \varphi(u_i) \R \left(\varphi(u_i) s_i \right)^{M!}
  \]
  because of $\left(\varphi(u_i) s_i \right)^{M!} \varphi(u_i) s_i \left(\varphi(u_i) s_i \right)^{M! - 1} =
  \left(\varphi(u_i) s_i \right)^{M!}$, which yields
  \[
    \left(\varphi(u_i) s_i \right)^{M!} \varphi(u_i) = \left(\varphi(u_i) s_i \right)^{M!} \varphi(v_i)
  \]
  by $\varphi(u_i) \sim_K \varphi(v_i)$. Thus, we have
  \begin{align*}
    \varphi(u_0 a_1 u_1 a_2 u_2 \dots a_k u_k)
    &= \varphi(u_0 a_1 u_1 a_2 u_2 \dots a_k u_k) \left( s_k \varphi(u_k) \right)^{M!}\\
    &= \varphi(u_0 a_1 u_1 a_2 u_2 \dots a_k) \left( \varphi(u_k) s_k \right)^{M!} \varphi(u_k)\\
    &= \varphi(u_0 a_1 u_1 a_2 u_2 \dots a_k) \left( \varphi(u_k) s_k \right)^{M!} \varphi(v_k)\\
    &= \varphi(u_0 a_1 u_1 a_2 u_2 \dots a_k) \varphi(v_k)\\
    &= \varphi(u_0 a_1 u_1 a_2 u_2 \dots a_{k - 1} u_{k - 1}) \varphi(a_k v_k)\\
    &= \varphi(u_0 a_1 u_1 a_2 u_2 \dots a_{k - 1}) \left( \varphi(u_{k - 1}) s_{k - 1} \right)^{M!} \varphi(u_{k - 1}) \varphi(a_k v_k)\\
    &= \varphi(u_0 a_1 u_1 a_2 u_2 \dots a_{k - 1}) \left( \varphi(u_{k - 1}) s_{k - 1} \right)^{M!} \varphi(v_{k - 1}) \varphi(a_k v_k)\\
    &= \varphi(u_0 a_1 u_1 a_2 u_2 \dots a_{k - 2} u_{k - 2}) \varphi(a_{k - 1} v_{k - 1} a_k v_k)\\
    &= \dots\\
    &= \varphi(v_0 a_1 v_1 a_2 v_2 \dots a_k v_k) \text{,}
  \end{align*}
  which concludes the proof for $\Rm$.

  The proof for $\Lm$ is symmetrical. For $\Rm \vee \Lm$, we observe that a monoid is in the join
  $\V \vee \V[W]$ of two varieties $\V$ and $\V[W]$ if and only if it is a divisor (i.\,e.\ the
  homomorphic image of a submonoid) of a direct product $M_1 \times M_2$ such that $M_1 \in \V$ and
  $M_2 \in \V[W]$ \cite[Exercise~1.1]{eilenberg1976automata}. Therefore, if we have a monoid
  $M \in \Rm \vee \Lm$, there are monoids $M_1 \in \Rm$ and $M_2 \in \Lm$ such that $M$ is a divisor
  of $M_1 \times M_2$; i.\,e.\ there is a submonoid $N$ of $M_1 \times M_2$ and a surjective monoid
  homomorphism $\psi: N \onto M$. For every $a \in \Sigma$, we can find elements $m_{a, 1} \in M_1$
  and $m_{a, 2} \in M_2$ with $(m_{a, 1}, m_{a, 2}) \in N$ such that $\varphi(a) = \psi(m_{a, 1},
  m_{a, 2})$. Indeed, we can define the maps
  $\varphi_1: \Sigma \to M_1$ and $\varphi_2: \Sigma \to M_2$ by setting $\varphi_1(a) := m_{a, 1}$
  and $\varphi_2(a) := m_{a, 2}$. These maps can be lifted into homomorphisms
  $\varphi_1: \Sigma^* \to M_1$ and $\varphi_2: \Sigma^* \to M_2$. By induction, there are $n_1$
  and $n_2$ such that $u \equiv_{m, n_1}^X v$ implies $\varphi_1(u) = \varphi_2(v)$ and
  $u \equiv_{m, n_2}^Y v$ implies $\varphi_2(u) = \varphi_2(v)$ for any two finite words
  $u, v \in \Sigma^*$. By setting $n = \max \{ n_1, n_2 \}$,
  we have
  \[
    u \equiv_{m, n}^{XY} v \implies \varphi_1(u) = \varphi_1(v) \text{ and } \varphi_2(u) = \varphi_2(v)
  \]
  for all $u, v \in \Sigma^*$. For all $u, v \in \Sigma^*$ with $u \equiv_{m, n}^{XY} v$, this yields
  \begin{align*}
    \varphi(a_1 a_2 \dots a_k) &= \varphi(a_1) \varphi(a_2) \dots \varphi(a_k)\\
    &= \psi(m_{a_1, 1}, m_{a_1, 2}) \psi(m_{a_2, 1}, m_{a_2, 2}) \dots \psi(m_{a_k, 1}, m_{a_k, 2})\\
    &= \psi((m_{a_1, 1}, m_{a_1, 2}) (m_{a_2, 1}, m_{a_2, 2}) \dots (m_{a_k, 1}, m_{a_k, 2}))\\
    &= \psi(m_{a_1, 1} m_{a_2, 1} \dots m_{a_k, 1}, m_{a_1, 2} m_{a_2, 2} \dots m_{a_k, 2})\\
    &= \psi \left( \varphi_1(u), \varphi_2(u) \right)\\
    &= \psi \left( \varphi_1(v), \varphi_2(v) \right)\\
    &= \varphi(b_1 b_2 \dots b_l)
  \end{align*}
  where $u = a_1 a_2 \dots a_k$, $v = b_1 b_2 \dots b_l$ and $a_1, a_2, \dots, a_k,
  b_1, b_2, \dots, b_l \in \Sigma$.

  Finally, let $M \in \Rm[m + 1] \cap \Lm[m + 1]$ with $m > 1$. Denote by $2^\Sigma$ the monoid of
  subsets of $\Sigma$ whose binary operation is the union of sets. It is easy to see that $2^\Sigma$
  is $\mathcal{J}$-trivial. Therefore, we have $M \times 2^\Sigma \in \Rm[m + 1] \cap \Lm[m + 1]$.
  Next, we lift $\varphi: \Sigma^* \to M$ into a homomorphism
  $\hat{\varphi}: \Sigma^* \to M \times 2^\Sigma$ by taking the word's alphabet as the entry in the
  second component. If we show $u \equiv_{m, n}^\WI v \implies \hat{\varphi}(u) = \hat{\varphi}(v)$ for
  a suitable $n \in \Nat$, we have, in particular, $u \equiv_{m, n}^\WI v \implies \varphi(u) = \varphi(v)$.
  The advantage of this approach is that we have
  $\hat{\varphi}(u) = \hat{\varphi}(v) \implies \alphabet(u) = \alphabet(v)$ for all $u, v \in
  \Sigma^*$ by the construction of $\hat{\varphi}$. Instead of continuing to write $\hat{\varphi}$,
  we simply substitute $M$ by $M \times 2^\Sigma$ and $\varphi$ by $\hat{\varphi}$.

  We have $M / {\sim_K} \in \Lm$ and $M / {\sim_D} \in \Rm$. By $\approx$, denote the join of $\sim_K$
  and $\sim_D$. Since it is a homomorphic image of both, $M / {\sim_K}$ and $M / {\sim_D}$, the monoid
  $M / {\approx}$ is in $\Rm \cap \Lm$ and we can apply induction, which yields an $n' \in \Nat$ such
  that $u \equiv_{m - 1, n'}^\WI v$ implies $\varphi(u) \approx \varphi(v)$ for all finite words
  $u, v \in \Sigma^*$. Let $c$ be the sum of the number of $\mathcal{R}$-classes and the number of
  $\mathcal{L}$-classes in $M$ and set $n = n' + c$. Suppose we have $u \equiv_{m, n}^\WI v$ for two
  finite words $u, v \in \Sigma^*$. Consider the $\mathcal{R}$-factorization
  $u = u_0' a_1 u_1' a_2 u_2' \dots a_r u_r'$ of $u$ and the $\mathcal{L}$-factorization
  $v = v_0' b_1 v_1' b_2 v_2' \dots b_l v_l'$ of $v$. Clearly, we have $r + 1 + l + 1 \leq c$.
  Define the positions $p_0^w = -\infty$, $p_{r + 1}^w = +\infty$ and $p_i^w = X_{a_i}(w; p_{i - 1}^w)$
  for $i = 1, 2, \dots, r$ and $w = u, v$. By \autoref{lem:RDescendInDA}, we know that $p_i^u$
  denotes the position of $a_i$ in the $\mathcal{R}$-factorization for $i = 1, 2, \dots, r$.
  Symmetrically, we can define $q_{l + 1}^w = + \infty$, $q_0^w = -\infty$ and
  $q_j^w = Y_{a_j}(w; q_{j + 1}^w)$ for $j = l, l - 1, \dots, 1$ and $w = u, v$. Again, we know that
  $q_j^v$ is the position of $b_j$ in the $\mathcal{L}$-factorization of $v$ for $j = 1, 2, \dots, l$.
  Additionally, we have
  \begin{align*}
    &p_0^w < p_1^w < \dots < p_r^w < p_{r + 1}^w \text{ and}\\
    &q_0^w < q_1^w < \dots < q_l^w < q_{l + 1}^w
  \end{align*}
  for $w = u$ and $w = v$ by their definition. We are going to show that we have
  $p_i^u \mathrel{\triangledown} q_j^u \iff p_i^v \mathrel{\triangledown} q_j^v$ for
  $\triangledown \in \{ <, =, > \}$ and all $i = 1, 2, \dots, r$ and $j = 1, 2, \dots, l$. Together,
  these results yield that the sequence which is obtained by ordering the $p_i$ and $q_j$ positions
  in $u$ is equal to the corresponding sequence in $v$. To prove this assertion, assume that we
  have $q_j^u \leq p_i^u$ but $q_j^v > p_i^v$ for an $i \in \{ 1, 2, \dots, r \}$ and a
  $j \in \{ 1, 2, \dots, l \}$ (all other cases are symmetric or analogous). Without loss of
  generality, we may assume that $p_{i - 1}^u < q_j^u \leq p_i^u$ holds since, otherwise, we can
  substitute $i$ by a smaller $i$ for which the former holds. Note that this substitution does not
  violate the condition $q_j^v > p_i^v$ as $p_i^v$ gets strictly smaller if $i$ decreases. Equally
  without loss of generality, we may assume $q_j^u \leq p_i^u < q_{j + 1}^u$ by a dual argumentation.
  The situation is presented in \autoref{fig:piqjLeftRight}. We have
  \begin{align*}
    u_{(p_{i - 1}^u, q_{j + 1}^u)} &= u \cdot X_{a_1}^R X_{a_2}^R \dots X_{a_{i - 1}}^R Y_{b_l}^L Y_{b_{l - 1}}^L \dots Y_{b_{j + 1}}^L \text{ and}\\
    v_{(p_{i - 1}^v, q_{j + 1}^v)} &= v \cdot X_{a_1}^R X_{a_2}^R \dots X_{a_{i - 1}}^R Y_{b_l}^L Y_{b_{l - 1}}^L \dots Y_{b_{j + 1}}^L
  \end{align*}
  and $u_{(p_{i - 1}^u, q_{j + 1}^u)} \equiv_{m, n - (i - 1) - (l - (j + 1) + 1)}^\WI
  v_{(p_{i - 1}^v, q_{j + 1}^v)}$, which yields $u_{(p_{i - 1}^u, q_{j + 1}^u)} \equiv_{m, 2}^\WI
  v_{(p_{i - 1}^v, q_{j + 1}^v)}$ because of
  \begin{align*}
    n - (i - 1) - (l - (j + 1) + 1) &= n' + c - i + 1 - l + j + 1 - 1\\
    &= n' + c - i - l + j + 1\\
    &\geq n' + c - (r + l) + 1\\
    &\geq n' + c - (c - 2) + 1 = n' + 3\\
    &> 2\text{.}
  \end{align*}
  If $q_j^u = p_i^u$, we have a contradiction since $u_{(p_{i - 1}^u, q_{j + 1}^u)} \cdot Y_{b_j}^L$
  contains no $a_i$ while $v_{(p_{i - 1}^v, q_{j + 1}^v)} \cdot Y_{b_j}^L$ does. For
  $q_j^u < p_i^u$, we can apply $C_{a_i, b_j}$ to $u_{(p_{i - 1}^u, q_{j + 1}^u)}$ while we cannot
  apply it to $v_{(p_{i - 1}^v, q_{j + 1}^v)}$ by its definition. Both situations constitute a
  contradiction.

  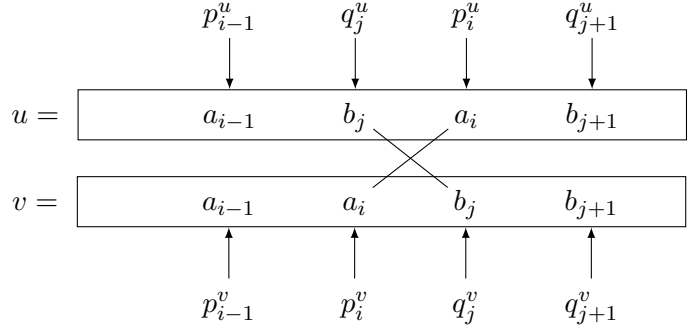
\begin{figure}[t]
    \begin{center}
      \begin{tikzpicture}[text height=.8em, text depth=.2em]
        \node (uLabel) {$u = {}$};
        \matrix [right=0cm of uLabel, rectangle, draw, matrix of math nodes, ampersand replacement=\&, inner sep=2pt] (u) {
          \&[1.5cm] a_{i - 1} \&[1cm] b_j \&[1cm] a_i \&[1cm] b_{j + 1} \&[0.75cm] \\
        };
        \node[above=0.75cm of u-1-2] (pi-1u) {$p_{i - 1}^u$};
        \draw (pi-1u) edge[-latex, shorten >= 2pt] (u-1-2);
        \node[above=0.75cm of u-1-3] (qju) {$q_j^u$};
        \draw (qju) edge[-latex, shorten >= 2pt] (u-1-3);
        \node[above=0.75cm of u-1-4] (piu) {$p_i^u$};
        \draw (piu) edge[-latex, shorten >= 2pt] (u-1-4);
        \node[above=0.75cm of u-1-5] (qj+1u) {$q_{j + 1}^u$};
        \draw (qj+1u) edge[-latex, shorten >= 2pt] (u-1-5);

        \node[below=0.5cm of uLabel] (vLabel) {$v = {}$};
        \matrix [right=0cm of vLabel, rectangle, draw, matrix of math nodes, ampersand replacement=\&, inner sep=2pt] (v) {
          \&[1.5cm] a_{i - 1} \&[1cm] a_i \&[1cm] b_j \&[1cm] b_{j + 1} \&[0.75cm] \\
        };
        \node[below=0.75cm of v-1-2] (pi-1v) {$p_{i - 1}^v$};
        \draw (pi-1v) edge[-latex, shorten >= 2pt] (v-1-2);
        \node[below=0.75cm of v-1-4] (qjv) {$q_j^v$};
        \draw (qjv) edge[-latex, shorten >= 2pt] (v-1-4);
        \node[below=0.75cm of v-1-3] (piv) {$p_i^v$};
        \draw (piv) edge[-latex, shorten >= 2pt] (v-1-3);
        \node[below=0.75cm of v-1-5] (qj+1v) {$q_{j + 1}^v$};
        \draw (qj+1v) edge[-latex, shorten >= 2pt] (v-1-5);

        \draw (u-1-3) -- (v-1-4);
        \draw (u-1-4) -- (v-1-3);
      \end{tikzpicture}
    \end{center}
    \caption{Contradiction: $p_i$ is to the right of $q_j$ in $u$ but to its left in $v$.}
    \label{fig:piqjLeftRight}
  \end{figure}

  We have proved that if we order the set $\{ p_i^u, q_j^u \mid i = 1, 2, \dots, r, j = 1, 2, \dots,\allowbreak l \}
  = \{ P_1^u, P_2^u,\allowbreak \dots,\allowbreak P_t^u \}$ (with $t \in \Nat_0$) such that
  \[
    P_1^u < P_2^u < \dots < P_t^u
  \]
  holds, then we can set
  \[
    P_s^v = \begin{cases}
      p_i^v & P_s^u = p_i^u \text{ for some } i \in \{ 1, 2, \dots, r \}\\
      q_j^v & P_s^u = q_j^u \text{ for some } j \in \{ 1, 2, \dots, l \}\\
    \end{cases}
  \]
  for $s = 1, 2, \dots, t$ and get
  \[
    P_1^v < P_2^v < \dots < P_t^v \text{.}
  \]
  These positions yield factorizations $u = u_0 c_1 u_1 c_2 u_2 \dots c_t u_t$ and
  $v = v_0 c_1 v_1 c_2\allowbreak v_2 \dots c_t v_t$ such that $c_s \in
  \{ a_i, b_j \mid i = 1, 2, \dots, r, j = 1, 2, \dots, l \}$ and $P_s^w$ denotes the position of
  $c_s$ in $w \in \{ u, v \}$ for $s = 1, 2, \dots, t$. To apply induction, we are going to show
  $u_s \equiv_{m - 1, n'}^\WI v_s$ for all $s = 1, 2, \dots, t$ next.

  To simplify notation, we say \enquote{$P_s$ is an $\mathcal{R}$-position} for any $s \in \{ 1, 2,
  \dots, t \}$ if $P_s^u = p_i^u$ for some $i \in \{ 1, 2, \dots, r \}$ (or, equivalently, if
  $P_s^v = p_i^v$ for some $i$) and we say \enquote{$P_s$ is an $\mathcal{L}$-position} if
  $P_s^u = q_j^u$ for some $j \in \{ 1, 2, \dots, l \}$ (or, equivalently again, if $P_s^v = q_j^v$
  for some $j$). Note that this definition is \emph{not} exclusive, i.\,e.\ there can be a position
  which is both, an $\mathcal{R}$-position and an $\mathcal{L}$-position.

  Next, we consider the corner cases of $u_0$/$v_0$ and $u_t$/$v_t$. If $P_1$ is an
  $\mathcal{R}$-position, we have $c_1 = a_1$ and
  \begin{align*}
    u_0 = u \cdot X_{a_1}^L \quad \text{ as well as } \quad
    v_0 = v \cdot X_{a_1}^L \text{,}
  \end{align*}
  which yields $u_0 \equiv_{m - 1, n'}^\WI v_0$ by definition of $\equiv_{m, n}^\WI$ and because of
  $c > 0$. If $P_1$ is an $\mathcal{L}$-position,
  we have $c_1 = b_1$ and
  \begin{align*}
    u_0 &= u \cdot Y_{b_l}^L Y_{b_{l - 1}}^L \dots Y_{b_1}^L \text{ as well as}\\
    v_0 &= v \cdot Y_{b_l}^L Y_{b_{l - 1}}^L \dots Y_{b_1}^L \text{.}
  \end{align*}
  Because $l < c$, $u_0 \equiv_{m - 1, n'}^\WI v_0$ holds also
  in this case. For $u_t$ and $v_t$, we can apply a symmetric argumentation.

  Finally, we distinguish four cases for a fixed $s \in \{ 1, 2, \dots, t - 1 \}$. If $P_s$ and
  $P_{s + 1}$ are both $\mathcal{R}$-positions, then we have $c_s = a_i$ and $c_{s + 1} = a_{i + 1}$
  for some $i \in \{ 1, 2, \dots, r \}$ and also
  \begin{align*}
    u_s &= u \cdot X_{a_1}^R X_{a_2}^R \dots X_{a_i}^R X_{a_{i + 1}}^L \text{ as well as}\\
    v_s &= v \cdot X_{a_1}^R X_{a_2}^R \dots X_{a_i}^R X_{a_{i + 1}}^L \text{.}
  \end{align*}
  By definition of $\equiv_{m, n}^\WI$, because of $i + 1 \leq c$, we thus have
  $u_s \equiv_{m - 1, n'}^\WI v_s$. A symmetric
  argument applies if both, $P_s$ and $P_{s + 1}$, are $\mathcal{L}$-positions. If $P_s$ is an
  $\mathcal{R}$-position but $P_{s + 1}$ is an $\mathcal{L}$-position, then $c_s = a_i$ for some
  $i \in \{ 1, 2, \dots, r \}$ and $c_{s + 1} = b_j$ for some $j \in \{ 1, 2, \dots, l \}$, which
  yields
  \begin{align*}
    u_s &= u \cdot X_{a_1}^R X_{a_2}^R \dots X_{a_i}^R Y_{b_l}^L Y_{b_{l - 1}}^L \dots Y_{b_j}^L \text{ as well as}\\
    v_s &= v \cdot X_{a_1}^R X_{a_2}^R \dots X_{a_i}^R Y_{b_l}^L Y_{b_{l - 1}}^L \dots Y_{b_j}^L \text{.}
  \end{align*}
  Therefore, we have $u_s \equiv_{m - 1, n'}^\WI v_s$ because of the definition of
  $\equiv_{m, n}^\WI$ and $n - i - (l - j + 1) =
  n' + c - (i + 1 + l) + j \geq n' + c - c + 0 = n'$. The fourth case is the most interesting: if
  $P_s$ is an $\mathcal{L}$ but not an $\mathcal{R}$-position while $P_{s + 1}$ is an $\mathcal{R}$
  but not an $\mathcal{L}$-position, then $c_s = b_j$ for some $j \in \{ 1, 2, \dots, l \}$ and
  $c_{s + 1} = a_i$ for some $i \in \{ 1, 2, \dots, r \}$. Additionally, we have
  $p_{i - 1}^w < P_s^w = q_j^w < P_{s + 1}^w = p_i^w < q_{j + 1}$ for $w = u$ and for $w = v$. We
  define
  \begin{align*}
    \tilde{u} &= u \cdot X_{a_1}^R X_{a_2}^R \dots X_{a_{i - 1}}^R Y_{b_l}^L Y_{b_{l - 1}}^L \dots Y_{b_{j + 1}}^L \text{ as well as}\\
    \tilde{v} &= v \cdot X_{a_1}^R X_{a_2}^R \dots X_{a_{i - 1}}^R Y_{b_l}^L Y_{b_{l - 1}}^L \dots Y_{b_{j + 1}}^L
  \end{align*}
  (we consider the $X$-blocks as empty -- meaning that we do not factorize -- if $i = 1$ and the
  $Y$-blocks as empty if $j = l$). We have $\tilde{u} \equiv_{m, n - (i - 1) - (l - j)}^\WI \tilde{v}$.
  Because of $n - (i - 1) - (l - j) = n' + c - (i + l) + j + 1 \geq n' + c - (r + l) + 1 \geq
  n' + 1$, $u_s = \tilde{u} \cdot C_{a_i, b_j}$, $v_s = \tilde{v} \cdot C_{a_i, b_j}$ and the
  definition of $\equiv_{m, n}^\WI$, we have
  $u_s \equiv_{m - 1, n'}^\WI v_s$.

  We have shown $u_s \equiv_{m - 1, n'}^\WI v_s$ for all $s = 1, 2, \dots, t$ and, by induction,
  therefore, know that $\varphi(u_s) \approx \varphi(v_s)$, i.\,e.\ for a fixed
  $s \in \{ 1, 2, \dots, t \}$, there are $w_1, w_2, \dots, w_k \in \Sigma^*$ such that
  \[
    \varphi(u_s) = \varphi(w_1) \sim_K \varphi(w_2) \sim_D \dots \sim_K \varphi(w_{k - 1}) \sim_D \varphi(w_k) = \varphi(v_s)
  \]
  holds.

  Remember that we substituted $M$ by $M \times 2^\Sigma$ so that we can assume $\varphi(u) =
  \varphi(v) \implies \alphabet(u) = \alphabet(v)$ for all $u, v \in \Sigma^*$. We can extend this
  implication: if we have $\varphi(u) \sim_K \varphi(v)$ for two $u, v \in \Sigma^*$, then, by
  definition of $\sim_K$, we also have $\varphi(u)^{M!} \varphi(u) = \varphi(u)^{M!} \varphi(v)$
  because of $\varphi(u)^{M!} \varphi(u) \R \varphi(u)^{M!}$. Therefore, we have
  $\alphabet(u) = \alphabet(u) \cup \alphabet(v)$ by the implication stated above. By symmetry, we,
  thus, have $\alphabet(u) = \alphabet(v)$. Since we can apply a similar argumentation for $\sim_D$,
  we have $\varphi(u) \sim_K \varphi(v) \text{ or } \varphi(u) \sim_D \varphi(v) \implies
  \alphabet(u) = \alphabet(v)$ for all $u, v \in \Sigma^*$. This yields
  $\alphabet(u_s) = \alphabet(w_1) = \alphabet(w_2) = \dots = \alphabet(w_k) = \alphabet(v_s)$.

  Since the factorizations $u = u_0 c_1 u_1 c_2 u_2 \dots c_t u_t$ and
  $v = v_0 c_1 v_1 c_2 v_2 \dots c_t v_t$ are subfactorizations from the $\mathcal{R}$-factorization
  of $u$ and the $\mathcal{L}$-factorization of $v$, there are $x_s, y_s \in M$ with
  \begin{align*}
    \varphi(u_0 c_1 u_1 c_2 u_2 \dots c_s) &= \varphi(u_0 c_1 u_1 c_2 u_2 \dots c_s u_s) x_s \text{ and}\\
    \varphi(c_{s + 1} v_{s + 1} c_{s + 2} v_{s + 2} \dots c_t v_t) &= y_s \varphi(v_s c_{s + 1} v_{s + 1} c_{s + 2} v_{s + 2} \dots c_t v_t) \text{.}
  \end{align*}
  Because of $\alphabet(u_s) = \alphabet(w_i)$ for all $i \in \{ 1, 2, \dots, k \}$ and by
  \autoref{lem:RDescendInDAWords},
  \[
    \left( \varphi(u_s) x_s \right)^{M!} \R \left( \varphi(u_s) x_s \right)^{M!} \varphi(u_s) \text{ implies }
      \left( \varphi(u_s) x_s \right)^{M!} \R \left( \varphi(u_s) x_s \right)^{M!} \varphi(w_i) \text{.}
  \]
  Similarly, we have
  \[
    \left( y_s \varphi(v_s) \right)^{M!} \L \varphi(w_i) \left( y_s \varphi(v_s) \right)^{M!}
  \]
  for all $i \in \{ 1, 2, \dots, k \}$. For $\varphi(w_i) \sim_K \varphi(w_{i + 1})$, this implies
  \[
    \left( \varphi(u_s) x_s \right)^{M!} \varphi(w_i) = \left( \varphi(u_s) x_s \right)^{M!} \varphi(w_{i + 1})
  \]
  and
  \[
    \varphi(w_i) \left( y_s \varphi(v_s) \right)^{M!} = \varphi(w_{i + 1}) \left( y_s \varphi(v_s) \right)^{M!}
  \]
  for $\varphi(w_i) \sim_D \varphi(w_{i + 1})$. In either case, we have
  \[
    \left( \varphi(u_s) x_s \right)^{M!} \varphi(w_i) \left( y_s \varphi(v_s) \right)^{M!} =
      \left( \varphi(u_s) x_s \right)^{M!} \varphi(w_{i + 1}) \left( y_s \varphi(v_s) \right)^{M!} \text{,}
  \]
  which yields for any $i \in \{ 1, 2, \dots, k - 1 \}$:
  \begin{align*}
    &\phantom{{}={}}\varphi(u_0 c_1 u_1 c_2 u_2 \dots c_s w_i c_{s + 1} v_{s + 1} c_{s + 2} v_{s + 2} \dots c_t v_t)\\
    &= \varphi(u_0 c_1 u_1 c_2 u_2 \dots c_s) \left( \varphi(u_s) x_s \right)^{M!} \varphi(w_i) \left( y_s \varphi(v_s) \right)^{M!} \varphi(c_{s + 1} v_{s + 1} c_{s + 2} v_{s + 2} \dots c_t v_t)\\
    &= \varphi(u_0 c_1 u_1 c_2 u_2 \dots c_s) \left( \varphi(u_s) x_s \right)^{M!} \varphi(w_{i + 1}) \left( y_s \varphi(v_s) \right)^{M!} \varphi(c_{s + 1} v_{s + 1} c_{s + 2} v_{s + 2} \dots c_t v_t)\\
    &= \varphi(u_0 c_1 u_1 c_2 u_2 \dots c_s w_{i + 1} c_{s + 1} v_{s + 1} c_{s + 2} v_{s + 2} \dots c_t v_t)
  \end{align*}
  So, we can substitute $w_i$ by $w_{i + 1}$ and, therefore, also $u_s$ by $v_s$, i.\,e.\ we have
  \begin{align*}
    &\phantom{{}={}} \varphi(u_0 c_1 u_1 c_2 u_2 \dots c_s u_s c_{s + 1} v_{s + 1} c_{s + 2} v_{s + 2} \dots c_t u_t)\\
    &= \varphi(u_0 c_1 u_1 c_2 u_2 \dots c_s v_s c_{s + 1} v_{s + 1} c_{s + 2} v_{s + 2} \dots c_t v_t) \text{.}
  \end{align*}
  Consecutively applying the former equation for $s = t$, then for $s = t - 1$ and so on yields
  \begin{align*}
    \varphi(u) &= \varphi(u_0 c_1 u_1 c_2 u_2 \dots c_{t - 1} u_{t - 1} c_t u_t)\\
    &= \varphi(u_0 c_1 u_1 c_2 u_2 \dots c_{t - 1} u_{t - 1} c_t v_t)\\
    &= \varphi(u_0 c_1 u_1 c_2 u_2 \dots c_{t - 1} v_{t - 1} c_t v_t)\\
    &\phantom{=} \vdots\\
    &= \varphi(v_0 c_1 v_1 c_2 v_2 \dots c_{t - 1} v_{t - 1} c_t v_t)\\
    &= \varphi(v) \text{,}
  \end{align*}
  which concludes the proof.
\end{proof}

It remains to show the other direction of \autoref{thm:TWcharacterization} (i.\,e.\ the converse of
\autoref{thm:relationsImplyEqualityUnderHom}). We will do this by using the equations from
\autoref{lem:TrotterWeilEquations}. Therefore, we begin with the following lemma.

\begin{lemma}\label{lem:UmVmAreInRelation}
  Let $\Sigma_m$, $U_m$ and $V_m$ be as in \autoref{lem:TrotterWeilEquations} and let $k \in \Nat$
  and $n \in \Nat_0$. Then, for every assignment of variables $\sigma: \Sigma_m^* \to \Sigma^*$ and
  all $Z \in \{ X, Y, XY, \WI \}$, we have
  \begin{align*}
    \sigma \left( \subs{U_m}{nk} \right) &\equiv_{m, n}^Z \sigma \left( \subs{V_1}{nk} \right)\\
    \sigma \left( \subs{ \left( U_m x_{m + 1} \right)^\pi U_m }{nk} \right) &\equiv_{m + 1, n}^X
      \sigma \left( \subs{ \left( U_m x_{m + 1} \right)^\pi V_m }{nk} \right)\\
    \sigma \left( \subs{ U_m \left( U_m x_{m + 1} \right)^\pi }{nk} \right) &\equiv_{m + 1, n}^Y
      \sigma \left( \subs{ V_m \left( U_m x_{m + 1} \right)^\pi }{nk} \right)
  \end{align*}
  for all $m \in \Nat$.
\end{lemma}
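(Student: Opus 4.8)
The plan is to prove the three equivalences \emph{simultaneously} by induction on $m$, with a nested induction on $n$; I refer to the three displayed equivalences as (a), (b), (c) in the order listed. The base $n = 0$ is immediate, since by the first clause of \autoref{def:relations} the relations $\equiv_{m,0}^Z$ hold on every pair of accessible words. All four exponents $Z \in \smallset{X, Y, XY, \WI}$ must be carried along together, because the defining clauses of $\equiv_{m,n}^X$ refer to $\equiv_{m-1,n-1}^Y$ and vice versa. Throughout I would rely on three properties of the family: that each $\equiv_{m,n}^Z$ is a congruence on $\Sigma^*$ (noted right after \autoref{def:relations}), the monotonicity $u \equiv_{m,n}^Z v \Rightarrow u \equiv_{m,n-1}^Z v$, and a \emph{stability} (pumping) property stating that inside a word with a repeated factor $B^N$ the exact value of $N$ is invisible to $\equiv_{m,n}^Z$ once $N$ exceeds a bound linear in $n$. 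I would establish this last property by a short auxiliary induction; it is precisely what the free multiplier $k$ buys, since $nk$ always dominates the relevant linear-in-$n$ threshold.

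The whole argument rests on the recursive shape of the words. Writing $p = \sigma(\subs{U_{m-1}}{nk})$ and $q = \sigma(\subs{V_{m-1}}{nk})$, one has $\sigma(\subs{U_m}{nk}) = (p\,\sigma(x_m))^{nk}\,p\,(\sigma(y_m)\,p)^{nk}$ and $\sigma(\subs{V_m}{nk}) = (p\,\sigma(x_m))^{nk}\,q\,(\sigma(y_m)\,p)^{nk}$, so the two words agree everywhere except in the single central block $p$ versus $q$, and this block is bracketed by the fresh letters $x_m$ and $y_m$. Consequently, whenever we apply a factorization $X_a^L$, $X_a^R$, $Y_a^L$, $Y_a^R$ or $C_{a,b}$, the resulting pair of factors is of exactly one of two kinds: either it lies wholly inside the shared prefix or suffix, in which case the two factors are \emph{literally equal} and the required relation is trivial; or it isolates the central block with a shared left and right context $w_1\,p\,w_2$ versus $w_1\,q\,w_2$. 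In the base case $m = 1$ the central difference is $\sigma(s)$ versus $\sigma(t)$, and since every direction-changing branch of \autoref{def:relations} then lands at index $0$ and is vacuous, only the alphabet condition, the same-direction peeling, and (for $\WI$) the definedness of the $C_{a,b}$ factorizations have to be checked, all of which follow from the shared structure together with stability.

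For the inductive step establishing (a) I would treat two families of branches. A \emph{direction-changing} branch ($X_a^L$, the switch to $Y$, the symmetric $Y$-branches, and the $C_{a,b}$ branches) requires the factors at the reduced index $m-1$; by the congruence property the comparison $w_1\,p\,w_2 \equiv_{m-1,n-1}^{Z'} w_1\,q\,w_2$ reduces to $p \equiv_{m-1,n-1}^{Z'} q$, which is exactly (a) at level $m-1$ (available by the outer induction hypothesis) after one application of monotonicity. A \emph{same-direction} branch ($X_a^R$ or $Y_a^L$) keeps the index $m$ but decreases $n$; here the peeled word has one fewer repeated block and an extra shared context, so I would first use stability to restore the block count to $nk$ and congruence to strip the shared context, landing on (a) at level $m$ with parameter $n-1$, which is supplied by the nested hypothesis. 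Once (a) is available at level $m$, statements (b) and (c) follow by the same idea: there the differing block is $\sigma(\subs{U_m}{nk})$ versus $\sigma(\subs{V_m}{nk})$ itself, and the one extra repeated block $(\sigma(\subs{U_m}{nk})\,\sigma(x_{m+1}))^{nk}$ provides exactly one affordable direction change, so that a single switch collapses the comparison to (a) at level $m$, transported across the shared context by congruence.

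The hardest part will be the two-parameter bookkeeping: verifying in each of the many branches that the index arithmetic is consistent, that a factorization reaching the centre has genuinely spent one of the $m$ direction changes, and that the exponents left in the shared blocks are still large enough for both stability and monotonicity to apply while the substitution value stays pinned at $nk$. Among the four relations the $\WI$ case is the most delicate, because its simultaneous-ends $C_{a,b}$ factorizations interleave the left and right analyses and additionally demand that $u \cdot C_{a,b}$ and $v \cdot C_{a,b}$ be defined for the \emph{same} pairs $(a,b)$; establishing this matching of definedness from the bracketed central structure, and proving the stability lemma in a form uniform across all four exponents, are the two technical cores of the proof.
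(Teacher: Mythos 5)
Your plan follows essentially the same route as the paper: exploit the fact that $\sigma(\subs{U_m}{nk})$ and $\sigma(\subs{V_m}{nk})$ differ only in a central block surrounded by identical repeated contexts, classify each factorization as either staying inside the shared material or isolating the centre, and close the direction-changing branches with the level-$(m-1)$ instance of (a) via the congruence property. The only real divergence is bookkeeping: where you propose a nested induction on $n$ plus a separate stability lemma to renormalize the exponents of the shared blocks after a same-direction peel, the paper instead runs an inner induction on the block count $l$, proving $(u_m x_{m+1})^l u_m (y_{m+1} u_m)^l \equiv_{m+1,l}^\WI (u_m x_{m+1})^l v_m (y_{m+1} u_m)^l$ for $0 \leq l \leq n$, so that peeling one block from each side decrements $l$ and the relation index in lockstep and no renormalization is needed. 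Both versions work; yours trades an extra auxiliary lemma for a statement that is literally the one in the lemma at each stage.

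One point needs repair. Your dichotomy --- every factorization yields factors that are either literally equal or of the form $w_1\,p\,w_2$ versus $w_1\,q\,w_2$ --- is correct for $m \geq 2$, because there $\alphabet(p) = \alphabet(q)$ forces the first/last $a$-positions to correspond structurally in both words. It fails in the base case $m = 1$, where the central blocks are $\sigma(s)$ versus $\sigma(t)$ and may have \emph{different} alphabets: for $a \in \alphabet(\sigma(t)) \setminus \alphabet(\sigma(s x_1))$, the factorization $X_a^R$ lands inside $\sigma(t)$ in one word but inside the first $\sigma(y_1 t)$-block in the other, and the resulting right factors are a suffix of $\sigma(t)(\sigma(y_1 t))^{nk}$ versus a suffix of $(\sigma(y_1 t))^{nk}$ --- neither literally equal nor bracketed around a common centre. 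Your stability lemma as stated (invisibility of the exponent $N$ in $B^N$) does not cover this; you need the stronger form $w_1 B^{N_1} \equiv_{1,n}^X w_2 B^{N_2}$ for arbitrary $w_1, w_2$ with $\alphabet(w_i) \subseteq \alphabet(B)$ and $N_1, N_2 \geq n$, which is easy at $m=1$ precisely because all direction-changing clauses of $\equiv_{1,n}^X$ are vacuous, but would be false for larger first index. This is exactly the argument the paper makes informally in its base case, so the fix is local, but as written your auxiliary lemma is not strong enough to close that branch.
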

\begin{proof}
  We only show the first assertion by induction over $m$. The other assertions follow by similar
  arguments. For $m = 1$, we may assume $n > 0$ since, otherwise, there is nothing to show. We have
  \[
    U_1 = (s x_1)^\pi s (y_1 t)^\pi \quad \text{ and } \quad V_1 = (s x_1)^\pi t (y_1 t)^\pi \text{.}
  \]
  Let $u_1 = \sigma \left( \subs{U_1}{nk} \right)$ and $v_1 = \sigma \left(
  \subs{V_1}{nk} \right)$. First, assume $Z = X$. We are only interested in at most $n$
  consecutive simultaneous $X_\Sigma^R$ factorizations of $u_1$ and $v_1$ because, as soon as we apply
  at least one $X_\Sigma^L$ factorization, we know that $\equiv_{0, n}^X$ holds. As long as we apply
  only factorizations $X_a^R$ with $a \in \alphabet(\sigma(s x_1))$, the factorization position
  stays in the $(s x_1)^\pi$ part of $u_1$ and $v_1$. Since the number of remaining factorizations
  decreases, the right parts will eventually be in relation under $\equiv_{m, 0}^X$. If there is at
  least one $X_a^R$ factorization in the sequence where $a$ is in $\alphabet(\sigma(y_1 t))
  \setminus \alphabet(\sigma(s x_1))$, the right-hand side of $u_1$ belongs to the $(y_1 t)^{\pi}$ part and
  the right-hand side of $v_1$ belongs to the $t (y_1 t)^{\pi}$ part; but in both words, there are still at
  least $n - 1$ instances of $\sigma(y_1 t)$, which implies that the right-hand sides are equal under
  $\equiv_{m, n - 1}^X$. For $Z = Y$, the argumentation is symmetric, which also handles the $Z = {XY}$
  case. The additional $C_{a, b}$ of $Z = \WI$ needs no special handling since it decreases the
  first index of $\equiv_{m, n}^\WI$ to $m - 1 = 0$ anyway.

  To conclude the induction, we show
  \[
    u_{m + 1} = \sigma \left( \subs{U_{m + 1}}{nk} \right) \equiv_{m + 1, n}^Z\allowbreak
      \sigma \left( \subs{V_{m + 1}}{nk} \right) = v_{m + 1}
  \]
  next. For a schematic representation of $u_{m + 1}$ and $v_{m + 1}$, the reader is advised to
  refer to \autoref{fig:umAndvm}.
  \begin{figure}[t]
    \resizebox{\linewidth}{!}{%
    \begin{tikzpicture}[text height=.8em, text depth=.2em]
      \node (u) {$u_{m + 1} = {}$};
      \matrix [right=0cm of u, rectangle, draw, matrix of math nodes, ampersand replacement=\&, inner sep=2pt] (uword) {
        (u_m x_{m + 1}) \& (u_m x_{m + 1}) \& \dots \& (u_m x_{m + 1}) \& u_m \& (y_{m + 1} u_m) \& (y_{m + 1} u_m) \& \dots \& (y_{m + 1} u_m) \\
      };
      \node[below=0.5cm of u] (v) {$v_{m + 1} = {}$};
      \matrix [right=0cm of v, rectangle, draw, matrix of math nodes, ampersand replacement=\&, inner sep=2pt] (vword) {
        (u_m x_{m + 1}) \& (u_m x_{m + 1}) \& \dots \& (u_m x_{m + 1}) \& v_m \& (y_{m + 1} u_m) \& (y_{m + 1} u_m) \& \dots \& (y_{m + 1} u_m) \\
      };
      \path[decorate, decoration=brace, draw]
        ([yshift=-4pt]vword-1-4.south east) --
          node[anchor=north, yshift=-\baselineskip, align=center]
            {Repetition of the same word\\for $nk \geq n$ times.}
        ([yshift=-4pt]vword-1-1.south west);
      \path[decorate, decoration=brace, draw]
        ([yshift=-4pt]vword-1-9.south east) --
          node[anchor=north, yshift=-\baselineskip, align=center]
            {Repetition of the same word\\for $nk \geq n$ times.}
        ([yshift=-4pt]vword-1-6.south west);
      \path[decorate, decoration=brace, draw]
        ([yshift=+4pt]uword-1-5.north west) --
          node[anchor=south, align=center] {part containing the difference}
        ([yshift=+4pt]uword-1-5.north east);
    \end{tikzpicture}}
    \caption{Schematic representation of $u_{m + 1}$ and $v_{m + 1}$. Note that $x_{m + 1}$ and
    $y_{m + 1}$ are identified with $\sigma(x_{m + 1})$ and $\sigma(y_{m + 1})$, respectively.}\label{fig:umAndvm}
  \end{figure}
  For convenience, we identify $x_{m + 1}$ with $\sigma(x_{m + 1})$ and $y_{m + 1}$ with
  $\sigma(y_{m + 1})$. We will only show the case $Z = \WI$ since this is the most difficult one and
  the other ones are similar. In fact, we will proof the following claim by an inner induction on
  $l$ for $0 \leq l \leq n$:
  \[
    (u_m x_{m + 1})^l u_m (y_{m + 1} u_m)^l \equiv_{m + 1, l}^\WI (u_m x_{m + 1})^l v_m (y_{m + 1} u_m)^l
  \]
  The actual assertion then follows for the case $l = n$. For $l = 0$, there is nothing to show. So,
  let $l > 0$. First, consider an $X_a^L$ or $X_a^R$ factorizations. Only two cases can emerge: the
  factorization happens in the $(u_m x_{m + 1})^l$ part or the factorization happens in the
  $(y_{m + 1} u_m)^l$ part (in both words simultaneously). The factorization cannot happen in the
  central $u_m$ or $v_m$ part because we have $\alphabet(v_m) = \alphabet(u_m) \subseteq
  \alphabet(u_m x_{m + 1})$.

  First, consider the case in which the factorization happens at the beginning. Clearly, if this is
  the case, then the factorization must occur in the first $(u_m x_{m + 1})$ factor of both words at
  the same position:
  \begin{center}
    \begin{tikzpicture}[text height=.8em, text depth=.2em]
      \matrix [rectangle, draw, matrix of math nodes, ampersand replacement=\&, inner sep=2pt] (uword) {
        (u_m x_{m + 1}) \& (u_m x_{m + 1})^{l - 1} \& u_m \& (y_{m + 1} u_m)^{l - 1} \& (y_{m + 1} u_m)\\
      };
      \matrix [below=2cm of uword, rectangle, draw, matrix of math nodes, ampersand replacement=\&, inner sep=2pt] (vword) {
        (u_m x_{m + 1}) \& (u_m x_{m + 1})^{l - 1} \& v_m \& (y_{m + 1} u_m)^{l - 1} \& (y_{m + 1} u_m)\\
      };
      \draw (uword.west)+(0, +1cm) edge[-latex, shorten >= 2pt, bend left, in=+130] node[below, pos=0.1] {$X_a$} (uword-1-1.north);
      \draw (vword.west)+(0, -1cm) edge[-latex, shorten >= 2pt, bend right, in=-130] node[above, pos=0.1] {$X_a$} (vword-1-1.south);

      \path[decorate, decoration=brace, draw]
        ([yshift=-4pt]$(uword-1-1.south)+(-2pt, 0)$) -- node[shape=coordinate, below, name=1top] {}
        ([yshift=-4pt]$(uword-1-1.south west)+(+2pt, 0)$);
      \path[decorate, decoration=brace, draw]
        ([yshift=+4pt]$(vword-1-1.north west)+(+2pt, 0)$) -- node[shape=coordinate, above, name=1bottom] {}
        ([yshift=+4pt]$(vword-1-1.north)+(-2pt, 0)$);
      \path (1top) -- node[pos=0.5, anchor=center, rotate=-90] {$=$} (1bottom);

      \path[decorate, decoration=brace, draw]
        ([yshift=-4pt]$(uword-1-1.south east)+(-2pt, 0)$) -- node[shape=coordinate, below, name=2top] {}
        ([yshift=-4pt]$(uword-1-1.south)+(+2pt, 0)$);
      \path[decorate, decoration=brace, draw]
        ([yshift=+4pt]$(vword-1-1.north)+(+2pt, 0)$) -- node[shape=coordinate, above, name=2bottom] {}
        ([yshift=+4pt]$(vword-1-1.north east)+(-2pt, 0)$);
      \path (2top) -- node[pos=0.5, anchor=center, rotate=-90] {$=$} (2bottom);

      \path[decorate, decoration=brace, draw]
        ([yshift=-4pt]$(uword-1-4.south east)+(-2pt, 0)$) -- node[shape=coordinate, below, name=3top] {}
        ([yshift=-4pt]$(uword-1-2.south west)+(+2pt, 0)$);
      \path[decorate, decoration=brace, draw]
        ([yshift=+4pt]$(vword-1-2.north west)+(+2pt, 0)$) -- node[shape=coordinate, above, name=3bottom] {}
        ([yshift=+4pt]$(vword-1-4.north east)+(-2pt, 0)$);
      \path (3top) -- node[pos=0.5, anchor=center, rotate=-90] {$\equiv_{m + 1, l - 1}^\WI$} (3bottom);

      \path[decorate, decoration=brace, draw]
        ([yshift=-4pt]$(uword-1-5.south east)+(-2pt, 0)$) -- node[shape=coordinate, below, name=4top] {}
        ([yshift=-4pt]$(uword-1-5.south west)+(+2pt, 0)$);
      \path[decorate, decoration=brace, draw]
        ([yshift=+4pt]$(vword-1-5.north west)+(+2pt, 0)$) -- node[shape=coordinate, above, name=4bottom] {}
        ([yshift=+4pt]$(vword-1-5.north east)+(-2pt, 0)$);
      \path (4top) -- node[pos=0.5, anchor=center, rotate=-90] {$=$} (4bottom);
    \end{tikzpicture}
  \end{center}

  So, we have $(u_m x_{m + 1})^l u_m (y_{m + 1} u_m)^l \cdot X_a^L \equiv_{m, l - 1}^\WI
  (u_m x_{m + 1})^l v_m (y_{m + 1} u_m)^l \cdot X_a^L$ because the two words are equal. For an
  $X_a^R$ factorization, we have to apply induction on $l$ and get
  \[
    (u_m x_{m + 1})^{l - 1} u_m (y_{m + 1} u_m)^{l - 1} \equiv_{m + 1, l - 1}^\WI
      (u_m x_{m + 1})^{l - 1} v_m (y_{m + 1} u_m)^{l - 1}
  \]
  in the middle. Thus, we also get $(u_m x_{m + 1})^l u_m (y_{m + 1} u_m)^l \cdot X_a^R
  \equiv_{m + 1, l - 1}^\WI (u_m x_{m + 1})^l v_m (y_{m + 1} u_m)^l \cdot X_a^R$ (because the other
  word parts are equal and because $\equiv_{m + 1, l - 1}^\WI$ is a congruence).

  If the factorization happens at the end, the situation is similar. Clearly, the factorization can
  only happen at the same position in the first $(y_{m + 1} u_m)$ factor of each word, respectively
  (in fact, it can only happen within the first $y_{m + 1}$):
  \begin{center}
    \begin{tikzpicture}[text height=.8em, text depth=.2em]
      \matrix [rectangle, draw, matrix of math nodes, ampersand replacement=\&, inner sep=2pt] (uword) {
        (u_m x_{m + 1}) \& (u_m x_{m + 1})^{l - 1} \& u_m \& (y_{m + 1} u_m) \& (y_{m + 1} u_m)^{l - 1}\\
      };
      \matrix [below=1.5cm of uword, rectangle, draw, matrix of math nodes, ampersand replacement=\&, inner sep=2pt] (vword) {
        (u_m x_{m + 1}) \& (u_m x_{m + 1})^{l - 1} \& v_m \& (y_{m + 1} u_m) \& (y_{m + 1} u_m)^{l - 1}\\
      };
      \draw (uword.west)+(0, +1.5cm) edge[-latex, shorten >= 2pt, bend left, in=+140] node[below, pos=0.1] {$X_a$} ($(uword-1-4.north)+(-0.5em, 0)$);
      \draw (vword.west)+(0, -1.5cm) edge[-latex, shorten >= 2pt, bend right, in=-140] node[above, pos=0.1] {$X_a$} ($(vword-1-4.south)+(-0.5em, 0)$);

      \path[decorate, decoration=brace, draw]
        ([yshift=-4pt]$(uword-1-2.south east)+(-2pt, 0)$) -- node[shape=coordinate, below, name=1top] {}
        ([yshift=-4pt]$(uword-1-1.south west)+(+2pt, 0)$);
      \path[decorate, decoration=brace, draw]
        ([yshift=+4pt]$(vword-1-1.north west)+(+2pt, 0)$) -- node[shape=coordinate, above, name=1bottom] {}
        ([yshift=+4pt]$(vword-1-2.north east)+(-2pt, 0)$);
      \path (1top) -- node[pos=0.5, anchor=center, rotate=-90] {$=$} (1bottom);

      \path[decorate, decoration=brace, draw]
        ([yshift=-4pt]$(uword-1-3.south east)+(-2pt, 0)$) -- node[shape=coordinate, below, name=2top] {}
        ([yshift=-4pt]$(uword-1-3.south west)+(+2pt, 0)$);
      \path[decorate, decoration=brace, draw]
        ([yshift=+4pt]$(vword-1-3.north west)+(+2pt, 0)$) -- node[shape=coordinate, above, name=2bottom] {}
        ([yshift=+4pt]$(vword-1-3.north east)+(-2pt, 0)$);
      \path (2top) -- node[pos=0.5, anchor=center, rotate=-90] {$\equiv_{m, l - 1}^\WI$} (2bottom);

      \path[decorate, decoration=brace, draw]
        ([yshift=-4pt]$(uword-1-4.south)+(-2pt-0.5em, 0)$) -- node[shape=coordinate, below, name=3top] {}
        ([yshift=-4pt]$(uword-1-4.south west)+(+2pt, 0)$);
      \path[decorate, decoration=brace, draw]
        ([yshift=+4pt]$(vword-1-4.north west)+(+2pt, 0)$) -- node[shape=coordinate, above, name=3bottom] {}
        ([yshift=+4pt]$(vword-1-4.north)+(-2pt-0.5em, 0)$);
      \path (3top) -- node[pos=0.5, anchor=center, rotate=-90] {$=$} (3bottom);

      \path[decorate, decoration=brace, draw]
        ([yshift=-4pt]$(uword-1-5.south east)+(-2pt, 0)$) -- node[shape=coordinate, below, name=4top] {}
        ([yshift=-4pt]$(uword-1-4.south)+(+2pt, 0)-(0.5em, 0)$);
      \path[decorate, decoration=brace, draw]
        ([yshift=+4pt]$(vword-1-4.north)+(+2pt, 0)-(0.5em, 0)$) -- node[shape=coordinate, above, name=4bottom] {}
        ([yshift=+4pt]$(vword-1-5.north east)+(-2pt, 0)$);
      \path (4top) -- node[pos=0.5, anchor=center, rotate=-90] {$=$} (4bottom);
    \end{tikzpicture}
  \end{center}
  We have $u_m \equiv_{m, n}^\WI v_m$ by induction on $m$ and (because $l \leq n$) also $u_m
  \equiv_{m, l - 1}^\WI v_m$. Because the other parts of the words are equal, respectively, and
  because $\equiv_{m, l - 1}^\WI$ is a congruence, we get $(u_m x_{m + 1})^l u_m (y_{m + 1} u_m)^l
  \cdot X_a^L \equiv_{m, l - 1}^\WI (u_m x_{m + 1})^l v_m (y_{m + 1} u_m)^l \cdot X_a^L$. We have
  $(u_m x_{m + 1})^l u_m (y_{m + 1} u_m)^l \cdot X_a^R \equiv_{m + 1, l - 1}^\WI
  (u_m x_{m + 1})^l v_m (y_{m + 1} u_m)^l \cdot X_a^R$ directly because the word parts on the right
  coincide.

  Because $Y_a^L$ and $Y_a^R$ factorizations are symmetrical, the only remaining case is a
  $C_{a, b}$ factorization. It is not possible that $C_{a, b}$ is defined on one of the words but
  not on the other because in each respective part the same letters appear. If $C_{a, b}$ is defined
  on $(u_m x_{m + 1})^l u_m (y_{m + 1} u_m)^l$ and on $(u_m x_{m + 1})^l v_m (y_{m + 1} u_m)^l$,
  then the only possible situation is that the $X_a^L$ factorization happens at the same position of
  the first $(y_{m + 1} u_m)$ part in each word and the $Y_b^R$ factorization happens at the same
  position of the last $(u_m x_{m + 1})$ factor in each word:
  \begin{center}
    \begin{tikzpicture}[text height=.8em, text depth=.2em]
      \matrix [rectangle, draw, matrix of math nodes, ampersand replacement=\&, inner sep=2pt] (uword) {
        (u_m x_{m + 1})^{l - 1} \& (u_m x_{m + 1}) \& u_m \& (y_{m + 1} u_m) \& (y_{m + 1} u_m)^{l - 1}\\
      };
      \matrix [below=1.5cm of uword, rectangle, draw, matrix of math nodes, ampersand replacement=\&, inner sep=2pt] (vword) {
        (u_m x_{m + 1})^{l - 1} \& (u_m x_{m + 1}) \& v_m \& (y_{m + 1} u_m) \& (y_{m + 1} u_m)^{l - 1}\\
      };
      \draw (uword.west)+(0, +1.5cm) edge[-latex, shorten >= 2pt, bend left, in=+140] node[below, pos=0.1] {$X_a$} ($(uword-1-4.north)+(-0.5em, 0)$);
      \draw (vword.west)+(0, -1.5cm) edge[-latex, shorten >= 2pt, bend right, in=-140] node[above, pos=0.1] {$X_a$} ($(vword-1-4.south)+(-0.5em, 0)$);
      \draw (uword.east)+(0, +1.5cm) edge[-latex, shorten >= 2pt, bend right, in=-140] node[below, pos=0.1] {$Y_b$} ($(uword-1-2.north)+(+0.5em, 0)$);
      \draw (vword.east)+(0, -1.5cm) edge[-latex, shorten >= 2pt, bend left, in=+140] node[above, pos=0.1] {$Y_b$} ($(vword-1-2.south)+(+0.5em, 0)$);

      \path[decorate, decoration=brace, draw]
        ([yshift=-4pt]$(uword-1-2.south east)+(-2pt, 0)$) -- node[shape=coordinate, below, name=1top] {}
        ([yshift=-4pt]$(uword-1-2.south)+(+2pt, 0)+(0.5em, 0)$);
      \path[decorate, decoration=brace, draw]
        ([yshift=+4pt]$(vword-1-2.north)+(+2pt, 0)+(0.5em, 0)$) -- node[shape=coordinate, above, name=1bottom] {}
        ([yshift=+4pt]$(vword-1-2.north east)+(-2pt, 0)$);
      \path (1top) -- node[pos=0.5, anchor=center, rotate=-90] {$=$} (1bottom);

      \path[decorate, decoration=brace, draw]
        ([yshift=-4pt]$(uword-1-3.south east)+(-2pt, 0)$) -- node[shape=coordinate, below, name=2top] {}
        ([yshift=-4pt]$(uword-1-3.south west)+(+2pt, 0)$);
      \path[decorate, decoration=brace, draw]
        ([yshift=+4pt]$(vword-1-3.north west)+(+2pt, 0)$) -- node[shape=coordinate, above, name=2bottom] {}
        ([yshift=+4pt]$(vword-1-3.north east)+(-2pt, 0)$);
      \path (2top) -- node[pos=0.5, anchor=center, rotate=-90] {$\equiv_{m, l - 1}^\WI$} (2bottom);

      \path[decorate, decoration=brace, draw]
        ([yshift=-4pt]$(uword-1-4.south)+(-2pt-0.5em, 0)$) -- node[shape=coordinate, below, name=3top] {}
        ([yshift=-4pt]$(uword-1-4.south west)+(+2pt, 0)$);
      \path[decorate, decoration=brace, draw]
        ([yshift=+4pt]$(vword-1-4.north west)+(+2pt, 0)$) -- node[shape=coordinate, above, name=3bottom] {}
        ([yshift=+4pt]$(vword-1-4.north)+(-2pt-0.5em, 0)$);
      \path (3top) -- node[pos=0.5, anchor=center, rotate=-90] {$=$} (3bottom);
    \end{tikzpicture}
  \end{center}
  As in the previous case, we have $u_m \equiv_{m, l - 1}^\WI v_m$ and, because the other parts of
  the words coincide respectively, also $(u_m x_{m + 1})^l u_m (y_{m + 1} u_m)^l \cdot C_{a, b}
  \equiv_{m, l - 1}^\WI (u_m x_{m + 1})^l v_m (y_{m + 1} u_m)^l \cdot C_{a, b}$.
\end{proof}

\begin{restatable}{theorem}{relationsInTrotterWeil}
  \label{thm:relationsInTrotterWeil}
  Let $m, n \in \Nat$. Then:
  \begin{multicols}{2}
    \begin{itemize}
      \item $\Sigma^* / {\equiv_{m, n}^X} \in \Rm$
      \item $\Sigma^* / {\equiv_{m, n}^Y} \in \Lm$
      \item $\Sigma^* / {\equiv_{m, n}^{XY}} \in \Rm \vee \Lm$
      \item $\Sigma^* / {\equiv_{m, n}^\WI} \in \Rm[m + 1] \cap \Lm[m + 1]$
    \end{itemize}
  \end{multicols}
\end{restatable}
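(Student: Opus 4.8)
The plan is to verify, for each relation, that the quotient monoid $M=\Sigma^*/{\equiv_{m,n}^{Z}}$ satisfies the corresponding sufficient equation from \autoref{lem:TrotterWeilEquations}, using \autoref{lem:UmVmAreInRelation} to supply the needed relational equivalences. For the $X$-case the target equation is $U_1=V_1$ when $m=1$ and $(U_{m-1}x_m)^\pi U_{m-1}=(U_{m-1}x_m)^\pi V_{m-1}$ when $m\geq 2$; the $Y$-case is symmetric; and for the $\WI$-case the target equation is $U_m=V_m$, which by \autoref{lem:TrotterWeilEquations} places a monoid in $\Rm[m+1]\cap\Lm[m+1]$.

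To check that such an equation holds in $M$, I would lift an arbitrary assignment $\sigma\colon\Sigma_{m}^*\to M$ to a word-valued one. Writing $\eta\colon\Sigma^*\to M$ for the canonical projection, I pick a representative word in $\Sigma^*$ for each variable and obtain $\tau\colon\Sigma_m^*\to\Sigma^*$ with $\eta\circ\tau=\sigma$. For any substitution value $p$ we then have $\sigma(\subs{\alpha}{p})=\eta(\tau(\subs{\alpha}{p}))=\eta(\sigma''(\subs{\alpha}{p}))$, where $\sigma'':=\tau$ is an assignment into $\Sigma^*$, so verifying the equation at $p=M!$ reduces to showing that $\sigma''(\subs{\alpha}{M!})$ and $\sigma''(\subs{\beta}{M!})$ have the same $\eta$-image, i.e.\ are $\equiv_{m,n}^{Z}$-equivalent. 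This is precisely the content of \autoref{lem:UmVmAreInRelation}, except that the lemma delivers the equivalence at substitution value $nk$ rather than at $M!$.

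Reconciling these two substitution values is the one genuinely delicate point. I would use the standard fact that in a finite monoid, substituting any positive multiple of the exponent $M!$ for $\pi$ gives the same evaluation: for $g\in M$ and a multiple $p$ of $M!$ one has $g^{p}=g^{M!}$ since $g^{M!}$ is idempotent, and this propagates through nested $\pi$-powers by induction on the $\pi$-term. Hence $\sigma(\subs{\alpha}{M!})=\sigma(\subs{\alpha}{n\cdot M!})$ and likewise for $\beta$. Choosing $k=M!$ so that $n\cdot M!=nk$, \autoref{lem:UmVmAreInRelation} gives $\sigma''(\subs{\alpha}{n\cdot M!})\equiv_{m,n}^{Z}\sigma''(\subs{\beta}{n\cdot M!})$, and therefore $\sigma(\subs{\alpha}{n\cdot M!})=\sigma(\subs{\beta}{n\cdot M!})$. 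Chaining the three equalities yields $\sigma(\subs{\alpha}{M!})=\sigma(\subs{\beta}{M!})$, so the equation holds in $M$ and \autoref{lem:TrotterWeilEquations} places $M$ in $\Rm$, $\Lm$, or $\Rm[m+1]\cap\Lm[m+1]$ as required.

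The join level is not defined by a single equation, so I would obtain the third bullet from the other two. Since $u\equiv_{m,n}^{XY}v$ means $u\equiv_{m,n}^{X}v$ and $u\equiv_{m,n}^{Y}v$, the map $w\mapsto([w]_X,[w]_Y)$ embeds $\Sigma^*/{\equiv_{m,n}^{XY}}$ into $\Sigma^*/{\equiv_{m,n}^{X}}\times\Sigma^*/{\equiv_{m,n}^{Y}}$. By the first two bullets the factors lie in $\Rm$ and $\Lm$, and by the characterization of the join used in the proof of \autoref{thm:relationsImplyEqualityUnderHom} (a monoid lies in $\V\vee\V[W]$ exactly when it divides a product $M_1\times M_2$ with $M_1\in\V$ and $M_2\in\V[W]$), this gives $\Sigma^*/{\equiv_{m,n}^{XY}}\in\Rm\vee\Lm$. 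I expect the substitution-value reconciliation to be the main obstacle; the lifting and the embedding for the join are routine.
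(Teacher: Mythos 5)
Your proposal is correct and follows essentially the same route as the paper: reduce each bullet to verifying the sufficient equations of \autoref{lem:TrotterWeilEquations} in the quotient, supply the needed $\equiv_{m,n}^Z$-equivalences via \autoref{lem:UmVmAreInRelation} with $k=M!$, and obtain the join level by embedding $\Sigma^*/{\equiv_{m,n}^{XY}}$ into the product of the $X$- and $Y$-quotients. The only difference is one of exposition: the paper compresses your lifting-of-assignments and exponent-reconciliation steps into a single stated ``observation'' that $\alpha=\beta$ holds in $M=\Sigma^*/{\equiv_{m,n}^Z}$ iff $\sigma\left(\subs{\alpha}{n\cdot M!}\right)\equiv_{m,n}^Z\sigma\left(\subs{\beta}{n\cdot M!}\right)$ for all word-valued $\sigma$, which you justify explicitly.
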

\begin{proof}
  To prove the theorem, one needs to show that the equations from
  \autoref{lem:TrotterWeilEquations} hold in the respective monoid. To do this, it is worthwhile to
  make an observation: choose $m, n \in \Nat$ and $Z \in \{ X, Y, XY, \WI \}$ arbitrarily and let
  $M = \Sigma^* / {\equiv_{m, n}^Z}$. The observation is that an equation $\alpha = \beta$ holds in
  $M$ if and only if $\sigma \left( \subs{\alpha}{n \cdot M!} \right) \equiv_{m, n}^Z \sigma
  \left( \subs{\beta}{n \cdot M!} \right)$ holds for all assignments $\sigma: \Gamma \to \Sigma^*$
  where $\Gamma$ is the alphabet of $\alpha$ and $\beta$ (i.\,e.\ the set of variables appearing in
  $\alpha$ and $\beta$).

  So, to show $\Sigma^* / {\equiv_{m, n}^\WI} \in \Rm[m + 1] \cap \Lm[m + 1]$, it suffices to show
  that $\sigma \left( \subs{U_m}{n \cdot M!} \right)\allowbreak \equiv_{m, n}^\WI
  \sigma \left( \subs{V_m}{n \cdot M!} \right)$ holds for all assignments of variables
  $\sigma: \Sigma_m^* \to \Sigma^*$. However, this follows from the first assertion of
  \autoref{lem:UmVmAreInRelation}. The same is true for $\Rm[1]$ and $\Lm[1]$. To show
  $\Sigma^* / {\equiv_{m + 1, n}^X} \in \Rm[m + 1]$ for $m \in \Nat_0$, we use the second assertion
  of \autoref{lem:UmVmAreInRelation} and, for $\Sigma^* / {\equiv_{m + 1, n}^Y} \in \Lm[m + 1]$,
  we use the third one.

  To prove that $M / {\equiv_{m, n}^{XY}}$ is in $\Rm \vee \Lm$, one can recycle an observation from the
  proof of \autoref{thm:relationsImplyEqualityUnderHom}: a monoid is in the join $\V \vee \V[W]$ of
  two varieties $\V$ and $\V[W]$ if and only if it is a divisor of a direct product $M_1 \times M_2$
  such that $M_1 \in \V$ and $M_2 \in \V[W]$. Indeed, for any two congruences $\mathcal{C}_1$ and
  $\mathcal{C}_2$ over a monoid $N$, $N / \left({\mathcal{C}_1 \cap \mathcal{C}_2} \right)$ is a divisor of
  $N / {\mathcal{C}_1} \times N / {\mathcal{C}_2}$ (as can be shown easily). Therefore,
  $M / {\equiv_{m, n}^{XY}}$ is a divisor of the direct product of $M / {\equiv_{m, n}^X} \in \Rm$ and
  $M / {\equiv_{m, n}^Y} \in \Lm$.
\end{proof}

To conclude this section, we finally state and prove the converse of
\autoref{lem:TrotterWeilEquations}. This gives us a full characterization of the Trotter-Weil
Hierarchy in terms of equations (see also \cite{kufleitner2012join}). Note that we include this proof
only for the sake of completeness. Strictly speaking, it is not necessary for the remainder of this
paper.
\begin{lemma}\label{lem:TrotterWeilEquationsConverse}
  Define the $\pi$-terms
  \[
    U_1 = (s x_1)^\pi s (y_1 t)^\pi \quad \text{and} \quad V_1 = (s x_1)^\pi t (y_1 t)^\pi
  \]
  over the alphabet $\Sigma_1 = \{ s, t, x_1 \}$. For $m \in \Nat$, let $x_{m + 1}$ and $y_{m + 1}$
  be new characters not in the alphabet $\Sigma_m$ and define the $\pi$-terms
  \[
    U_{m + 1} = (U_m x_{m + 1})^\pi U_m (y_{m + 1} U_m)^\pi \quad \text{and} \quad
    V_{m + 1} = (U_m x_{m + 1})^\pi V_m (y_{m + 1} U_m)^\pi
  \]
  over the alphabet $\Sigma_{m + 1} = \Sigma_m \uplus \{ x_{m + 1}, y_{m + 1} \}$.

  Then we have
  \begin{align*}
    M \in \Rm[1] = \Lm[1] = \V[J] &\implies U_1 = V_1 \text{ holds in } M \text{,}\\
    M \in \Rm[m + 1] &\implies (U_m x_{m + 1})^\pi U_m = (U_m x_{m + 1})^\pi V_m \text{ holds in } M
      \text{,}\\
    M \in \Lm[m + 1] &\implies U_m (y_{m + 1} U_m)^\pi = V_m (y_{m + 1} U_m)^\pi \text{ holds in } M
      \text{ and}\\
    M \in \Rm[m + 1] \cap \Lm[m + 1] &\implies U_m = V_m \text{ holds in } M
  \end{align*}
  for all $m \in \Nat$.
\end{lemma}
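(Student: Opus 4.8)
The plan is to read off all four implications by pairing \autoref{lem:UmVmAreInRelation}, which guarantees that the substituted $\pi$-terms stand in the appropriate combinatorial relation, with \autoref{thm:relationsImplyEqualityUnderHom}, which upgrades such a relation to genuine equality under a homomorphism. I handle the four cases uniformly by associating to each target equation $\alpha = \beta$ a relation exponent $Z \in \{X, Y, \WI\}$ and a relation level $m'$ so that the matching item of \autoref{thm:relationsImplyEqualityUnderHom} and the matching assertion of \autoref{lem:UmVmAreInRelation} line up. For the $\Rm[m+1]$-equation $(U_m x_{m+1})^\pi U_m = (U_m x_{m+1})^\pi V_m$ this is $Z = X$ and $m' = m+1$ (second item of the theorem, second assertion of the lemma); for the $\Lm[m+1]$-equation it is $Z = Y$ and $m' = m+1$ (third item and third, symmetric, assertion); and for the $\Rm[m+1] \cap \Lm[m+1]$-equation $U_m = V_m$ it is $Z = \WI$ and $m' = m$ (fourth item, first assertion). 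The first implication, for $\V[J] = \Rm[2] \cap \Lm[2]$, is exactly the $m = 1$ instance of this last case, so it needs no separate treatment.

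For a single case, fix an arbitrary assignment of variables $\tau \colon \Sigma_{m+1}^* \to M$, so that checking $\alpha = \beta$ in $M$ amounts to verifying $\tau\!\left(\subs{\alpha}{M!}\right) = \tau\!\left(\subs{\beta}{M!}\right)$. Since the proof of \autoref{thm:relationsImplyEqualityUnderHom} makes no use of special properties of $\Sigma$, I instantiate it with the alphabet $\Sigma_{m+1}$ and the homomorphism $\tau$: from $M$ lying in the relevant variety I obtain an $n \in \Nat$ such that $u \equiv_{m', n}^Z v \implies \tau(u) = \tau(v)$ for all $u, v \in \Sigma_{m+1}^*$. I then apply \autoref{lem:UmVmAreInRelation} with the \emph{identity} assignment $\sigma = \mathrm{id} \colon \Sigma_{m+1}^* \to \Sigma_{m+1}^*$, this very $n$, and an arbitrary $k$, which yields $\subs{\alpha}{nk} \equiv_{m', n}^Z \subs{\beta}{nk}$ as words over $\Sigma_{m+1}$. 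Feeding this into the implication just obtained gives $\tau\!\left(\subs{\alpha}{nk}\right) = \tau\!\left(\subs{\beta}{nk}\right)$.

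It remains to pass from the substitution value $nk$ supplied by the combinatorics to the value $M!$ demanded by the definition of \enquote{holds in $M$}; this reconciliation of exponents is really the only point requiring care. I handle it by choosing $k = M!$, so that $nk = n \cdot M!$ is a positive multiple of $M!$, and then proving $\tau\!\left(\subs{\gamma}{nk}\right) = \tau\!\left(\subs{\gamma}{M!}\right)$ for every $\pi$-term $\gamma$ by structural induction: the letter and concatenation cases are immediate, and for $\gamma = (\delta)^\pi$ the inductive hypothesis makes $d = \tau\!\left(\subs{\delta}{nk}\right) = \tau\!\left(\subs{\delta}{M!}\right)$ independent of the exponent, whence $d^{nk} = d^{M!}$ because $M!$ is the exponent of $M$ and $nk$ is a multiple of it. Applying this with $\gamma = \alpha$ and $\gamma = \beta$ turns the displayed equality into $\tau\!\left(\subs{\alpha}{M!}\right) = \tau\!\left(\subs{\beta}{M!}\right)$, which is exactly what it means for $\alpha = \beta$ to hold in $M$. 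As $\tau$ was arbitrary, the desired implication follows; running the same argument for the three exponents $X$, $Y$, and $\WI$ establishes all four implications of the lemma.
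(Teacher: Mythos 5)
Your proof is correct and follows essentially the same route as the paper: combining \autoref{lem:UmVmAreInRelation} (with the identity assignment) and \autoref{thm:relationsImplyEqualityUnderHom}, then reconciling the exponent $n\cdot M!$ with $M!$ via idempotency of $M!$-th powers. The only deviation is cosmetic: the paper dispatches the $\V[J]$ case by citing the classical equational characterization, whereas you correctly derive it as the $m=1$ instance of the intersection-level case using $\V[J]=\Rm[2]\cap\Lm[2]$.
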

\begin{proof}
  As stated in the proof of \autoref{lem:TrotterWeilEquations}, the first implication is due to a
  well-known characterization in terms of equations of $\V[J]$ \cite{pin1986varieties}. The other
  implications can be proved by combining \autoref{lem:UmVmAreInRelation} and
  \autoref{thm:relationsImplyEqualityUnderHom}. For example, suppose $M \in \Rm[m + 1]$ and let
  $\sigma: (\Sigma_m \uplus \{ x_{m + 1} \})^* \to M$ be an arbitrary assignment of variables. We
  need to show $\sigma \left( \subs{(U_m x_{m + 1})^\pi U_m}{M!} \right) =
  \sigma \left( \subs{(U_m x_{m + 1})^\pi V_m}{M!} \right)$. By
  \autoref{thm:relationsImplyEqualityUnderHom}, there is an $n \in \mathbb{N}$ such that
  $u \equiv_{m + 1, n}^X v$ implies $\sigma(u) = \sigma(v)$ for all $u, v \in
  (\Sigma_m \uplus \{ x_{m + 1} \})^*$. Thus, for this $n$, we have
  \begin{align*}
    \sigma \left( \subs{(U_m x_{m + 1})^\pi U_m}{M!} \right)
      &= \sigma \left( \subs{(U_m x_{m + 1})^\pi U_m}{n \cdot M!} \right)\\
    = \sigma \left( \subs{(U_m x_{m + 1})^\pi V_m}{n \cdot M!} \right)
      &= \sigma \left( \subs{(U_m x_{m + 1})^\pi V_m}{M!} \right)
  \end{align*}
  since we have $\subs{(U_m x_{m + 1})^\pi U_m}{n \cdot M!} \equiv_{m + 1, n}^X
  \subs{(U_m x_{m + 1})^\pi V_m}{n \cdot M!}$ by \autoref{lem:UmVmAreInRelation}.
\end{proof}

\section{Relations and Equations}\label{sec:relationsAndEquations}

\paragraph*{Infinite Version of the Relations.}
So far, we have mainly considered finite words. In this section, which connects the relational
approach outlined above with equations, we finally need to consider generalized words. Thus,
we also define an infinite version of the above relations. Here, we allow an arbitrary number of
factorizations: for $m \in \Nat_0$ and $Z \in \smallset{X, Y, XY, \WI}$, define
\[
  u \equiv_m^Z v \iff \forall n \in \Nat: u \equiv_{m, n}^Z v \text{.}
\]
Notice that this definition is not useful for finite words as every one of them is in its own
class.

\paragraph*{Connecting the Relations and Equations.}
In order to solve the word problem for $\pi$-terms over the varieties in the Trotter-Weil Hierarchy,
one can use the following connection between the relations defined above
and equations in these varieties, which is straightforward if one makes the transition from
finite to infinite words.
Besides its use for the word problem for $\pi$-terms, this
connection is also interesting in its own right as it can be used to prove or disprove equations
in any of the varieties. As the class of monoids in which an equation $\alpha = \beta$ holds is a variety, one can
see the assertion for the join levels as an implication of the ones for the corners.
\begin{restatable}{theorem}{relationsAndEquations}
  \label{thm:relationsAndEquations}
  Let $\alpha$ and $\beta$ be two $\pi$-terms. For every $m \in \Nat$, we have:
  \begin{align*}
    \subs{\alpha}{\omega + \omega^*} \equiv_{m}^X \subs{\beta}{\omega + \omega^*} &\iff
      \alpha = \beta \text{ holds in } \Rm\\
    \subs{\alpha}{\omega + \omega^*} \equiv_{m}^Y \subs{\beta}{\omega + \omega^*} &\iff
      \alpha = \beta \text{ holds in } \Lm\\
    \subs{\alpha}{\omega + \omega^*} \equiv_{m}^{XY} \subs{\beta}{\omega + \omega^*} &\iff
      \alpha = \beta \text{ holds in } \Rm \vee \Lm\\
    \subs{\alpha}{\omega + \omega^*} \equiv_{m}^\WI \subs{\beta}{\omega + \omega^*} &\iff
      \alpha = \beta \text{ holds in } \Rm[m + 1] \cap \Lm[m + 1]
  \end{align*}
\end{restatable}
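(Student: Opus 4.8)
The plan is to reduce both directions, uniformly for every $Z \in \{X, Y, XY, \WI\}$, to a single bridging lemma that replaces the formal power $\omega + \omega^*$ by a large but finite power without changing the $\equiv_{m,n}^Z$-class. Concretely I would establish the \textbf{Bridging Lemma}: for every $\pi$-term $\gamma$, all $m \in \Nat_0$, $n \in \Nat$, every $Z$, and every integer $N \geq n$, one has $\subs{\gamma}{\omega + \omega^*} \equiv_{m,n}^Z \subs{\gamma}{N}$. Granting this, the theorem drops out of the two characterization results already proved. For the direction \enquote{relation $\Rightarrow$ equation} (say the $X$/$\Rm$ case), I fix $M \in \Rm$ and an assignment $\sigma \colon \Sigma^* \to M$; by \autoref{thm:relationsImplyEqualityUnderHom} there is an $n$ with $u \equiv_{m,n}^X v \Rightarrow \sigma(u) = \sigma(v)$ for all finite $u,v$. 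Put $N = n \cdot M!$. From the hypothesis $\subs{\alpha}{\omega+\omega^*}\equiv_{m,n}^X \subs{\beta}{\omega+\omega^*}$, the Bridging Lemma and transitivity give $\subs{\alpha}{N} \equiv_{m,n}^X \subs{\beta}{N}$, hence $\sigma(\subs{\alpha}{N}) = \sigma(\subs{\beta}{N})$; since $N$ is a multiple of $M!$ and $x^{M!}$ is idempotent, $\sigma(\subs{\alpha}{N}) = \sigma(\subs{\alpha}{M!})$ and likewise for $\beta$, so $\alpha = \beta$ holds in $M$.

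For the converse I fix $n$, set $M = \Sigma^* / {\equiv_{m,n}^X} \in \Rm$ by \autoref{thm:relationsInTrotterWeil}, and instantiate the equation (which holds in $M$) at the canonical projection $\Sigma^* \to M$. Because substituting $n\cdot M!$ agrees with substituting $M!$ under any homomorphism into $M$, this yields $\subs{\alpha}{n M!} \equiv_{m,n}^X \subs{\beta}{n M!}$, and the Bridging Lemma turns it into $\subs{\alpha}{\omega+\omega^*} \equiv_{m,n}^X \subs{\beta}{\omega+\omega^*}$. As $n$ was arbitrary this is exactly $\equiv_m^X$. The $Y$, $XY$ and $\WI$ cases are verbatim the same, the index shift in the $\WI$/$\Rm[m+1]\cap\Lm[m+1]$ case being already built into the two cited theorems.

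I would prove the Bridging Lemma by induction on $n$, the case $n = 0$ being vacuous. For $n \geq 1$ one checks the defining clauses of $\equiv_{m,n}^Z$: the alphabet condition is immediate since $N \geq 1$, and for each factorization $X_a^L, X_a^R, Y_a^L, Y_a^R, C_{a,b}$ one compares the corresponding factor of $\subs{\gamma}{\omega+\omega^*}$ with that of $\subs{\gamma}{N}$ at index $n-1$, where $N - 1 \geq n - 1$. The decisive feature is that $\omega + \omega^*$ is self-similar: deleting its least or greatest element, or one leading or trailing copy of an iterated block, again yields order type $\omega + \omega^*$. Hence a single factorization applied to $\subs{\gamma}{\omega+\omega^*}$ produces a factor again of \enquote{the same shape}, read off from the position of the first/last occurrence of the relevant letter in $\gamma$, while the same factorization on $\subs{\gamma}{N}$ lowers each power it cuts through by only one, leaving every power $\geq n-1$; the induction hypothesis then applies.

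The main obstacle is exactly this factorization analysis. It is transparent on examples but requires a careful description of where the first (respectively last) $a$-position sits inside $\subs{\gamma}{\mu}$ and of the resulting factor. In particular the factor is generally \emph{not} a substitution of a $\pi$-term with one uniform finite power (for instance $\subs{(c^\pi)^\pi}{N}\cdot X_c^R = c^{N^2-1}$), so the inductive invariant must permit independent values at different $\pi$-positions and ultimately rely only on the weaker property \enquote{enough letters are available from either side within the remaining $n-1$ steps}, which both $\omega+\omega^*$ and any $N \geq n$ supply. Pinning down this invariant, and in particular verifying the $C_{a,b}$ clause for $\WI$ where one cuts simultaneously from both ends, is where the real work lies; the rest is bookkeeping, which is why the passage from finite to infinite words can fairly be called straightforward.
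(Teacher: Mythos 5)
Your proposal is correct and follows essentially the same route as the paper: your Bridging Lemma is the paper's \autoref{lem:omegaAndKForPiTerms} (stated there with the strict bound $k > n$ rather than $k \geq n$, which you should adopt too, since the inductive step for $\gamma = (\alpha)^\pi$ peels off one copy of the block before comparing the remaining copies, so you should instantiate with $N = (n+1) \cdot M!$ to be safe), and the derivation of the theorem from that lemma together with \autoref{thm:relationsImplyEqualityUnderHom} and \autoref{thm:relationsInTrotterWeil} is exactly the paper's argument. The one piece you explicitly leave open --- that a factor of $\subs{\gamma}{N}$ produced by a factorization step is not a uniform substitution instance of a $\pi$-term --- is closed in the paper by doing structural induction on $\gamma$ instead of induction on $n$, with the $\pi$-power case handled by the auxiliary \autoref{lem:UkVomega} (if $u \equiv_{m,n}^Z v$ then $u^{k+1} \equiv_{m,k}^Z v^{\omega + \omega^*}$ for all $k \leq n$), which is precisely the invariant you were looking for.
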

\noindent Using \autoref{fct:trotterWeilIsDA}, which states that $\DA$ is equal to the union of all varieties in the Trotter-Weil Hierarchy,
we immediately get the following corollary.
\begin{restatable}{corollary}{relationsAndDAEquations}
  \label{cor:relationsAndDAEquations}
  $
    \left( \forall m \in \Nat: \subs{\alpha}{\omega + \omega^*} \equiv_{m}^{XY} \subs{\beta}{\omega + \omega^*} \right)
      \iff \alpha = \beta \text{ holds in } \DA
  $
\end{restatable}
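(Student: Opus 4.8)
The corollary is immediate from the $\equiv_m^{XY}$ line of \autoref{thm:relationsAndEquations} together with \autoref{fct:trotterWeilIsDA}: an equation $\alpha=\beta$ holds in $\DA=\bigcup_{m}\Rm\vee\Lm$ if and only if it holds in every level $\Rm\vee\Lm$ (since each such level is contained in $\DA$ and every monoid of $\DA$ lies in some level), which by the theorem means $\subs{\alpha}{\omega+\omega^*}\equiv_m^{XY}\subs{\beta}{\omega+\omega^*}$ for every $m$. Hence the substance is \autoref{thm:relationsAndEquations} itself, and I describe a plan for it. Two reductions simplify the task. First, because $\equiv_m^Z=\bigcap_{n}\equiv_{m,n}^Z$ and these relations are nested ($\equiv_{m,n+1}^Z\subseteq\equiv_{m,n}^Z$), it suffices to verify $\subs{\alpha}{\omega+\omega^*}\equiv_{m,n}^Z\subs{\beta}{\omega+\omega^*}$ for arbitrarily large $n$. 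Second, since $\equiv_m^{XY}=\equiv_m^X\cap\equiv_m^Y$, and since the class of monoids satisfying a fixed equation is a variety while $\Rm\vee\Lm$ is the least variety containing $\Rm$ and $\Lm$, the $XY$ line is exactly the conjunction of the $X$ and $Y$ lines; so I only treat $X$ and $\WI$, the $Y$ case being symmetric to $X$.

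The engine of the proof is a transfer lemma relating the infinite power $\omega+\omega^*$ to large finite powers:
\begin{quote}
For every $\pi$-term $\gamma$, all $m,n\in\Nat$ and $Z\in\{X,Y,XY,\WI\}$ there is an $N_0$ with $\subs{\gamma}{N}\equiv_{m,n}^Z\subs{\gamma}{\omega+\omega^*}$ for all $N\geq N_0$.
\end{quote}
I would prove this by induction on $n$ with a case analysis over the factorizations in \autoref{def:relations}. The key observation is that $\equiv_{m,n}^Z$ only probes a neighbourhood of bounded radius around the two ends of a word: the first occurrence $X_a$ of any letter lies at finite distance from the left end and the last occurrence $Y_a$ at finite distance from the right end. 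In $\subs{\gamma}{\omega+\omega^*}$ the left $\omega$-part and the right $\omega^*$-part make these boundary positions behave exactly as in a sufficiently long finite power, whereas the ``seam'' in the middle of $\omega+\omega^*$ can never be reached, since crossing it would require infinitely many same-direction moves ($X_a^R$ or $Y_a^L$) while only $n$ moves are available. Each factorization peels off a finite boundary block that is literally identical in $\subs{\gamma}{N}$ and $\subs{\gamma}{\omega+\omega^*}$ and leaves a substituted subterm with a smaller repetition count; using that $\equiv_{m,n-1}^Z$ is a congruence, the matching finite blocks cancel and the claim drops to the inductive hypothesis at $n-1$. This bookkeeping — in particular the $C_{a,b}$ case for $\WI$, where one peels simultaneously a prefix up to the first $a$ and a suffix back to the last $b$ and recurses on the central factor — is the main obstacle and the only genuinely delicate part.

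With the transfer lemma in hand, both directions assemble from results already in the excerpt, together with the elementary fact that for any homomorphism $\psi$ into a monoid $M$ and any $c\geq1$ one has $\psi(\subs{\gamma}{c\cdot M!})=\psi(\subs{\gamma}{M!})$ (structural induction: every innermost $\pi$-power evaluates to an idempotent power, so raising to a multiple of $M!$ changes nothing). For ``$\alpha=\beta$ holds in $\Rm$'' $\Rightarrow$ ``$\equiv_m^X$'', fix $n$ and put $M=\Sigma^*/{\equiv_{m,n}^X}$, which lies in $\Rm$ by \autoref{thm:relationsInTrotterWeil}; applying the canonical projection $\eta$ to the equation gives $\subs{\alpha}{M!}\equiv_{m,n}^X\subs{\beta}{M!}$, and by the idempotency fact the same holds with any multiple $c\cdot M!$ in place of $M!$. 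Choosing $c$ so that $c\cdot M!\geq N_0$ and sandwiching with the transfer lemma yields $\subs{\alpha}{\omega+\omega^*}\equiv_{m,n}^X\subs{\beta}{\omega+\omega^*}$; as $n$ was arbitrary, $\equiv_m^X$ follows. Conversely, given $\subs{\alpha}{\omega+\omega^*}\equiv_m^X\subs{\beta}{\omega+\omega^*}$, take any $M\in\Rm$ and homomorphism $\varphi\colon\Sigma^*\to M$; \autoref{thm:relationsImplyEqualityUnderHom} supplies an $n$ with $u\equiv_{m,n}^Xv\Rightarrow\varphi(u)=\varphi(v)$ on finite words, the transfer lemma (at a large multiple $N$ of $M!$) turns the hypothesis into $\subs{\alpha}{N}\equiv_{m,n}^X\subs{\beta}{N}$, hence $\varphi(\subs{\alpha}{N})=\varphi(\subs{\beta}{N})$, and the idempotency fact replaces $N$ by $M!$, so the equation holds in $M$. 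The $\WI$ case runs identically, using the fourth bullets of \autoref{thm:relationsInTrotterWeil} and \autoref{thm:relationsImplyEqualityUnderHom} (variety $\Rm[m+1]\cap\Lm[m+1]$), and $Y$ is symmetric; the $XY$/join line and finally the corollary then follow by the two reductions noted above.
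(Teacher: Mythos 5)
Your proposal is correct and follows essentially the same route as the paper: the corollary is indeed immediate from the $\equiv_m^{XY}$ line of \autoref{thm:relationsAndEquations} together with \autoref{fct:trotterWeilIsDA}, and your ``transfer lemma'' is precisely the paper's \autoref{lem:omegaAndKForPiTerms}, combined with \autoref{thm:relationsImplyEqualityUnderHom}, \autoref{thm:relationsInTrotterWeil} and the quotient monoids $\Sigma^*/{\equiv_{m,n}^Z}$ exactly as in the paper's proof (your phrasing of one direction via the canonical projection is just the contrapositive of the paper's argument). The only minor remark is that the delicate bookkeeping you anticipate for $C_{a,b}$ in the transfer lemma is in fact dispatched quickly in the paper, since $C_{a,b}$ is never defined on a full $\omega+\omega^*$ power, while the $\pi$-power case instead leans on the auxiliary \autoref{lem:UkVomega}.
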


To prove \autoref{thm:relationsAndEquations}, we need two technical lemmas. The first one basically
says that a sufficiently large finite power is as good as an $\omega + \omega^*$ power.
\begin{lemma}\label{lem:UkVomega}
  Let $m \in \Nat_0$, $Z \in \smallset{X, Y, XY, \WI}$ and let $u$ and $v$ be accessible words.
  Then:
  \[
    u \equiv_{m, n}^Z v \implies \forall \, 0 \leq k \leq n: u^{k + 1} \equiv_{m, k}^Z v^{\omega + \omega^*}
  \]
\end{lemma}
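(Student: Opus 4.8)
The plan is to prove the statement by induction on $k$, but to make the induction close I will first \emph{strengthen} the conclusion so that any sufficiently long power of $u$ works: I claim that $u \equiv_{m,n}^Z v$ implies $u^j \equiv_{m,k}^Z v^{\omega+\omega^*}$ for all $0 \le k \le n$ and \emph{all} $j \ge k+1$. The lemma is then the special case $j = k+1$. The reason the strengthening is needed is the ``direction-switching'' clause $u \equiv_{m-1,n-1}^Y v$ in the definition of $\equiv_{m,n}^X$ (and its dual for $\equiv^Y$): when $k = n$, verifying $u^{k+1} \equiv_{m-1,k-1}^Y v^{\omega+\omega^*}$ from the induction hypothesis only ever produces the power $u^{k}$, so I must already know that an extra copy of $u$ does no harm. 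Throughout I will use the two absorption identities $v^{\omega+\omega^*} = v\, v^{\omega+\omega^*} = v^{\omega+\omega^*}\, v$, which hold because $1 + (\omega+\omega^*) = \omega+\omega^* = (\omega+\omega^*)+1$, together with monotonicity of the relations in the second index and the fact that each relation is a concatenation congruence (stated in the paper for $\Sigma^*$; it extends to accessible words by the same induction). This congruence property is the only external input.

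For the base of the induction, $\equiv_{m,0}^Z$ holds unconditionally, as does $\equiv_{0,n}^Z$, so I may assume $k,m \ge 1$, hence $n \ge 1$ and in particular $\alphabet(u) = \alphabet(v)$; this already gives the alphabet clause $\alphabet(u^j) = \alphabet(v^{\omega+\omega^*})$. For the step I unfold the definition of $\equiv_{m,k}^Z$ and check each clause, using that any letter's first occurrence in $u^j$ (resp.\ $v^{\omega+\omega^*}$) lies in the leftmost copy and its last occurrence in the rightmost copy. The outward moves reduce to a single copy: $u^j \cdot X_a^L = u \cdot X_a^L$ and $v^{\omega+\omega^*}\cdot X_a^L = v\cdot X_a^L$, so this clause follows from $u\cdot X_a^L \equiv_{m-1,n-1}^Y v\cdot X_a^L$ by monotonicity (the move $Y_a^R$ is dual). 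The inward moves keep a power: by the absorption identity $u^j \cdot X_a^R = (u\cdot X_a^R)\,u^{j-1}$ and $v^{\omega+\omega^*}\cdot X_a^R = (v\cdot X_a^R)\,v^{\omega+\omega^*}$, and here I combine $u\cdot X_a^R \equiv_{m,k-1}^X v\cdot X_a^R$ (monotonicity) with $u^{j-1} \equiv_{m,k-1}^X v^{\omega+\omega^*}$ (induction hypothesis at $k-1$, legitimate since $j-1\ge k$) through the congruence property; the move $Y_a^L$ is dual. Finally, the standalone clause $u^j \equiv_{m-1,k-1}^Y v^{\omega+\omega^*}$ of $\equiv^X$ is exactly the induction hypothesis at $k-1$ applied to $u\equiv_{m-1,n-1}^Y v$, where the freedom $j \ge k$ is precisely what makes it succeed even when $k=n$.

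For $Z=\WI$ the same four reductions handle the $X$- and $Y$-moves, and the new $C_{a,b}$ clause turns out to be vacuous: since $j \ge k+1 \ge 2$, the first $a$ of $u^j$ lies in the leftmost copy while the last $b$ lies in a strictly later copy, so $X_a(u^j;-\infty) < Y_b(u^j;+\infty)$ and $u^j\cdot C_{a,b}$ is undefined; the same happens for $v^{\omega+\omega^*}$, whose first $a$ is in the $\omega$-part and whose last $b$ is in the $\omega^*$-part, so both sides agree. The case $Z = XY$ is immediate from the $X$- and $Y$-cases by definition of $\equiv^{XY}$. The main obstacle, and the only genuinely delicate point, is the bookkeeping of the off-by-one between the finite power $u^{k+1}$ and the infinite power $v^{\omega+\omega^*}$; the strengthened hypothesis ``$j \ge k+1$'' is exactly the device that absorbs this discrepancy and lets every clause be discharged either by monotonicity, by the congruence property, or by the induction hypothesis at $k-1$.
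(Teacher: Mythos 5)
Your proof is correct and follows essentially the same route as the paper's: induction on $k$, unfolding the definition clause by clause, reducing outward moves to a single copy and inward moves to a shorter power via the absorption identity $v\,v^{\omega+\omega^*}=v^{\omega+\omega^*}$ and the congruence property, and observing that the $C_{a,b}$ clause is vacuous because neither side admits it. The only organizational difference is that you strengthen the claim to all powers $j\ge k+1$ so that a single induction on $k$ closes, whereas the paper keeps the exact power $k+1$ and discharges the standalone clause by a nested induction on $m$ combined with one extra congruence multiplication by $u\equiv_{m-1,k}^{Y}v$; both devices resolve the same off-by-one.
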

\begin{proof}
  The case $m = 0$ is trivial. Therefore, let $m > 0$ and continue by induction over $k$. Again,
  the case $k = 0$ is trivial. To complete the induction, it remains to show that $u u^{k + 1}
  \equiv_{m, k + 1}^Z v^{\omega + \omega^*}$ holds for $k < n$. Obviously, $\alphabet(u^{k + 2}) =
  \alphabet \left( v^{\omega + \omega^*} \right)$ is satisfied by assumption. Now assume $Z = X$.
  The assumption $u \equiv_{m, n}^X v$ implies $u \equiv_{m - 1, n - 1}^Y v$. By induction on
  $m$, this yields $u^{k + 1} \equiv_{m - 1, k}^Y v^{\omega + \omega^*}$ and $u \equiv_{m - 1, k}^Y
  v$. Because $\equiv_{m - 1, k}^Y$ is a congruence, this shows $u u^{k + 1} \equiv_{m - 1, k}^Y v
  v^{\omega + \omega^*} = v^{\omega + \omega^*}$. Let $a \in \alphabet(u) = \alphabet(v)$. It
  remains to show $u u^{k + 1} \cdot X_a^L \equiv_{m - 1, k}^Y v^{\omega + \omega^*} \cdot X_a^L$
  and $u u^{k + 1} \cdot X_a^R \equiv_{m, k}^X v^{\omega + \omega^*} \cdot X_a^R$. For the former,
  note that $u \equiv_{m, n}^X v$ implies $u \cdot X_a^L \equiv_{m - 1, n - 1}^Y v \cdot X_a^L$,
  which in turn implies $u u^{k + 1} \cdot X_a^L = u \cdot X_a^L \equiv_{m - 1, k}^Y v \cdot X_a^L =
  vv^{\omega + \omega^*} \cdot X_a^L = v^{\omega + \omega^*} \cdot X_a^L$. For the latter, note that
  $u \equiv_{m, n}^X v$ implies $u \cdot X_a^R \equiv_{m, n - 1}^X v \cdot X_a^R$ and, thus,
  $u \cdot X_a^R \equiv_{m, k}^X v \cdot X_a^R$. By induction on $k$, we also have $u^{k + 1}
  \equiv_{m, k}^X v^{\omega + \omega^*}$. Together these yield $u u^{k + 1} \cdot X_a^R = (u \cdot
  X_a^R) u^{k + 1} \equiv_{m, k}^X (v \cdot X_a^R) v^{\omega + \omega^*} = v^{\omega + \omega^*}
  \cdot X_a^R$.

  The case for $Z = Y$ is symmetric and the case for $Z = XY$ follows directly. Finally, for
  $Z = \WI$ the argumentation is analogous because, for $k > 0$, neither $uu^{k + 1} \cdot C_{a, b}$ nor
  $v^{\omega + \omega^*} \cdot C_{a, b}$ is defined for any pair $(a, b)$ of letters.
\end{proof}

The second technical lemma states that, for finitely many factorization steps, only finitely many
positions of $\omega + \omega^*$ are relevant.
\begin{lemma}\label{lem:omegaAndKForPiTerms}
  Let $m, n \in \Nat_0$, $Z \in \smallset{X, Y, XY, \WI}$ and let $\gamma$ be a $\pi$-term. Then
  $$
    \subs{\gamma}{k} \equiv_{m, n}^Z \subs{\gamma}{\omega + \omega^*}
  $$
  holds for all $k \in \Nat_0$ with $k > n$.
\end{lemma}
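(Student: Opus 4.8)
The plan is to proceed by structural induction on the $\pi$-term $\gamma$, since both $\subs{\cdot}{k}$ and $\subs{\cdot}{\omega + \omega^*}$ are defined recursively over the structure of $\pi$-terms. Before starting the induction I would dispose of the degenerate cases $m = 0$ and $n = 0$: by the first clause of \autoref{def:relations} the relations $\equiv_{m,0}^Z$ and $\equiv_{0,n}^Z$ hold on every pair of accessible words, so there is nothing to prove. Hence I may assume $m, n \ge 1$, and therefore $k > n \ge 1$, throughout.

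For the base cases $\gamma = \varepsilon$ and $\gamma = a$ with $a \in \Sigma$ we have $\subs{\gamma}{k} = \subs{\gamma}{\omega + \omega^*}$ (both equal $\varepsilon$, respectively $a$), so the claim follows from reflexivity of $\equiv_{m,n}^Z$. For a concatenation $\gamma = \alpha\beta$ I would apply the induction hypothesis to $\alpha$ and to $\beta$ at the same parameters $(m,n,Z)$, obtaining $\subs{\alpha}{k} \equiv_{m,n}^Z \subs{\alpha}{\omega+\omega^*}$ and $\subs{\beta}{k} \equiv_{m,n}^Z \subs{\beta}{\omega+\omega^*}$. Since $\equiv_{m,n}^Z$ is a congruence and every word involved is accessible, multiplying the two relations yields $\subs{\alpha\beta}{k} \equiv_{m,n}^Z \subs{\alpha\beta}{\omega+\omega^*}$.

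The interesting case, and the one I expect to be the crux, is the $\pi$-power $\gamma = (\delta)^\pi$, where $\subs{\gamma}{k} = \bigl(\subs{\delta}{k}\bigr)^{k}$ and $\subs{\gamma}{\omega+\omega^*} = \bigl(\subs{\delta}{\omega+\omega^*}\bigr)^{\omega+\omega^*}$. Here I would first use the induction hypothesis for $\delta$ at the sharper precision $n' = k - 1$: since $k > k - 1$, it gives $\subs{\delta}{k} \equiv_{m,\,k-1}^Z \subs{\delta}{\omega+\omega^*}$. The point of choosing $k-1$ rather than $n$ is that it matches the number of inner factors that the outer finite power makes available. Writing $u = \subs{\delta}{k}$ and $v = \subs{\delta}{\omega+\omega^*}$, I would then invoke \autoref{lem:UkVomega} with its second parameter instantiated to $k-1$ and its running index also taken to be $k-1$; this converts $u \equiv_{m,\,k-1}^Z v$ into $u^{\,k} \equiv_{m,\,k-1}^Z v^{\omega+\omega^*}$, that is, $\subs{\gamma}{k} \equiv_{m,\,k-1}^Z \subs{\gamma}{\omega+\omega^*}$.

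Finally, because $k > n$ forces $k - 1 \ge n$, I would appeal to monotonicity of the relations in their second index (the remark following \autoref{def:relations}, stating that $\equiv_{m,n}^Z$ implies $\equiv_{m,j}^Z$ for every $j \le n$) to descend from $\equiv_{m,\,k-1}^Z$ to $\equiv_{m,n}^Z$, obtaining $\subs{\gamma}{k} \equiv_{m,n}^Z \subs{\gamma}{\omega+\omega^*}$ and completing the induction. The only real subtlety is keeping the two roles of $k$ separate, namely as the substituted finite power on the one hand and as the precision fed into \autoref{lem:UkVomega} on the other, and checking that the hypothesis $k > n$ is precisely what makes both the induction hypothesis for $\delta$ at precision $k-1$ and the index bound in \autoref{lem:UkVomega} simultaneously available.
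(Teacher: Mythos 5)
Your proof is correct, and while it shares the paper's skeleton (structural induction on $\gamma$, the same trivial, base, and concatenation cases, and \autoref{lem:UkVomega} as the key external ingredient), your treatment of the $\pi$-power case takes a genuinely cleaner route. The paper instead unfolds the definition of $\equiv_{m,n}^Z$ at the term $\subs{\alpha}{k}^k$ and verifies each clause by hand -- the alphabet condition, the $\equiv_{m-1,n-1}^Y$ condition and the $X_a^L$ condition via a nested induction on $m$ and $n$, and the $X_a^R$ condition by splitting $k-1 = k'+n$ and invoking \autoref{lem:UkVomega} only for the tail $\subs{\alpha}{k}^n$ -- with symmetric and $C_{a,b}$ cases handled separately. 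You avoid this entire case analysis by instantiating the induction hypothesis for the sub-term at the sharper precision $k-1$ (legitimate, since the statement is quantified over all $m,n,k$ with $k>n$ and the induction is purely structural), applying \autoref{lem:UkVomega} once at full strength to get $\subs{\delta}{k}^k \equiv_{m,k-1}^Z \subs{\delta}{\omega+\omega^*}^{\omega+\omega^*}$, and then descending to $\equiv_{m,n}^Z$ by monotonicity since $k-1 \geq n$. What your version buys is a single-parameter induction and no per-factorization bookkeeping; what the paper's version buys is essentially nothing extra here -- it is simply more laborious -- though its explicit clause-checking does make visible exactly where each factorization move lands, which mirrors the style of the surrounding proofs. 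One small point worth stating explicitly in your write-up: \autoref{lem:UkVomega} is proved before and independently of this lemma, so there is no circularity in using it as a black box.
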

\begin{proof}
  The cases for $m = 0$ or $n = 0$ are trivial. Thus, assume $m > 0$ and $n > 0$. If $\gamma =
  \varepsilon$ or $\gamma = a$ for an $a \in \Sigma$, then $\subs{\gamma}{k} = \gamma =
  \subs{\gamma}{\omega + \omega^*}$. If $\gamma = \alpha \beta$ for two $\pi$-terms $\alpha$ and
  $\beta$, then by induction $\subs{\alpha}{k} \equiv_{m, n}^Z \subs{\alpha}{\omega + \omega^*}$ and
  $\subs{\beta}{k} \equiv_{m, n}^Z \subs{\beta}{\omega + \omega^*}$ hold. As $\equiv_{m, n}^Z$ is
  a congruence, this implies $\subs{\gamma}{k} \equiv_{m, n}^Z \subs{\gamma}{\omega + \omega^*}$.

  Finally, let $\gamma = (\alpha)^\pi$ for a $\pi$-term $\alpha$. It remains to show that
  $\subs{\alpha}{k}^k \equiv_{m, n}^Z \subs{\alpha}{\omega + \omega^*}^{\omega + \omega^*}$.
  Clearly, the alphabetic condition is satisfied and, by induction, $\subs{\alpha}{k}\allowbreak \equiv_{m, n}^Z
  \subs{\alpha}{\omega + \omega^*}$ holds. For $Z = X$, let $a \in \alphabet(\alpha)$.
  By induction on $m$ and $n$, we have $\subs{\alpha}{k}^k = \subs{\gamma}{k} \equiv_{m - 1, n -
  1}^Y \subs{\gamma}{\omega + \omega^*} = \subs{\alpha}{\omega + \omega^*}^{\omega + \omega^*}$.
  We also have $\subs{\alpha}{k}^k \cdot X_a^L = \subs{\alpha}{k} \cdot X_a^L \equiv_{m - 1, n -
  1}^Y \subs{\alpha}{\omega + \omega^*} \cdot X_a^L = \subs{\alpha}{\omega + \omega^*}^{\omega +
  \omega^*} \cdot X_a^L$ by definition of $\subs{\alpha}{k} \equiv_{m, n}^X \subs{\alpha}{\omega +
  \omega^*}$. To show $\subs{\alpha}{k}^k \cdot X_a^R \equiv_{m, n - 1}^X \subs{\alpha}{\omega +
  \omega^*}^{\omega + \omega^*} \cdot X_a^R$, we use $\subs{\alpha}{k} \cdot X_a^R \equiv_{m, n - 1}^X
  \subs{\alpha}{\omega + \omega^*} \cdot X_a^R$, which holds because of $\subs{\alpha}{k}
  \equiv_{m, n}^X \subs{\alpha}{\omega + \omega^*}$, and show $\subs{\alpha}{k}^{k - 1}
  \equiv_{m, n - 1}^X \subs{\alpha}{\omega + \omega}^{\omega + \omega^*}$. This is sufficient
  because $\equiv_{m, n - 1}^X$ is a congruence and we have $uu^{\omega + \omega^*} = u^{\omega +
  \omega^*}$ for all words $u$. For showing $\subs{\alpha}{k}^{k - 1} \equiv_{m, n - 1}^X
  \subs{\alpha}{\omega + \omega}^{\omega + \omega^*}$, we write $k - 1 = k' + n$ for a $k' \in
  \Nat_0$ and
  \begin{align*}
    \subs{\alpha}{k}^{k - 1} &= \subs{\alpha}{k}^{k'} \subs{\alpha}{k}^{n} \text{ and }\\
    \subs{\alpha}{\omega + \omega^*}^{\omega + \omega^*} &= \subs{\alpha}{\omega + \omega^*}^{k'}
    \subs{\alpha}{\omega + \omega^*}^{\omega + \omega^*} \text{.}
  \end{align*}
  Setting $k = n - 1$ in \autoref{lem:UkVomega} yields $\subs{\alpha}{k}^n \equiv_{m, n - 1}^X
  \subs{\alpha}{\omega + \omega^*}^{\omega + \omega^*}$, which concludes the proof for $Z = X$
  since $\equiv_{m, n - 1}^X$ is a congruence.

  The case for $Z = Y$ is symmetric, which also shows the case for $Z = XY$. For $Z = \WI$,
  we note that $\subs{\alpha}{\omega + \omega^*}^{\omega + \omega^*} \cdot C_{a, b}$ is defined for
  no pair $a, b$ of letters. On the other hand, $\subs{\alpha}{k}^k \cdot C_{a, b}$ can only be
  defined for $k = 1$, in which case we are done because we have $n = 0$.
\end{proof}

Now, we are finally prepared to prove \autoref{thm:relationsAndEquations}.
\begin{proof}[Proof for \autoref{thm:relationsAndEquations}]
  The proof is structurally identical for all stated varieties. Therefore, we limit our discussion
  to $\Rm$.

  First, let $\subs{\alpha}{\omega + \omega^*} \equiv_m^X \subs{\beta}{\omega + \omega^*}$. Choose
  a monoid $M \in \Rm$ and an assignment of variables $\sigma: \Sigma^* \to M$. Because $M$ is in
  $\Rm$, there is an $n \in \Nat$ such that $u \equiv_{m, n}^X v$ implies $\sigma(u) = \sigma(v)$
  for any two words $u, v \in \Sigma^*$ (see also \autoref{thm:relationsImplyEqualityUnderHom}).
  Now, choose $c \in \Nat$ with $M! \cdot c > n$. Then by assumption and
  \autoref{lem:omegaAndKForPiTerms}, we have
  $$
    \Sigma^* \ni \subs{\alpha}{M! \cdot c} \equiv_{m, n}^X \subs{\alpha}{\omega + \omega^*}
      \equiv_{m, n}^X \subs{\beta}{\omega + \omega^*} \equiv_{m, n}^X \subs{\beta}{M! \cdot c}
      \in \Sigma^*
  $$
  and, therefore, $\sigma(\subs{\alpha}{M!}) = \sigma(\subs{\alpha}{M! \cdot c}) =
  \sigma(\subs{\beta}{M! \cdot c}) = \sigma(\subs{\beta}{M!})$, which is equivalent to $\alpha =
  \beta$ holding in $M$.

  Now, let $\subs{\alpha}{\omega + \omega^*} \not\equiv_m^X \subs{\beta}{\omega + \omega^*}$, which
  implies that there is an $n \in \Nat$ such that $\subs{\alpha}{\omega + \omega^*} \not\equiv_{m, n}^X
  \subs{\beta}{\omega + \omega^*}$. Define $M := \Sigma^* / {\equiv_{m, n}^X}$, which is in $\Rm$
  (by \autoref{thm:relationsInTrotterWeil}), and choose $c \in \Nat$ such that $M! \cdot
  c > n$. Then, by assumption and \autoref{lem:omegaAndKForPiTerms}, we have
  $$
    \Sigma^* \ni \subs{\alpha}{M! \cdot c} \equiv_{m, n}^X \subs{\alpha}{\omega + \omega^*}
      \not\equiv_{m, n}^X \subs{\beta}{\omega + \omega^*} \equiv_{m, n}^X \subs{\beta}{M! \cdot c}
      \in \Sigma^* \text{.}
  $$
  As assignment of variables $\sigma: \Sigma^* \to M$ choose the canonical projection. This yields
  $\sigma(\subs{\alpha}{M!}) = \sigma(\subs{\alpha}{M! \cdot c}) \neq \sigma(\subs{\beta}{M!
  \cdot c}) = \sigma(\subs{\beta}{M!})$, which means that $\alpha = \beta$ does \emph{not} hold in
  $M$.
\end{proof}

\section{Decidability}\label{sec:decidability}
In the previous section, we saw that checking whether $\alpha = \beta$ holds in a
variety of the Trotter-Weil Hierarchy boils down to checking $\subs{\alpha}{\omega + \omega^*}
\equiv_m^Z \subs{\beta}{\omega + \omega^*}$ (where $\equiv_m^Z$ depends on the variety in question).
In this section, we give an introduction on how to do this.
The presented approach works uniformly for all varieties in the
Trotter-Weil Hierarchy (in particular, it also works for the intersection levels, which tend to
be more complicated) and is designed to yield efficient algorithms.

The
definition of the relations which need to be tested is inherently recursive. One would
factorize $\subs{\alpha}{\omega + \omega^*}$ and $\subs{\beta}{\omega + \omega^*}$ on the first $a$
and/or last $b$ (for $a, b \in \Sigma$) and test the factors recursively. Therefore, the computation
is based on working with factors of words of the form $\subs{\gamma}{\omega + \omega^*}$ where
$\gamma$ is a $\pi$-term.
\begin{wrapfigure}{R}{0.4\textwidth}
  \vspace*{-.5\baselineskip}
  \centering\resizebox{!}{0.2\textwidth}{%
  \begin{tikzpicture}[scale=0.8,
  sibling distance=5ex,
  subtree/.style={shape border rotate=90, isosceles triangle, inner sep=2pt,
    edge from parent path={(\tikzparentnode.south) -- (\tikzchildnode.north)}},
  position label/.style={below = 3pt, text height = 1.5ex, text depth = 1ex, shape=rectangle}
  ]
  \node {}
  child[subtree]{ node[draw] (u1) {$u$} }
  child[subtree]{ node[draw] (u2) {$u$} }
  child[subtree]{ node[draw] (u3) {$u$} }
  child[edge from parent/.style={dashed, draw}]{ node (+dots) {$\dots$} }
  child[edge from parent/.style={draw=none}]{ node {} }
  child[edge from parent/.style={dashed, draw}]{ node (-dots) {$\dots$} }
  child[subtree]{ node[draw] (u-3) {$u$} }
  child[subtree]{ node[draw] (u-2) {$u$} }
  child[subtree]{ node[draw] (u-1) {$u$} }
  ;
  \node[below=3pt of u1, position label] (l1) {$1$};
  \node[below=3pt of u2, position label] {$2$};
  \node[below=3pt of u3, position label] {$3$};
  \node[below=3pt of u-3, position label] {$-3$};
  \node[below=3pt of u-2, position label] {$-2$};
  \node[below=3pt of u-1, position label] (l-1) {$-1$};
  \draw[decoration={brace, mirror}, decorate]
  (l1.south west) -- node[position label] {$\omega$-part} (l1.south -| +dots.south east);
  \draw[decoration={brace}, decorate]
  (l-1.south east) -- node[position label] {$\omega^*$-part} (l-1.south -| -dots.south west);
  \end{tikzpicture}}\vspace*{-.5\baselineskip}
  \caption{\label{fig:uToPi}Representation of $u^{\omega + \omega^*}$}\vspace*{-2\baselineskip}
\end{wrapfigure}
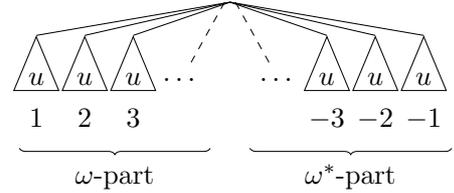

What happens if we consecutively factorize at a first/last $a$ is best understood if one considers
the structure of $\subs{(\alpha)^\pi}{\omega + \omega^*} = \subs{\alpha}{\omega + \omega^*}^{\omega +
\omega^*} = u^{\omega + \omega^*} = w$, which is schematically represented in \autoref{fig:uToPi}.

Suppose $u$ only contains a single $a$ and we start with the whole word $w_{(-\infty, +\infty)}$.
If we factorize on the first $a$ taking the part to the right, then we end up with the factor
$w_{(X_a(w; -\infty), +\infty)}$ with $X_a(w; -\infty) = (p, 1)$ where $p$ is the single
$a$-position in $u$. If we do this again, we obtain $w_{((p, 2), +\infty)}$. If we now factorize
on the next $a$ but take the part to the left, then we get $w_{((p, 2), (p, 3))}$. Notice that the
difference between $2$ and $3$ is $1$ and that there is no way of getting a (finite) difference
larger than one by factorizing on the respective first $a$. On the other hand, we can reach any
number in $\Nat$ as long as the right position is not in the $\omega$-part.

Notice that there is also no way of reaching $(p, -2)$ as left border without having $(q, -1)$ or
$(q, -2)$ as right border for a position $q \in \dom{u}$. These observations (and their
symmetrical duals) lead to the notion of \emph{normalizable} pairs of positions.
\begin{definition}
  Let $\gamma$ be a $\pi$-term and let $w = \subs{\gamma}{\omega + \omega^*}$.
  A pair $(l, r)$ of positions in $w$ such that $l$ is strictly smaller than $r$ is called
  \emph{normalizable} (with respect to $\gamma$) based on the following rules:
  \begin{itemize}
    \item Any pair is normalizable with respect to $\gamma = \varepsilon$ or $\gamma = a$ for an
          $a \in \Sigma$.
    \item $(-\infty, +\infty)$ is normalizable with respect to any $\pi$-term.
    \item If $\gamma = \alpha \beta$ for $\pi$-terms $\alpha$ and $\beta$, $l \in \dom{
          \subs{\alpha}{\omega + \omega^*}} \uplus \smallset{ -\infty }$ and $r \in \dom{\subs{\beta}{
          \omega + \omega^*}} \uplus \smallset{ +\infty }$, then $(l, r)$ is normalizable with respect
          to $\gamma$ if $(l, +\infty)$ is with respect to $\alpha$ and $(-\infty, r)$ is with respect
          to $\beta$.
    \item If $\gamma = \alpha \beta$ for $\pi$-terms $\alpha$ and $\beta$ and $l \in \dom{
          \subs{\alpha}{\omega + \omega^*}} \uplus \smallset{ -\infty }$ as well as $r \in \dom{
          \subs{\alpha}{\omega + \omega^*}}$ (or $l \in \dom{\subs{\beta}{\omega + \omega^*}}$
          as well as $r \in \dom{\subs{\beta}{\omega + \omega^*}} \uplus \smallset{ +\infty }$),
          then $(l, r)$ is normalizable with respect to $\gamma$ if it is with respect to $\alpha$
          (or $\beta$, respectively).
    \item If $\gamma = (\alpha)^\pi$ for a $\pi$-term $\alpha$, $l = (l', n)$ for $l' \in \dom{
          \subs{\alpha}{\omega + \omega^*}}$ and $n \in \Nat \uplus -\Nat$ and $r = +\infty$, then
          $(l, r)$ is normalizable with respect to $\gamma$ if $(l', +\infty)$ is with respect to
          $\alpha$ and $n$ is in $\Nat \uplus \smallset{ -1 }$.
    \item If $\gamma = (\alpha)^\pi$ for a $\pi$-term $\alpha$, $l = -\infty$, and $r = (r', m)$
          for $r' \in \dom{\subs{\alpha}{\omega + \omega^*}}$ and $m \in \Nat \uplus -\Nat$, then
          $(l, r)$ is normalizable with respect to $\gamma$ if $(-\infty, r')$ is with respect to
          $\alpha$ and $m$ is in $\smallset{ 1 } \uplus -\Nat$.
    \item If $\gamma = (\alpha)^\pi$ for a $\pi$-term $\alpha$, $l = (l', n)$ for $l' \in \dom{
          \subs{\alpha}{\omega + \omega^*}}$ and $n \in \Nat \uplus -\Nat$ and $r = (r', m)$ for $r'
          \in \dom{\subs{\alpha}{\omega + \omega^*}}$ and $m \in \Nat \uplus -\Nat$, then $(l, r)$ is
          normalizable with respect to $\gamma$ if
          \begin{itemize}
            \item $n \in \Nat$, $m \in -\Nat$ and $(l', +\infty)$ and $(-\infty, r')$ are
                  normalizable with respect to $\alpha$,
            \item $n, m \in \Nat$ or $n, m \in -\Nat$ and in both cases $m = n$ and
                  $(l', r')$ is normalizable with respect to $\alpha$, or
            \item $n, m \in \Nat$ or $n, m \in -\Nat$ and in both cases $m = n + 1$ and
                  $(l', +\infty)$ and $(-\infty, r')$ are normalizable with respect to $\alpha$.
          \end{itemize}
  \end{itemize}
\end{definition}
This definition looks cumbersome at first. All it does, however, is formalizing our previous
observations. This allows us to give a formal inductive proof that these observations hold for all positions
reachable by iterated first/last $a$ factorization. First, we need to extend our notation. For this,
we consider (abstract) positions $p \in \{ -\infty, +\infty \} \uplus \dom{w}$ in a word $w$ to be
implicitly linked to $w$. For a pair of positions $(l, r) \in \left( \{ -\infty \} \uplus \dom{w}
\right) \times \left( \dom{w} \uplus \{ +\infty \} \right)$ in an accessible word $w$, we write
\begin{align*}
  (l, r) \cdot X_a^L &= (l, X_a(w; l)) \text{,} &
  (l, r) \cdot X_a^R &= (X_a(w; l), r) \text{,} \\
  (l, r) \cdot Y_a^L &= (l, Y_a(w; r)) \text{ and } &
  (l, r) \cdot Y_a^R &= (Y_a(w; r), r)
\end{align*}
for all $a \in \alphabet{w_{(l, r)}}$. Note that we now have $w_{(l, r)} \cdot Z_a^D = w_{(l, r)
\cdot Z_a^D}$ for all $Z \in \{ X, Y \}$ and all $D \in \{ L, R \}$. In fact, we could have used this
as the definition previously. We also write $(l, r) \cdot C_{a, b}$ for $(l, r) \cdot X_a^L \cdot
Y_b^R$ if $X_a(w; l)$ is strictly larger than $Y_b(w; r)$. Therefore, we also have $w_{(l, r)} \cdot
C_{a, b} = w_{(l, r) \cdot C_{a, b}}$. As with words, we omit the $\cdot$ if we apply a sequence of
steps longer than one, e.\,g.\ we simply write $(l, r) \cdot X_a^L Y_b^R = (l, r) \cdot X_a^L \cdot
Y_b^R = (l, r) \cdot C_{a, b}$.

\begin{lemma}\label{lem:anyPairIsNormalizable}
  Let $\gamma$ be a $\pi$-term and let $w = \subs{\gamma}{\omega + \omega^*}$. Additionally, let
  $(l, r)$ be a normalizable (with respect to $\gamma$) pair of positions in $w$. Then the pairs
  \[
    (l, r) \cdot X^L_a, (l, r) \cdot X^R_a, (l, r) \cdot Y^L_a \text{ and } (l, r) \cdot Y^R_a
  \]
  are normalizable with respect to $\gamma$ for any $a \in \alphabet(w_{(l, r)})$.

  Therefore, $(-\infty, +\infty) \cdot F_1 F_2 \dots F_n$ is normalizable with respect to $\gamma$
  for any $F_1,\allowbreak F_2,\allowbreak \dots,\allowbreak F_n \in \{ X_a^L, X_a^R, Y_a^L, Y_a^R,
  C_{a, b} \mid a, b \in \Sigma \}$ (if it is defined).
\end{lemma}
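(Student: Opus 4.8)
The plan is to prove the closure assertion (that $X_a^L$, $X_a^R$, $Y_a^L$, $Y_a^R$ send normalizable pairs to normalizable pairs) by structural induction on the $\pi$-term $\gamma$, and then to derive the final ``Therefore'' statement by a trivial induction on the length of the factorization sequence. Throughout I would exploit the identity $w_{(l,r)}\cdot Z_a^D = w_{(l,r)\cdot Z_a^D}$ to translate an operation on a factor of $w=\subs{\gamma}{\omega+\omega^*}$ into the same operation on a sub-word associated with a sub-term of $\gamma$, and the fact that $(-\infty,+\infty)$ is normalizable with respect to every $\pi$-term.

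For the induction, the base cases $\gamma=\varepsilon$ and $\gamma=a$ are immediate, since there every pair is normalizable. For $\gamma=\alpha\beta$ I would split according to where $l$ and $r$ sit relative to the boundary between the $\subs{\alpha}{\omega+\omega^*}$- and $\subs{\beta}{\omega+\omega^*}$-parts and, for each of the four operations, according to whether the newly produced first/last $a$-position stays on the same side of the boundary or crosses it. In every subcase the resulting pair again matches one of the concatenation rules of the definition, and the normalizability of its projected components with respect to $\alpha$ or $\beta$ follows either from the hypothesis on $(l,r)$ or from the induction hypothesis applied to the relevant sub-term; whenever an operation moves a boundary to the very first/last occurrence inside a sub-term, I feed $(-\infty,+\infty)$ into the induction hypothesis for that sub-term.

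The main work is the power case $\gamma=(\alpha)^\pi$, where $w=\subs{\alpha}{\omega+\omega^*}^{\,\omega+\omega^*}$ and positions are pairs $(p,k)$ with copy index $k\in\Nat\uplus(-\Nat)$. Here I would case-split on the form of $(l,r)$ (one boundary infinite, or $l=(l',n)$, $r=(r',m)$ with $m=n$, with $m=n+1$, or with $n\in\Nat$, $m\in-\Nat$) and on where the first/last $a$ lands. The crucial structural fact to extract — which is exactly what the definition encodes — is that, since every copy of $\subs{\alpha}{\omega+\omega^*}$ contains all of $\alphabet(\alpha)$, the first $a$-position strictly after $l=(l',n)$ lies either in copy $n$ (when some $a$ follows $l'$ inside $\alpha$) or in copy $n+1$, and never further; dually for the last $a$ before $r$. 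Consequently an $X_a^L$ or $X_a^R$ step either keeps the two copy indices equal (the $m=n$ subcase, reduced to the induction hypothesis on $(l',r')$ when the new occurrence precedes $r'$ in the same copy, or on $(l',+\infty)$ otherwise) or increases the index by exactly one (the $m=n+1$ subcase, verified from the normalizability of $(l',+\infty)$ together with $(-\infty,+\infty)\cdot X_a^R$ in $\alpha$). The infinite-boundary cases $l=-\infty$ and $r=+\infty$ produce pairs of the form $(-\infty,(r',1))$ and $((l',1),+\infty)$ (or, by the dual operations, copy index $-1$), which match the one-sided rules with the side conditions $m\in\smallset{1}\uplus(-\Nat)$, respectively $n\in\Nat\uplus\smallset{-1}$; and for $k>0$ the factorization $C_{a,b}$ is never defined, so only the single-sided operations require checking.

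The anticipated obstacle is purely bookkeeping in the power case: correctly identifying the copy index of the produced position (same copy versus $n+1$, and never a jump of size at least $2$) and feeding the correct sub-term pair to the induction hypothesis — $(l',r')$ when the new occurrence precedes $r'$ in the same copy, versus $(l',+\infty)$ or $(-\infty,r')$ when it does not. Once the closure assertion holds, the final claim is immediate: $(-\infty,+\infty)$ is normalizable, each $F_i\in\smallset{X_a^L,X_a^R,Y_a^L,Y_a^R}$ preserves normalizability, and $C_{a,b}=X_a^L Y_b^R$ is a composition of two such steps, so induction on $n$ shows that $(-\infty,+\infty)\cdot F_1 F_2\cdots F_n$ is normalizable with respect to $\gamma$ whenever it is defined.
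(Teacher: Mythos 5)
Your proposal is correct and follows essentially the same route as the paper: structural induction on $\gamma$, with the concatenation case split by which side of the boundary $l$, $r$, and the new position fall on, and the power case resting on the key observation that the copy index of $X_a(w;l)$ can only be $n$ or $n+1$ (dually for $Y_a$), so that the appropriate pair among $(l',r')$, $(l',+\infty)$, $(-\infty,r')$ can be fed to the induction hypothesis. The reduction of $C_{a,b}$ to the composition $X_a^L Y_b^R$ for the final claim is also exactly how the paper handles it.
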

\begin{proof}
  As the cases for $Y_a^L$ and $Y_a^R$ are symmetrical, we only show those for $X_a^L$ and $X_a^R$.
  Let $p = X_a(w; l)$ for an $a \in \alphabet(w_{(l, r)})$. Clearly, we have $l <_\mu p <_\mu r$,
  where $\mu$ is the order type of $w$, and we need to show that $(l, p)$ and $(p, r)$ are
  normalizable. For this, we proceed by induction on the structure of $\gamma$. The base case
  $\gamma = \varepsilon$ or $\gamma \in \Sigma$ is trivial.

  \begin{case}{$\gamma = \alpha \beta$}
  Define $u = \subs{\alpha}{\omega + \omega^*}$ and $v = \subs{\beta}{\omega + \omega^*}$. For $l \in
  \dom{u} \uplus \smallset{ -\infty }$ and $r \in \dom{u}$ we have $p \in \dom{u}$ as well.
  Additionally, $(l, r)$ needs to be normalizable with respect to $\alpha$ by the definition of
  normalizability and we can apply induction. The same argument, but on $\beta$, works for $l \in
  \dom{v}$ and $r \in \dom{v} \uplus \smallset{ +\infty }$. For $l \in \dom{u} \uplus \smallset{
  -\infty }$ and $r \in \dom{v} \uplus \smallset{ +\infty }$ we know that
  $(l, +\infty)$ is normalizable with respect to $\alpha$ and $(-\infty, r)$ is with respect to
  $\beta$ by the definition of normalizablity. If $p \in \dom{u}$, then $(p, +\infty) = (l, +\infty)
  \cdot X_a^R$ and $(l, p) = (l, +\infty) \cdot X_a^L$. Induction yields normalizability with
  respect to $\alpha$ for both and, by the definition of normalizability, we have that $(p, r)$ and
  $(l, p)$ are normalizable with respect to $\gamma$. For $p \in \dom{v}$, we can apply a similar
  argument, as then $(-\infty, p) = (-\infty, +\infty) \cdot X_a^L$ and $(p, r) = (-\infty, r) \cdot
  X_a^R$ are normalizable with respect to $\beta$.
  \end{case}

  \begin{case}{$\gamma = (\alpha)^\pi$}
  Define $u = \subs{\alpha}{\omega + \omega^*}$ and let $p = (p', k)$. If $l = (l', n)$ for an $n
  \in \Nat \uplus -\Nat$ and $r = +\infty$, then, by the definition of normalizability, we have
  that $(l', +\infty)$ is normalizable with respect to $\alpha$ and $n \in \Nat \uplus
  \smallset{ -1 }$. There are two cases: for $k = n \in \Nat \uplus \smallset{ -1 }$ we know that $p'
  = X_a(u; l')$ and, by induction, that $(l', p'), (p', +\infty)$ are normalizable with respect to
  $\alpha$. This yields the normalizability with respect to $\gamma$ of $(l, p)$ and $(p, +\infty)$.
  For $k = n + 1$ we know that $n \neq -1$ and, therefore, that $n, k \in \Nat$. We also have $p' =
  X_a(u; -\infty)$ and, thus, that $(-\infty, p')$ and $(p', +\infty)$ are normalizable with respect
  to $\alpha$ by induction. By definition, $(p, +\infty)$ and $(l, p)$ are normalizable with respect
  to $\gamma$ then. Note that $k$ cannot have any other value than $n$ or $n + 1$ since otherwise it
  could not be the smallest $a$-position to the right of $l$.

  If $l = -\infty$ and $r = (r', m)$, then $k = 1$, and $p' = X_a(u; -\infty)$, which yields
  $(-\infty, p') = (-\infty, +\infty) \cdot X_a^L$ and $(p', +\infty) = (-\infty, +\infty) \cdot
  X_a^R$. By induction, both of these pairs are normalizable with respect to $\alpha$ and, by
  definition of the normalizability, $(-\infty, p)$ is normalizable with respect to $\gamma$.
  Furthermore in this case, we know that $(-\infty, r')$ is normalizable with respect to $\alpha$
  and that $m$ is in $\smallset{ 1 } \uplus -\Nat$. For $m \in -\Nat$, this shows the
  normalizability with respect to $\gamma$ of $(p, r)$. For $m = 1$, we have $(p', r') = (-\infty,
  r') \cdot X_a^R$ and, by induction, its normalizability with respect to $\alpha$. This yields that
  $(p, r)$ is normalizable with respect to $\gamma$.

  If $l = (l', n)$ and $r = (r', m)$ for $n \in \Nat$ and $m \in -\Nat$, we know that
  $(l', +\infty)$ and $(-\infty, r')$ are normalizable with respect to $\alpha$. For $k = n \in
  \Nat$, we also know that $p' = X_a(u; l')$ and, therefore, that $(l', p') = (l', +\infty) \cdot
  X_a^L$ and $(p', +\infty) = (l', +\infty) \cdot X_a^R$ are normalizable with respect to $\alpha$ by
  induction. Then, by definition, $(l, p)$ and $(p, r)$ are normalizable with respect to $\gamma$.
  For $k = n + 1 \in \Nat$ we have that $p' = X_a(u; -\infty)$ and, therefore, the normalizability
  with respect to $\alpha$ of $(-\infty, p') = (-\infty, +\infty) \cdot X_a^L$ and $(p', +\infty) =
  (-\infty, +\infty) \cdot X_a^R$ by induction. This yields the normalizability with respect to
  $\gamma$ of $(l, p)$ and $(p, r)$.

  Finally, if $l = (l', n)$ and $r = (r', m)$ for $n, m \in \Nat$ or $n, m \in -\Nat$, we know that
  $0 \leq m - n \leq 1$. Because $p$ must be in between $l$ and $r$, $n = m$ also implies $n = m =
  k$ and that $p'$ is in between $l'$ and $r'$ as well as $p' = X_a(u; l')$. In that case, we have
  that $(l', r')$ and, by induction, also $(l', p') = (l', r') \cdot X_a^L$ and $(p', r') = (l', r')
  \cdot X_a^R$ are normalizable with respect to $\alpha$. This yields the normalizability with
  respect to $\gamma$ of $(l, p)$ and $(p, r)$. For $m = n + 1$, we know that $(l', +\infty)$ and
  $(-\infty, r')$ are normalizable with respect to $\alpha$. Moreover, there are only
  two cases: $k = n$ and $k = m$. In the former case, we have $p' = X_a(u; l')$ and the
  normalizability with respect to $\alpha$ of $(l', p') = (l', +\infty) \cdot X_a^L$ and $(p',
  +\infty) = (l', +\infty) \cdot X_a^R$ by induction, which yields the normalizability of $(l, p)$
  and $(p, r)$ with respect to $\gamma$. In the latter case, we have $p' = X_a(u; -\infty)$ and the
  normalizability with respect to $\alpha$ of $(-\infty, p') = (-\infty, +\infty) \cdot X_a^L$ and
  $(p', r') = (-\infty, r') \cdot X_a^R$, which yields the normalizability with respect to $\gamma$
  of $(l, p)$ and $(p, r)$.
  \end{case}
\end{proof}

The choice of words indicates that normalizability of a pair $(l, r)$ can be used to
define a normalization. Before we give a formal -- unfortunately, quite technical -- definition of
this, we describe its idea informally.
Let us refer back to the schematic representation of $\subs{(\alpha)^\pi}{\omega +
\omega^*} = w$ as given in \autoref{fig:uToPi}. Basically, there are three different cases for relative positions of the left
border $l$ and the right border $r$ which describe the factor $w_{(l, r)}$:
\begin{enumerate}
  \item $l$ is in the $\omega$-part and $r$ is in the $\omega^*$-part,
  \item $l$ and $r$ are either both in the $\omega$-part or both in the $\omega^*$-part and
        have the same value there, or
  \item $l$ and $r$ are either both in the $\omega$-part or both in the $\omega^*$-part but
        $r$ has a value exactly larger by one than $l$.
\end{enumerate}
This is ensured by the normalizability of $(l, r)$. Now, in the first case, we can safely move
$l$ to value $1$ (the first position) and $r$ to value $-1$ (the last position) without changing
the described factor. In the second and third case, we can move $l$ and $r$ to any value -- as long
as we retain the difference between the values -- without changing the described factor. Here, we
move them to the left-most values (which are $1, 1$ or $1, 2$). Afterwards, we go on recursively.

Unfortunately, things get a bit more complicated because $l$ might be $-\infty$ and $r$ might be
$+\infty$. In these cases, we normalize to the left-most or right-most value without changing the
factor.

For concatenations of $\pi$-terms, we have a similar situation: either $l$ and $r$ belong both to
the left or to the right factor, in which case we can continue by normalization with respect to
that, or $l$ belongs to the left factor and $r$ belongs to the right one. In this case, we have
to continue the normalization with $(l, +\infty)$ and $(-\infty, r)$ in the respective concatenation
parts, as this ensures that the described factor remains unchanged.

Formalizing these ideas results in the following inductive definition.
\begin{definition}
  Let $\gamma$ be a $\pi$-term, $w = \subs{\gamma}{\omega + \omega^*}$ and $(l, r)$ a normalizable
  pair of positions in $w$. The normalized pair $\overline{(l, r)}^\gamma = (\bar{l}, \bar{r})$ with
  respect to $\gamma$ is defined recursively:
  \begin{itemize}
    \item For $\gamma = \varepsilon$ or $\gamma = a \in \Sigma$ define $\bar{l} = l$ and $\bar{r} =
          r$.
    \item If $\gamma = \alpha \beta$ for $\pi$-terms $\alpha$ and $\beta$, $l \in \dom{
          \subs{\alpha}{\omega + \omega^*}} \uplus \smallset{ -\infty }$ and $r \in \dom{\subs{\beta}{
          \omega + \omega^*}} \uplus \smallset{ +\infty }$, then define $\bar{l}$ as the first component
          of $\overline{(l, +\infty)}^\alpha$ and $\bar{r}$ as the second component of
          $\overline{(-\infty, r)}^\beta$.
    \item If $\gamma = \alpha \beta$ for $\pi$-terms $\alpha$ and $\beta$ and $l \in \dom{
          \subs{\alpha}{\omega + \omega^*}} \uplus \smallset{ -\infty }$ as well as $r \in \dom{
          \subs{\alpha}{\omega + \omega^*}}$ (or $l \in \dom{\subs{\beta}{\omega + \omega^*}}$
          as well as $r \in \dom{\subs{\beta}{\omega + \omega^*}} \uplus \smallset{ +\infty }$),
          then define $(\bar{l}, \bar{r}) = \overline{(l, r)}^\alpha$ (or $(\bar{l}, \bar{r}) =
          \overline{(l, r)}^\beta$, respectively).
    \item If $\gamma = (\alpha)^\pi$ for a $\pi$-term $\alpha$, then:
          \begin{itemize}
            \item if $l = -\infty$, define $\bar{l} = -\infty$,
            \item if $r = +\infty$, define $\bar{r} = +\infty$,
            \item if $l = (l', n)$ and $r = +\infty$, define $\bar{l} = (\bar{l'}, \bar{n})$ with
                  $\bar{l'}$ given by the first component of $\overline{(l', +\infty)
                  }^\alpha$ and $\bar{n}$ given by
                  $$
                    \bar{n} = \begin{cases}
                      1 & \text{if } n \in \Nat\\
                      -1 & \text{if } n = -1 \text{,}
                    \end{cases}
                  $$
            \item if $l = -\infty$ and $r = (r', m)$, define $\bar{r} = (\bar{r'}, \bar{m})$ with
                  $\bar{r'}$ given by the second component of $\overline{(-\infty, r')
                  }^\alpha$ and $\bar{m}$ given by
                  $$
                    \bar{m} = \begin{cases}
                      1 & \text{if } n = 1\\
                      -1 & \text{if } n \in -\Nat \text{,}
                    \end{cases}
                  $$
            \item if $l = (l', n)$ and $r = (r', m)$ with $n \in \Nat$ and $m \in -\Nat$, define
                  $\bar{l} = (\bar{l'}, 1)$ with $\bar{l'}$ being by the first
                  component of $\overline{(l', +\infty)}^\alpha$ and define
                  $\bar{r} = (\bar{r'}, -1)$ with $\bar{r'}$ given by the second
                  component of $\overline{(-\infty, r')}^\alpha$,
            \item if $l = (l', n)$ and $r = (r', m)$ with $n = m$, define
                  $\bar{l} = (\bar{l'}, \bar{n})$ and $\bar{r} = (\bar{r'}, \bar{m})$ with
                  $(\bar{l'}, \bar{r'}) = \overline{(l', r')}^\alpha$ and $\bar{n} = \bar{m} = 1$,
                  and
            \item if $l = (l', n)$ and $r = (r', m)$ with $m = n + 1$, define
                  $\bar{l} = (\bar{l'}, \bar{n})$ and $\bar{r} = (\bar{r'}, \bar{m})$ with
                  $\bar{l'}$ given by the first component of $\overline{(l', +\infty)
                  }^\alpha$, $\bar{r'}$ given by the second component of $\overline{
                  (-\infty, r')}^\alpha$, $\bar{n} = 1$ and $\bar{m} = \bar{n} + 1 = 2$.
          \end{itemize}
  \end{itemize}
\end{definition}

One should note that if we normalize a normalizable pair $(l, r)$, then the resulting
pair is normalizable itself. Indeed, if we normalize an already normalized pair again, we do not
change any values. Formally, this can be proved by an induction on the structure of the $\pi$-term.
As an example for such an induction, we prove the following lemma which states that normalizing
a pair of positions does not change the described factor.

\begin{lemma}\label{lem:normalizationMaintainsFactor}
  Let $\gamma$ be a $\pi$-term and let $(l, r)$ be a normalizable pair of positions in $w =
  \subs{\gamma}{\omega + \omega^*}$.
  Then
  $$
    w_{(l, r)} = w_{\overline{(l, r)}^\gamma}
  $$
  holds.
\end{lemma}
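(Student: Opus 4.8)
The plan is to prove the statement by structural induction on the $\pi$-term $\gamma$, following exactly the case distinction used in the definition of $\overline{(l, r)}^\gamma$. Before starting, I would record one auxiliary fact that the definition makes immediate: normalization never moves a genuine position to $\pm\infty$ (or vice versa), and in particular a pair whose left (resp.\ right) endpoint is $-\infty$ (resp.\ $+\infty$) keeps that endpoint, i.\,e.\ $\overline{(l, +\infty)}^\gamma = (\bar{l}, +\infty)$ and $\overline{(-\infty, r)}^\gamma = (-\infty, \bar{r})$. This is used repeatedly to recombine the factors produced by the recursive calls. The base case $\gamma = \varepsilon$ or $\gamma = a \in \Sigma$ is immediate since there $(\bar{l}, \bar{r}) = (l, r)$.

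For $\gamma = \alpha\beta$, write $u = \subs{\alpha}{\omega + \omega^*}$ and $v = \subs{\beta}{\omega + \omega^*}$, so $w = uv$. If both $l$ and $r$ lie in the $\alpha$-part (resp.\ the $\beta$-part), then $w_{(l, r)} = u_{(l, r)}$ (resp.\ $w_{(l,r)} = v_{(l, r)}$), and the claim follows at once from the induction hypothesis applied to $\alpha$ (resp.\ $\beta$), since normalization keeps the endpoints inside the same part. The interesting subcase is $l \in \dom{u} \uplus \smallset{-\infty}$ and $r \in \dom{v} \uplus \smallset{+\infty}$: here $w_{(l, r)} = u_{(l, +\infty)}\, v_{(-\infty, r)}$, while by definition $\bar{l}$ is the first component of $\overline{(l, +\infty)}^\alpha$ and $\bar{r}$ the second of $\overline{(-\infty, r)}^\beta$. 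Using the auxiliary fact together with the induction hypothesis gives $u_{(l, +\infty)} = u_{(\bar{l}, +\infty)}$ and $v_{(-\infty, r)} = v_{(-\infty, \bar{r})}$, and recombining yields $w_{(\bar{l}, \bar{r})} = u_{(\bar{l}, +\infty)}\, v_{(-\infty, \bar{r})} = w_{(l, r)}$.

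The heart of the proof is the case $\gamma = (\alpha)^\pi$, where $w = u^{\omega + \omega^*}$ with $u = \subs{\alpha}{\omega + \omega^*}$ and positions are pairs $(p, k)$ with $p \in \dom{u}$ and $k \in \Nat \uplus -\Nat$. For each subcase of the definition I would explicitly compute $w_{(l, r)}$ as a concatenation of up to three pieces: a tail $u_{(l', +\infty)}$ of the copy containing $l$, a block of full copies of $u$ strictly between the two boundary copies, and a head $u_{(-\infty, r')}$ of the copy containing $r$ (with some pieces absent in the degenerate subcases). The key combinatorial observation, which I would state and use in each subcase, is that in the ordering $1, 2, 3, \dots, \dots, -3, -2, -1$ of $\omega + \omega^*$ the block of full copies strictly between copy $n$ and copy $m$ has order type $\omega + \omega^*$ exactly when $n \in \Nat$ and $m \in -\Nat$, is empty when $m = n + 1$, and collapses to the single factor $u_{(l', r')}$ when $n = m$. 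Because the normalizability of $(l, r)$ guarantees that only these three configurations occur, and because the normalized outer indices $1$, $2$, $-1$ reproduce the same configuration of boundary copies — a middle block of type $\omega + \omega^*$ for $(1, -1)$, an empty middle for $(1, 2)$, and one copy for $(1, 1)$ — the middle block is unchanged; the tail and head pieces are then matched via the induction hypothesis on $\alpha$ and the auxiliary fact, exactly as in the concatenation case.

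I expect the $\pi$-power case to be the main obstacle, specifically the bookkeeping that identifies the order type of the block of full copies lying strictly between the two boundary copies. The delicate point is that when $l$ sits in the $\omega$-part ($n \in \Nat$) and $r$ sits in the $\omega^*$-part ($m \in -\Nat$), the copies in between comprise the cofinite tail $\smallset{n+1, n+2, \dots}$ of $\Nat$ (order type $\omega$) followed by the segment of $-\Nat$ strictly below $m$ (order type $\omega^*$), and one must check that this always has order type $\omega + \omega^*$ so that it matches the block obtained after normalizing to $(1, -1)$. The remaining subcases, where both endpoints lie in the same half, are easier but still require verifying that equal outer indices leave exactly one copy (giving $u_{(l', r')}$) and consecutive ones leave none, which is where the separate treatment of $n = m$ versus $m = n+1$ in the definition is needed.
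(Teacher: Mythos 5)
Your proposal is correct and follows essentially the same route as the paper's proof: structural induction tracking the cases of the normalization definition, with the concatenation case split into same-part and cross-part subcases, and the $\pi$-power case resolved by showing that the factor decomposes as tail, middle block and head, where the middle block's order type ($\omega+\omega^*$, a single copy, or empty) depends only on which of the three normalizable configurations $(n,m)$ falls into and is therefore unchanged by replacing the indices with $1$, $-1$, or $1,2$. The paper phrases the same computation via the identity $u^{\omega+\omega^*} = u\,u^{\omega+\omega^*}$ rather than via order types of the middle block, but the content is identical.
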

\begin{proof}
  Define $\overline{(l, r)}^\gamma = (\bar{l}, \bar{r})$ and proceed by induction on the structure of
  $\gamma$. The base cases for $\gamma = \varepsilon$ and $\gamma \in \Sigma$ are trivial.

  If $\gamma = \alpha \beta$ for $\pi$-terms $\alpha$ and $\beta$, then define $u = \subs{\alpha}{
  \omega + \omega^*}$ and $v = \subs{\beta}{\omega + \omega^*}$. If $l \in \dom{u} \uplus \smallset{
  -\infty }$ and $r \in \dom{v} \uplus \smallset{ +\infty }$, then
  $$
    w_{(l, r)} = u_{(l, +\infty)} v_{(-\infty, r)} =
      u_{\overline{(l, +\infty)}^\alpha} v_{\overline{(-\infty, r)}^\beta} =
      w_{\overline{(l, r)}^\gamma} \text{.}
  $$
  If $l \in \dom{u} \uplus \smallset{ -\infty }$ and $r \in \dom{u}$, then
  $$
    w_{(l, r)} = u_{(l, r)} = u_{\overline{(l, r)}^\alpha} = w_{\overline{(l, r)}^\gamma}
      \text{.}
  $$
  The case $l \in \dom{v}$ and $r \in \dom{v} \uplus \smallset{ +\infty }$ is symmetrical.

  If $\gamma = (\alpha)^\pi$ for a $\pi$-term $\alpha$, then define $u = \subs{\alpha}{\omega +
  \omega^*}$. The case $l = -\infty$ and $r = +\infty$ is trivial. If $l = (l', n)$ for an $n \in
  \Nat \uplus -\Nat$ and $r = +\infty$, define $\bar{l'}$ by $\overline{(l', +\infty)}^\alpha =
  (\bar{l'}, +\infty)$. For $n \in \Nat$ we then have
  \begin{align*}
    w_{(l, r)} &= w_{((l', n), +\infty)} =
      \left( u^{\omega + \omega^*} \right)_{((l', n), +\infty)} = \left( u^{\omega + \omega^*} \right)_{((l', 1), +\infty)}
    \intertext{because of $u^{\omega + \omega^*} = u u^{\omega + \omega^*}$ and further}
    w_{(l, r)} &= u_{(l', +\infty)} u^{\omega + \omega^*} =
      u_{\overline{(l', +\infty)}^\alpha} u^{\omega + \omega^*} =
      u_{(\bar{l'}, +\infty)} u^{\omega + \omega^*}\\
    &= \left( u^{\omega + \omega^*} \right)_{((\bar{l'}, 1), +\infty)} =
      w_{\overline{((l', n), +\infty)}^\gamma} = w_{\overline{(l, r)}^\gamma}
  \end{align*}
  and for $n = -1$ -- the only remaining case -- we have
  \begin{align*}
    w_{(l, r)} &= w_{((l', -1), +\infty)} = u_{(l', +\infty)} =
      u_{\overline{(l', +\infty)}^\alpha} = u_{(\bar{l'}, +\infty)} =
      w_{((\bar{l'}, -1), +\infty)} =
      w_{\overline{(l, r)}^\gamma}
    \text{.}
  \end{align*}
  The case for $l = -\infty$ and $r = (r', m)$ is symmetrical.

  Therefore, we can assume $l = (l', n)$ and $r = (r', m)$. The case $n \in \Nat$ and $m \in -\Nat$
  is proved by a calculation similar to the one given above. For $n = m$ we have
  $$
    w_{(l, r)} = w_{((l', n), (r', n))} = u_{(l', r')} = u_{\overline{(l', r')}^\alpha} =
      w_{\overline{(l, r)}^\gamma}
  $$
  and for $m = n + 1$ we have
  \[
    w_{(l, r)} = u_{(l', +\infty)} u_{(-\infty, r')} =
      u_{\overline{(l', +\infty)}^\alpha} u_{\overline{(-\infty, r')}^\alpha} =
      w_{\overline{(l, r)}^\gamma} \text{.}\qedhere
  \]
\end{proof}

Another observation is crucial for the proof of the decidability: after normalizing a pair $(l, r)$
the values belonging to the $\omega + \omega^*$ parts for the two positions are all
in $\smallset{ 1, 2, -2, -1 }$. But: there are only finitely many such positions in any word $w =
\subs{\gamma}{\omega + \omega^*}$ for a $\pi$-term $\gamma$. Because the normalization preserves the
described factor, this means that there are only finitely many factors which can result from a
sequence of first/last $a$ factorizations.

Plugging all these ideas and observations together yields a proof for the next theorem.
\begin{theorem}\label{thm:RelationsAreDecidable}
  For two $\pi$-terms $\alpha$ and $\beta$ over the same alphabet $\Sigma$, it is decidable for all
  $m \in \mathbb{N}_0$ and all $Z \in \{ X, Y, XY, \WI \}$ whether
  \[
    \subs{\alpha}{\omega + \omega^*} \equiv_m^Z \subs{\beta}{\omega + \omega^*}
  \]
  holds.
  Furthermore, it is decidable whether
  \[
    \forall m \in \mathbb{N}: \subs{\alpha}{\omega + \omega^*} \equiv_m^{XY} \subs{\beta}{\omega + \omega^*}
  \]
  holds.
\end{theorem}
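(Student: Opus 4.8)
The plan is to turn the coinductive structure of $\equiv_m^Z = \bigcap_{n} \equiv_{m, n}^Z$ into a finite greatest-fixed-point computation, using the finiteness of factors guaranteed by \autoref{lem:anyPairIsNormalizable} and \autoref{lem:normalizationMaintainsFactor}. Write $w_\alpha = \subs{\alpha}{\omega + \omega^*}$ and $w_\beta = \subs{\beta}{\omega + \omega^*}$. Every factor reachable from $w_\alpha$ by a sequence of factorization steps is of the form $(w_\alpha)_{(l, r)}$ for a normalizable pair $(l, r)$ and, by \autoref{lem:normalizationMaintainsFactor}, coincides with $(w_\alpha)_{\overline{(l, r)}^\alpha}$. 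As the $\omega + \omega^*$-coordinates of a normalized pair all lie in $\smallset{1, 2, -2, -1}$, only finitely many normalized pairs exist; hence $w_\alpha$ (and likewise $w_\beta$) has only finitely many distinct reachable factors, each computably represented by a normalized pair. I would take as state space the finite set $S$ of all pairs $(f, g)$ of such factors, on which the operations $X_a^L, X_a^R, Y_a^L, Y_a^R$ and $C_{a, b}$ act by applying the step to the stored position pairs and renormalizing; the alphabets of the factors and the definedness of $C_{a, b}$ are read off the representation.

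First I would fix $Z$ and prove decidability of $\equiv_m^Z$ by induction on $m$. The base case $m = 0$ is immediate, since $\equiv_0^Z$ holds for all pairs. For the step, unfolding $\bigcap_n \equiv_{m, n}^Z$ through the recursion in \autoref{def:relations} shows that, on factors, $\equiv_m^X$ is exactly the greatest relation $R \subseteq S$ satisfying, for each $(u, v) \in R$: $\alphabet(u) = \alphabet(v)$, $u \equiv_{m - 1}^Y v$, $u \cdot X_a^L \equiv_{m - 1}^Y v \cdot X_a^L$ for all $a$, and $(u \cdot X_a^R, v \cdot X_a^R) \in R$ for all $a$. The three conditions referring to $\equiv_{m - 1}^Y$ are settled by the induction hypothesis, so only the last clause is recursive in $R$. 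Thus I would compute $\equiv_m^X$ on $S$ as the limit of the descending chain obtained by starting from all pairs meeting the static conditions and repeatedly discarding pairs one of whose $X_a^R$-successors has already been discarded; this stabilises after at most $\abs{S}$ rounds and yields precisely $\bigcap_n \equiv_{m, n}^X$. The relations $\equiv_m^Y$ and $\equiv_m^\WI$ are handled identically (for $\WI$, the clauses for $X_a^L$, $Y_a^R$ and $C_{a, b}$ drop to $m - 1$ and are static, while $X_a^R$ and $Y_a^L$ stay at $m$ and drive the fixed point), and $\equiv_m^{XY}$ is just $\equiv_m^X \cap \equiv_m^Y$. Deciding the relation for the input then amounts to testing whether $(w_\alpha, w_\beta)$ lies in the computed fixed point.

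For the second claim, deciding $\forall m \in \Nat: w_\alpha \equiv_m^{XY} w_\beta$, I would exploit that the relations $\equiv_m^{XY}$ form a descending chain in $m$, by the monotonicity remark following \autoref{def:relations}. Moreover, the computation above makes the passage from $m$ to $m + 1$ a single fixed map $G$ on the pair of relations $(\equiv_m^X, \equiv_m^Y)$ over the finite set $S$, each component of $G$ being the greatest-fixed-point operator fed with the other component from level $m$. Iterating $G$ from the top element $(S, S)$ therefore yields a descending sequence in a finite lattice, which reaches a fixed point after finitely many steps; once two consecutive iterates agree, all later ones coincide. At that stabilised value, membership of $(w_\alpha, w_\beta)$ decides the $\forall m$ question, since the chain is descending and the specific pair survives for every $m$ if and only if it survives in the limit.

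The main obstacle I anticipate is justifying that this finite greatest-fixed-point computation really equals the infinitary relation $\bigcap_n \equiv_{m, n}^Z$ on the generalized words: one must check that passing to normalized representatives loses no information (the content of \autoref{lem:normalizationMaintainsFactor}), that every factorization step keeps us inside $S$ after renormalization, and that the coinductive unfolding of the defining equations is faithful. The $\WI$ case requires the most care, since the $C_{a, b}$ clause additionally demands that $u \cdot C_{a, b}$ and $v \cdot C_{a, b}$ be simultaneously (un)defined, a condition that must be verified on the representations and incorporated into the static part of the operator. The remaining bookkeeping, namely effectively enumerating $S$, computing alphabets and successors, and bounding the number of fixed-point rounds, is routine once finiteness is in place.
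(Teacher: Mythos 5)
Your argument is correct, but it follows a genuinely different route from the paper's. The paper reduces the question to a language-equivalence test: for each input term it builds a DFA over the alphabet of factorization steps whose states are the normalized position pairs and which accepts exactly the applicable factorization sequences, intersects each with a fixed automaton $\mathcal{B}$ describing the sequences relevant for the given $m$ and $Z$ (for the $\forall m$ question $\mathcal{B}$ simply stops counting direction changes), and checks the symmetric difference for emptiness; this implicitly rests on the rankers-style fact that $u \equiv_{m,n}^Z v$ holds iff exactly the same relevant sequences are defined on $u$ and on $v$. You instead compute the relation itself as a greatest fixed point on the finite set of pairs of normalized factors, by induction on $m$, splitting \autoref{def:relations} into static clauses (settled at level $m-1$ by the induction hypothesis) and the genuinely corecursive ones ($X_a^R$ for $\equiv_m^X$; $X_a^R$ and $Y_a^L$ for $\equiv_m^\WI$), and you settle the $\forall m$ question by iterating the monotone map $(\equiv_m^X,\equiv_m^Y)\mapsto(\equiv_{m+1}^X,\equiv_{m+1}^Y)$ from the top of a finite lattice until it stabilizes. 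Both proofs rest on the same infrastructure, namely \autoref{lem:anyPairIsNormalizable} and \autoref{lem:normalizationMaintainsFactor} guaranteeing finitely many reachable, effectively represented factors, and your coinductive unfolding of $\bigcap_n \equiv_{m,n}^Z$ into a greatest post-fixed point is sound (the successor-closure of the state space makes the restriction to $S$ harmless, and the induction on $n$ shows any post-fixed point is contained in every $\equiv_{m,n}^Z$). What each buys: your version is more directly faithful to the recursive definition and avoids the applicable-sequences characterization, which the paper never states explicitly; the paper's version is tailored to the subsequent complexity analysis, since DFA equivalence and emptiness are exactly what it later implements in $\NL$ and in polynomial time, whereas your tabulation of the full relation over all pairs of factors is polynomial but does not obviously fit in logarithmic space.
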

\begin{proof}
  We decide whether $u = \subs{\alpha}{\omega + \omega^*} \equiv_m^Z \subs{\beta}{\omega + \omega^*}
  = v$ holds by trying to find a factorization sequence $F_1 F_2 \dots F_n$ with $F_1, F_2, \dots,
  F_n \in \{ X_a^L, X_a^R, Y_a^L, Y_a^R,\allowbreak C_{a, b} \mid a, b \in \Sigma \}$ that can be
  applied to $u$ but not to $v$ (or vice versa), i.\,e.\ we rather try to decide $u \not\equiv_m^Z
  v$ instead.

  Which sequences need to be tested depends on the actual relation. For example, for
  $\equiv_m^X$, we can start with arbitrary many $X_a^R$ factorizations but, as soon as we apply
  an $X_a^L$, $Y_a^L$ or $Y_a^R$ factorization, we have changed the direction and we have to decrease
  $m$ by one. If we did this because of a $Y_a^D$ factorization (with $D \in \{ R, L \}$), we also
  need to switch to a $\equiv_{m - 1}^Y$ mode, in which case we can continue with arbitrary many $Y_a^L$
  factorizations while any other factorization decreases the remaining number of direction changes
  and might also switch back to a $\equiv_{m - 2}^X$ mode. If we want to test $\equiv_m^\WI$, we
  also need to allow $C_{a, b}$ factorizations and we need to count the number of direction changes
  appropriately. For testing $\equiv_m^{XY}$ for all $m \in \mathbb{N}$, the situation is simpler:
  here, we do not need to keep track of the remaining direction changes as there always is an
  arbitrary number of them left. Clearly, we can construct a deterministic finite automaton for any
  of the relations which accepts exactly those sequences $F_1 F_2 \dots F_n$ which need to be tested.

  To test whether a factorization sequence can be applied on $u$, we construct an additional
  deterministic finite automaton. The states of this automaton are the normalized pairs of positions
  in $u$ (of which there only finitely many, as discussed above). The initial state is $(-\infty,
  +\infty)$ and all states are final. This does not result in a trivial automaton because it will
  not be complete in general. We have an $F$-labeled transition with $F \in \{ X_a^D, Y_a^D, C_{a,
  b} \mid a, b \in \Sigma, D \in \{ L, R \} \}$ from $(l, r)$ to $\overline{(l, r) \cdot F}^\alpha$
  if $(l, r) \cdot F$ is defined. Note that normalization does not change the implicitly stored
  factor of $u$ by \autoref{lem:normalizationMaintainsFactor}. This automaton, by construction,
  accepts exactly those factorization sequences which can be applied to $u$.

  Using the same construction, we can also get such an automaton for $v$. We intersect both automata
  with the one which accepts the relevant factorization sequences. For the resulting pair of
  automata, we check the symmetric difference of the accepted languages for emptiness. It is empty
  if any only if $u$ and $v$ are in relation.
\end{proof}

Together with \autoref{thm:relationsAndEquations}, this gives the following decidability result.
Note that previous partial results exist: Almeida proved decidability for $\V[J]$ \cite{Almeida1991Implicit:short,almeida2002finite}, Almeida and Zeitoun proved it for $\V[R]$ \cite{almeida2007automata} and Moura for $\DA$ \cite{moura2011word}.
\begin{corollary}
  \label{cor:wordProblemsAreDecidable}
  The word problems for $\pi$-terms over $\Rm$, $\Lm$, $\Rm \vee \Lm$ and $\Rm \cap \Lm$ are
  decidable for any $m \in \Nat$. Moreover, the word problem for $\pi$-terms over $\DA$ is decidable.
\end{corollary}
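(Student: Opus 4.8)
The plan is to obtain the corollary as an immediate consequence of the two preceding theorems, which together already do all of the work. Fix a variety of the Trotter-Weil Hierarchy and input $\pi$-terms $\alpha$ and $\beta$; deciding the word problem means deciding whether $\alpha = \beta$ holds in that variety. My first step is to invoke \autoref{thm:relationsAndEquations}, which replaces this semantic condition by the equivalent combinatorial condition that the generalized words $\subs{\alpha}{\omega + \omega^*}$ and $\subs{\beta}{\omega + \omega^*}$ are related by the appropriate relation $\equiv_m^Z$. The second step is then to apply \autoref{thm:RelationsAreDecidable}, which asserts precisely that this relation is decidable. Chaining these two facts yields an algorithm, so the only thing left is bookkeeping: matching each variety to the correct exponent $Z$ and index $m$.

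Concretely, for the corners and the join levels the matching is read off directly from \autoref{thm:relationsAndEquations}: to decide the word problem over $\Rm$ I test $\subs{\alpha}{\omega + \omega^*} \equiv_m^X \subs{\beta}{\omega + \omega^*}$, over $\Lm$ I test the analogous $\equiv_m^Y$, and over $\Rm \vee \Lm$ I test $\equiv_m^{XY}$. In each case \autoref{thm:RelationsAreDecidable} guarantees that the test is effective, so these three families of word problems are decidable for every fixed $m \in \Nat$.

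For the intersection levels the only subtlety is an index shift in the statement of \autoref{thm:relationsAndEquations}, which phrases the $\WI$-relation as a criterion for $\Rm[m+1] \cap \Lm[m+1]$ rather than for $\Rm \cap \Lm$. Accordingly, to decide the word problem over $\Rm \cap \Lm$ for $m \geq 2$ I would test $\subs{\alpha}{\omega + \omega^*} \equiv_{m-1}^\WI \subs{\beta}{\omega + \omega^*}$; for $m = 1$ the intersection level degenerates to $\Rm[1] \cap \Lm[1] = \V[J] = \Rm[1]$, which is already covered by the corner case via $\equiv_1^X$. Decidability again follows from \autoref{thm:RelationsAreDecidable}.

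Finally, for $\DA$ I would use \autoref{cor:relationsAndDAEquations}, which characterizes $\alpha = \beta$ holding in $\DA$ by the universally quantified condition $\forall m \in \Nat: \subs{\alpha}{\omega + \omega^*} \equiv_m^{XY} \subs{\beta}{\omega + \omega^*}$. This is exactly the statement shown decidable in the second half of \autoref{thm:RelationsAreDecidable}, so the word problem over $\DA$ is decidable as well. I do not expect a genuine obstacle at this stage: all of the difficulty was absorbed into the equivalence between the relations and the equations and into the construction of the finite automata on normalized position pairs, so the remaining work is purely the index bookkeeping just described.
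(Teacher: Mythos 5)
Your proposal is correct and is essentially the paper's own argument: the corollary is obtained by chaining \autoref{thm:relationsAndEquations} (resp.\ \autoref{cor:relationsAndDAEquations} for $\DA$) with \autoref{thm:RelationsAreDecidable}, exactly as you describe. Your explicit handling of the index shift for the intersection levels (testing $\equiv_{m-1}^{\WI}$ for $\Rm \cap \Lm$ when $m \geq 2$, and reducing $\Rm[1] \cap \Lm[1] = \V[J]$ to the corner case) is a detail the paper leaves implicit, and it is handled correctly.
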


\section{Nondeterministic Logarithmic Space}

In the presented algorithm, we construct for a $\pi$-term $\gamma$ over the alphabet $\Sigma$ a
deterministic finite automaton which accepts exactly those factorization sequences which can be
applied to $\subs{\gamma}{\omega + \omega^*} = w$. For this, we had to store normalized pairs $(l,
r)$ of positions in $w$ and we had to compute $\overline{(l, r) \cdot F}^\gamma$ for a factorization
$F \in \{ X_a^D, Y_a^D, C_{a, b} \mid a, b \in \Sigma, D \in \{ L, R \} \}$. In this section, we will
show that both can be done by a deterministic Turing machine in logarithmic space. Afterwards, we
will show that this yields membership of the word problems for $\pi$-terms to the class of
problems which can be solved by nondeterministic Turing machines within logarithmic space.

\paragraph*{Observations and Ideas.}
We will start by having a close look on how to store a position $p$ (and, therefore, a pair of
positions) in the word $w = \subs{\gamma}{\omega + \omega^*}$ on a Turing machine. To do this,
we first store the position in $\gamma$ which corresponds to $p$ in $w$; this, basically, is a
simple pointer. Additionally, we need to store to which value in $\Nat \uplus -\Nat$ the position $p$
belongs for all relevant $\pi$-exponents; this, we can do by storing a pointer to the
$\pi$-position in $\gamma$ together with the corresponding value. If $p$ is part of a normalized
pair of positions, the values can only be in $\smallset{ 1, 2, -2, -1 }$, which is a finite
set and, thus, needs only finite information.

\begin{example}\label{exmpl:positionInASubstPiTerm}
  Have a look at the $\pi$-term $\gamma = a \left( b (c)^\pi \right)^\pi a (b)^\pi c$. The word
  $\subs{\gamma}{\omega + \omega^*}$ can be represented by the following tree:
  \begin{center}
    \begin{tikzpicture}
      \tikzstyle{level 1}=[sibling distance=15ex]
      \tikzstyle{level 2}=[sibling distance=5ex]
      \tikzstyle{level 3}=[sibling distance=3ex]
      \node {}
        child{ node {$a$} }
        child[very thick, every child/.style=thin]{
          child{ node {$b$} }
          child{
            child{ node {$c$} }
            child[edge from parent/.style={dashed, draw}]{ node {$\dots$} }
            child[edge from parent/.style={draw=none}]{ node {} }
            child[edge from parent/.style={dashed, draw}]{ node {$\dots$} }
            child{ node {$c$} }
          }
          child[edge from parent/.style={dashed, draw}]{ node {$\dots$} }
          child[edge from parent/.style={draw=none}]{ node {} }
          child[edge from parent/.style={dashed, draw}]{ node {$\dots$} }
          child{ node {$b$} }
          child[very thick, every child/.style=thin]{
            child{ node {$c$} }
            child[very thick]{ node {$c$} }
            child[edge from parent/.append style={dashed}]{ node {$\dots$} }
            child[edge from parent/.style={draw=none}]{ node {} }
            child[edge from parent/.append style={dashed}]{ node {$\dots$} }
            child{ node {$c$} }
          }
        }
        child{ node {$a$} }
        child[every child/.style={sibling distance=3ex}]{
          child{ node {$b$} }
          child[edge from parent/.style={dashed, draw}]{ node {$\dots$} }
          child[edge from parent/.style={draw=none}]{ node {} }
          child[edge from parent/.style={dashed, draw}]{ node {$\dots$} }
          child{ node {$b$} }
        }
        child{ node {$c$} }
      ;
    \end{tikzpicture}
  \end{center}\vspace{1cm}
  The highlighted position can also be represented this way:\nopagebreak
  \begin{center}
    \begin{tikzpicture}[every node/.style={text height=.8em, text depth=.2em}]
      \matrix [rectangle, draw, matrix of math nodes, ampersand replacement=\&,
               inner sep=2pt] (gamma) {
        a \& ( \& b \& ( \& c \& ) \& {}^\pi \& ) \& {}^\pi \& a \& ( \& b \& ) \& {}^\pi \& c \\
      };
      \draw[-latex] ($(gamma-1-5.south) - (0, 0.75cm)$) -- ($(gamma-1-5.south) - (0, 0.2cm)$);
      \node[below=0.2cm of gamma-1-7] {$2$};
      \node[below=0.2cm of gamma-1-9] {$-1$};
    \end{tikzpicture}
  \end{center}
  Note that the last $\pi$-position does not have a value since it is not relevant for the
  position. Also note that storing the values on a second tape directly under the
  $\pi$-position would require linear space (for a position belonging to a normalized pair).
  We store them in a position/value list which will turn out to be more efficient after some
  modifications.
\end{example}

Unfortunately, this approach still requires at least linear space. But for normalized pairs, we can
optimize it further if we look at the definition of the normalization. At some point the two
positions $l$ and $r$ in the normalized pair $(l, r)$ branch since $l$ is a smaller position than
$r$. This can either happen because $l = -\infty$ or $r = +\infty$, because there are
sub-$\pi$-terms $\alpha$ and $\beta$ and $l$ belongs to $\subs{\alpha}{\omega + \omega^*}$ while
$r$ belongs to $\subs{\beta}{\omega + \omega^*}$, or because of a sub-$\pi$-term of the form
$(\alpha)^\pi$ where $l$ has a different value compared to $r$. In the former two cases, we call
the pair \emph{c-branching}\footnote{The \enquote{c} is for \emph{concatenation}.} and in the
latter case \emph{$\pi$-branching}. Whichever is the case, we know that
the values of $l$ and $r$ belonging to hierarchically higher $\pi$-positions are always equal
and, by definition of the normalization, are equal to $1$. Thus, we do not need to store these
values explicitly; instead, we are going to store the branching position.

If the branching position is a $\pi$-position (i.\,e.~the pair is $\pi$-branching), then, for
a normalized pair, the values of $l$ and $r$ at this position can, by definition, only be $1$ for
$l$ and $2$ for $r$ or $1$ for $l$ and $-1$ for $r$. This information can be stored alongside
the branching position in constant space.

To store the values hierarchically below the branching position, we need to have an even closer look
at normalized position pairs. Before we do this, however, it is convenient to define four position
\emph{forms}: a position is in
$+$-form if all its values for relevant $\pi$-positions are from $\Nat$ and it is in $-$-form
if they are from $-\Nat$. Positions in $\mp$-form may only have values from $\Nat$ for
$\pi$-positions which are hierarchically lower than the first $\pi$-position with a value
from $\Nat$ and positions in $\pm$-form are defined symmetrically. More formally, we define:
\begin{definition}
  A position $p \in \dom{\subs{\gamma}{\omega + \omega^*}}$ for a $\pi$-term $\gamma$ is in
  \begin{itemize}
    \item $+$-form (with respect to $\gamma$) if
          \begin{itemize}
            \item $\gamma = \varepsilon$ or $\gamma \in \Sigma$,
            \item $\gamma = \alpha \beta$ for $\pi$-terms $\alpha$ and $\beta$ and $p$ is in
                  $+$-form with respect to its respective sub-$\pi$-term, or
            \item $\gamma = (\alpha)^\pi$ for a $\pi$-term $\alpha$, $p = (p', n)$ with $n \in
                  \Nat$ and $p'$ is in $+$-form with respect to $\alpha$,
          \end{itemize}\pagebreak[2]
    \item $-$-form (with respect to $\gamma$) if
          \begin{itemize}
            \item $\gamma = \varepsilon$ or $\gamma \in \Sigma$,
            \item $\gamma = \alpha \beta$ for $\pi$-terms $\alpha$ and $\beta$ and $p$ is in
                  $-$-form with respect to its respective sub-$\pi$-term, or
            \item $\gamma = (\alpha)^\pi$ for a $\pi$-term $\alpha$, $p = (p', n)$ with $n \in
                  -\Nat$ and $p'$ is in $-$-form with respect to $\alpha$,
          \end{itemize}
    \item $\mp$-form (with respect to $\gamma$) if
          \begin{itemize}
            \item $\gamma = \varepsilon$ or $\gamma \in \Sigma$,
            \item $\gamma = \alpha \beta$ for $\pi$-terms $\alpha$ and $\beta$ and $p$ is in
                  $\mp$-form with respect to its respective sub-$\pi$-term,
            \item $\gamma = (\alpha)^\pi$ for a $\pi$-term $\alpha$, $p = (p', n)$ with $n \in
                  \Nat$ and $p'$ has $+$-form, or
            \item $\gamma = (\alpha)^\pi$ for a $\pi$-term $\alpha$, $p = (p', n)$ with $n \in
                  -\Nat$ and $p'$ has $\mp$-form, or
          \end{itemize}
    \item $\pm$-form (with respect to $\gamma$) if
          \begin{itemize}
            \item $\gamma = \varepsilon$ or $\gamma \in \Sigma$,
            \item $\gamma = \alpha \beta$ for $\pi$-terms $\alpha$ and $\beta$ and $p$ is in
                  $\pm$-form with respect to its respective sub-$\pi$-term,
            \item $\gamma = (\alpha)^\pi$ for a $\pi$-term $\alpha$, $p = (p', n)$ with $n \in
                  -\Nat$ and $p'$ has $-$-form, or
            \item $\gamma = (\alpha)^\pi$ for a $\pi$-term $\alpha$, $p = (p', n)$ with $n \in
                  \Nat$ and $p'$ has $\pm$-form.
          \end{itemize}
  \end{itemize}
\end{definition}
\begin{example}
  The position
  \begin{center}
    \begin{tikzpicture}[every node/.style={text height=.8em, text depth=.2em}]
      \matrix [rectangle, draw, matrix of math nodes, ampersand replacement=\&,
      inner sep=2pt] (gamma) {
        a \& ( \& b \& ( \& c \& ) \& {}^\pi \& ) \& {}^\pi \& a \& ( \& b \& ) \& {}^\pi \& c \\
      };
      \draw[-latex] ($(gamma-1-5.south) - (0, 0.75cm)$) -- ($(gamma-1-5.south) - (0, 0.2cm)$);
      \node[below=0.2cm of gamma-1-7] {$2$};
      \node[below=0.2cm of gamma-1-9] {$-1$};
    \end{tikzpicture}
  \end{center}
  from the previous example is in $\mp$-form but not in any of the other three forms.

  The similar position represented by
  \begin{center}
    \begin{tikzpicture}[every node/.style={text height=.8em, text depth=.2em}]
      \matrix [rectangle, draw, matrix of math nodes, ampersand replacement=\&,
      inner sep=2pt] (gamma) {
        a \& ( \& b \& ( \& c \& ) \& {}^\pi \& ) \& {}^\pi \& a \& ( \& b \& ) \& {}^\pi \& c \\
      };
      \draw[-latex] ($(gamma-1-5.south) - (0, 0.75cm)$) -- ($(gamma-1-5.south) - (0, 0.2cm)$);
      \node[below=0.2cm of gamma-1-7] {$2$};
      \node[below=0.2cm of gamma-1-9] {$3$};
    \end{tikzpicture}
  \end{center}
  is in $+$-form, in $\mp$-form and in $\pm$-form but it is not in $-$-form.
\end{example}

What use are these definitions for our goal of storing positions efficiently? If we know that a
position is in $+$-form or $-$-form, then we do not need to store whether a value is from $\Nat$ or
from $-\Nat$. Similarly, if a position is in $\mp$-form or in $\pm$-form, then we only need to store
one potential $\pi$-position at which the values switch form $-\Nat$ to $\Nat$ or vice versa.
While this does not seem to be a huge gain since we still need to store the actual value, it will
turn out to be crucial later on.

Next, we need to make some further observations. Consider a $\pi$-term $\gamma$ and define $w =
\subs{\gamma}{\omega + \omega^*}$. If we start with a position $p \in \dom{w}$ in $\mp$-form and
we go to the next $a$ on the right (i.\,e.~we compute $X_a(w; p)$), then the follow-up position
$p'$ is in $\mp$-form as well because strictly after the branching of $p$ and $p'$ all values for
relevant $\pi$-position have to be $1$ (since otherwise there would already have been an $a$
before). By symmetry, if we start in $\pm$-form and go to the previous $a$ on the left, the resulting
position will also be in $\pm$-form.
\begin{example}
  Look again at the $\pi$-term $\gamma = a \left( b (c)^\pi \right)^\pi a (b)^\pi c$. Suppose we
  are in the position $p$ in $w = \subs{\gamma}{\omega + \omega^*}$ and advance to $p' = X_b(w;
  p)$:
  \begin{center}
    \begin{tikzpicture}
      \tikzstyle{level 1}=[sibling distance=15ex]
      \tikzstyle{level 2}=[sibling distance=5ex]
      \tikzstyle{level 3}=[sibling distance=3ex]
      \node {}
      child{ node {$a$} }
      child[very thick, every child/.style=thin]{
        child{ node {$b$} }
        child{
          child{ node {$c$} }
          child[edge from parent/.style={dashed, draw}]{ node {$\dots$} }
          child[edge from parent/.style={draw=none}]{ node {} }
          child[edge from parent/.style={dashed, draw}]{ node {$\dots$} }
          child{ node {$c$} }
        }
        child[edge from parent/.style={dashed, draw}]{ node {$\dots$} }
        child[edge from parent/.style={draw=none}]{ node {} }
        child[edge from parent/.style={dashed, draw}]{ node {$\dots$} }
        child{ node {$b$} }
        child[very thick, every child/.style=thin]{
          child{ node {$c$} }
          child[very thick]{ node (p) {$c$} }
          child[edge from parent/.append style={dashed}]{ node {$\dots$} }
          child[edge from parent/.style={draw=none}]{ node {} }
          child[edge from parent/.append style={dashed}]{ node {$\dots$} }
          child{ node {$c$} }
        }
      }
      child{ node {$a$} }
      child[very thick, every child/.style={thin, sibling distance=3ex}]{
        child[very thick]{ node (Xb) {$b$} }
        child[edge from parent/.style={dashed, draw}]{ node {$\dots$} }
        child[edge from parent/.style={draw=none}]{ node {} }
        child[edge from parent/.style={dashed, draw}]{ node {$\dots$} }
        child{ node {$b$} }
      }
      child{ node {$c$} }
      ;

      \node[below=0cm of p] {$p$};
      \node[below=0cm of Xb] {$p'$};
    \end{tikzpicture}
  \end{center}
  Clearly, $p'$ is in $\mp$-form.

  If we advanced to $p'' = X_c(w; p)$ instead, the situation would be as follows:
  \begin{center}
    \begin{tikzpicture}
      \tikzstyle{level 1}=[sibling distance=15ex]
      \tikzstyle{level 2}=[sibling distance=5ex]
      \tikzstyle{level 3}=[sibling distance=3ex]
      \node {}
      child{ node {$a$} }
      child[very thick, every child/.style=thin]{
        child{ node {$b$} }
        child{
          child{ node {$c$} }
          child[edge from parent/.style={dashed, draw}]{ node {$\dots$} }
          child[edge from parent/.style={draw=none}]{ node {} }
          child[edge from parent/.style={dashed, draw}]{ node {$\dots$} }
          child{ node {$c$} }
        }
        child[edge from parent/.style={dashed, draw}]{ node {$\dots$} }
        child[edge from parent/.style={draw=none}]{ node {} }
        child[edge from parent/.style={dashed, draw}]{ node {$\dots$} }
        child{ node {$b$} }
        child[very thick, every child/.style=thin]{
          child{ node {$c$} }
          child[very thick]{ node (p) {$c$} }
          child[very thick]{ node (Xc) {$c$} }
          child[edge from parent/.append style={dashed}]{ node {$\dots$} }
          child[edge from parent/.style={draw=none}]{ node {} }
          child[edge from parent/.append style={dashed}]{ node {$\dots$} }
          child{ node {$c$} }
        }
      }
      child{ node {$a$} }
      child[every child/.style={sibling distance=3ex}]{
        child{ node {$b$} }
        child[edge from parent/.style={dashed, draw}]{ node {$\dots$} }
        child[edge from parent/.style={draw=none}]{ node {} }
        child[edge from parent/.style={dashed, draw}]{ node {$\dots$} }
        child{ node {$b$} }
      }
      child{ node {$c$} }
      ;

      \node[below=0cm of p, text height=1ex] (pLabel) {$p$};
      \node[below=0cm of Xc, text height=1ex] {$p''$};
    \end{tikzpicture}
  \end{center}
  Again, $p''$ clearly has $\mp$-form.
\end{example}

More formally, we can prove the following lemma:
\begin{lemma}\label{lem:forms}
  Let $\gamma$ be a $\pi$-term and $w = \subs{\gamma}{\omega + \omega^*}$. Then
  \begin{itemize}
    \item $X_a(w; -\infty)$ is in $+$-form for any $a \in \alphabet(w)$,
    \item if $l \in \dom{w}$ is in $+$-form, then so is $X_a(w; l)$ for all $a \in
          \alphabet(w_{(l, +\infty)})$,
    \item if $l \in \dom{w}$ is in $\mp$-form, then so is $X_a(w; l)$ for all $a \in
          \alphabet(w_{(l, +\infty)})$,
    \item $Y_a(w; +\infty)$ is in $-$-form for any $a \in \alphabet(w)$,
    \item if $r \in \dom{w}$ is in $-$-form, then so is $Y_a(w; r)$ for all $a \in
          \alphabet(w_{(-\infty, r)})$, and
    \item if $r \in \dom{w}$ is in $\mp$-form, then so is $Y_a(w; r)$ for all $a \in
          \alphabet(w_{(-\infty, r)})$.
  \end{itemize}
\end{lemma}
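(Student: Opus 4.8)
The plan is to prove the three \emph{$X$-assertions} (the first, second and third bullets) by a single simultaneous induction on the structure of the $\pi$-term $\gamma$, the induction hypothesis supplying all three $X$-assertions for every proper sub-$\pi$-term, and then to obtain the three \emph{$Y$-assertions} by the left–right symmetry of the construction: reversing the order type $\omega+\omega^*$ interchanges $X_a$ with $Y_a$, the value sets $\Nat$ and $-\Nat$, and hence $+$-form with $-$-form and $\mp$-form with $\pm$-form. Before starting I would record one auxiliary observation, itself proved by a trivial structural induction: every position in $+$-form is in $\mp$-form (and, dually, every $-$-form position is in $\pm$-form). This inclusion is immediate from the recursive clauses, since a $+$-form position has all its $\pi$-values in $\Nat$, and the clause ``$n\in\Nat$ and $p'$ in $+$-form'' is precisely the outermost $\mp$-form clause.

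For the three $X$-assertions the base cases $\gamma=\varepsilon$ and $\gamma\in\Sigma$ are vacuous or trivial. The concatenation case $\gamma=\alpha\beta$ is routine: writing $w=uv$ with $u=\subs{\alpha}{\omega+\omega^*}$ and $v=\subs{\beta}{\omega+\omega^*}$, the position $l$ lies in $\dom{u}$ or $\dom{v}$ and, by the concatenation clause of the form definitions, carries the same form with respect to the corresponding factor; so either the search for the next $a$ stays inside that factor (and the hypothesis applies directly) or it crosses into $v$, in which case $X_a(w;l)=X_a(v;-\infty)$ and the first assertion for $\beta$ finishes the argument. The substance is the power case $\gamma=(\alpha)^\pi$, where I would write $w=u^{\omega+\omega^*}$ and $l=(l',n)$ and split according to whether the first $a$ to the right of $l$ lies in copy $n$ (then $X_a(w;l)=(X_a(u;l'),n)$) or in the next copy to the right (then $X_a(w;l)=(X_a(u;-\infty),n+1)$, using that a fresh copy begins with $X_a(u;-\infty)$, which is in $+$-form by the first assertion for $\alpha$). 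In this case the copy index changes by at most one and never changes sign.

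The main obstacle is the third assertion, the preservation of $\mp$-form, because the outer value $n$ may now be positive or negative and the two possibilities must be routed through the correct hypothesis. If $l=(l',n)$ is in $\mp$-form then either $n\in\Nat$ with $l'$ in $+$-form, or $n\in-\Nat$ with $l'$ in $\mp$-form. When $n\in\Nat$ one reduces the inner step to the already available second assertion (preservation of $+$-form) and pairs a $+$-form inner position with a positive outer value, which is $\mp$-form. When $n\in-\Nat$ and the next $a$ stays in copy $n$, one invokes the third assertion for $\alpha$ and pairs a $\mp$-form inner position with a negative outer value, again $\mp$-form. The delicate situation is $n\in-\Nat$ together with a jump to the next copy: it cannot occur for $n=-1$, since there is no copy further right and so the hypothesis $a\in\alphabet(w_{(l,+\infty)})$ forces the $a$ to lie in copy $-1$ after $l'$; and for $n\le -2$ the new copy index $n+1$ is still in $-\Nat$ while the new inner position $X_a(u;-\infty)$ is only in $+$-form, so here the auxiliary inclusion $+\text{-form}\subseteq\mp\text{-form}$ is exactly what converts $(X_a(u;-\infty),n+1)$ into a $\mp$-form position. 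Once the three $X$-assertions are settled, the three $Y$-assertions follow by the order-reversing symmetry described above, with $+$-form, $\mp$-form replaced throughout by their mirror images $-$-form, $\pm$-form.
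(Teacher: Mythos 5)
Your proposal is correct and follows essentially the same route as the paper's proof: a structural induction on $\gamma$ establishing the three $X$-assertions simultaneously (with the same case analysis in the power case — the copy index changes by $0$ or $1$, the $n=-1$ jump is impossible, and the first assertion feeds the fresh-copy case), followed by left–right symmetry for the $Y$-assertions. The only difference is presentational: you isolate the inclusion ``$+$-form $\subseteq$ $\mp$-form'' as an explicit auxiliary observation, whereas the paper uses it silently in the $k=m+1\in-\Nat$ subcase; making it explicit is a small improvement in rigor, not a different argument.
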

\begin{proof}
  By symmetry, we only have to prove the first three assertions.

  We start by proving the first one. The cases $\gamma = \varepsilon$ and $\gamma \in \Sigma$ are trivial.
  For the other cases, let $a \in \alphabet(w)$ and define $p = X_a(w; -\infty)$. If $\gamma =
  \alpha \beta$ for two $\pi$-terms $\alpha$ and $\beta$, we define $u = \subs{\alpha}{\omega +
  \omega^*}$ and $v = \subs{\beta}{\omega + \omega^*}$. Now, we have $p = X_a(u; -\infty)$
  or $p = X_a(v; -\infty)$. In either case, we can apply induction, which yields that $p$ is in
  $+$-form with respect to its sub-$\pi$-term, and we are done. If $\gamma = (\alpha)^\pi$ for a
  $\pi$-term $\alpha$, define $u = \subs{\alpha}{\omega + \omega^*}$. Clearly, $p$ has to be equal
  to $(p', 1)$ for $p' = X_a(u; -\infty)$. By induction, we have that $p'$ is in $+$-form and,
  since $1 \in \Nat$, we are done.

  The second assertion can be proved similarly. Let $l \in \dom{w}$ be in $+$-form. If $\gamma =
  \varepsilon$ or $\gamma \in \Sigma$, we do not have any remaining positions in $\dom{w}$ for $p =
  X_a(w; l)$. In the other cases, let $a \in \alphabet(w_{(l, +\infty)})$. If $\gamma =
  \alpha \beta$ for two $\pi$-terms $\alpha$ and $\beta$, we define $u = \subs{\alpha}{\omega +
  \omega^*}$ and $v = \subs{\beta}{\omega + \omega^*}$. We have to distinguish: if $l$ and $p$
  both are in $\dom{u}$ (or, symmetrically, in $\dom{v}$), then we have $p = X_a(u; l)$ and we
  can apply induction. Therefore, $p$ is in $+$-form with respect to $\alpha$, which yields that
  $p$ is also in $+$-form with respect to $\gamma$. If $l \in \dom{u}$ and $p \in \dom{v}$, then we
  know that $p = X_a(v; -\infty)$, which is in $+$-form with respect to $\beta$ by the first
  assertion and, therefore, also in $+$-form with respect to $\gamma$.

  If $\gamma = (\alpha)^\pi$ for a $\pi$-term $\alpha$, define $u = \subs{\alpha}{\omega +
  \omega^*}$. We can write $l = (l', m)$ and $p = (p', k)$ for some $m, k \in \Nat \uplus -\Nat$ and
  $l', p' \in \dom{u}$. By definition, $l'$ is in $+$-form because $l$ is so. For $k$, there are only
  two possible cases: $k = m$ and $k = m + 1$. In the former case, we know that $p' = X_a(u; l')$
  and can apply induction to get that $p'$ is in $+$-form with respect to $\alpha$. Since $l$ is
  in $+$-form, we also know that $k = m \in \Nat$. Together, this yields that $p$ is in $+$-form
  with respect to $\gamma$. In the latter case $k = m + 1$, we know that $p' = X_a(u; -\infty)$ is
  in $+$-form by the first assertion. Since $l$ is in $+$-form, we have $m \in \Nat$ and,
  therefore, also $k = m + 1 \in \Nat$. Thus, $p$ is in $+$-form with respect to $\gamma$.

  Now, we prove the third assertion. Let $l$ be in $\mp$-form with respect to $\gamma$. Again,
  for $\gamma = \varepsilon$ or $\gamma \in \Sigma$, there is nothing to show. The case for $\gamma =
  \alpha \beta$ for two $\pi$-terms $\alpha$ and $\beta$ can be proved analogously to the
  corresponding case in the proof for the second assertion.

  If $\gamma = (\alpha)^\pi$ for a $\pi$-term $\alpha$, define $u = \subs{\alpha}{\omega +
  \omega^*}$ and let $p = X_a(w; l)$ for an $a \in \alphabet(w_{(l, +\infty)})$. We can write $l
  = (l', m)$ and $p = (p', k)$ for some $m, k \in \Nat \uplus -\Nat$ and $l', p' \in \dom{u}$. Again,
  there are only two possible cases: $k = m$ and $k = m + 1$. In the former case we have $p' =
  X_a(u; l')$. If $k = m \in \Nat$, we know that $l'$ is in $+$-form since $l$ has to be in
  $\mp$-form. By the second assertion, this yields that $p'$ is in $+$-form as well. Therefore,
  we have that $p$ is in $\mp$-form. If $k = m \in -\Nat$, we can simply apply induction and get
  that $p'$ is in $\mp$-form with respect to $\alpha$. This yields that $p$ is in $\mp$-form with
  respect to $\gamma$. In the latter case $k = m + 1$, we know that $p' = X_a(u; -\infty)$ is in
  $+$-form by the first assertion and that $m \neq -1$. This yields $\mp$-form for $p$ if $k = m
  + 1 \in \Nat$ or $k = m + 1 \in -\Nat$.
\end{proof}

This observation is also important for pairs which arise by consecutive factorization at the
first/last $a$. If we start in $(-\infty, +\infty)$ and apply a sequence $X$ of elements from
$X_\Sigma^R =
\set{X_a^R}{a \in \Sigma}$, then $l'$ in the resulting pair $(l', +\infty) = (-\infty, +\infty)
\cdot X$ will be in $\mp$-form. Equally, for a sequence over $Y_\Sigma^L = \set{Y_a^L}{a \in
\Sigma}$, the right position will be in $\pm$-form. Can we assume that the left position is always in
$\mp$-form and the right one is always in $\pm$-form? Unfortunately, the answer to this question is
\enquote{no}. Suppose we start in the pair $(l, r)$ where $l$ is in $\mp$-form and $r$ is in
$\pm$-form. If we apply $X_a^L$ for some $a \in \Sigma$, then, obviously, we end up in a pair $(l,
r')$ whose right position is in $\mp$-form. But: the right position $r'$ branches form
$l$ at some point and strictly below that point values for relevant $\pi$-positions are equal to
$1$. So, we could say that for this lower part $r'$, indeed, is in $\pm$-form. In fact, we will prove
that, strictly below the branching, $l$ is always in $\mp$-form and $r$ is always in $\pm$-form
for any pair $(l, r) = (-\infty, +\infty) \cdot F$ where $F$ is a sequence of elements from
$F_\Sigma = \{ X_a^D, Y_a^D, C_{a, b} \mid a, b \in \Sigma, D \in \{ L, R \} \}$. The situation at
the branching point itself depends on
whether we have a c-branching or a $\pi$-branching pair. For a $\pi$-branching pair, we cannot
make an assumption on the value of $l$ and $r$ at the actual branching $\pi$-position. We
accommodate for this by having two definitions.
\begin{definition}
  Let $(l, r)$ be a pair of positions in $\subs{\gamma}{\omega + \omega^*}$ for a $\pi$-term
  $\gamma$ such that $l$ is strictly smaller than $r$.

  The pair is called \emph{well-c-shaped} (with respect to $\gamma$) if
  \begin{itemize}
    \item $\gamma = \varepsilon$ or $\gamma \in \Sigma$,
    \item $\gamma = \alpha \beta$ for $\pi$-terms $\alpha$ and $\beta$, $l \in
          \dom{\subs{\alpha}{\omega + \omega^*}} \uplus \smallset{-\infty}$, $r \in
          \dom{\subs{\alpha}{\omega + \omega^*}}$ and $(l, r)$ is well-c-shaped with
          respect to $\alpha$,
    \item $\gamma = \alpha \beta$ for $\pi$-terms $\alpha$ and $\beta$, $l \in
          \dom{\subs{\beta}{\omega + \omega^*}}$, $r \in \dom{\subs{\beta}{\omega + \omega^*}}
          \uplus \smallset{+\infty}$ and $(l, r)$ is well-c-shaped with respect to $\beta$,
    \item $\gamma = \alpha \beta$ for $\pi$-terms $\alpha$ and $\beta$, $l \in
          \dom{\subs{\alpha}{\omega + \omega^*}} \uplus \smallset{-\infty}$, $r \in
          \dom{\subs{\beta}{\omega + \omega^*}} \uplus \smallset{+\infty}$, and
          \begin{itemize}
            \item $l = -\infty$ or $l$ is in $\mp$-form with respect to $\alpha$, and
            \item $r = +\infty$ or $r$ is in $\pm$-form with respect to $\beta$,
          \end{itemize}
          or
    \item $\gamma = (\alpha)^\pi$ for a $\pi$-term $\alpha$ and
          \begin{itemize}
            \item $l = -\infty$ and $r = +\infty$,
            \item $l = -\infty$ and $r$ is in $\pm$-form,
            \item $l$ is in $\mp$-form and $r = +\infty$, or
            \item $l = (l', n)$ and $r = (r', n)$ for an $n \in \Nat \uplus -\Nat$ and $(l', r')$ is
                  well-c-shaped with respect to $\alpha$,
          \end{itemize}
  \end{itemize}
  It is called \emph{well-$\pi$-shaped} (with respect to $\gamma$) if $\gamma \neq \varepsilon$ and
  $\gamma \not\in \Sigma$ as well as
  \begin{itemize}
    \item $\gamma = \alpha \beta$ for $\pi$-terms $\alpha$ and $\beta$, $l, r \in
          \dom{\subs{\alpha}{\omega + \omega^*}}$ and $(l, r)$ is well-$\pi$-shaped with
          respect to $\alpha$,
    \item $\gamma = \alpha \beta$ for $\pi$-terms $\alpha$ and $\beta$, $l, r \in
          \dom{\subs{\beta}{\omega + \omega^*}}$ and $(l, r)$ is well-$\pi$-shaped with
          respect to $\beta$, or
    \item $\gamma = (\alpha)^\pi$ for a $\pi$-term $\alpha$, $l = (l', n)$, $r = (r',
          m)$ for $n, m \in \Nat \uplus -\Nat$ and $l', r' \in \dom{\subs{\alpha}{\omega + \omega^*}}$
          and
          \begin{itemize}
            \item $n = m$ and $(l', r')$ is well-$\pi$-shaped with respect to $\alpha$, or
            \item $n \neq m$, $l'$ is in $\mp$-form and $r'$ is in $\pm$-form.
          \end{itemize}
  \end{itemize}
  Finally, it is called \emph{well-shaped} (with respect to $\gamma$) if it is well-c-shaped or
  well-$\pi$-shaped (with respect to $\gamma$).
\end{definition}
\noindent The definition of well-c-shapedness is related to c-branching pairs and the definition of
well-$\pi$-shapedness is related to $\pi$-branching pairs. Note that any pair $(l, r)$ of positions
in $\subs{\gamma}{\omega + \omega^*}$ for a $\pi$-term is well-c-shaped if $l = -\infty$ or $r =
+\infty$. This results in some asymmetry in the definition.

We proceed by showing that any pair $(l, r)$ which arises by consecutive factorization at the
first/last $a$ is well-c-shaped or well-$\pi$-shaped.
\begin{lemma}
  Let $(l, r)$ be a pair of positions in $\subs{\gamma}{\omega + \omega^*}$ for a $\pi$-term
  $\gamma$. If $(l, r)$ is well-shaped, then so is $(l, r) \cdot F$ for any $F \in F_\Sigma$ (if it
  is defined).

  Therefore, $(-\infty, +\infty) \cdot F_1 F_2 \dots F_n$ is well-shaped with respect to $\gamma$
  for any $F_1,\allowbreak F_2,\allowbreak \dots,\allowbreak F_n \in F_\Sigma$ (if it is defined).
\end{lemma}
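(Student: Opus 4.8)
The plan is to establish the single-step claim by induction on the structure of $\gamma$ and then derive the displayed consequence by a trivial induction on $n$: the pair $(-\infty, +\infty)$ is always well-c-shaped (the split clause for a concatenation and the first clause for a $\pi$-power both apply, using $l=-\infty$ and $r=+\infty$), hence well-shaped, and every factorization step preserves well-shapedness by the single-step claim. As in the proof of \autoref{lem:forms}, I would first record a small preliminary observation that makes the bookkeeping uniform: a position in $+$-form is simultaneously in $\mp$-form and in $\pm$-form, and likewise a position in $-$-form is in both $\mp$-form and $\pm$-form (each follows by a one-line induction on $\gamma$ from the definitions). I would then invoke the left-right symmetry of the definitions of \emph{well-c-shaped} and \emph{well-$\pi$-shaped} (reflecting a word interchanges $X_a^L \leftrightarrow Y_a^R$, $X_a^R \leftrightarrow Y_a^L$, and $\mp$-form $\leftrightarrow \pm$-form) to reduce the work to the three factorizations $F \in \{X_a^L, X_a^R, C_{a,b}\}$. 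The base case $\gamma = \varepsilon$ or $\gamma \in \Sigma$ is immediate because every admissible pair is well-c-shaped there.

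For $\gamma = \alpha\beta$ I would split according to where $l$ and $r$ lie. If both borders sit inside (the copy of) $\subs{\alpha}{\omega+\omega^*}$, or both inside $\subs{\beta}{\omega+\omega^*}$, the factorization stays within that subterm and the induction hypothesis applies directly. The genuinely new situation is the c-branching split, where well-c-shapedness records that $l=-\infty$ or $l$ is in $\mp$-form, and $r=+\infty$ or $r$ is in $\pm$-form; here $X_a^R$ moves only the left border to $X_a(w;l)$ and $X_a^L$ moves only the right border to $X_a(w;l)$, and \autoref{lem:forms} guarantees that the $\mp$-form of $l$ is inherited by $X_a(w;l)$, so the new pair is again well-c-shaped (possibly now lying entirely inside one subterm, in which case one recurses). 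For $\gamma = (\alpha)^\pi$ the bookkeeping is carried out on the $\pi$-values $n,m$ of $l=(l',n)$, $r=(r',m)$. When a factorization keeps both borders in the same copy of $\alpha$, the value component is unchanged and I recurse into $\alpha$ via the induction hypothesis; when it separates them into distinct copies, I must re-establish that the left sub-position is in $\mp$-form and the right sub-position is in $\pm$-form, and this is exactly what \autoref{lem:forms} supplies together with the preliminary observation, since a first-$a$ position reached from the start of a copy is in $+$-form and a last-$b$ position reached from the end of a copy is in $-$-form. The corner cases $l=-\infty$ or $r=+\infty$ fall under well-c-shapedness and are dispatched by the same form computation.

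The step I expect to be the main obstacle is the $C_{a,b}$ factorization inside a $\pi$-power. Here $(l,r)\cdot C_{a,b}$ moves the right border to the first $a$ after $l$ and the left border to the last $b$ before that $a$, and one must argue that these two positions branch compatibly with either well-c-shapedness or well-$\pi$-shapedness. The crucial point, which resolves the apparent danger that the new right border is merely a first-$a$ position (a priori only $\mp$-form, not $\pm$-form), is copy-locality: because $Y_b^R$ is only applied when $b\in\alphabet(w_{(l,\,X_a(w;l))})$, the last $b$ is forced into the same $\pi$-copy as the first $a$ unless the first $a$ already lies in a strictly later copy than $l$. Consequently a genuine top-level $\pi$-branching arises only when the first $a$ is the first $a$ of a fresh copy (so its sub-position is $X_a(u;-\infty)$, in $+$-form, hence in $\pm$-form) and the last $b$ is the last $b$ of the preceding copy (so its sub-position is $Y_b(u;+\infty)$, in $-$-form, hence in $\mp$-form), which matches the well-$\pi$-shaped clause exactly; in all remaining configurations the two borders stay in one copy and one recurses into $\alpha$. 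Carefully matching each configuration to the correct defining clause and checking that none of the sub-$\pi$-term recursions violates the $\mp$/$\pm$-form invariant is the bulk of this routine but lengthy verification, and it relies throughout on \autoref{lem:forms} and the form-inclusion observation above.
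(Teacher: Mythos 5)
Your proof is correct and, for the single factorizations $X_a^D$ and $Y_a^D$, follows essentially the same route as the paper: a structural induction on $\gamma$, a left--right symmetry reduction, and repeated appeals to \autoref{lem:forms} together with the (correct) auxiliary observation that $+$-form implies both $\mp$-form and $\pm$-form. The one genuine difference is your treatment of $C_{a,b}$, which you single out as the main obstacle and attack by a direct case analysis of where the first $a$ and the last $b$ land relative to the $\pi$-copies. The paper instead disposes of $C_{a,b}$ in one sentence: since $(l,r)\cdot C_{a,b} = \bigl((l,r)\cdot X_a^L\bigr)\cdot Y_b^R$ --- the definedness condition for $C_{a,b}$ guarantees both that the intermediate pair $(l, X_a(w;l))$ exists and that the last $b$ of $w_{(l,r)}$ is exactly the last $b$ before the first $a$ --- preservation of well-shapedness under $C_{a,b}$ is an immediate consequence of preservation under $X_a^L$ and under $Y_b^R$. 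Your copy-locality analysis (a top-level $\pi$-branching after $C_{a,b}$ forces the new right border to be $X_a(u;-\infty)$, hence $+$-form, hence $\pm$-form, and the new left border to be $Y_b(u;+\infty)$, hence $-$-form, hence $\mp$-form) is sound as far as it goes, but it is also the least fully verified part of your sketch, and it buys nothing that the two-step composition does not already give for free. I would adopt the composition argument; it eliminates the longest and most error-prone portion of your planned verification.
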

\begin{proof}
  The second part follows from the first since $(-\infty, +\infty)$ is well-c-shaped by definition.

  Let $(l, r)$ be well-shaped and let $p = X_a(w; l)$ be defined for an $a \in \Sigma$. Due to
  symmetry, it remains to show that $(l, p)$ and $(p, r)$ are well-shaped. Note that this also
  includes the $C_{a, b}$ factorization.

  First, consider $(l, p)$. If $l = -\infty$, then $p$ is in $+$-form by \autoref{lem:forms} and,
  therefore, also in $\pm$-form. This yields that $(-\infty, p)$ is well-c-shaped. Thus, we may
  safely assume that $l \neq -\infty$ and proceed by induction on the structure of $\gamma$.
  With this assumption, the cases $\gamma = \varepsilon$ and $\gamma \in \Sigma$ cannot occur.

  If $\gamma = \alpha \beta$ for two $\pi$-terms $\alpha$ and $\beta$, define $u =
  \subs{\alpha}{\omega + \omega^*}$ and $v = \subs{\beta}{\omega + \omega^*}$. If $l$ and $r$ are
  both in $\dom{u}$, then we know, by definition of well-shapedness, that $(l, r)$ is well-shaped
  with respect to $\alpha$. We also know that $p = X_a(u; l)$ since $p$ must be between $l$ and $r$.
  By induction, we have that $(l, p)$ is well-shaped with respect to $\alpha$, which yields that
  $(l, p)$ is also well-shaped with respect to $\gamma$. If $l \in \dom{v}$ and $r \in \dom{v} \uplus
  \smallset{+\infty}$, we can apply a similar argument. For $l \in \dom{u}$ and $r \in \dom{v} \uplus
  \smallset{+\infty}$, we know that $(l, r)$ is well-c-shaped and that $l$ is in $\mp$-form with
  respect to $\alpha$. This yields that $(l, +\infty)$ is well-c-shaped with respect to $\alpha$. By
  induction, we then have that $(l, p) = (l, X_a(u; l))$ is well-shaped with respect to $\alpha$
  if $p \in \dom{u}$. Well-shapedness with respect to $\gamma$ follows in both cases, i.\,e.\ $(l,
  p)$ is well-c-shaped or $(l, p)$ is well-$\pi$-shaped, by definition of well-shapedness. If $p \in
  \dom{v}$, then we have $p = X_a(v; -\infty)$, which is in $+$-form (and, thus, in $\pm$-form) by
  \autoref{lem:forms}. By definition, we have the well-c-shapedness of $(l, p)$ with respect to
  $\gamma$.

  If $\gamma = (\alpha)^\pi$ for a $\pi$-term $\alpha$, then define $u = \subs{\alpha}{\omega + \omega^*}$.
  We can write $l = (l', m)$ and $p = (p', k)$ for some $m, k \in \Nat \uplus -\Nat$ and
  $l', p' \in \dom{u}$. For $k = m$, we have $p' = X_a(u; l')$. This yields well-shapedness with
  respect to $\alpha$ of $(l', p')$ by induction. By definition, we also have well-shapedness with
  respect to $\gamma$. For $k = m + 1$ (the only other possible case), we know $p' = X_a(u;
  -\infty)$ which is in $+$-form with respect to $\alpha$ by \autoref{lem:forms} and, therefore,
  also in $\pm$-form. If $r = +\infty$, then $(l, r)$ had to be well-c-shaped and $l$ has to be in
  $\mp$-form. Then, $(l, p)$ is well-$\pi$-shaped. If $r = (r', n)$ for a $n \in \Nat \uplus -\Nat$
  and $r' \in \dom{u}$, then $n$ must be greater than $m$ (with respect to $\omega + \omega^*$),
  because it must be greater than or equal to $k$. Therefore, $(l, r)$ has to be well-$\pi$-shaped
  and $l$ has to be in $\mp$-form. Again, $(l, p)$ is well-$\pi$-shaped then, which concludes the
  proof that $(l, p)$ is always well-shaped.

  Next, consider $(p, r)$. If $r = +\infty$, then $(l, r)$ has to be well-c-shaped. This implies that
  $l = -\infty$ or $l$ is in $\mp$-form. In either case, we have that $p = X_a(w; l)$ is in
  $\mp$-form by \autoref{lem:forms}. Thus, $(p, +\infty)$ is well-c-shaped. Again, we can safely
  assume that $r \neq +\infty$ and proceed by induction on the structure of $\gamma$ where the
  cases $\gamma = \varepsilon$ and $\gamma \in \Sigma$ do not occur.

  If $\gamma = \alpha \beta$ for two $\pi$-terms $\alpha$ and $\beta$, then define $u =
  \subs{\alpha}{\omega + \omega^*}$ and $v = \subs{\beta}{\omega + \omega^*}$. The case $l \in
  \dom{u} \uplus \smallset{-\infty}$ and $r \in \dom{u}$ and the case $l, r \in \dom{v}$ are similar
  to the argumentation for $(l, p)$. For $l \in \dom{u} \uplus \smallset{-\infty}$ and $r \in
  \dom{v}$, $(l, r)$ has to be well-c-shaped. Therefore, $r$ has to be in $\pm$-form with respect to
  $\beta$ and $(-\infty, r)$ is well-c-shaped with respect to $\beta$. Induction yields the
  well-shapedness with respect to $\beta$, and, thus, also with respect to $\gamma$, of $(p, r)$ if
  $p \in \dom{v}$. If $p \in \dom{u}$, we observe that well-c-shapedness of $(l, r)$ yields $l =
  -\infty$ or $l$ in $\mp$-form. In either case, $p$ is in $\mp$-form by \autoref{lem:forms}. Thus,
  $(p, r)$ is well-c-shaped with respect to $\gamma$.

  If $\gamma = (\alpha)^\pi$ for a $\pi$-term $\alpha$, define $u = \subs{\alpha}{\omega +
  \omega^*}$. We can write $p = (p', k)$ and $r = (r', n)$ for $k, n \in \Nat \uplus -\Nat$. For
  $k = n$, distinguish: if $l = -\infty$, then $(l, r)$ is well-c-shaped and $r$ has to be in
  $\pm$-form with respect to $\gamma$. Therefore, $r'$ has to be in $\pm$-form with respect to
  $\alpha$. If $l = (l', m)$ for $l' \in \dom{u}$ and $m \in \Nat \uplus -\Nat$ with $m <_{\omega +
  \omega^*} k = n$, then $(l, r)$ is well-$\pi$-shaped and $r'$, again, is in $\pm$-form. In both
  cases, this yields that $(-\infty, r')$ is well-c-shaped with respect to $\alpha$. Induction yields
  that $(p', r') = (X_a(u; -\infty), r')$ is well-shaped with respect to $\alpha$. Since $k = n$,
  this implies that $(p, r)$ is well-shaped. If $l = (l', m)$ but $m = k = n$, then $(l', r')$ has to
  be well-shaped with respect to $\alpha$ and so has to be $(p', r')$ by induction, which again means
  that $(p, r)$ is well-shaped with respect to $\gamma$. If $k <_{\omega + \omega^*} n$ and $l =
  -\infty$, then $(l, r)$ is well-c-shaped, $r$ is in $\pm$-form and $p'$ is in $+$-form by
  \autoref{lem:forms}. Therefore, $(p, r)$ is well-$\pi$-shaped. If $l = (l', m)$ for $l' \in
  \dom{u}$ and $m \in \Nat \uplus -\Nat$ and $m \leq_{\omega + \omega^*} k <_{\omega + \omega^*} n$,
  then $l'$ and, thus, also $p'$ by \autoref{lem:forms} has to be in $\mp$-form while $r'$ has to
  be in $\pm$-form. This concludes the proof because, then, $(p, r)$ is well-$\pi$-shaped.
\end{proof}

By the previous lemma, we know that there is maximally one switching between values form $-\Nat$ to
values from $\Nat$ in $l$ below the branching position and that the same -- but in reverse -- is
true for $r$ when we consider a pair $(l, r)$ which arises from $(-\infty, +\infty)$ by applying a
sequence of factorizations from $F_\Sigma$. However, the algorithm described in the proof of
\autoref{thm:RelationsAreDecidable} performs a normalization after each factorization step. So far,
we have ignored this normalization but the next lemma states that normalization preserves
well-shapedness.

\begin{lemma}
  Let $(l, r)$ be a well-shaped, normalizable pair of positions in $\subs{\gamma}{\omega + \omega^*}$
  for a $\pi$-term $\gamma$ over $\Sigma$. Then, $\overline{(l, r)}^\gamma$ is well-shaped.
\end{lemma}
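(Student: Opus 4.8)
The plan is to prove the statement by structural induction on the $\pi$-term $\gamma$, following exactly the case distinction in the definition of the normalization $\overline{(l,r)}^\gamma$. The whole argument rests on one elementary observation about how normalization acts on a single position: when we normalize a \emph{one-sided} pair, i.e.\ compute the first component $\bar{l}$ of $\overline{(l,+\infty)}^\gamma$ or the second component $\bar{r}$ of $\overline{(-\infty,r)}^\gamma$, the value attached to each relevant $\pi$-position is only rescaled within its own sign class: a value in $\Nat$ is sent to $1$ and the value $-1$ is kept as $-1$ (and symmetrically on the right). Since each of the four forms ($+$-form, $-$-form, $\mp$-form and $\pm$-form) depends solely on the pattern of signs of the values along the path from the outermost to the innermost relevant $\pi$-position, this rescaling leaves the form unchanged.

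I would isolate this as a preliminary claim: if $(l,+\infty)$ is normalizable with respect to $\gamma$, then $\bar{l}$ is in $+$-form whenever $l$ is, and in $\mp$-form whenever $l$ is; dually, if $(-\infty,r)$ is normalizable, then $\bar{r}$ inherits $-$-form and $\pm$-form from $r$. This claim is itself proved by a short structural induction on $\gamma$: the cases $\gamma=\varepsilon$ and $\gamma\in\Sigma$ are vacuous, the concatenation case passes directly to the relevant subterm, and for $\gamma=(\alpha)^\pi$ with $l=(l',n)$ one checks that $\bar{n}\in\Nat\iff n\in\Nat$ (using that normalizability forces $n\in\Nat\uplus\smallset{-1}$) together with the inductive hypothesis applied to $\overline{(l',+\infty)}^\alpha$. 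I would also record the trivial facts that normalizing $-\infty$ yields $-\infty$ and normalizing $+\infty$ yields $+\infty$, and that any pair with a border equal to $-\infty$ or $+\infty$ is automatically well-c-shaped in the situations where this is needed.

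With this in hand the main induction is bookkeeping. For $\gamma=\alpha\beta$ there are two kinds of subcase. If $l$ and $r$ lie in the same factor, then $\overline{(l,r)}^\gamma=\overline{(l,r)}^{\alpha}$ (or $\overline{(l,r)}^{\beta}$), and both normalizability and well-shapedness of $(l,r)$ with respect to $\gamma$ coincide, by definition, with the corresponding notions for that factor; the inductive hypothesis applies directly and the conclusion lifts back unchanged. If instead $l$ belongs to $\alpha$ and $r$ to $\beta$, the pair is necessarily well-c-shaped, so $l=-\infty$ or $l$ is in $\mp$-form and $r=+\infty$ or $r$ is in $\pm$-form; normalizability of $(l,r)$ gives normalizability of $(l,+\infty)$ with respect to $\alpha$ and of $(-\infty,r)$ with respect to $\beta$, and the preliminary claim yields that $\bar{l}$ is still $-\infty$ or in $\mp$-form and $\bar{r}$ is still $+\infty$ or in $\pm$-form, i.e.\ the normalized pair is again well-c-shaped. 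The case $\gamma=(\alpha)^\pi$ is handled by running through the clauses of the normalization definition; in the clauses where $l$ and $r$ have equal outer values ($n=m$) I pass to $\overline{(l',r')}^\alpha$ and invoke the inductive hypothesis, noting that the conclusion only claims \emph{well-shapedness}, so a possible switch between well-c-shaped and well-$\pi$-shaped at the inner level is harmless; in the clauses with $n\neq m$ (the split $n\in\Nat$, $m\in-\Nat$ and the adjacent $m=n+1$) the pair is well-$\pi$-shaped, and I use the preliminary claim to keep $l'$ in $\mp$-form and $r'$ in $\pm$-form after normalization.

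The part I expect to require the most care is precisely this last bookkeeping for $\gamma=(\alpha)^\pi$: one must match each normalization clause against the corresponding clause of well-c-shapedness or well-$\pi$-shapedness, verify that the normalized outer values (namely $1$, or $1$ and $2$, or $1$ and $-1$) fall into the right clause, and confirm in every clause that the sub-pair fed to the recursion is itself normalizable, so that the inductive hypothesis and the preliminary claim are even applicable. The genuine subtlety is that normalization can turn a well-c-shaped pair into a well-$\pi$-shaped one and vice versa at the inner recursion; since the statement only asserts well-shapedness this causes no problem, but it is exactly the point where a naive attempt to preserve the \emph{flavour} of the shape would break down.
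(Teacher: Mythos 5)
Your proposal is correct and follows essentially the same route as the paper's proof: the same induction on the structure of $\gamma$, the same case split for concatenation (same factor versus cross-factor) and for the $\pi$-power clauses, and the same key observation that normalizing a one-sided pair preserves the form of the surviving position. The only difference is presentational: you isolate that form-preservation fact as an explicitly proved preliminary claim, whereas the paper asserts it inline without a separate proof.
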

\begin{proof}
  Let $\overline{(l, r)}^\gamma = (\bar{l}, \bar{r})$. In the special case $l = -\infty$ and $r =
  +\infty$, we have $\overline{(l, r)}^\gamma = (l, r)$ and we are done. For the other cases, we
  proceed by induction over $\gamma$.

  For $\gamma = \varepsilon$ and $\gamma \in \Sigma$, we have that $\overline{(l, r)}^\gamma$ is
  well-c-shaped. In the case $\gamma = \alpha\beta$ for two $\pi$-terms $\alpha$ and $\beta$, we
  distinguish: if $l \in \{ -\infty \} \uplus \dom{\subs{\alpha}{\omega + \omega^*}}$ and $r \in
  \dom{\subs{\beta}{\omega + \omega^*}} \uplus \{ + \infty \}$, we have that $(l, r)$ must be
  well-c-shaped, that $\bar{l}$ is given by the first component of $\overline{(l, +\infty)}^\alpha$
  and that $\bar{r}$ is given by the second component of $\overline{(-\infty, r)}^\beta$. For $l =
  -\infty$, we also have $\bar{l} = -\infty$ and, for $r = +\infty$, we also have $\bar{r} =
  +\infty$. If $l \neq -\infty$, then $l$ must be in $\mp$-form. Note that normalization preserves
  the form of $l$ when normalizing $(l, +\infty)$, so $\bar{l}$ is in $\mp$-from as well.
  Symmetrically, $\bar{r}$ must be in $\pm$-form if $r \neq +\infty$. Thus, $\overline{(l,
  r)}^\gamma$ is well-c-shaped. For $l \in \{ -\infty \} \uplus \dom{\subs{\alpha}{\omega +
  \omega^*}}$ and also $r \in \dom{\subs{\alpha}{\omega + \omega^*}}$, we have $\overline{(l,
  r)}^\gamma = \overline{(l, r)}^\alpha$, which is well-shaped by induction. The same argument
  proves the symmetric case $l \in \dom{\subs{\beta}{\omega + \omega^*}}$ and $r \in
  \dom{\subs{\beta}{\omega + \omega^*}} \uplus \{ +\infty \}$.

  The remaining case is $\gamma = (\alpha)^\pi$ for a $\pi$-term $\alpha$. If we have $l = -\infty$
  or $r = +\infty$, then we know that $l$ is in $\mp$-form or that $r$ is in $\pm$-form,
  respectively, because $(l, r)$ has to be well-c-shaped. As before, normalizing $(-\infty, r)$ or
  $(l, +\infty)$ preserves this form, which makes the result well-c-shaped as well. Therefore, we
  have $l = (l', n)$ and $r = (r', m)$ for $n, m \in \Nat \uplus -\Nat$ and positions $l', r' \in
  \dom{\subs{\alpha}{\omega + \omega^*}}$. For $n = m$, we have $\bar{l} = (\bar{l'}, 1)$ and
  $\bar{r} = (\bar{r'}, 1)$ with $(\bar{l'}, \bar{r'}) = \overline{(l', r')}^\alpha$. By induction,
  we have that $\overline{(l', r')}^\alpha$ is well-c-shaped or well-$\pi$-shaped with respect to
  $\alpha$. In either case, $(\bar{l}, \bar{r})$ is also well-c-shaped or well-$\pi$-shaped with
  respect to $\gamma$. For $n \neq m$, the pair $(l, r)$ has to be well-$\pi$-shaped, $l'$ must be
  in $\mp$-form and $r'$ must be in $\pm$-form. Additionally, we have $\bar{l} = (\bar{l'},
  \bar{n})$ and $\bar{r} = (\bar{r'}, \bar{m})$ for $\bar{n}, \bar{m} \in \Nat \uplus -\Nat$ and
  positions $\bar{l'}, \bar{r'} \in \dom{\subs{\alpha}{\omega + \omega^*}}$. Note that we also have
  $\bar{n} \neq \bar{m}$ (in both of the cases which can occur when normalizing). Furthermore,
  normalization of $(l', +\infty)$ (with respect to $\alpha$) preserves the $\mp$-form of $l'$ and
  normalization of $(-\infty, r')$ preserves the $\pm$-form of $r'$, i.\,e.\ we have $\bar{l'}$ in
  $\mp$-form and $\bar{r'}$ in $\pm$-form, which makes $\overline{(l, r)}^\gamma$ well-$\pi$-shaped.
\end{proof}

Combining the previous two lemmas shows the following. Suppose we start with the position pair
$(-\infty, +\infty)$ and apply a single factorization $F_1 \in F_\Sigma$, then we get $(l'_1, r'_1) =
(-\infty, +\infty) \cdot F_1$, which is well-shaped. If we then normalize $(l'_1, r'_1)$, the
resulting pair $(l_1, r_1) = \overline{(l'_1, r'_1)}^\gamma$ is also well-shaped. We can continue
with another factorization and normalize again; the result $(l_1, r_2)$ will still be well-shaped.
Therefore, all the pairs of positions $(l, r)$ which are states in the deterministic finite
automaton constructed in the proof of \autoref{thm:RelationsAreDecidable} are well-shaped and
normalized.

One last small observation is necessary before we discuss the details of storing those pairs:
by definition of the normalization, $l$ and $r$ may only have values from $\smallset{1,
-1}$ for $\pi$-positions which are hierarchically lower than the branching position. Therefore, it is
sufficient to store the potential $\pi$-position at which the values switch (from $-1$ to $1$
for $l$ or form $1$ to $-1$ for $r$).

\paragraph*{Storing a Well-Shaped, Normalized Pair.}
Summing up all of our observations results in the situation which is schematically represented in
\autoref{fig:valuesOflAndr}.
\begin{figure}[h]
  \begin{center}
    \begin{tikzpicture}
       [sibling distance=7ex, level distance=10ex,
        every node/.style={fill, circle, minimum size=3pt, inner sep=0, outer sep=1mm},
        labellike/.style={shape=rectangle, draw=none, fill=none, inner sep=3pt},
        every label/.style={labellike},
        squiggly/.style={edge from parent/.style={decorate, decoration={coil, aspect=0}, draw},
        level distance=10ex},
        normal/.style={edge from parent/.style={draw}, level distance=10ex}]
      \node {}
      child[squiggly]{
        node[label=left:{Branching, possible values:},
             label=right:{$(1, 2)$ and $(1, -1)$}] {}
          child[squiggly]{ node[label=left:{Switching, value: $1$}] {}
            child[squiggly]{ node[label=below:$l$] {}
              edge from parent [->] node[left, labellike] {values: $1$} }
            child[draw=none, edge from parent/.style={draw=none}]{ {} }
            edge from parent [->] node[left, labellike] {values: $-1$}
          }
          child[squiggly]{ node[label=right:{Switching, value: $-1$}] {}
            child[draw=none, edge from parent/.style={draw=none}]{ {} }
            child[squiggly]{ node[label=below:$r$] {}
              edge from parent [->] node[right, labellike] {values: $-1$} }
            edge from parent [->] node[right, labellike] {values: $1$}
          }
          edge from parent [->] node[right, labellike] {values: $(1, 1)$}
      };
    \end{tikzpicture}
  \end{center}
  \caption{Schematic representation of the possible values of a normalized pair $(l,
    r)$}\label{fig:valuesOflAndr}
\end{figure}
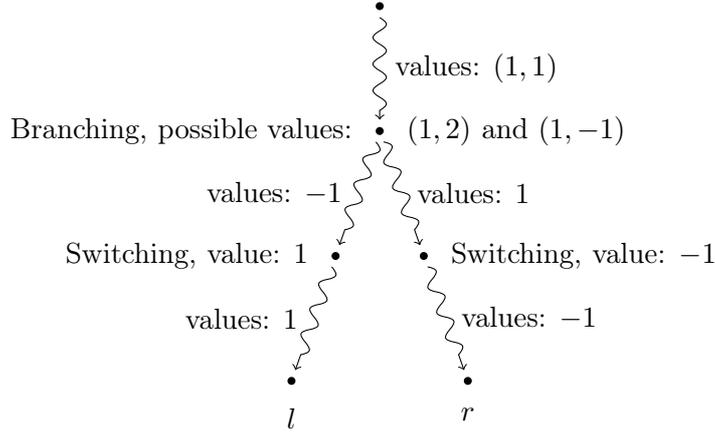
Therefore, we can construct the values of $l$ and $r$ for a normalized, well-shaped pair $(l, r)$ of positions in
$\subs{\gamma}{\omega + \omega^*}$ at all relevant $\pi$-positions in the $\pi$-term $\gamma$
from the variables specified in \autoref{tab:variablesForLAndR}.
\begin{table}
  \centering
  \begin{tabular}{|l|c|c|}
    \hline
    \textbf{Name} & \textbf{Values} & \textbf{Size} \\ \hhline{===}
    \texttt{ShapeType} & $\smallset{ \text{\enquote{well-c-shaped}},
      \text{\enquote{well-$\pi$-shaped}} }$ & $\landO( 1 )$\\ \hline
    \texttt{BranchPosition} & a $\pi$-position or $+\infty$ & $ \landO(\log n) $\\ \hline
    \texttt{BranchValues} & $\smallset{(1, 2), (1, -1)}$ & $\landO(1)$ \\ \hline
    \texttt{lSwitchPosition} & a $\pi$-position or $\bot$ & $ \landO(\log n) $\\ \hline
    \texttt{rSwitchPosition} & a $\pi$-position or $\bot$ & $ \landO(\log n) $\\ \hline
    \texttt{lEndPosition} & a $\Sigma$-position or $-\infty$ & $ \landO(\log n) $\\ \hline
    \texttt{rEndPosition} & a $\Sigma$-position or $+\infty$ & $ \landO(\log n) $\\ \hline
  \end{tabular}
  \caption{The variables required to compute the values at $\pi$-positions and their sizes.}
  \label{tab:variablesForLAndR}
\end{table}

We are going to explain the variables in more detail. Obviously, \texttt{ShapeType} gets the value
\enquote{well-c-shaped} if $(l, r)$ is well-c-shaped and the value \enquote{well-$\pi$-shaped}
if $(l, r)$ is well-$\pi$-shaped. For a well-c-shaped pair, we store the hierarchically lowest
$\pi$-position for which $l$ and $r$ share the same value in \texttt{BranchPosition}. If $l$ and
$r$ differ already in the hierarchically highest $\pi$-position, we store $+\infty$ in
\texttt{BranchPosition}. For well-$\pi$-shaped pairs, we store the $\pi$-position at which
$l$ and $r$ branch in \texttt{BranchPosition} and their respective values there in
\texttt{BranchValues}. The value of \texttt{BranchValues} is not relevant in other cases.
If there is a $\pi$-position at which the values of $l$ switch from $-1$ to $1$ (that position
is bound to be hierarchically lower than the branching position), then we store the position of the
first occurrence of $1$ in \texttt{lSwitchPosition}. \texttt{rSwitchPosition} stores the (potential)
corresponding position for $r$. Finally, in \texttt{lEndPosition} and \texttt{rEndPosition} we store
the position in the $\pi$-term $\gamma$ which corresponds to $l$ or $r$, respectively.

One may verify that all these values can be stored in the size specified in the table, where $n$ is
the length of the $\pi$-term $\gamma$ seen as a finite word in $\Sigma \uplus \smallset{\,{(},\, {)},\, {{}^\pi}\,}$.

The variables are sufficient to compute the values of $l$ and $r$ at all relevant
$\pi$-positions deterministically within logarithmic space bounds. We only discuss how to do this
for $r$, as this version can easily be adapted for $l$. Assume we want to compute the value of $r$
at a $\pi$-position given in \texttt{piPos}. Notice that \texttt{piPos} is relevant for $r$ if
and only if \texttt{rEndPosition} lies in between the opening and closing parenthesis belonging to
\texttt{piPos}. Because finding the matching left or right parenthesis is a simple matter of
counting the opening and closing parentheses this can be done deterministically in logarithmic
space. Thus, we may safely assume that our routine is only ever called for $\pi$-positions which
are relevant for $r$. To compute the value of $r$ at \texttt{piPos}, we only have to know to which
part of $r$ it belongs with respect to the schematic representation in \autoref{fig:valuesOflAndr}.
The first part where all values of $r$ are equal to $1$ consists of all $\pi$-positions which
are hierarchically higher than \texttt{BranchPosition} (assuming that it is not equal to $\bot$),
including \texttt{BranchPosition} if \texttt{ShapeType} is \enquote{well-c-shape} or excluding if
\texttt{BranchPosition} is \enquote{well-$\pi$-shaped}. Because we already know that
both, \texttt{BranchPosition} and \texttt{piPos} are relevant for $r$, checking whether one is
hierarchically higher than the other can be done by comparing their positions in $\gamma$, which
is possible in logarithmic space. The hierarchically higher one is to the right of the lower one.
If \texttt{piPos} belongs to the first part, we can return $1$ immediately. Next, we check whether
\texttt{BranchPosition} is equal to \texttt{piPos} and \texttt{ShapeType} is
\enquote{well-$\pi$-shaped}. If this is the case, we can return the value $2$ or $-1$ depending
on the value of \texttt{BranchValues}. The next part is from \texttt{BranchPosition} (excluding) to
\texttt{rSwitchPosition} (excluding). Again, we can check whether \texttt{piPos} belongs to this
part by comparing the position in the $\pi$-term and return $1$ immediately. If \texttt{piPos} was
not in any part so far, we know that it belongs to the last part from \texttt{rSwitchPosition}
(including) onwards and can return $-1$. If \texttt{BranchPosition} is $+\infty$ or
\texttt{rSwitchPosition} is $\bot$, then the corresponding parts in \autoref{fig:valuesOflAndr}
simply do not exist and we can omit the respective checks.

\paragraph*{Compute a Follow-Up Pair.}
Now that we know how to store the well-shaped, normalized pair $(l, r)$ efficiently, we need to find a way to
compute the normalized follow-up pair if we apply an element from $Z_\Sigma^D = \{ X_a^D, Y_a^D \mid
a \in \Sigma, D \in \{ L, R \} \}$ ($C_{a, b}$
needs to be handled a bit differently; see below). By symmetry, we
can restrict our considerations to elements from $X_\Sigma^D$, which means that we have to find the
first $a$-position for an $a \in \Sigma$ on the right of the position $l$ in $\subs{\gamma}{\omega
+ \omega^*}$ for the $\pi$-term $\gamma$. For this, we use the additional variables from
\autoref{tab:variablesForFollowUp}.
\begin{table}[h]
  \centering
  \begin{tabular}{|l|c|c|}
    \hline
    \textbf{Name} & \textbf{Values} & \textbf{Size} \\ \hhline{===}
    \texttt{CurrentPPosition} & a position in $\gamma$ & $ \landO(\log n) $\\ \hline
    \texttt{pBranchPosition} & a $\pi$-position or $+\infty$ & $ \landO(\log n) $\\ \hline
    \texttt{OpenParentheses} & a value to count open parentheses & $ \landO(\log n) $\\ \hline
  \end{tabular}
  \caption{The variables required to compute the follow-up pair and their sizes.}
  \label{tab:variablesForFollowUp}
\end{table}

\label{alg:computeFollowUpPair}
We start by setting \texttt{pBranchPosition} to $\bot$ and \texttt{OpenParentheses} to $0$.
\texttt{Cur\-rent\-P\-Pos\-ition} gets assigned the next position to the right of the value of
\texttt{lEndPosition}. The algorithm now iteratively moves \texttt{CurrentPPosition} to the right
one position at a time. If \texttt{CurrentPPosition} reaches an $a$-position, we are done. If
\texttt{CurrentPPosition} reaches the end of $\gamma$, then we know that there is no further $a$ and
we can stop. If \texttt{CurrentPPosition} reaches an opening parenthesis, then we increment
\texttt{OpenParentheses}. If \texttt{Cur\-rent\-P\-Pos\-ition} reaches a closing parenthesis and
\texttt{pBranchPosition} is $\bot$, we distinguish: if \texttt{Open\-Par\-en\-theses} is $> 0$, then we
decrement it and move on; if it is $= 0$, then we know that the corresponding $\pi$-position is
relevant for $l$. We check if the value of $l$ at that position is equal to $-1$. If that is the
case, we can simply move on. If the value is from $\Nat$ or $< -1$, then we set
\texttt{pBranchPosition} to the $\pi$-position which corresponds to the closing parenthesis and
move \texttt{CurrentPPosition} to the matching opening parentheses using \texttt{OpenParentheses} to
count the number of opening and closing parentheses. From that position we continue with the normal
algorithm. If we then reach a closing parenthesis and \texttt{pBranchPosition} is equal to
\texttt{CurrentPPosition}, then we set \texttt{pBranchPosition} to $\bot$ and \texttt{OpenPar\-en\-the\-sis}
to $0$ and continue normally. Any other closing parenthesis which is reached while
\texttt{pBranchPosition} is not equal to $\bot$ simply gets ignored.

\begin{example}
  The only interesting part of the algorithm is that when a closing parenthesis is reached.
  Therefore, we explain this part in more detail. Suppose we are in the following
  situation\footnote{The numbers here do not have a special meaning other than to tell the
  parenthesis pairs and their exponents apart from one another.} and want to move to the next $a$:
  \begin{center}
    \begin{tikzpicture}[every node/.style={text height=.8em, text depth=.2em}]
      \matrix [rectangle, draw, matrix of math nodes, ampersand replacement=\&,
      inner sep=2pt] (gamma) {
        a \& (_1 \& (_2 \& a \& )_2 \& {}^{\pi_2} \& b \& (_3 \& c \& b \& (_4 \& b \& )_4 \& {}^{\pi_4} \& )_3 \& {}^{\pi_3} \& )_1 \& {}^{\pi_1} \& a \& (_5 \& b \& )_5 \& {}^{\pi_5} \& c \\
      };
      \node[below=0.2cm of gamma-1-1, xshift=-0.5cm] {$l$:};
      \draw[-latex] ($(gamma-1-9.south) - (0, 0.75cm)$) -- ($(gamma-1-9.south) - (0, 0.2cm)$);
      \node[below=0.2cm of gamma-1-16] {$2$};
      \node[below=0.2cm of gamma-1-18] {$-1$};
    \end{tikzpicture}
  \end{center}
  The algorithm would move \texttt{CurrentPPosition} to the right until it reaches $(_4$ where
  \texttt{Open\-Parentheses} gets increased to $1$. When $)_4$ gets reached, \texttt{OpenParen\-theses}
  is set back to $0$, which triggers a different behavior when reaching $)_3$ in
  the next step: \texttt{pBranch\-Position} gets set to ${\pi_3}$ and \texttt{CurrentPPosition} gets
  moved to $(_3$ since $l$ has the value $2$ (and \emph{not} $-1$) at ${\pi_3}$. From there,
  \texttt{CurrentPPosition} moves back to the right until it reaches $)_3$ because $)_4$ simply
  gets ignored and because there is no $a$ between $(_3$ and $)_3$. At $)_3$, \texttt{OpenPrantheses}
  and \texttt{pBranchPosition} are reset to $0$ and $\bot$. At the next closing parenthesis $)_1$,
  the value of $l$ gets checked again. But this time it is equal to $-1$ and we continue to the
  right where we find the sought-after next $a$.

  If $l$ had had the value $-2$ at ${\pi_1}$, then the algorithm would have moved to $(_1$
  and the next $a$ would have been the one between $(_2$ and $)_2$. In that case,
  \texttt{pBranch\-Position} would have been set to ${\pi_1}$ when the final $a$ is reached.
\end{example}

When the algorithm stops (and \texttt{CurrentPPosition} is not moved beyond the end of $\gamma$)
then \texttt{CurrentPPosition} points to the position in $\gamma$ which corresponds to the next
$a$-position $p$ in $\subs{\gamma}{\omega + \omega^*}$. The values at the $\pi$-positions can be
reconstructed from the stored variables. Again, this can be done by dividing the $\pi$-positions
into parts for which the value is well-known. If a $\pi$-position is relevant for $p$ but not for
$l$, then the value of $p$ at that position has to be $1$.
\begin{example}
  In the last example ${\pi_2}$, ${\pi_4}$ and ${\pi_5}$ are not relevant for $l$. If
  \texttt{CurrentP\-Position} would end up within the pair of parentheses belonging to any of these positions,
  its value there would be $1$.
\end{example}
\noindent If \texttt{pBranchPosition} has a value other than $\bot$, then $p$ has the value of $l$
plus $1$ there. Note that the value of $l$ cannot be $-1$ in that case by the definition of the algorithm. For
$\pi$-positions which are relevant for $l$ and for $p$ and which are hierarchically higher than
the position in \texttt{pBranchPosition}, we know that the values of $l$ and $p$ are equal. At those
positions which are hierarchically lower than \texttt{pBranchPosition} and which are relevant for
$l$ and for $p$, the value of $p$ has to be $1$ just like in the first case. If
\texttt{pBranchPosition} is $\bot$, then the values of $p$ at positions relevant for $l$ and for $p$
are equal to the values of $l$.

Because -- as said -- it can be checked in logarithmic space whether a $\pi$-position is
relevant for $l$, $p$ or $r$, this yields a deterministic algorithm with logarithmic space bounds for
calculating the values for $p$ at all its relevant $\pi$-positions. It remains to test whether
$p$ is still strictly smaller than $r$ and to compute the new values of the variables in
\autoref{tab:variablesForLAndR} after an $X_a^D$-step and normalization. This test can be done by walking through all
$\pi$-positions in $\gamma$ from the right to the left and checking whether they are relevant for
$p$ and for $r$. If that is the case, we check their values. If the value of $p$ is strictly smaller
than that of $r$, we are done. If they are equal everywhere, then \texttt{CurrentPPosition} needs to
be to the left of \texttt{rEndPosition} in $\gamma$.

For calculating the new values for the variables,
a similar approach can be applied. Consider the (slightly more difficult) case of $X_a^R$ where
$p$ gets the new $l$. We walk through the $\pi$-positions in $\gamma$ from right to left and check their
relevance for $p$ and for $r$. If they are relevant for one or the other, we can compute the
values. As long as the values are equal (which means that the position is relevant for both, $p$
and $r$) we know that the $\pi$-positions belong to the part in \autoref{fig:valuesOflAndr}
where $l$ and $r$ share the value $1$. If there is a $\pi$-position where the values are
different, we can store that position in \texttt{BranchPosition} and update \texttt{BranchValues}
according to the value pair which normalization would yield. In that case, we can set
\texttt{ShapeType} to \enquote{well-$\pi$-shaped}. After the \texttt{BranchPosition}, we
have to check $p$ and $r$ individually for a change from values in $\Nat$ to values in $-\Nat$ or
vice versa and store the corresponding position. If the values at all positions have been equal so
far and we reach a position which is relevant for $r$ but not for $p$ (the other way round is not
possible because $p$ has to be smaller than $r$), we can set \texttt{ShapeType} to
\enquote{well-c-shaped} and \texttt{BranchPosition} to the last $\pi$-position where the values
were equal. \texttt{ShapeType} gets also set to \enquote{well-c-shaped} if we reach \texttt{rEndPos}
before there is a difference in the values.

\paragraph*{Special Handling of $C_{a, b}$.}
By \autoref{thm:relationsInTrotterWeil}, we have to test whether $\subs{\alpha}{\omega + \omega^*}
\allowbreak\equiv_m^\WI \subs{\beta}{\omega + \omega^*}$ holds in order to test whether $\alpha =
\beta$ holds in $\Rm[m + 1] \cap \Lm[m + 1]$. The definition of $\equiv_m^\WI$, however, does not
only rely on factorizations at the first or last $a$. It uses an additional special factorization
which behaves like first factorizing on the first $a$ and then factorizing on the last $b$ but
which is only possible if the first $a$ is to the right of the last $b$. As defined above, this
kind of factorization is represented by $C_{a, b}$.

Clearly, $(l, r) \cdot C_{a, b}$ can only be defined if $(l, r) \cdot X_a Y_b$ is defined. But the
other direction does not hold: it is possible that $(l, r) \cdot X_a Y_b$ is defined while $(l, r)
\cdot C_{a, b}$ is undefined.

We have described how we can compute the follow-up values of the variables in
\autoref{tab:variablesForLAndR} for a factorization at the first $a$ and, by symmetry, also for
factorizations at the last $b$ when we start with a factor which is given by a normalized pair
$(l, r)$. We can combine these two calculations into a single one for $C_{a, b}$. For this, we assume
that we have two instances of the variables in
\autoref{tab:variablesForFollowUp}, one instance for the factorization at the first $a$ and one
instance for the factorization at the last $b$. For both instances, we apply the normal algorithm
for the corresponding factorization. Afterwards, we check whether the position of the first $a$ is
to the left of $r$ and whether the position of the last $b$ is to the right of $l$ just like we did
before. For $C_{a, b}$, we simply add a third check which can be done in a similar manner to the
other checks: we check whether the position of the first $a$ is to the right of the position of the
last $b$. If this check fails, then we know that $(l, r) \cdot C_{a, b}$ is undefined, otherwise it
indeed \emph{is} defined.

\paragraph*{Decidability in Nondeterministic Logarithmic Space.}
Let us recapitulate the proof for decidability for the word problems for $\pi$-terms. First, we
saw that the actual problem we needed to decide was whether $\subs{\alpha}{\omega + \omega^*}$ and
$\subs{\beta}{\omega + \omega}$ are equivalent with respect to the relation belonging to the variety
in question. To decide this, we constructed a deterministic finite automata $\mathcal{A}_\alpha$ and
$\mathcal{A}_\beta$ for each of the input $\pi$-terms $\alpha$ and $\beta$. The automaton
$\mathcal{A}_\alpha$ accepted exactly those factorization sequences $F \in F_\Sigma^*$ for which
$\subs{\alpha}{\omega + \omega^*} \cdot F$ (or $\subs{\beta}{\omega + \omega^*} \cdot F$ in the case
of $\mathcal{A}_\beta$) is defined. Additionally, we also constructed a deterministic
finite automaton $\mathcal{B}$ which accepted those factorization sequences that need to be tested
for the respective relation. Finally, we created for $\mathcal{A}_\alpha$ and for $\mathcal{A}_\beta$
an automaton accepting the intersection with the sequences accepted by $\mathcal{B}$. For these
automata, we checked the symmetric difference for emptiness.

Checking the symmetric difference can be done in nondeterministic logarithmic space by a naïve (iterative) guess and check algorithm.
The automaton $\mathcal{B}$ is fixed for every variety in the Trotter-Weil Hierarchy and for $\DA$
(or can be constructed deterministically in space logarithmic in $m$ if one considers this as an
input). Creating an automaton for the intersection can be done in deterministic logarithmic space.
The only interesting operation is the construction of $\mathcal{A}_\alpha$ and $\mathcal{A}_\beta$.
With the ideas from this section, we can describe how to do this in deterministic logarithmic space.
\begin{description}
  \item[Input:] a $\pi$-term $\alpha$
  \item[Output:] $\mathcal{A}_\alpha$ (which accepts a subset of $F_\Sigma^*$)
  \item[Algorithm:]\mbox{}\\ \vspace{-\baselineskip}
      \begin{algorithmic}
        \ForAll{possible values $V$ of the variables in \autoref{tab:variablesForLAndR}}
          \ForAll{$F \in F_\Sigma$}
            \State
              \begin{varwidth}[t]{\linewidth}
                Compute the values $V'$ of the variables for the follow-up pair after\\
                application of $F$ and normalization.
              \end{varwidth}\vspace{0.1cm}
            \State If there is such a follow-up pair, then output the transition
              \begin{center}
                \begin{tikzpicture}[->, auto]
                  \node[state] (V) {$V$};
                  \node[state, right=of V] (V') {$V'$};
                  \path (V) edge node {$F$} (V');
                \end{tikzpicture}
              \end{center}
          \EndFor
        \EndFor
        \State Mark the state which represents $(-\infty, +\infty)$ as initial state.
        \State Mark any state as final.
      \end{algorithmic}
\end{description}
In difference to our previous construction, the states in this automaton are not the normalized
pairs of positions anymore. Instead, we are representing them by their corresponding values for the
variables in \autoref{tab:variablesForLAndR}. Of course, not all possible values of the variables
represent a valid normalized pair, but, since the corresponding states cannot be reached from the
initial state, these errors do not affect the result.

These considerations allow us to state the following theorem.
\begin{restatable}{theorem}{wordProblemsAreInNL}
  \label{thm:wordProblemsAreInNL}
  Each of the word problems for $\pi$-terms over $\Rm$, $\Lm$, $\Rm \vee \Lm$, $\Rm \cap \Lm$ and $\DA$
  can be solved by a nondeterministic Turing machine in logarithmic space (for every $m \in \Nat$).
\end{restatable}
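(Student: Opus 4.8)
The plan is to assemble the pieces developed throughout this section into a single log-space procedure, using the reduction chain established earlier. By \autoref{thm:relationsAndEquations} (and \autoref{cor:relationsAndDAEquations} for $\DA$), deciding whether $\alpha = \beta$ holds in one of the listed varieties is equivalent to deciding whether $\subs{\alpha}{\omega + \omega^*} \equiv_m^Z \subs{\beta}{\omega + \omega^*}$ holds for the appropriate $Z \in \{X, Y, XY, \WI\}$ (and, for $\DA$, whether $\subs{\alpha}{\omega + \omega^*} \equiv_m^{XY} \subs{\beta}{\omega + \omega^*}$ holds for all $m$). Following the decidability argument of \autoref{thm:RelationsAreDecidable}, I would reduce this further to an emptiness test on the symmetric difference of the languages accepted by $\mathcal{A}_\alpha \cap \mathcal{B}$ and $\mathcal{A}_\beta \cap \mathcal{B}$, where $\mathcal{A}_\gamma$ accepts exactly the factorization sequences applicable to $\subs{\gamma}{\omega + \omega^*}$ and $\mathcal{B}$ is the fixed automaton selecting the relevant sequences for the relation in question.

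The crux, and what the earlier lemmas are designed to deliver, is that $\mathcal{A}_\alpha$ and $\mathcal{A}_\beta$ admit an \emph{implicit} logarithmic-space description rather than being written down explicitly. First, I would invoke the results that every state reachable from $(-\infty, +\infty)$ by factorizations from $F_\Sigma$ followed by normalization is a well-shaped, normalized pair, so that it can be encoded by the constantly-many variables of \autoref{tab:variablesForLAndR}, each of size $\landO(\log n)$ where $n$ is the length of the input $\pi$-term; in particular each automaton has only polynomially many reachable states. Second, I would appeal to the follow-up computation described around \autoref{tab:variablesForFollowUp}, together with the special handling of $C_{a,b}$, to argue that, given the encoding of a state $(l,r)$ and a factorization $F \in F_\Sigma$, one can compute deterministically in space $\landO(\log n)$ the encoding of $\overline{(l,r) \cdot F}^\gamma$, or detect that $(l,r) \cdot F$ is undefined; correctness of maintaining the described factor throughout is guaranteed by \autoref{lem:normalizationMaintainsFactor}, while the form bookkeeping is justified by \autoref{lem:forms}. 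The automaton $\mathcal{B}$ is fixed for each variety (and constructible in space logarithmic in $m$ if $m$ is taken as input), so intersecting with it only adds an $\landO(1)$- (respectively $\landO(\log m)$-) sized component to the state.

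With these representations in hand, the membership proof reduces to a routine complexity argument. The product of the three automata has polynomially many states, each stored in $\landO(\log n)$ bits with transitions computable in deterministic logarithmic space. The symmetric difference is \emph{nonempty} --- equivalently, $\alpha \neq \beta$ holds in the variety --- exactly when there is a factorization sequence that is accepted by $\mathcal{B}$ and applicable to one of $\subs{\alpha}{\omega + \omega^*}$, $\subs{\beta}{\omega + \omega^*}$ but not to the other. This is a reachability condition on the product of the (completed) automata and is therefore decided by a nondeterministic log-space machine that guesses the sequence symbol by symbol while simulating all three automata and accepts once it detects the required divergence with $\mathcal{B}$ in an accepting state. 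Hence the complement of the word problem lies in $\NL$, and since $\NL$ is closed under complement (Immerman--Szelepcs\'enyi), the word problem itself lies in $\NL$; the same argument applies verbatim to $\DA$ using the simpler $\mathcal{B}$ from the second part of \autoref{thm:RelationsAreDecidable}.

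The main obstacle is concentrated entirely in the second paragraph: proving that an automaton state admits an $\landO(\log n)$ encoding and that its successor under a factorization-plus-normalization step is computable in logarithmic space. This is precisely what the structural theory of forms (\autoref{lem:forms}) and well-shapedness buys us --- without the guarantee that, below a single branching position, the values at the $\pi$-positions collapse to the finite set $\{1, 2, -2, -1\}$ (see \autoref{fig:valuesOflAndr}), a naïve pointer-and-value encoding would require linear space and the approach would yield only a polynomial-time rather than an $\NL$ bound. Once the logarithmic-space encoding and transition are in place, the $\NL$ membership follows by the reachability-and-complementation argument above.
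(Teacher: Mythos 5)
Your proposal is correct and follows essentially the same route as the paper: encode the well-shaped, normalized position pairs via the $\landO(\log n)$-sized variables of \autoref{tab:variablesForLAndR}, compute successors under factorization-plus-normalization in deterministic logarithmic space, and reduce the word problem to an emptiness test on the symmetric difference of the (implicitly given) product automata. You are in fact slightly more careful than the paper in making explicit that the guess-and-check reachability argument decides the complement and that Immerman--Szelepcs\'enyi is needed to conclude membership in $\NL$, a step the paper leaves implicit in its remark that the symmetric-difference check is a ``naïve (iterative) guess and check algorithm.''
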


\section{Deterministic Polynomial Time}

While $\NL$ is quite efficient from a complexity class perspective,
directly translating the algorithm to polynomial time does not result in a better running time than
the algorithm for $\DA$ given by Moura \cite{moura2011word}. However, with some additional tweaks,
the algorithm's efficiency can be improved.

\paragraph*{Encoding and Calculating Positions.}
Because we are not limited in space, we can encode a position $p \in \dom{w}$ with $w =
\subs{\gamma}{\omega + \omega^*}$ for a $\pi$-term $\gamma$ simply by storing the values at the
relevant $\pi$-positions and the corresponding position in $\gamma$. Clearly, we can obtain
the values of $p$ at a given $\pi$-position in constant time. Additionally, we can store a
(possibly normalized) pair $(l, r)$ by storing $l$ and $r$. If we want to normalized such a pair, we
walk through all relevant $\pi$-positions and update the values of $l$ and $r$ there, which requires at
most linear time. Similarly, we can test whether a position is strictly smaller (or larger) than an
other position in linear time.

Suppose we have stored a position $p \in \dom{w}$ and want to compute $X_a(w; p)$ for an $a \in
\Sigma$. We can re-use the algorithm which we used previously to solve the problem in logarithmic
space (see page \pageref{alg:computeFollowUpPair}). This algorithm moves a pointer, which belongs to a position
in $\gamma$, to the right in every step. The only time it is moved to the left is when it hits a
closing parenthesis and $p$ has a value at the corresponding $\pi$-position which is \emph{not}
$-1$. In that case, we move to the matching opening parenthesis and continue to move to the right
from there until we hit the closing parenthesis again. Note that there is no \enquote{back-setting
to the left} in that process. Therefore, we can compute the position of the next $a$  in at most
quadratic time.

\paragraph*{Computing the Automata and Equivalence Test.}
As we did before, we construct a (not necessarily complete) deterministic finite automaton for each
input $\pi$-term $\gamma$. The automaton accepts exactly those sequences of factorizations which
can be applied to $\subs{\gamma}{\omega + \omega^*}$ and which need to be tested for the variety
in question. We will only demonstrate the details of the construction for the variety $\Rm[m + 1]
\cap \Lm[m + 1]$ and the input term $\alpha$; the construction for the other varieties is similar.
We need the variables \var{core} and \var{fringe}, which contain subsets of $(\dom{u} \times
\dom{u}) \times \smallset{1, 2, \dots, m}$ where $u = \subs{\gamma}{\omega + \omega^*}$ such that
all pairs $(l, r) \in \dom{u} \times \dom{u}$ are normalized. The algorithm works as follows:
\begin{algorithmic}
  \State $\var{core} \gets \emptyset$
  \State $\var{fringe} \gets \smallset{((-\infty, +\infty), m)}$
  \While{$\var{fringe} \neq \emptyset$}
    \State Remove $((l, r), k)$ from \var{fringe}
    \If{$k > 0$}
      \ForAll{$a \in \Sigma$}
        \State Compute $(l', r') = (l, r) \cdot X_a^L$ if defined
        \Comment{time limited by $\landO(n^2)$}
        \If{$(l', r')$ is defined}
          \State $(\bar{l}', \bar{r}') \gets \overline{(l', r')}^\alpha$
          \Comment{time limited by $\landO(n)$}
          \State Save transition
          \begin{center}
            \begin{tikzpicture}[->, auto, align=center]
              \node[state] (lr) {$(l, r)$\\ $k$};
              \node[state, right=of lr] (l'r') {$(\bar{l}', \bar{r}')$\\ $k - 1$};
              \path (lr) edge node {$X_a^L$} (l'r');
            \end{tikzpicture}
          \end{center}
          \State Add $((\bar{l}', \bar{r}'), k - 1)$ to \var{fringe} unless it is in \var{core}
        \EndIf
        \State Compute $(l', r') = (l, r) \cdot X_a^R$ if it is defined
        \If{$(l', r')$ is defined}
          \State $(\bar{l}', \bar{r}') \gets \overline{(l', r')}^\alpha$
          \State Save transition
          \begin{center}
            \begin{tikzpicture}[->, auto, align=center]
              \node[state] (lr) {$(l, r)$\\ $k$};
              \node[state, right=of lr] (l'r') {$(\bar{l}', \bar{r}')$\\ $k$};
              \path (lr) edge node {$X_a^R$} (l'r');
            \end{tikzpicture}
          \end{center}
          \State Add $((\bar{l}', \bar{r}'), k)$ to \var{fringe} unless it is in \var{core}
        \EndIf
        \State Handle $Y_a^L$ and $Y_a^R$ analogously
        \ForAll{$b \in \Sigma$} \Comment{Special case which is only required for $\equiv_m^\WI$}

          \State Compute $(l', r') = (l, r) \cdot C_{a, b}$ if it is defined
          \If{$(l', r')$ is defined}
            \State $(\bar{l}', \bar{r}') \gets \overline{(l', r')}^\alpha$
            \State Save transition
            \begin{center}
              \begin{tikzpicture}[->, auto, align=center]
                \node[state] (lr) {$(l, r)$\\ $k$};
                \node[state, right=of lr] (l'r') {$(\bar{l}', \bar{r}')$\\ $k - 1$};
                \path (lr) edge node {$C_{a, b}$} (l'r');
              \end{tikzpicture}
            \end{center}
            \State Add $((\bar{l}', \bar{r}'), k - 1)$ to \var{fringe} unless it is in \var{core}
          \EndIf
        \EndFor
      \EndFor
      \State Add $((l, r), k)$ to \var{core}
    \EndIf
  \EndWhile
  \State Set $((-\infty, +\infty), m)$ as initial state
  \State Mark all states as final states
\end{algorithmic}
\noindent
Clearly, the resulting automaton accepts the desired factorization sequences. We know that any
normalized pair can be
encoded by the variables in \autoref{tab:variablesForLAndR}. Therefore, the number of such pairs
is limited by $\landO(n^5)$ when $n$ is the length of $\gamma$ seen as a finite word. This directly
yields that the constructed automaton has at most $\landO(n^5 m)$ states. Because the outer loop handles
any $((l, r), k)$ at most once the algorithm's running time is limited by $\landO(n^7 m)$.

After the construction of the two automata for $\alpha$ and $\beta$, we need to test
them for equivalence. They are equivalent if and only if $\subs{\alpha}{\omega + \omega^*}
\equiv_m^\WI \subs{\beta}{\omega + \omega^*}$ holds. The test for equivalence can be done by the
algorithm of Hopcroft and Karp \cite{hopcroft1971linear} in almost
linear time in the size of the automata. Since the number of states is bounded by
$\landO(n^5 m)$, the total running time of our complete algorithm is dominated by $\landO(n^7 m^2)$.

The automata for the varieties $\Rm$, $\Lm$ and $\Rm \vee \Lm$ need to store whether the last
factor was obtained by an element from $X_\Sigma^D$ or by an element from $Y_\Sigma^D$.
This information, however, is only of constant size and, therefore, does not change the asymptotic
running time of the overall algorithm. For $\DA$ we can omit the counting for $m$ in $k$ by
\autoref{fct:trotterWeilIsDA}, which even reduces the number of states.

This finally shows the following theorem.
\begin{restatable}{theorem}{wordProblemsInP}
  \label{thm:wordProblemsInP}
  The word problems for $\pi$-terms over $\Rm$, $\Lm$, $\Rm \vee \Lm$ and $\Rm \cap \Lm$
  can be solved by a deterministic algorithm with running time in $\landO(n^7 m^2)$ where $n$
  is the length of the input $\pi$-terms. Moreover, the word problem for $\pi$-terms over
  $\DA$ can be solved by a deterministic algorithms in time $\landO(n^7)$.
\end{restatable}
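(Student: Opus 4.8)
The plan is to refine the decidability argument of \autoref{thm:RelationsAreDecidable} into an explicit polynomial-time construction, exploiting the fact that we are no longer constrained to logarithmic space. By \autoref{thm:relationsAndEquations} (and \autoref{cor:relationsAndDAEquations} for $\DA$) it suffices to decide, for the appropriate $Z \in \{ X, Y, XY, \WI \}$, whether $\subs{\alpha}{\omega + \omega^*} \equiv_m^Z \subs{\beta}{\omega + \omega^*}$ holds. As in \autoref{thm:RelationsAreDecidable}, I would decide this by constructing for each input term a deterministic finite automaton whose language is the set of applicable factorization sequences, intersecting with the fixed automaton of relevant sequences, and testing the two results for equivalence.

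First, since space is no longer at a premium, I would represent a position $p \in \dom{\subs{\gamma}{\omega + \omega^*}}$ explicitly by its corresponding position in $\gamma$ together with the list of its values at the relevant $\pi$-positions. This gives constant-time access to any single value and lets one normalize a pair or compare two positions in linear time. The essential primitive is the computation of $X_a(w; p)$ (and, dually, $Y_a(w; p)$); here I would reuse the pointer-walking procedure from the $\NL$ section (page~\pageref{alg:computeFollowUpPair}). The only leftward movement that procedure ever makes is the jump from a closing parenthesis back to its matching opening parenthesis, after which the pointer only advances rightward again, so no position is revisited more than a bounded number of times and the search runs in $\landO(n^2)$.

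Next, I would build each automaton by an explicit worklist exploration, as in the displayed \var{core}/\var{fringe} algorithm: starting from the state $((-\infty, +\infty), m)$, each dequeued state is expanded by every factorization $F \in F_\Sigma$, the follow-up pair is computed and normalized, a transition is emitted, and the target is enqueued unless already processed. The decisive counting step is already in place: every well-shaped, normalized pair is encoded by the $\landO(1)$- and $\landO(\log n)$-sized variables of \autoref{tab:variablesForLAndR}, so there are only $\landO(n^5)$ of them, and together with the $\landO(m)$ possible values of the counter $k$ this bounds the number of states by $\landO(n^5 m)$. Each state is processed once, performing $\landO(|\Sigma|^2)$ factorizations (the $C_{a, b}$ case being the costliest), each costing $\landO(n^2)$ for the $X_a$/$Y_b$ search plus $\landO(n)$ for normalization; hence the construction runs in time $\landO(n^7 m)$.

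Finally, after intersecting both automata with the (variety-dependent but fixed) automaton of relevant sequences, I would test equivalence using Hopcroft and Karp's near-linear algorithm~\cite{hopcroft1971linear}; since each automaton has $\landO(n^5 m)$ states, the whole procedure is dominated by $\landO(n^7 m^2)$. For $\DA$, \autoref{fct:trotterWeilIsDA} removes the need to track the direction-change counter $k$ at all, which drops the factor $m$ and yields running time $\landO(n^7)$. The main obstacle is not any single estimate but the accumulated correctness bookkeeping: one must check that the explicit follow-up-and-normalize computation faithfully realizes the abstract $(l, r) \cdot F$ followed by normalization (including the asymmetric handling of $C_{a, b}$), and that the well-shapedness and normalization invariants established in the preceding lemmas are precisely what keeps the state space within the claimed $\landO(n^5 m)$ bound.
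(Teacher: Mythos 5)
Your proposal is correct and follows essentially the same route as the paper: explicit position encoding, reuse of the pointer-walking search in $\landO(n^2)$, a \var{core}/\var{fringe} worklist construction of automata with $\landO(n^5 m)$ states bounded via the variables of \autoref{tab:variablesForLAndR}, and a final Hopcroft--Karp equivalence test giving $\landO(n^7 m^2)$ (with the counter $k$ dropped for $\DA$). The only cosmetic difference is that the paper folds the ``relevant sequences'' restriction into the state counter during construction rather than intersecting afterwards.
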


\section{Separability}\label{sec:separability}

Two languages $L_1, L_2 \subseteq \Sigma^*$ are \emph{separable} by a variety $\V$ if
there is a language $S \subseteq \Sigma^*$ with $L_1 \subseteq S$ and
$L_2 \cap S = \emptyset$ such that $S$ can be recognized by a monoid $M \in \V$. The
\emph{separation problem} of a variety $\V$ is the problem to decide whether two regular input
languages of finite words are separable by $\V$.

We are going to show the decidability of the separations problems of $\Rm$ for all $m \in \Nat$ as
well as for $\DA$ using the techniques presented in this paper\footnote{Decidability of the
separation problem of $\DA$ is already known \cite{place2013separating}. The proof, however, uses
a fixed point algorithm, which is
different from our approach.}. Note that, by symmetry, this also shows decidability for $\Lm$.

The general idea is as follows. If the input languages are separable, then we can find a separating
language $S$ which is recognized by a monoid in the variety in question. We can do this by
recursively enumerating all monoids and all languages in a suitable representation. For the other
direction, we show that, if the input languages are inseparable, then there are $\pi$-terms $\alpha$
and $\beta$ which witness their inseparability. Since we can also recursively enumerate these
$\pi$-terms, we have decidability.

To construct suitable $\pi$-terms we need an additional combinatorial property of the
$\equiv_{m, n}^X$ relations (which, in a slightly different form, can also be found in
\cite{kufleitner2012logical}).
\begin{restatable}{lemma}{XmForMTwo}
  \label{lem:XmForMTwo}
  Let $n, m \in \Nat$ with $m \geq 2$ and let $u \equiv_{m, n}^X v$ for two accessible words $u$ and $v$.
  Then, $u \cdot X_a^L \equiv_{m, n - 1}^X v \cdot X_a^L$ holds for all $a \in \alphabet(u) =
  \alphabet(v)$.
\end{restatable}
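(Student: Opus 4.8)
The plan is to prove the statement by induction on $n$, keeping $m \geq 2$ fixed throughout. If $n = 1$, then $u \cdot X_a^L \equiv_{m, 0}^X v \cdot X_a^L$ holds by the first rule of \autoref{def:relations}, so there is nothing to do; hence I may assume $n \geq 2$ and that the claim is already known with $n$ replaced by $n - 1$.

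The combinatorial heart of the argument is a pair of factorization identities that exploit the fact that $p := u \cdot X_a^L$ is the factor strictly to the left of the first $a$, and therefore contains no $a$ at all. For any $b \in \alphabet(p)$ the first $b$-position of $u$ lies before the first $a$-position (since some $b$ already occurs inside $p$), so it lies inside $p$. Consequently $p \cdot X_b^L = u \cdot X_b^L$, while $p \cdot X_b^R$ is exactly the factor of $u$ lying strictly between its first $b$ and its first $a$, i.e.\ $p \cdot X_b^R = (u \cdot X_b^R) \cdot X_a^L$. The analogous identities hold for $q := v \cdot X_a^L$. Verifying these identities carefully (including that $a \in \alphabet(u \cdot X_b^R)$, which holds because the first $a$ comes after the first $b$) is the step that requires the most care, but it is elementary for accessible words.

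With these identities in hand, I would unfold the definition of $p \equiv_{m, n-1}^X q$ into its four obligations and discharge each from the hypothesis $u \equiv_{m, n}^X v$. The alphabet equality $\alphabet(p) = \alphabet(q)$ and the condition $p \equiv_{m-1, n-2}^Y q$ both follow from the clause $u \cdot X_a^L \equiv_{m-1, n-1}^Y v \cdot X_a^L$ contained in $u \equiv_{m, n}^X v$, using that $m - 1 \geq 1$ (so the $Y$-relation actually forces the alphabets to agree---this is precisely where $m \geq 2$ is needed) together with monotonicity in the second index. For the left parts, $p \cdot X_b^L \equiv_{m-1, n-2}^Y q \cdot X_b^L$ rewrites via the first identity to $u \cdot X_b^L \equiv_{m-1, n-2}^Y v \cdot X_b^L$, which is again a clause of $u \equiv_{m, n}^X v$ weakened by monotonicity.

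The only clause that is not immediate is $p \cdot X_b^R \equiv_{m, n-2}^X q \cdot X_b^R$, which I expect to be the main obstacle. Using the second identity it becomes $(u \cdot X_b^R) \cdot X_a^L \equiv_{m, n-2}^X (v \cdot X_b^R) \cdot X_a^L$. Now $u \equiv_{m, n}^X v$ gives $u \cdot X_b^R \equiv_{m, n-1}^X v \cdot X_b^R$, and since $a \in \alphabet(u \cdot X_b^R) = \alphabet(v \cdot X_b^R)$ this is exactly the instance of the lemma at parameter $n - 1$ applied to the pair $(u \cdot X_b^R, v \cdot X_b^R)$; the induction hypothesis yields the required relation. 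Assembling the four verified clauses gives $p \equiv_{m, n-1}^X q$, closing the induction.
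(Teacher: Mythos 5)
Your proof is correct and follows essentially the same route as the paper's: induction on $n$, discharging the alphabet, $Y$-, and $X_b^L$-clauses directly from the hypothesis $u \equiv_{m,n}^X v$, and handling the $X_b^R$-clause via the commutation $u \cdot X_a^L X_b^R = u \cdot X_b^R X_a^L$ together with the induction hypothesis applied to $u \cdot X_b^R \equiv_{m,n-1}^X v \cdot X_b^R$. Your explicit justification of the factorization identities and of where $m \geq 2$ enters is a slightly more careful write-up of exactly the paper's argument.
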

\begin{proof}
  We prove the lemma by induction over $n$. For $n = 1$, the assertion is satisfied by definition.
  Therefore, assume we have $u \equiv_{m, n + 1}^X v$ and we want to show $u_0 := u \cdot X_a^L
  \equiv_{m, n}^X v \cdot X_a^L =: v_0$. We already have $u_0 \equiv_{m - 1, n}^Y v_0$ by
  definition of $\equiv_{m, n + 1}^X$. This especially implies $\alphabet(u_0) = \alphabet(v_0)$
  since we have $m \geq 2$ and $n \geq 1$, as well as $u_0 \equiv_{m - 1, n - 1}^Y v_0$.
  Additionally, we have
  \[
    u_0 \cdot X_b^L = u \cdot X_b^L \equiv_{m - 1, n}^Y v \cdot X_b^L = v_0 \cdot X_b^L
  \]
  for all $b \in \alphabet(u_0) = \alphabet(v_0)$, which implies $u_0 \cdot X_b^L
  \equiv_{m - 1, n - 1}^Y v_0 \cdot X_b^L$. All that remains to
  be shown is that $u_0 \cdot X_b^R \equiv_{m, n - 1}^X v_0 \cdot X_b^R$ holds for all $b \in
  \alphabet(u_0) = \alphabet(v_0)$. Applying induction on
  $u \cdot X_b^R \equiv_{m, n}^X v \cdot X_b^R$ (for the same $a$) yields $u \cdot X_b^R X_a^L
  \equiv_{m, n - 1}^X v \cdot X_b^R X_a^L$. Since we have $u_0 \cdot X_b^R = u \cdot X_a^L X_b^R =
  u \cdot X_b^R X_a^L$ and $v_0 \cdot X_b^R = v \cdot X_a^L X_b^R = v \cdot X_b^R X_a^L$, we are
  done.
\end{proof}

Using this property, one can prove the following lemma about the $\pi$-term construction.
\begin{restatable}{lemma}{piTermConstruction}
  \label{lem:piTermConstruction}
  Let $M$ be a monoid, $\varphi: \Sigma^* \to M$ a homomorphism and $m \in \Nat_0$. Let $(u_n, v_n)_{n \in \Nat_0}$ be an infinite
  sequence of word pairs $(u_n, v_n)_{n \in \Nat_0}$ with
  \begin{multicols}{2}
    \begin{itemize}
      \item $u_n, v_n \in \Sigma^*$,
      \item $u_n \equiv_{m, n}^X v_n$,
      \item $\varphi(u_n) = m_u$ and
      \item $\varphi(v_n) = m_v$
    \end{itemize}
  \end{multicols}
  \noindent{}for fixed monoid elements
  $m_u, m_v \in M$ and all $n \in \Nat_0$. Then, the sequence yields $\pi$-terms
  $\alpha$ and $\beta$ (over $\Sigma$) such that $\varphi \left( \subs{\alpha}{M!} \right) = m_u$,
  $\varphi \left( \subs{\beta}{M!} \right) = m_v$ and $\subs{\alpha}{\omega + \omega^*}
  \equiv_{m}^X \subs{\beta}{\omega + \omega^*}$ hold.
\end{restatable}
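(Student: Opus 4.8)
The plan is to argue by induction on $m$, building $\alpha$ and $\beta$ with $\pi$-nesting depth at most $m$; by symmetry the same induction proves the analogous statement for the $\equiv^Y$-relation, which the step below must invoke, so I treat the $X$- and $Y$-versions as a single joint induction. For $m=0$ the relation $\equiv_0^X$ holds on every pair of accessible words (rule~1 of \autoref{def:relations}), so I simply take $\alpha := u_0$ and $\beta := v_0$, regarded as $\pi$-terms containing no $\pi$-power. Then $\subs{\alpha}{M!}=u_0$ and $\subs{\beta}{M!}=v_0$ yield the prescribed $\varphi$-images $m_u$ and $m_v$, while $\subs{\alpha}{\omega+\omega^*}=u_0\equiv_0^X v_0=\subs{\beta}{\omega+\omega^*}$ trivially.

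For the step I first normalise the data. Since $\alphabet(u_n)=\alphabet(v_n)$ for $n\geq 1$ and $\Sigma$ is finite, I pass to an infinite subsequence on which this common alphabet $A$ is constant; re-indexing a subsequence $(u_{n_k},v_{n_k})$ with $n_k\geq k$ preserves every hypothesis, because $u_{n_k}\equiv_{m,n_k}^X v_{n_k}$ implies $u_{n_k}\equiv_{m,k}^X v_{n_k}$ by monotonicity of the relations, and the $\varphi$-images stay constant. The recursive definition of $\equiv_{m,n}^X$ then dictates exactly which sub-relations a pair of $\pi$-terms must witness in the limit: the whole pair and each left factor $u_n\cdot X_a^L$ versus $v_n\cdot X_a^L$ must become $\equiv_{m-1}^Y$-equivalent, whereas each right factor $u_n\cdot X_a^R$ versus $v_n\cdot X_a^R$ must stay $\equiv_m^X$-equivalent with one fewer factorisation in reserve.

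The heart of the argument is a Ramsey/compactness extraction that turns the same-direction $X_a^R$-continuation into a $\pi$-power. Reading left to right by repeatedly taking the first occurrence of a letter and keeping the right part, the alphabet can strictly shrink at most $\abs{A}$ times; I therefore track the finite invariant consisting of the $\varphi$-image together with the $\equiv_{m-1,N}^Y$-class of the consumed left part, where the threshold $N$ is fixed after the inductive hypothesis is applied. Colouring pairs of indices of the sequence by this finite invariant and applying the infinite Ramsey theorem produces an infinite monochromatic set and, with it, a coherent system of factorisations exhibiting a repeatable central block $w$ of constant $\varphi$-image whose left-reading is fixed. I then build the descending (same-direction) part of $\alpha$ as a $\pi$-power around $w$, while the surrounding factors together with the left-factor subproblems and the global switch to $\equiv_{m-1}^Y$ are supplied by the inductive hypothesis applied to the sequences $(u_n\cdot X_a^L,\,v_n\cdot X_a^L)$ and $(u_n,v_n)$ under $\equiv_{m-1}^Y$; the hypotheses of that application hold by \autoref{lem:XmForMTwo} and by the unfolding of $\equiv_{m,n}^X$.

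Finally I would verify the three conclusions. The $\varphi$-images follow from $\varphi(u_n)=m_u$ and $\varphi(v_n)=m_v$: substituting $M!$ for $\pi$ raises the central block to an idempotent power $\varphi(w)^{M!}$, which absorbs the repetitions selected in the extraction and leaves $\varphi(\subs{\alpha}{M!})=m_u$ and $\varphi(\subs{\beta}{M!})=m_v$ exactly. For the relation $\subs{\alpha}{\omega+\omega^*}\equiv_m^X\subs{\beta}{\omega+\omega^*}$ I combine \autoref{lem:UkVomega}, which lifts a finite-power equivalence $w^{k+1}\equiv_{m,k}^X (w')^{\omega+\omega^*}$ to the $\omega+\omega^*$-substitution, with \autoref{lem:omegaAndKForPiTerms}, which allows replacing the $\omega+\omega^*$-power by a sufficiently large finite power whenever $\equiv_{m,n}^X$ is checked for a fixed $n$. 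The hard part will be the bookkeeping in the extraction step: I must keep the factorisations of all $u_n$ and $v_n$ compatible simultaneously across every branch of the recursive definition of $\equiv_{m,n}^X$ — the $X_a^R$-continuation, the direction-changing $X_a^L$ left factors, and the global switch to $\equiv_{m-1}^Y$ — so that one pair of $\pi$-terms witnesses all of them; reconciling the $X$-side continuation with the $Y$-side subproblems returned by the induction hypothesis, while choosing the threshold $N$ uniformly, is the delicate point.
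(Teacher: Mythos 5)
Your sketch leaves the central difficulty unresolved, and the two ideas that actually resolve it in the paper are absent. The problem you flag at the end --- producing \emph{one} pair $(\alpha,\beta)$ that simultaneously witnesses the $X_a^R$-continuation at level $(m,X)$, the $X_a^L$ left factors at level $(m-1,Y)$, and the global switch to $\equiv_{m-1}^Y$ --- is not bookkeeping; it is the theorem. Applying the inductive hypothesis (on $m$) separately to the sequences $(u_n\cdot X_a^L, v_n\cdot X_a^L)$ for each $a$ yields one pair of $\pi$-terms per letter, built from overlapping factors of $u_n$ and $v_n$, and there is no way to concatenate these into a single $\alpha$ whose actual factors at the first $a$ are the recursively constructed words; nor does $\equiv_{m-1}^Y$-equivalence of pieces give $\equiv_m^X$-equivalence of a concatenation. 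The paper's way out is different: for $m\geq 2$ it inducts on $\abs{\Sigma}$, factorizing $u_n$ at the positions where the suffix alphabet is completed, so that every block is missing a letter, and it uses \autoref{lem:XmForMTwo} to keep the blocks related by $\equiv_{m,\cdot}^X$ itself (not the weaker $\equiv_{m-1,\cdot}^Y$). Only then does the congruence property of $\equiv_m^X$ assemble the recursively obtained block terms into $\alpha=\alpha_0a_0\alpha_1a_1\cdots$ with the required equivalence; the $\pi$-power comes from a pigeonhole on the $\abs{M}^2+1$ prefix products rather than from a Ramsey extraction, and this equality of prefix products is also what makes $\varphi(\subs{\alpha}{M!})=m_u$ come out exactly. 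Induction on $m$ (via the $Y$-version) enters only for the final tail block.

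Second, your uniform induction on $m$ breaks at $m=1$. \autoref{lem:XmForMTwo} requires $m\geq 2$, so at $m=1$ the left blocks carry only the trivial relation $\equiv_{0,\cdot}^Y$ and your extraction has nothing to hang on to, yet the constructed terms must still reproduce the correct $\varphi$-images of those blocks and yield $\subs{\alpha}{\omega+\omega^*}\equiv_1^X\subs{\beta}{\omega+\omega^*}$. The paper handles $m=1$ by an entirely separate argument (Simon's Factorization Forest Theorem plus the characterization of $\equiv_{1,n}^X$ by subwords of length at most $n$), which your proposal does not supply a substitute for. Until you specify how a single pair of $\pi$-terms is assembled so that its factors at each first letter are exactly the recursively constructed words, and how the base level $m=1$ is treated, the argument is a plan rather than a proof.
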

\noindent{}%
Before we give a proof of the general case, we give a separate one for the case $m = 1$. It is basically
an adaption of the ideas from the proof showing decidability of the separation problem for the
variety $\V[J]$ of $\mathcal{J}$-trivial monoids given by van Rooijen and Zeitoun
\cite{roojien2013separation} to our setting.
\begin{proof}[Proof ($m = 1$)]
  This proof is based on Simon's Factorization Forest Theorem \cite{simon1990factorization}. For a finite word
  $w \in \Sigma^+$, a factorization tree is a rooted, finite, unranked, labeled ordered tree such
  that
  \begin{itemize}
    \item the tree's root is labeled with $w$,
    \item the leaves are labeled with letters (from $\Sigma$) and
    \item any internal node has at least two children and, if its children are labeled with $w_1,
    w_2, \dots,\allowbreak w_k \in \Sigma^+$, then the node is labeled with $w_1 w_2 \dots w_k$.
  \end{itemize}
  For every homomorphism $\psi: \Sigma^* \to N$ into a monoid $N$, Simon's Factorization Forest
  Theorem yields a factorization tree for every finite word $w \in \Sigma^+$ such that $\psi$ maps the
  labels of a node's children to the same idempotent in $N$ if the node has at least three
  children. Furthermore, the tree's height\footnote{A single node has height $0$.} is finite and
  limited by some constant that solely depends on $|N|$ (and, in particular, not on $w$).

  Before we begin with the actual proof, we note that, if we remove pairs from the sequence
  $(u_n, v_n)_{n \in \Nat_0}$ and still have an infinite sequence, then the resulting sequence
  still satisfies all conditions stated above, in particular $u_n \equiv_{1, n}^X v_n$.

  We extend $\varphi$ into a homomorphism $\Sigma^* \to M \times 2^{\Sigma}$ which maps a word $w$
  to its alphabet $\alphabet(w)$ for the second component.\footnote{$2^{\Sigma}$ is the monoid of
  all subsets of $\Sigma$ with taking union as the monoid's operation.} Then, we observe that there
  has to be an infinite subsequence such that all first components as well as all second components
  have the same alphabet. Indeed, these two alphabets have to coincide by the
  definition of $\equiv_{1, n}^X$! We remove all other words from the sequence. If the remaining words
  $u_n$ and $v_n$ are all empty (i.\,e.\ they have alphabet $\emptyset$), we can choose $\alpha =
  \beta = \varepsilon$ as well. Otherwise, we apply Simon's Factorization Forest Theorem to the
  remaining words $u_n$ and $v_n$, which yields a
  sequence of factorization tree pairs $(T_{u, n}, T_{v, n})$. We first construct $\alpha$ from
  $(T_{u, n})_{n \in \Nat_0}$ such that we have $\varphi(\subs{\alpha}{M!}) = m_u$ and the
  following conditions:
  \begin{itemize}
    \item If $w \in \Sigma^*$ is a subword\footnote{Recall that a finite word $u = a_1 a_2 \dots a_n$ with
    $a_i \in \Sigma$ is a subword of a (not necessarily finite) word $v$ if we can write
    $v = v_0 a_1 v_1 a_2 v_2 \dots a_n v_n$ for some words $v_0, v_1, \dots, v_n$.} of $u_n$ for
    an $n \in \Nat_0$, then $w$ is a subword of $\subs{\alpha}{\omega + \omega^*}$.
    \item If $w \in \Sigma^*$ is a subword of $\subs{\alpha}{\omega + \omega^*}$, then it is a
    subword of all $u_n$ with $n \geq n_0$ for an $n_0 \in \Nat_0$.
  \end{itemize}
  Afterwards, we proceed with $(T_{v, n})_{n \in \Nat_0}$ to construct $\beta$ in the same manner.

  We may assume that all trees $T_{u, n}$ have the same height $H$ as the height is bounded by
  a constant and we can remove all words $u_n$ from the underlying sequence which
  yield a tree not of height $H$. If $H$ is zero, all trees consist of a single leaf and all
  words $u_n$ consist of a single letter. Among these, one letter $a \in \Sigma$ has to appear
  infinitely often; we remove all other words from the sequence and choose $\alpha = a$. Clearly,
  all conditions for $\alpha$ are satisfied.

  For $H > 0$, we consider the situation at the root of each $T_{u, n}$. Let $u_{n, 1}, u_{n, 2},\allowbreak
  \dots, u_{n, K_n}$ be the labels of the root's children in $T_{u, n}$. If the sequence
  $(K_n)_{n \in \Nat_0}$ is bounded, there is an infinite subsequence such that $K_n$ is equal to
  a specific $K \geq 2$ for all indexes $n$ of the subsequence; we remove all words not belonging to this
  subsequence. In the result, there is an infinite subsequence such that, for each sequence
  $(u_{n, k})_{n \in \Nat_0}$ with $1 \leq k \leq K$, all $u_{n, k}$ get mapped to the same monoid
  element by $\varphi$; we remove all other words. As each child of the root yields a subtree,
  taking these subtrees gives $K$ infinite sequences of factorization trees of height $H - 1$.
  Applying induction on $H$ yields $\alpha_1, \alpha_2, \dots, \alpha_K$. We define $\alpha :=
  \alpha_1 \alpha_2 \dots \alpha_K$. Because $\alpha_1, \alpha_2, \dots, \alpha_K$ satisfy the
  conditions stated above for their respective subtree sequence, so does $\alpha$ for
  $\left( T_{u, n} \right)_{n \in \Nat_0}$.

  If the sequence $(K_n)_{n \in \Nat_0}$ is unbounded, we can, without loss of generality, assume
  $K_n \geq 3$ for all $n \in \Nat_0$ and that it is strictly increasing -- again taking the
  appropriate infinite subsequence. Also,
  we can assume that all $u_{n, 1}, u_{n, 2}, \dots, u_{n, K_n}$ get mapped to the same idempotent
  $e \in M \times 2^{\Sigma}$. Choose $w \in \varphi^{-1}(e)$ arbitrarily and define
  $\alpha := (w)^\pi$. Note that we now have $\alphabet(u_{n, 1}) = \alphabet(u_{n, 2}) = \dots =
  \alphabet(u_{n, K_n}) = \alphabet(u_n) = \alphabet(w)$ for all $n \in \Nat_0$. Therefore,
  $\alpha$ satisfies the conditions above.

  All which remains to be shown is that we now have $\subs{\alpha}{\omega + \omega^*} \equiv_1^X
  \subs{\beta}{\omega + \omega^*}$. The important observation here is that $w_1
  \equiv_{1, n}^X w_2$ with $n \in \Nat_0$ holds if and only if $w_1$ and $w_2$ have the same
  subwords of length $\leq n$. This means we have to show that
  $\subs{\alpha}{\omega + \omega^*}$ and $\subs{\beta}{\omega + \omega^*}$ have the same subwords
  (of arbitrary length). To show the subword equality, assume $w$ is a subword of
  $\subs{\alpha}{\omega + \omega^*}$ (without loss of generality). By the conditions above, $w$ is
  a subword of all $u_n$ with $n \geq n_0$ for an $n_0 \in \Nat_0$. Let $\tilde{n} = \max \{ n_0,
  |w| \}$. Since we have $u_{\tilde{n}} \equiv_{1, \tilde{n}}^X v_{\tilde{n}}$ and by applying our
  observation regarding subwords and $\equiv_{1, \tilde{n}}^X$, $w$ is a subword of $v_{\tilde{n}}$
  and, thus, a subword of $\subs{\beta}{\omega + \omega^*}$.
\end{proof}

Next, we give a proof for the general case $m \geq 1$.
\begin{proof}[Proof of \autoref{lem:piTermConstruction}]
  The assertion is trivial for $m = 0$. The case $m = 1$ has already been covered.
  For $m > 1$, we proceed by induction over $|\Sigma|$. For $\Sigma
  = \emptyset$, we set $\alpha = \beta = \varepsilon = u_n = v_n$. For $|\Sigma| > 0$, remember
  the observation from the previous proof: if we take an infinite subsequence $(u'_n, v'_n)_{n \in \Nat_0}$ of
  $(u_n, v_n)_{n \in \Nat_0}$, this sequence will still satisfy all conditions of the lemma.
  In particular, we will still have $u'_n \equiv_{m, n}^X v'_n$ for all $n \in \Nat_0$.

  Now, we factorize $u_n = w_{n, 0} a_{n, 0} w_{n, 1} a_{n, 1} \dots w_{n, K_n} a_{n, K_n}
  w_{n, K_n + 1}$ for all $n \in \Nat_0$ such that $\alphabet(w_{n, k}) = \alphabet(u_n) \setminus
  \{ a_{n, k} \}$ for all $k \in \{ 0, 1, \dots, K_n \}$ and $\alphabet(w_{n, K_n + 1})
  \subsetneq \alphabet(u_n)$. If the sequence $(K_n)_{n \in \Nat_0}$ is bounded, let $K$ be one of
  the numbers which appear infinitely often in it and restrict all further considerations to the
  corresponding subsequence of words. If $(K_n)_{n \in \Nat_0}$ is unbounded, let
  $K = |M|^2 + 1$ and remove all word pairs $(u_n, v_n)$ for which $K_n$ is smaller than $K$ from the
  sequence. For all $k \in \{ 0, 1, \dots, K \}$, a single letter $a_k \in \Sigma$ has to appear
  infinitely often in the sequence $(a_{n, k})_{n \in \Nat_0}$ because $\Sigma$ is of finite size.
  We restrict our consideration to the corresponding subsequence. Then, we define
  $x_{n, k} = u_n \cdot X_{a_{0}}^R X_{a_{1}}^R \dots X_{a_{k - 1}}^R X_{a_{k}}^L \text{ and }
  y_{n, k} = v_n \cdot X_{a_{0}}^R X_{a_{1}}^R \dots X_{a_{k - 1}}^R X_{a_{k}}^L$
  for $k \in \{ 0, 1, \dots, K \}$ as well as $x_{n, K + 1} = u_n \cdot X_{a_{0}}^R X_{a_{1}}^R
  \dots\allowbreak X_{a_{K}}^R$ and $y_{n, K + 1} = v_n \cdot X_{a_{0}}^R X_{a_{1}}^R \dots X_{a_{K}}^R$. We,
  thus, have $u_n = x_{n, 0} a_{0} x_{n, 1} a_{1} \dots x_{n, K}\allowbreak a_{K}\allowbreak x_{n, K + 1}$ and $v_n =
  y_{n, 0} a_{0} y_{n, 1} a_{1} \dots y_{n, K} a_{K} y_{n, K + 1}$
  for all $n \in \Nat_0$. Because $K$ is constant, we can safely assume that $\varphi$ maps all
  elements of the sequence $(x_{n, k})_{n \in \Nat_0}$ (for every $k \in \{ 0, 1, \dots, K + 1 \}$) to the
  same element $s_k \in M$: one element has to appear infinitely often and we take the corresponding
  subsequence. In the same way, we can ensure that $\varphi$ maps all element of
  $(y_{n, k})_{n \in \Nat_0}$ to the same element $t_k \in M$ (again, for all $k \in \{ 0, 1, \dots,
  K + 1 \}$). By removing the first $K + 2$ pairs of words, we can also ensure $u_n
  \equiv_{m, n + K + 2}^X v_n$ for all $n \in \Nat_0$. This implies $x_{n, k}
  \equiv_{m, n + K + 2 - k - 1}^X y_{n, k}$ for all $n \in \Nat_0$ and all $k \in \{ 0, 1, \dots,
  K \}$ by \autoref{lem:XmForMTwo}. Directly by the definition of $\equiv_{m, n}^X$, we already have
  $x_{n, K + 1} \equiv_{m, n + K + 2 - K - 1}^X y_{n, K + 1}$ and, therefore,
  $x_{n, k} \equiv_{m, n}^X y_{n, k}$ for all $k \in \{ 0, 1, \dots, K + 1 \}$.
  We can apply induction to $(x_{n, k}, y_{n, k})_{n \in \Nat_0}$ for $k \in \{ 0, 1, \dots, K \}$
  since we have $a_k \not\in \alphabet(x_{n, k})$ by construction. This yields $\pi$-terms
  $\alpha_0, \alpha_1, \dots, \alpha_K, \beta_0, \beta_1, \dots, \beta_K$. If
  $(K_n)_{n \in \Nat_0}$ was bounded, then $\alphabet(x_{n, K + 1}) =
  \alphabet(y_{n, K + 1}) \subsetneq \alphabet(u_n) = \alphabet(v_n)$ holds and we can apply induction as
  well, which yields $\pi$-terms $\alpha_{K + 1}$ and $\beta_{K + 1}$. Setting $\alpha = \alpha_0
  a_0 \alpha_1 a_1 \dots \alpha_K a_K \alpha_{K + 1}$ and $\beta = \beta_0 a_0 \beta_1 a_1 \dots
  \beta_K a_K \beta_{K + 1}$ satisfies $\subs{\alpha}{\omega + \omega^*} \equiv_{m}^X
  \subs{\beta}{\omega + \omega^*}$ since $\equiv_m^X$ is a congruence. If $(K_n)_{n \in \Nat_0}$
  was unbounded, we set $K = |M|^2 + 1$ and, by the pigeon hole principle, there are $i, j \in
  \{ 0, 1, \dots, K \}$ with $i < j$ and
  \begin{align*}
    s_0 \varphi(a_0) s_1 \varphi(a_1) \dots s_{i} \varphi(a_{i}) &=
      s_0 \varphi(a_0) s_1 \varphi(a_1) \dots s_{j} \varphi(a_{j}) \text{ and}\\
    t_0 \varphi(a_0) t_1 \varphi(a_1) \dots t_{i} \varphi(a_{i}) &=
      t_0 \varphi(a_0) t_1 \varphi(a_1) \dots t_{j} \varphi(a_{j}) \text{.}
  \end{align*}
  We define
  \begin{align*}
    \alpha &= \alpha_0 a_0 \alpha_1 a_1 \dots \alpha_i a_i \left( \alpha_{i + 1} a_{i + 1}
      \alpha_{i + 2} a_{i + 2} \dots \alpha_{j} a_{j} \right)^{\pi} \alpha_{K + 1} \text{ and}\\
    \beta &= \beta_0 a_0 \beta_1 a_1 \dots \beta_i a_i \left( \beta_{i + 1} a_{i + 1}
      \beta_{i + 2} a_{i + 2} \dots \beta_{j} a_{j} \right)^{\pi} \beta_{K + 1}
  \end{align*}
  where $\alpha_{K + 1}$ and $\beta_{K + 1}$ are obtained by using induction on $m$ (and symmetry)
  for the sequences $(x_{n, K + 1})_{n \in \Nat_0}$ and $(y_{n, K + 1})_{n \in \Nat_0}$, i.\,e.\ we
  have $\varphi(\subs{\alpha_{K + 1}}{M!}) = s_{K + 1}$, $\varphi(\subs{\beta_{K + 1}}{M!}) = t_{K +
  1}$ and $\subs{\alpha_{K + 1}}{\omega + \omega^*} \equiv_{m - 1}^Y \subs{\beta_{K + 1}}{\omega +
  \omega^*}$. Therefore, we have $\varphi\left( \subs{\alpha}{M!} \right) = m_u$ and $\varphi\left(
  \subs{\beta}{M!} \right) = m_v$ by construction. We also have $\subs{\alpha}{\omega + \omega^*}
  \equiv_m^X \subs{\beta}{\omega + \omega^*}$: for the part left up to and including the
  $(\cdot)^\pi$, we have equivalence by induction and because $\equiv_m^X$ is a congruence; the
  right part, we cannot reach by arbitrarily many $X_a^D$ factorizations since all letters appear
  infinitely often in the $(\cdot)^\pi$ part and, if we reach it by using at least one $Y_a^D$
  factorization, we are done since we have $\subs{\alpha}{\omega + \omega^*} \equiv_{m - 1}^Y
  \subs{\beta}{\omega + \omega^*}$.
\end{proof}

We can now plug everything together and prove the following theorem.
\begin{restatable}{theorem}{separationIsDecidable}
  \label{thm:separationIsDecidable}
  For each $m \in \Nat$, the separation problem for $\Rm$ is decidable and so is the one for $\Lm$.
\end{restatable}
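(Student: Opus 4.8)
The plan is to establish a dichotomy and then decide by dovetailing two semi-decision procedures. Concretely, I will show that $L_1$ and $L_2$ are separable by $\Rm$ exactly when a separator drawn from $\Rm$ exists, and that they are \emph{in}separable exactly when a pair of $\pi$-terms witnesses this. Both families of objects are recursively enumerable and the attached tests are decidable, so running the two searches in parallel decides separability; every input pair is either separable or not, so precisely one search halts. Throughout I fix a homomorphism $\eta : \Sigma^* \to M$ into a finite monoid recognizing both inputs, with $L_i = \eta^{-1}(F_i)$ for accepting sets $F_1, F_2 \subseteq M$ (e.g.\ $M = M_1 \times M_2$, the product of the syntactic monoids).

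The separability semi-procedure enumerates all finite monoids $N \in \Rm$ — membership is decidable through the equational characterisation of \autoref{lem:TrotterWeilEquations} together with \autoref{lem:TrotterWeilEquationsConverse}, since an equation can be checked in a fixed finite monoid by substituting all assignments — along with all $\varphi : \Sigma^* \to N$ and all $Q \subseteq N$, and tests whether $S = \varphi^{-1}(Q)$ separates $L_1$ from $L_2$ (decidable for regular inputs). Its correctness rests on the equivalence of separability with the combinatorial condition
\[
  \exists n \in \Nat : \text{ no } {\equiv_{m,n}^X}\text{-class meets both } L_1 \text{ and } L_2.
\]
For the easy direction, given such $n$ the union $S$ of $\equiv_{m,n}^X$-classes meeting $L_1$ is recognized by $\Sigma^*/{\equiv_{m,n}^X} \in \Rm$ (\autoref{thm:relationsInTrotterWeil}) and separates. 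Conversely, if $S = \varphi^{-1}(Q)$ separates with $N \in \Rm$, then \autoref{thm:relationsImplyEqualityUnderHom} yields an $n$ with $u \equiv_{m,n}^X v \Rightarrow \varphi(u)=\varphi(v)$; thus $\varphi$ factors through $\Sigma^*/{\equiv_{m,n}^X}$, $S$ is a union of $\equiv_{m,n}^X$-classes, and no class meets both $L_1$ and $L_2$.

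The inseparability semi-procedure enumerates all pairs of $\pi$-terms $(\alpha,\beta)$ and accepts when $\alpha = \beta$ holds in $\Rm$ (decidable via \autoref{thm:relationsAndEquations} and \autoref{thm:RelationsAreDecidable}) and $\subs{\alpha}{M!} \in L_1$, $\subs{\beta}{M!} \in L_2$. That such a pair forces inseparability is an exponent-stabilisation argument: for any putative separator $S = \varphi^{-1}(Q)$ with $N \in \Rm$ and any common multiple $K$ of $M!$ and $N!$, evaluation gives $\eta(\subs{\alpha}{K}) = \eta(\subs{\alpha}{M!}) \in F_1$ and $\eta(\subs{\beta}{K}) = \eta(\subs{\beta}{M!}) \in F_2$, while $\alpha = \beta$ in $N$ gives $\varphi(\subs{\alpha}{K}) = \varphi(\subs{\beta}{K})$; hence $\subs{\alpha}{K} \in L_1 \subseteq S$ and $\subs{\beta}{K} \in L_2$ lie in one $\varphi$-class, contradicting $L_2 \cap S = \emptyset$. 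For the converse I feed the combinatorial failure into \autoref{lem:piTermConstruction}: inseparability means that for every $n$ some $\equiv_{m,n}^X$-class meets both languages, yielding words $u_n \in L_1$, $v_n \in L_2$ with $u_n \equiv_{m,n}^X v_n$. Since $\eta(L_1), \eta(L_2)$ are finite, a pigeonhole argument extracts an infinite subsequence with constant images $\eta(u_n) = m_u$, $\eta(v_n) = m_v$; monotonicity of $\equiv_{m,n}^X$ in $n$ lets me reindex so that $u_n \equiv_{m,n}^X v_n$ persists. \autoref{lem:piTermConstruction} then returns $\pi$-terms with $\eta(\subs{\alpha}{M!}) = m_u \in F_1$, $\eta(\subs{\beta}{M!}) = m_v \in F_2$ and $\subs{\alpha}{\omega+\omega^*} \equiv_m^X \subs{\beta}{\omega+\omega^*}$, i.e.\ $\alpha = \beta$ in $\Rm$ by \autoref{thm:relationsAndEquations} — exactly the sought witness.

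The main obstacle is the \enquote{inseparable $\Rightarrow$ witness} direction, whose genuine content is already packaged in \autoref{lem:piTermConstruction} (itself leaning on \autoref{lem:XmForMTwo}); the residual difficulty is the passage to a sequence with constant $\eta$-images and the factorisation of the recognizing morphism through $\equiv_{m,n}^X$ supplied by \autoref{thm:relationsImplyEqualityUnderHom}. Once both semi-procedures are shown correct, exactly one halts on every input, which gives decidability for $\Rm$. The statement for $\Lm$ follows by the left–right dual, replacing $\equiv_{m,n}^X$ by $\equiv_{m,n}^Y$ and invoking the symmetric version of \autoref{lem:piTermConstruction}.
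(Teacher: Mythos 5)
Your proposal is correct and follows essentially the same route as the paper: dovetail a search for a separator recognized by a monoid in $\Rm$ against a search for a pair of $\pi$-terms witnessing inseparability, with soundness of the witness via the exponent-stabilisation argument over $N!\cdot M!$ and completeness via the sequence $u_n\equiv_{m,n}^X v_n$, pigeonholing to constant images, and \autoref{lem:piTermConstruction}. The only difference is presentational — you make the intermediate characterisation \enquote{separable iff some $\equiv_{m,n}^X$ saturates a separator} explicit where the paper leaves it implicit — which does not change the argument.
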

\begin{proof}
  We only consider $\Rm$ as the case for $\Lm$ is symmetric. If the input languages are separable,
  we can find a separating language by enumerating all candidates. If the languages are
  inseparable, we have to apply the previous lemma. As regular languages, the input languages $L_1
  \subseteq \Sigma^*$ and $L_2 \subseteq \Sigma^*$ can be recognized by monoids $M_1$ and $M_2$ via
  the homomorphisms $\varphi_1$ and $\varphi_2$ and the homomorphism can be computed. Therefore,
  they are also recognized by $M := M_1 \times M_2$ via the homomorphism $\varphi$ which maps a
  finite word to a pair whose first component is determined by $\varphi_1$ and whose second component is
  determined by $\varphi_2$. Let $n \in \Nat_0$ be arbitrary. Since we have $\Sigma^* / {\equiv_{m,
  n}^X} \in \Rm$ and since $L_1$ and $L_2$ cannot be separated by $\Rm$, there have to be finite
  words $u_n, v_n \in \Sigma^*$ with $u_n \in L_1$, $v_n \in L_2$ and $u_n \equiv_{m, n}^X v_n$;
  otherwise, we could construct a separating language. The homomorphism $\varphi$ has to map
  infinitely many elements of the sequence $(u_n, v_n)_{n \in \Nat_0}$ to the same element in $M$
  since $M$ is finite. If we remove all other elements, we still have an infinite sequence $(u_n,
  v_n)_{n \in \Nat_0}$ with $u_n \equiv_{m, n}^X v_n$ for all $n \in \Nat_0$ which also satisfies
  all conditions of \autoref{lem:piTermConstruction}. Therefore, there are $\pi$-terms $\alpha$
  and $\beta$ with $\subs{\alpha}{\omega + \omega^*} \equiv_{m}^X \subs{\beta}{\omega + \omega^*}$,
  $\varphi \left( \subs{\alpha}{M!} \right) \in \varphi(L_1)$ and $\varphi \left( \subs{\beta}{M!}
  \right) \in \varphi(L_2)$. Since we can test whether $\subs{\alpha}{\omega + \omega^*}
  \equiv_{m}^X \subs{\beta}{\omega + \omega^*}$ holds for any two $\pi$-terms $\alpha$ and $\beta$
  by \autoref{thm:RelationsAreDecidable}, we can also recursively enumerate all possible
  $\pi$-term pairs and check whether the conditions above are met. We know that we can find such a
  pair if $L_1$ and $L_2$ are inseparable. On the other hand, suppose $L_1$ and $L_2$ can be
  separated by $S \subseteq \Sigma^*$ which is recognized by the monoid $N \in \Rm$ via a
  homomorphism $\psi: \Sigma^* \to N$ and we have found a pair $\alpha$ and $\beta$ with
  $\subs{\alpha}{\omega + \omega^*} \equiv_{m}^X \subs{\beta}{\omega + \omega^*}$, $\varphi \left(
  \subs{\alpha}{M!} \right) \in \varphi(L_1)$ and $\varphi \left( \subs{\beta}{M!} \right) \in
  \varphi(L_2)$. Then, we have $\varphi \left( \subs{\alpha}{N! \cdot M!} \right) = \varphi \left(
  \subs{\alpha}{M!} \right) \in \varphi(L_1)$ and, thus, $\subs{\alpha}{N! \cdot M!} \in L_1$ as
  well as $\subs{\beta}{N! \cdot M!} \in L_2$ (by a similar argument). Also, $\alpha = \beta$ holds
  in $\Rm$ by \autoref{thm:relationsAndEquations}, which implies $s := \psi \left(
  \subs{\alpha}{N! \cdot M!} \right) = \psi \left( \subs{\alpha}{N!} \right) =
  \psi \left( \subs{\beta}{N!} \right) = \psi \left( \subs{\beta}{N! \cdot M!} \right)$. If we
  have $s \in \psi(S)$, then we have $\subs{\beta}{N! \cdot M!} \in S \cap L_2$; otherwise,
  we have $\subs{\alpha}{N! \cdot M!} \in L_1$ but $\subs{\alpha}{N! \cdot M!} \not\in S$ and,
  thus, a contradiction in either case.
\end{proof}

Since two languages are separable by $\Rm$ for $m = | \Sigma | + 1$ if they are separable by $\DA$
\cite{weis2009structure}, we also get decidability of the separation problem of
$\DA$, which has already been shown by Place, van Rooijen and Zeitoun \cite{place2013separating}.
\begin{corollary}
  The separation problem for $\DA$ is decidable.
\end{corollary}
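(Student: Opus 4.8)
The plan is to reduce separability by $\DA$ to separability by one of the corners $\Rm$, whose decidability has just been established in \autoref{thm:separationIsDecidable}. The only external ingredient is the bound of Weis and Immerman \cite{weis2009structure} quoted immediately before the corollary: two languages over $\Sigma$ that are separable by $\DA$ are already separable by $\Rm$ for $m = |\Sigma| + 1$. So the entire argument amounts to turning this implication into an equivalence and then invoking the decision procedure for the corners verbatim.

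First I would fix the two regular input languages $L_1, L_2 \subseteq \Sigma^*$ and set $m = |\Sigma| + 1$. For the easy direction, note that $\Rm \subseteq \DA$ for every $m$, which is exactly the content of \autoref{fct:trotterWeilIsDA} exhibiting $\DA$ as the union of all levels of the hierarchy. Hence any separating language $S$ recognized by a monoid in $\Rm$ is also recognized by a monoid in $\DA$, so separability by $\Rm$ implies separability by $\DA$. For the converse direction I would simply cite \cite{weis2009structure}: separability by $\DA$ already implies separability by $\Rm[|\Sigma|+1]$. Combining the two directions yields
\[
  L_1, L_2 \text{ separable by } \DA
  \iff
  L_1, L_2 \text{ separable by } \Rm[|\Sigma|+1] \text{.}
\]

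Finally, the right-hand side is decidable by \autoref{thm:separationIsDecidable} applied with this fixed value of $m$, so the left-hand side is decidable as well, which is precisely the assertion of the corollary.

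This argument has essentially no obstacle of its own: all the genuine work sits in \autoref{thm:separationIsDecidable} (and through it in \autoref{lem:piTermConstruction} and the decidability of the $\equiv_m^X$ relations from \autoref{thm:RelationsAreDecidable}), together with the cited combinatorial bound $m = |\Sigma| + 1$. The single point that deserves a moment's care is making sure both directions of the displayed equivalence are actually available; the forward direction is immediate from $\Rm \subseteq \DA$, while the backward direction is nothing but the Weis--Immerman bound, so no further combinatorics is needed.
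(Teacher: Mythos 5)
Your proof is correct and follows essentially the same route as the paper: both reduce separability by $\DA$ to separability by $\Rm[|\Sigma|+1]$ via the Weis--Immerman bound (with the converse direction coming from $\Rm \subseteq \DA$) and then invoke \autoref{thm:separationIsDecidable}. No gaps.
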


\bibliographystyle{plain}
\bibliography{citations}

\end{document}